\newcommand{\algoname}[1]{\textnormal{\textsc{#1}}}
\newcommand{\spara}[1]{\smallskip\noindent{\bf #1}}
\newif\ifdraft
\newcommand{\Haim}[1]{\textcolor{orange}{Haim: #1}}
\newcommand{\Cam}[1]{\textcolor{blue}{Cam: #1}}
\newcommand{\eqdef}{\mathbin{\stackrel{\rm def}{=}}}
\def\hlinewd#1{%
	\noalign{\ifnum0=`}\fi\hrule \@height #1 \futurelet
	\reserved@a\@xhline}
\newtheorem{theorem}{Theorem}
\newtheorem{corollary}[theorem]{Corollary}
\newtheorem{lemma}[theorem]{Lemma}
\newtheorem{claim}[theorem]{Claim}
\newtheorem{defn}{Definition}
\newtheorem{problem}[defn]{Problem}
\newtheorem{example}[theorem]{Example}
\newtheorem*{rep@theorem}{\rep@title}
\newcommand{\newreptheorem}[2]{%
	\newenvironment{rep#1}[1]{%
		\def\rep@title{#2 \ref{##1}}%
		\begin{rep@theorem}}%
		{\end{rep@theorem}}}
\newcommand{\RR}{\mathbb{R}}
\newcommand{\CC}{\mathbb{C}}
\newcommand{\HH}{\mathcal{H}}
\newcommand{\bs}[1]{\boldsymbol{#1}}
\newcommand{\bv}[1]{\mathbf{#1}}
\newcommand{\wh}{\widehat}
\newcommand{\norm}[1]{\|#1\|}
\newcommand{\opnorm}[1]{\|#1\|_\mathrm{op}}
\DeclareMathOperator{\poly}{poly}
\DeclareMathOperator{\sinc}{sinc}
\DeclareMathOperator{\rank}{rank}
\DeclareMathOperator*{\argmin}{arg\,min}
\newcommand{\Kmu}{\mathcal{K}_{\mu}}
\newcommand{\Gmu}{\mathcal{G}_\mu}
\newcommand{\Fmu}{\mathcal{F}_\mu}
\newcommand{\Imu}{\mathcal{I}_\mu}
\newcommand{\smu}{s_{\mu,\epsilon}}
\newcommand{\tmu}{\tau_{\mu,\epsilon}}
\newcommand{\ttmu}{\tilde{\tau}_{\mu,\epsilon}}
\newcommand{\btmu}{\bar{\tau}_{\mu,\epsilon}}
\newcommand{\tsmu}{\tilde{s}_{\mu,\epsilon}}
\newcommand{\BTC}{\mathbb{B}_{TC}}
\newcommand{\BHS}{\mathbb{B}_{HS}}
\newcommand{\Fmut}{\mathcal{F}_{\mu,t}}
\newcommand{\bFmu}{\mathcal{\bar F}_\mu}
\newcommand{\bPmu}{\mathcal{\bar P}_\mu}
\newcommand{\E}{\mathbb{E}}
\newcommand{\ce}{72}
\newcommand{\cef}{12}
\newcommand{\ceu}{c_p}
\DeclareMathOperator{\tr}{tr}
\DeclareMathOperator{\erf}{erf}
\DeclareMathOperator{\range}{range}
\title{A Universal Sampling Method \\ for Reconstructing Signals with Simple Fourier Transforms}
\author{
	\and Haim Avron\\ \small Tel Aviv University\\ \small \texttt{haimav@post.tau.ac.il}
	\and\and
	Michael Kapralov\\ \small EPFL\\ \small \texttt{michael.kapralov@epfl.ch}
	\and
	Cameron Musco\\ \small Microsoft Research\\ \small \texttt{camusco@microsoft.com}
	\and\and
	Christopher Musco\\ \small Princeton University\\ \small \texttt{cmusco@cs.princeton.edu}
	\and\and
	Ameya Velingker\\ \small Google Research\\ \small \texttt{ameyav@google.com}
	\and
	Amir Zandieh\\ \small EPFL\\ \small \texttt{amir.zandieh@epfl.ch}
}
\begin{document}

\maketitle
\begin{abstract}
	Reconstructing continuous signals based on a small number of discrete samples is a fundamental problem across science and engineering. In practice, we are often interested in signals with ``simple'' Fourier structure -- e.g., those involving frequencies within a bounded range, a small number of frequencies, or a few blocks of frequencies.\footnote{I.e. bandlimited, sparse, and multiband signals, respectively.} More broadly, any prior knowledge about a signal's Fourier power spectrum can constrain its complexity.  Intuitively, signals with more highly constrained Fourier structure require fewer samples to reconstruct.
	
	We formalize this intuition by showing that, roughly, a continuous signal from a given class can be approximately reconstructed using a number of samples proportional to the \emph{statistical dimension} of the allowed power spectrum of that class. We prove that, in nearly all settings, this natural measure tightly characterizes the sample complexity of signal reconstruction.
	
	Surprisingly, we also show that, up to logarithmic factors, a universal non-uniform sampling strategy can achieve this optimal complexity for \emph{any class of signals}. We present a simple, efficient, and general algorithm for recovering a signal from the samples taken. For bandlimited and sparse signals, our method matches the state-of-the-art. At the same time, it gives the first computationally and sample efficient solution to a broad range of problems, including multiband signal reconstruction and kriging and Gaussian process regression tasks in one dimension. 
	
	Our work is based on a novel connection between {randomized linear algebra} and the problem of reconstructing signals with constrained Fourier structure. We extend tools based on {statistical leverage score sampling} and {column-based matrix reconstruction} to the approximation of continuous linear operators that arise in the signal reconstruction problem. We believe that these extensions are of independent interest and serve as a foundation for tackling a broad range of continuous time problems using randomized methods.
\end{abstract}

\thispagestyle{empty}
\clearpage
\setcounter{page}{1}

\section{Introduction}
\label{sec:intro}
 
Consider the following fundamental function fitting problem, pictured in Figure \ref{fig:basic_problem}. We can access a continuous signal $y(t)$ at any time $t \in [0,T]$. We wish to select a finite set of sample times $t_1, \ldots, t_q$ such that, by observing the signal values $y(t_1), \ldots, y(t_q)$ at those samples, we are able to find a good approximation $\tilde{y}$ to $y$ over the entire range $[0,T]$. We also study the problem in a noisy setting, where for each sample $t_i$, we only observe $y(t_i) + n(t_i)$ for some fixed noise function $n$.

\begin{figure}[h]
	\centering
	\captionsetup{width=.85\linewidth}
	\begin{subfigure}[t]{0.48\textwidth}
		\centering
		\includegraphics[width=.7\textwidth]{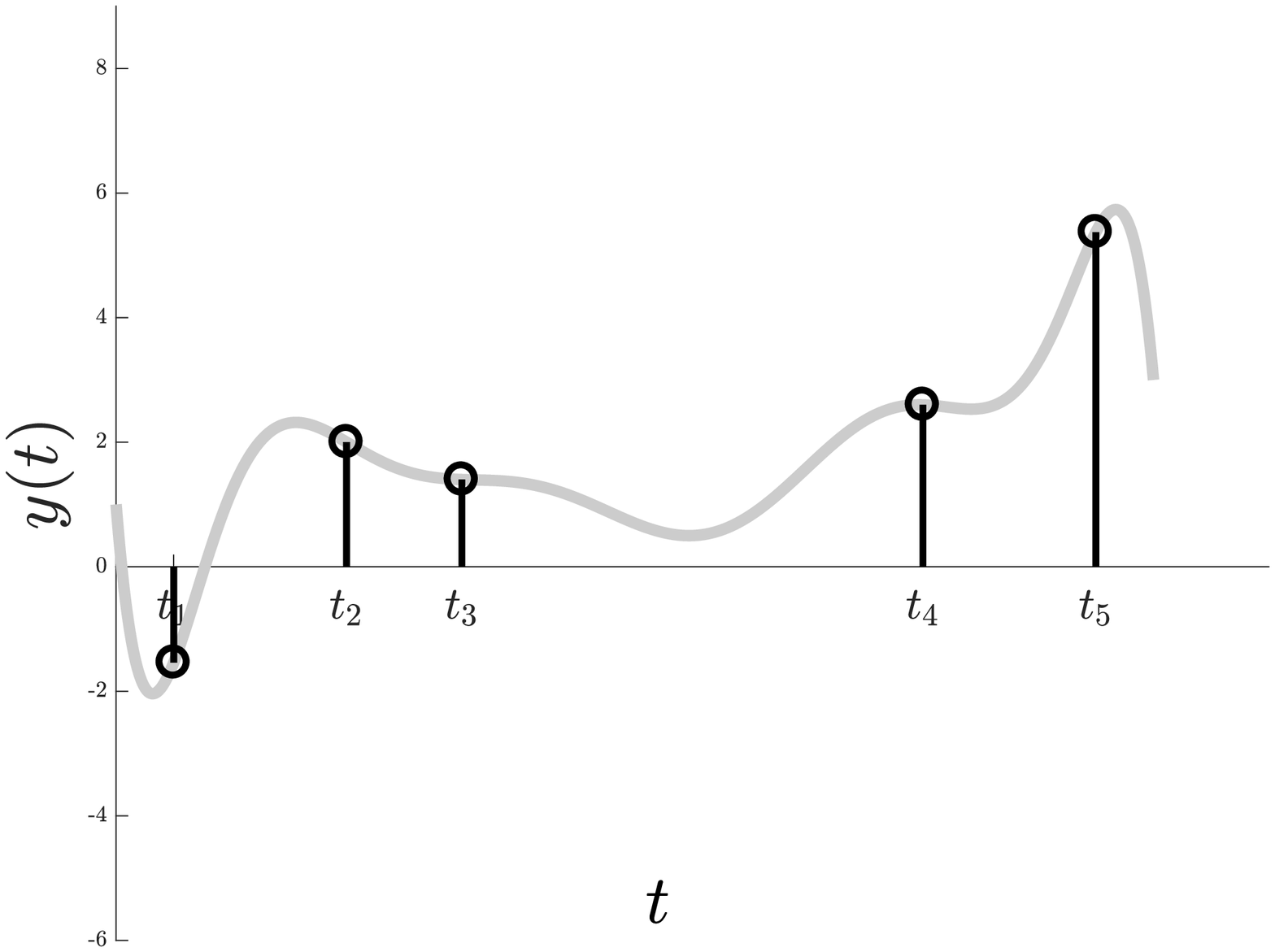}
		\caption{Observed signal $y$ sampled at times $t_1, \ldots, t_q$.}
	\end{subfigure}
	~
	\begin{subfigure}[t]{0.48\textwidth}
		\centering
		\includegraphics[width=.7\textwidth]{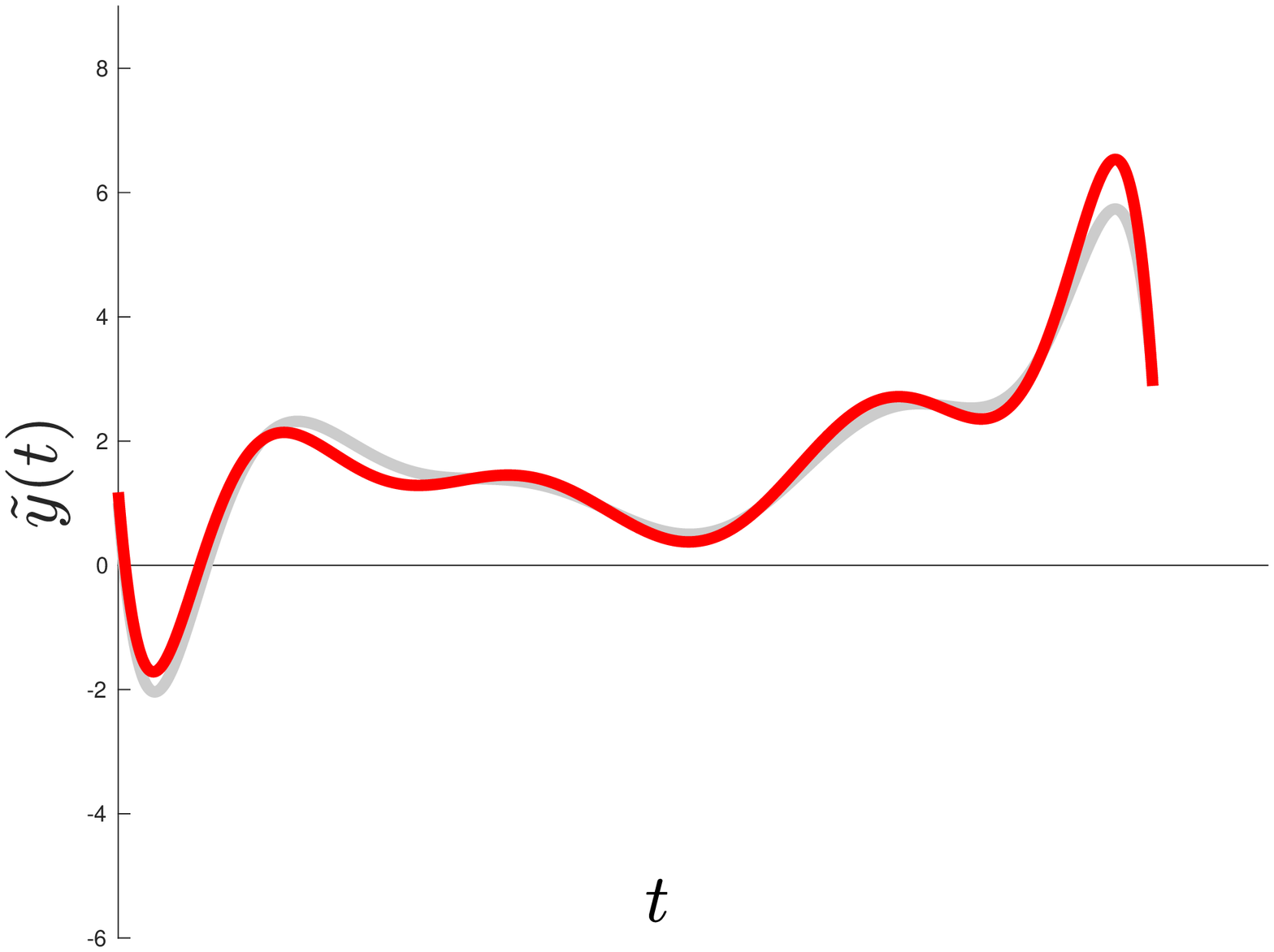}
		\caption{Reconstructed signal $\tilde{y}$ based on samples.}
	\end{subfigure}
	\vspace{-.5em}
	\caption{Our basic function fitting problem requires reconstructing a continuous signal based on a small number of (possibly noisy) discrete samples.}
	\label{fig:basic_problem}
\end{figure}


We seek to understand:
\begin{enumerate}
	\item How many samples $q$ are required to approximately reconstruct $y$ and how should we select these samples?
	\item After sampling, how can we find and represent $\tilde{y}$ in a computationally efficient way?
\end{enumerate}
Answering these questions requires assumptions about the underlying signal $y$. In particular, for the information at our samples $t_1, \ldots, t_q$ to be useful in reconstructing $y$ on the entirety of $[0,T]$, the signal must be smooth, structured, or otherwise ``simple'' in some way.

Across science and engineering, by far one of the most common ways in which structure arises is through various assumptions about $\hat{y}$, the \emph{Fourier transform} of $y$:
\begin{align*}
\hat{y}({\xi}) = \int_{-\infty}^\infty y({t}) e^{-2\pi i {t}{\xi}}\,d{t}.
\end{align*}
Our goal is to understand signal reconstruction under natural constraints on the complexity of $\hat{y}$.

\subsection{Classical sampling theory and bandlimited signals}
Classically, the most standard example of such a constraint is requiring $y$ to be \emph{bandlimited}, meaning that $\hat{y}$ is only non-zero for frequencies $\xi$ with $|\xi| \leq F$ for some bandlimit $F$.
In this case, we recall the famous sampling theory of Nyquist, Shannon, and others \cite{Whittaker:1915,Kotelnikov:1933,Nyquist:1928,Shannon:1949}. This theory shows that $y$ can be reconstructed exactly using sinc interpolation (i.e, Whittaker-Shannon interpolation) if $1/2F$ uniformly spaced samples of $y$ are taken per unit of time (the `Nyquist rate').

Unfortunately, this theory is asymptotic: it requires infinite samples over the entire real line to interpolate $y$, even at a single point. When a finite number of samples are taken over an interval $[0,T]$, sinc interpolation is not a good reconstruction method, either in theory or in practice \cite{Xiao:2001}.\footnote{Approximation bounds can be obtained by truncating the Whittaker-Shannon method; however, they are weak,  depending \emph{polynomially}, rather than \emph{logarithmically}, on the desired error $\epsilon$ (see Appendix \ref{app:prior_work}, Example \ref{shannonLB}).}

This well-known issue was resolved through a seminal line of work by Slepian, Landau, and Pollak \cite{SlepianPollak:1961,LandauPollak:1961,LandauPollak:1962}, who presented a set of explicit basis functions for interpolating bandlimited functions when a finite number of samples are taken from a finite interval. Their so-called ``prolate spheroidal wave functions'' can be combined with numerical quadrature methods \cite{XiaoRokhlinYarvin:2001,ShkolniskyTygertRokhlin:2006,KarnikZhuWakin:2017} to obtain sample efficient (and computationally efficient) methods for bandlimited reconstruction. Overall, this work shows that roughly $O(FT + \log(1/\epsilon))$ samples from $[0,T]$ are required to interpolate a signal with bandlimit $F$ to accuracy $\epsilon$ on that same interval.\footnote{We formalize our notion of accuracy in Section \ref{sec:prob_statement}.}

\subsection{More general Fourier structure}
\label{sec:first_fourier_structure}
While the aforementioned line of work is beautiful and powerful,
in today's world, we are interested in far more general constraints than bandlimits. For example, there is wide-spread interest in \emph{Fourier-sparse} signals \cite{Donoho:2006}, where $\hat{y}$ is only non-zero for a small number of frequencies, and \emph{multiband} signals, where the Fourier transform is confined to a small number of intervals. Methods for recovering signals in these classes have countless applications in communication, imaging, statistics, and a wide variety of other disciplines \cite{Eldar:2015}.

More generally, in statistical signal processing, a \emph{prior distribution}, specified by some probability measure $\mu$, is often assumed on the frequency content of $y$ \cite{EldarUnser:2006,RamaniVilleUnser:2005}. For signals with bandlimit $F$, $\mu$ would be the uniform probability measure on $[-F,F]$. Alternatively, instead of assuming a hard bandlimit, a zero-centered Gaussian prior on $\hat{y}$ can encode knowledge that higher frequencies are less likely to contribute significantly to $y$, although they may still be present. Such a prior naturally suits a Bayesian approach to signal reconstruction \cite{HandcockStein:1993} and, in fact, is essentially equivalent to assuming $y$ is a stationary stochastic process with a certain covariance function (see  Section \ref{sec:notation} and Appendix \ref{app:bayes}). Under various names, including ``Gaussian process regression'' and ``kriging,'' likelihood estimation under a covariance prior is the dominant statistical approach to fitting continuous signals in many scientific disciplines, from geostatistics to economics to medical imaging \cite{Ripley:2005,RasmussenWilliams06}.

\subsection{Our contributions}
Despite their clear importance, accurate methods for fitting continuous signals under most common Fourier transform priors are not well understood, even 50 years after the groundbreaking work of Slepian, Landau, and Pollak on the bandlimited problem. The only exception is Fourier sparse signals: the \emph{noiseless} interpolation problem can be solved using classical methods  \cite{Prony:1795,Pisarenko:1973,BreslerMacovski:1986}, and recent work has resolved the much more difficult noisy case \cite{ChenKanePrice:2016,ChenPrice:2018}.

In this paper, we address the problem far more generally. Our contributions are as follows:

\begin{enumerate}	
	\item We tightly characterize the information theoretic sample complexity of reconstructing $y$ under any Fourier transform prior, specified by probability measure $\mu$. In essentially all settings, we can prove that this complexity scales nearly linearly with a natural \emph{statistical dimension} parameter associated with $\mu$. See Theorem \ref{thm:informal_sample_complexity}.
	
	
	\item We present a method for sampling from $y$ that achieves the aforementioned statistical dimension bound to within a polylogarithmic factor. Our approach is randomized and \emph{universal}: we prove that it is possible to draw $t_1, \ldots, t_q$ from a fixed non-uniform distribution over $[0,T]$ that is \emph{independent of $\mu$}, i.e., ``spectrum-blind.'' In other words, the same sampling scheme works for bandlimited, sparse, or more general priors. See Theorem  \ref{thm:informal_sample_dist}.
	
	\item We show that $y$ can be recovered from $t_1, \ldots, t_q$   using a simple, efficient, and completely general interpolation method. In particular, we just need to solve a kernel ridge regression problem using  $y(t_1), \ldots, y(t_q)$, with an appropriately chosen kernel function for $\mu$. This method runs in $O(q^3)$ time and is already widely used for signal reconstruction in practice, albeit with suboptimal strategies for choosing $t_1, \ldots, t_q$. See Theorem \ref{thm:informal_main}.
\end{enumerate}

\begin{figure}[h]
	\centering
	\captionsetup{width=1\linewidth}
	\begin{subfigure}[t]{0.2\textwidth}
		\centering
		\includegraphics[width=\textwidth]{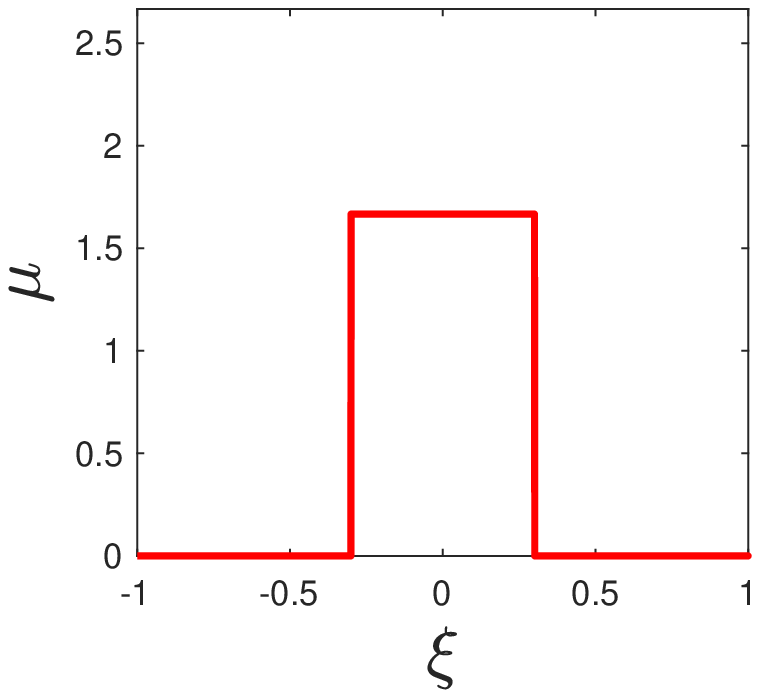}
		\caption{\centering Bandlimited.}
	\end{subfigure}
	~
	\begin{subfigure}[t]{0.2\textwidth}
		\centering
		\includegraphics[width=\textwidth]{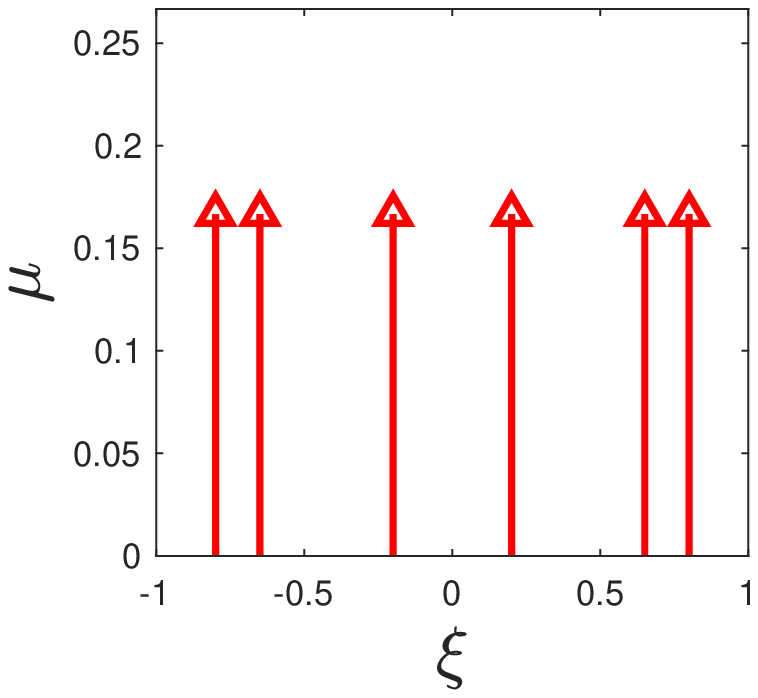}
		\caption{\centering Sparse.}
	\end{subfigure}
	~
	\begin{subfigure}[t]{0.2\textwidth}
		\centering
		\includegraphics[width=\textwidth]{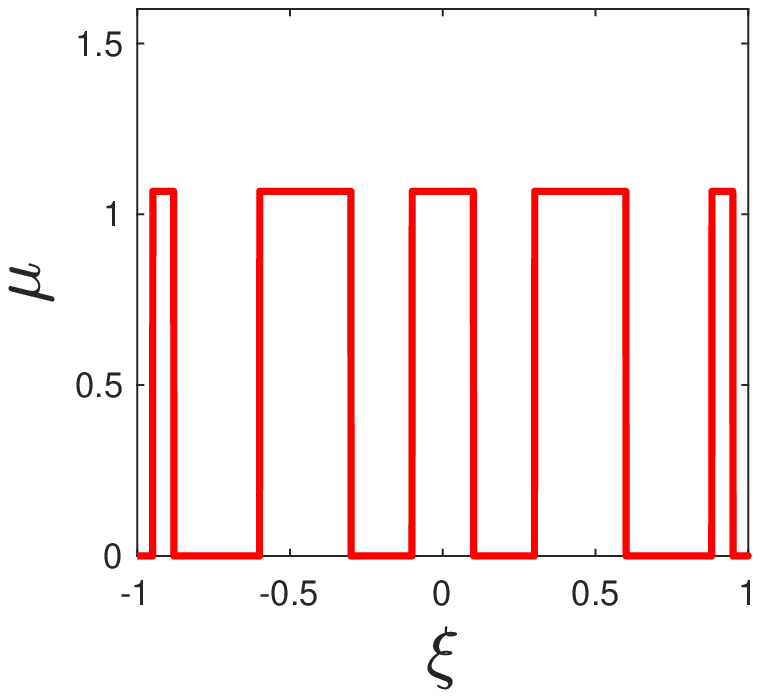}
		\caption{\centering Multiband.}
	\end{subfigure}
	~
	\begin{subfigure}[t]{0.2\textwidth}
		\centering
		\includegraphics[width=\textwidth]{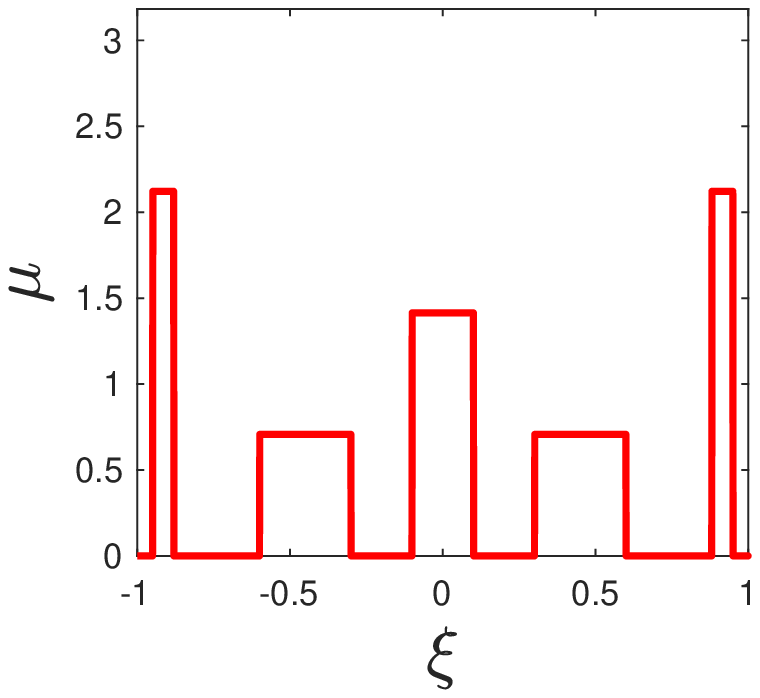}
		\caption{\centering Non-uniform multiband.}
		\label{fig:nonuni_multiband}
	\end{subfigure}
	~
	
	\hspace{-1.2em}
	\begin{subfigure}[t]{0.2\textwidth}
		\centering
		\includegraphics[width=\textwidth]{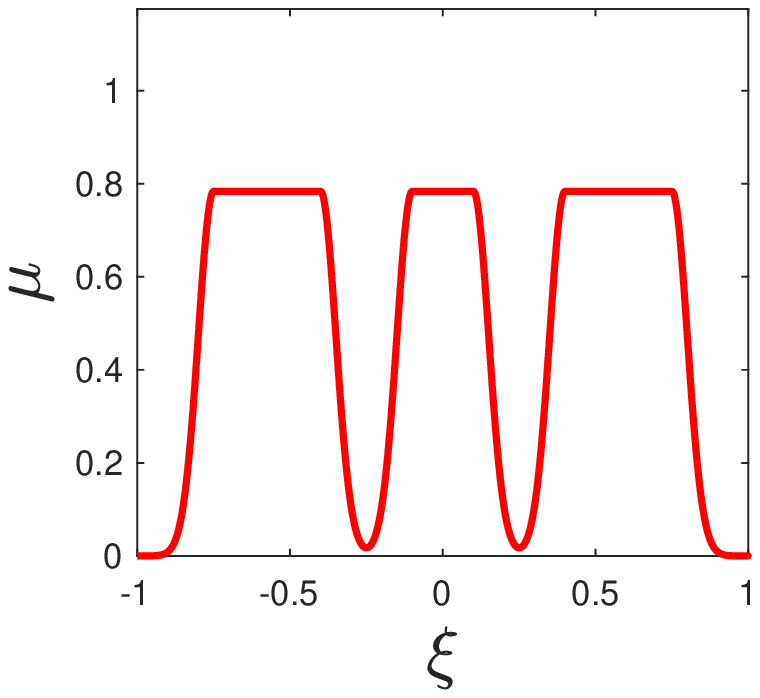}
		\caption{\centering Smooth multiband.}
                \label{fig:smooth_multiband}
	\end{subfigure}
	~
	\begin{subfigure}[t]{0.2\textwidth}
		\centering
		\includegraphics[width=\textwidth]{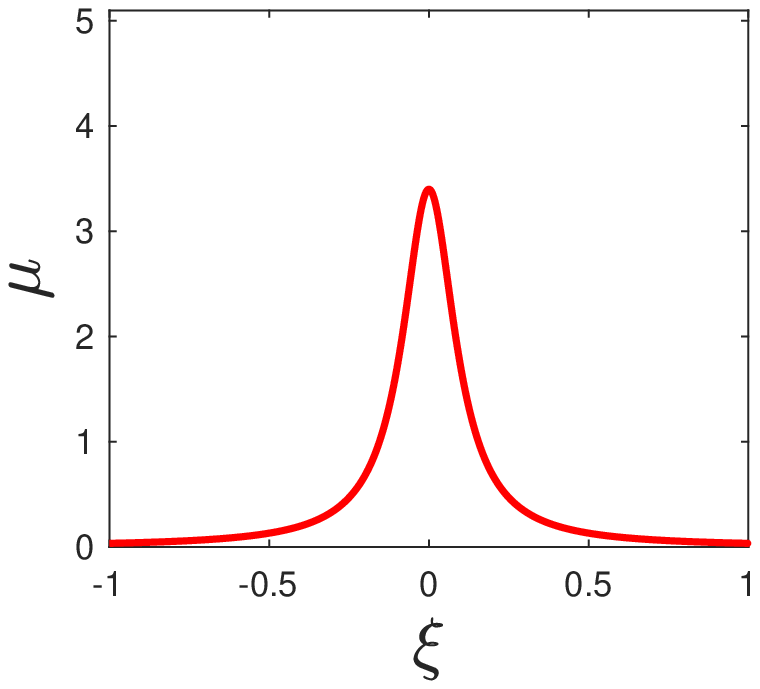}
		\caption{\centering Cauchy-Lorentz.}
	\end{subfigure}
	~
	\begin{subfigure}[t]{0.2\textwidth}
		\centering
		\includegraphics[width=\textwidth]{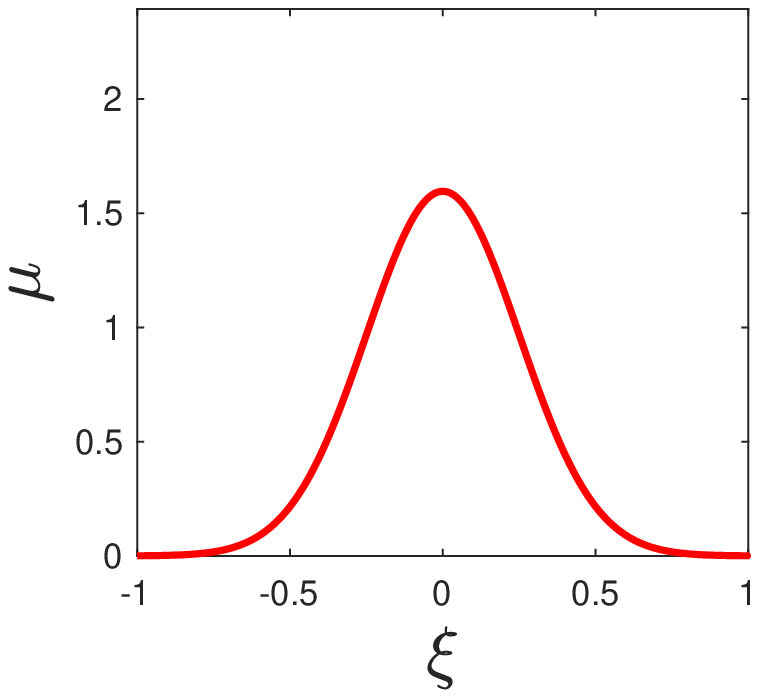}
		\caption{\centering Gaussian.}
	\end{subfigure}
	~
	\begin{subfigure}[t]{0.2\textwidth}
		\centering
		\includegraphics[width=\textwidth]{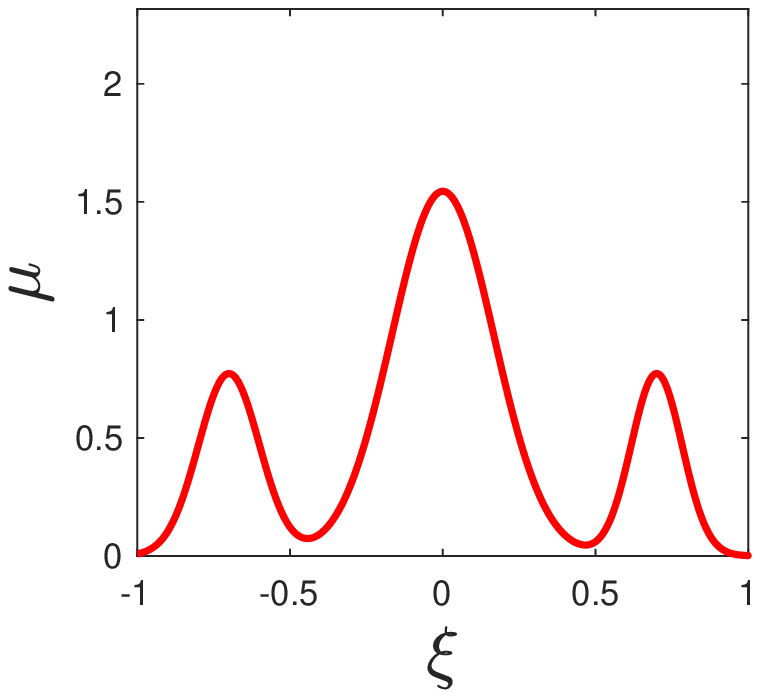}
		\caption{\centering Gaussian mixture.}
	\end{subfigure}
	
	\caption{Examples of Fourier transform ``priors'' induced by various measures $\mu$ (we plot the corresponding density). Our algorithm can reconstruct signals under any of these priors.}
	\label{fig:constraints}
\end{figure}

Overall, this approach gives the first finite sample, provable approximation bounds for all common Fourier-constrained signal reconstruction problems beyond bandlimited and sparse functions.

Our results are obtained by drawing on a rich set of tools from randomized numerical linear algebra, including sampling methods for approximate regression and deterministic column-based low-rank approximation methods \cite{BSS14,CohenNelsonWoodruff16}. Many of these methods view matrices as sums of rank-$1$ outer products and approximate them by sampling or deterministically selecting a subset of these outer products.
We adapt these tools to the approximation of continuous operators, which can be written as the (weak) integral of rank-$1$ operators. For example, our universal time domain sampling distribution is obtained using the notion of \emph{statistical leverage}  \cite{SpielmanSrivastava:2011,AlaouiMahoney:2015,DrineasMahoney:2016}, extended to a continuous Fourier transform operator that arises in the signal reconstruction problem.
We hope that, by extending many of the fundamental contributions of randomized numerical linear algebra to build a toolkit for `randomized operator theory',
our work will offer a starting point for  progress on many signal processing problems using randomized methods.

\section{Formal statement of results}
\label{sec:prob_statement}

As suggested, we formally capture Fourier structure through any probability measure $\mu$ over the  reals.\footnote{Formally, we consider the measure space $(\RR, \mathcal{B},\mu)$ where $\mathcal{B}$ is the Borel $\sigma$-algebra on $\RR$.}
We often refer to $\mu$ as a ``prior'', although we will see that it can be understood beyond the context of Bayesian inference. The simplicity of a set of constraints will be quantified by a natural \emph{statistical dimension} parameter for $\mu$, defined in Section \ref{sec:stat_dimension_def}.

For signals with bandlimit $F$, $\mu$ is the uniform probability measure on $[-F,F]$. For multiband signals, it is uniform on the union of $k$ intervals, while for Fourier-sparse functions, $\mu$ is uniform on a union of $k$ Dirac measures. More general priors are visualized in Figure \ref{fig:constraints}. Those based on Gaussian or Cauchy-Lorentz distributions are especially common in scientific applications, and we will discuss examples shortly. For now, we begin with our main problem formulation.

\begin{problem}
	\label{prob:unformal_interp}
	Given a known probability measure $\mu$ on $\RR$, for any $t \in [0,T]$, define the inverse Fourier transform of a function $g(\xi)$ with respect to $\mu$ as
	\begin{align}
	\label{eq:mu_transform}
	\left[\Fmu^* \,g\right](t) \eqdef \int_{\RR} g(\xi) e^{2\pi i \xi t}\, d\mu(\xi).
	\end{align}
	Suppose our input $y$ can be written as $y = \mathcal{F}^*_{\mu} \, x$ for some frequency domain function $x(\xi)$ and, for any chosen $t$, we can observe $y(t) + n(t)$ for some fixed noise function $n(t)$. Then, for error parameter $\epsilon$, our goal is to recover an approximation $\tilde y$ satisfying
	\begin{align}
	\label{eq:main_guarantee}
	\| y - \tilde{y}\|_T^2 \leq \epsilon \|x\|_{\mu}^2 + C\|n\|_T^2,
	\end{align}
	where $\|x\|_{\mu}^2 \eqdef \int_{\RR} |x(\xi)|^2\, d\mu(\xi)$ is the energy of the function $x$ with respect to $\mu$, while  $\|z\|_T^2\eqdef \frac{1}{T}\int_0^T |z(t)|^2 dt$, so that $\| y - \tilde{y}\|_T^2$ is our mean squared error and $\|n\|_T^2$ is the mean squared noise level. $C \ge 1$ is a fixed positive constant.
\end{problem}

Unlike the  $\|x\|_{\mu}^2$ term in \eqref{eq:main_guarantee}, which we can control by adjusting $\epsilon$, we can never hope to recover $y$ to accuracy better than $\|n\|_T^2$. Accordingly, we consider $\norm{n}_T^2$  to be small and are happy with any solution of Problem \ref{prob:unformal_interp} that is within a constant factor of optimal -- i.e., where $C = O(1)$.

Problem \ref{prob:unformal_interp} captures signal reconstruction under all standard Fourier transform constraints, including bandlimited, multiband, and sparse signals.\footnote{For sparse or multiband signals, Problem \ref{prob:unformal_interp} assumes frequency or band locations are known \emph{a priori}. There has been significant work on algorithms that can recover $y$ when these locations are not known \cite{MishaliEldar:2009,Moitra:2015,PriceSong:2015,ChenKanePrice:2016}. Understanding this more complicated problem in the multiband case is an important future direction.
} The error in \eqref{eq:main_guarantee} naturally scales with the average energy of the signal over the allowed frequencies. For more general priors, $\|x\|_{\mu}^2$ will be larger when $y$ contains a significant component of frequencies with low density in $\mu$.\footnote{Informally, decreasing $d\mu(\xi)$ by a factor of $c > 1$ requires increasing $x(\xi)$ by a factor of $c$ to give the same time domain signal. This increases $x(\xi)^2$ by  a factor of $c^2$ and so increases its contribution to $\|x\|_{\mu}^2$ by a factor of $c^2/c = c$.
} For a given number of samples, we would thus incur larger error in \eqref{eq:main_guarantee} in comparison to a signal that uses more ``likely'' frequencies.

As an alternative to Problem \ref{prob:unformal_interp}, we can formulate signal fitting from a Bayesian perspective. We assume that $n$ is independent random noise, and $y$ is a stationary stochastic process with expected power spectral density $\mu$. 
This assumption on $y$'s power spectral density is equivalent to assuming that $y $ has covariance function (a.k.a. autocorrelation) $\hat{\mu}(t)$, which is the type of prior used in kriging and Gaussian process regression.
While we focus on the formulation of Problem \ref{prob:unformal_interp} in this work, we give an informal discussion of the Bayesian setup in Appendix \ref{app:bayes}.

\subsubsection{Examples and applications}

As discussed in Section \ref{sec:first_fourier_structure}, ``hard constraint'' versions of Problem \ref{prob:unformal_interp}, such as bandlimited, sparse, and multiband signal reconstruction, have many applications in communications, imaging, audio, and other areas of engineering. Generalizations of the multiband problem to non-uniform measures (see Figure \ref{fig:nonuni_multiband}) are also useful in various communication problems \cite{MishaliEldar:2010}.

On the other hand, ``soft constraint'' versions of the problem are widely applied in scientific applications. In medical imaging, images are often denoised by setting $\mu$ to a heavy-tailed Cauchy-Lorentz measure on frequencies \cite{Fuderer:1989,LettingtonHong:1995,BourgeoisWajerOrmondt:2001}. This corresponds to assuming an exponential covariance function for spatial correlation. Exponential covariance and its generalization, Mat\'{e}rn covariance, are also common in the earth and geosciences \cite{Ripley:1989,Ripley:2005}, as well as in general image processing \cite{Pesquet-PopescuVehel:2002,RamaniVilleUnser:2006}.

A Gaussian  prior $\mu$, which corresponds to Gaussian covariance, is also used to model both spatial and temporal correlation in medical imaging \cite{FristonJezzardTurner:1994,WorsleyMarrettNeelin:1996} and is very common in machine learning. Other choices for $\mu$ are practically unlimited. For example, the  popular ArcGIS kriging library also supports the following covariance functions: circular, spherical, tetraspherical, pentaspherical, rational quadratic, hole effect, k-bessel, and j-bessel, and stable \cite{ESRI:2018}.

\subsection{Sample complexity}
\label{sec:stat_dimension_def}

With Problem \ref{prob:unformal_interp} defined, our first goal is to characterize the number of samples required to reconstruct $y$, as a function of the \emph{accuracy parameter} $\epsilon$, the \emph{range} $T$, and the \emph{measure} $\mu$. We do so using what we refer to as the \emph{Fourier statistical dimension} of $\mu$, which corresponds to the standard notion of statistical or `effective dimension' for regularized function fitting problems \cite{HastieTibshiraniFriedman:2002,Zhang:2005}.
\begin{defn}[Fourier statistical dimension]\label{def:statDim}
	For a probability measure $\mu$ on $\RR$ and time length $T$, define the kernel operator $\Kmu: L_2(T)\rightarrow L_2(T)$\footnote{$L_2(T)$ denotes the complex-valued square integrable functions with respect to the uniform measure on $[0,T]$. } as:
	\begin{align}
	\label{eq:kernel_op}
	[\Kmu z](t) \eqdef \int_{\xi \in \RR} e^{2\pi i \xi t} \left[\frac{1}{T}\int_{s \in [0,T]} z(s)e^{-2\pi i \xi s} \, ds\right] d\mu(\xi).
	\end{align}
	Note that $\Kmu$ is self-adjoint, positive semidefinite and trace-class.\footnote{See Section \ref{sec:notation} for a formal explanation of these facts.}
		The Fourier statistical dimension for $\mu$, $T$, and error $\epsilon$ is denoted by $\smu$ and defined as:
	\begin{align}
	\label{eq:stat_dim_def}
	\smu \eqdef \tr(\Kmu (\Kmu + \epsilon \mathcal{I}_T)^{-1}),
	\end{align}	
	where $\mathcal{I}_T$ is the identity operator on $L_2(T)$. Letting $\lambda_i(\Kmu)$ denote the $i^{th}$ largest eigenvalue of $\Kmu$, we may also write
	\begin{align}
	\label{def:stat_dim_sum_version}
	\smu = \sum_{i=1}^\infty \frac{\lambda_i\left(\Kmu\right)}{\lambda_i\left(\Kmu\right) + \epsilon}.
	\end{align}	
\end{defn}
Note that $\Kmu$ and $s_{\mu,\epsilon}$ as defined above, and $\Fmu$ as defined in Problem \ref{prob:unformal_interp} all depend on $T$ and thus could naturally be denoted $\mathcal{F}_{\mu,T}$, $\mathcal{K}_{\mu,T}$, and $s_{\mu,\epsilon,T}$. However, since $T$ is fixed throughout our results, for conciseness we do not use $T$ in our notation for these and related notions.

It is not hard to see that $\smu$ increases as $\epsilon$ decreases, meaning that we will require more samples to obtain a more accurate solution to Problem \ref{prob:unformal_interp}.
The operator $\Kmu$ corresponds to taking the Fourier transform of a time domain input $z(t)$, scaling that transform by $\mu$, and then taking the inverse Fourier transform. Readers familiar with the literature on bandlimited signal reconstruction will recognize $\Kmu$ as the natural generalization of the frequency limiting operator studied in the work of Landau, Slepian, and Pollak on prolate spheroidal wave functions \cite{SlepianPollak:1961,LandauPollak:1961,LandauPollak:1962}. In that work, it is established that a quantity nearly identical to $\smu$ bounds the sample complexity of solving Problem \ref{prob:unformal_interp} for bandlimited functions.

Our first technical result is that this is actually true \emph{for any prior} $\mu$.

\begin{theorem}[Main result, sample complexity]
	\label{thm:informal_sample_complexity} 
	For any probability measure $\mu$, Problem \ref{prob:unformal_interp} can be solved using $q = O\left(\smu\cdot \log \smu\right)$ noisy signal samples $y(t_1) + n(t_1), \ldots, y(t_q) + n(t_q)$.
\end{theorem}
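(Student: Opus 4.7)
The plan is to cast Problem \ref{prob:unformal_interp} as a kernel ridge regression problem in function space, and then to recover $\tilde y$ by sampling time points according to ridge leverage scores associated with the operator $\mathcal{F}_\mu^*$, extending the classical leverage-score / spectral-approximation framework from matrices to continuous operators.

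First, I would reformulate the problem. Since $\mathcal{K}_\mu = \mathcal{F}_\mu \mathcal{F}_\mu^*$ (up to the natural normalization), the right estimator is the kernel ridge regression estimator
\[
\tilde x \;=\; \argmin_{x \in L_2(\mu)} \;\bigl\| \mathcal{F}_\mu^* x - y \bigr\|_T^2 \;+\; \epsilon\,\|x\|_\mu^2,\qquad \tilde y = \mathcal{F}_\mu^* \tilde x,
\]
and classical population-level analysis of ridge regression shows that in the \emph{full information} setting $\|y - \tilde y\|_T^2 \le \epsilon\|x\|_\mu^2 + O(\|n\|_T^2)$. The task is thus to reproduce this guarantee from only $q$ samples.

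Second, define the continuous ridge leverage function. For each $t\in[0,T]$, let $\phi_t \in L_2(\mu)$ be the Fourier character $\phi_t(\xi) = e^{2\pi i \xi t}$, so that $y(t) = \langle \phi_t, x\rangle_\mu$ and the kernel operator factors as $\mathcal{K}_\mu = \frac{1}{T}\int_0^T \phi_t\phi_t^*\,dt$ in the weak sense. Define
\[
\tau_{\mu,\epsilon}(t) \;\eqdef\; \phi_t^*\,(\mathcal{K}_\mu+\epsilon\,\mathcal{I})^{-1}\,\phi_t,
\]
where the inverse is taken on $L_2(\mu)$ after identifying the range of $\mathcal{F}_\mu$. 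A direct calculation (and cyclicity of the trace) gives $\frac{1}{T}\int_0^T \tau_{\mu,\epsilon}(t)\,dt = \tr\bigl(\mathcal{K}_\mu(\mathcal{K}_\mu+\epsilon\mathcal{I})^{-1}\bigr) = \smu$, so leverage scores integrate to the statistical dimension. I would then sample $q$ i.i.d.\ times $t_1,\ldots,t_q$ from the density $p(t)\propto \tau_{\mu,\epsilon}(t)$ and form the importance-weighted empirical operator $\widetilde{\mathcal{K}} = \frac{1}{q}\sum_{i=1}^q \frac{1}{p(t_i)}\phi_{t_i}\phi_{t_i}^*$, solving the induced weighted regression problem on these samples.

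The main obstacle, and the technical heart of the argument, is establishing a \emph{one-sided spectral approximation} of the form
\[
(1-\tfrac{1}{2})\bigl(\mathcal{K}_\mu + \epsilon\mathcal{I}\bigr) \;\preceq\; \widetilde{\mathcal{K}} + \epsilon\mathcal{I} \;\preceq\; (1+\tfrac{1}{2})\bigl(\mathcal{K}_\mu + \epsilon\mathcal{I}\bigr)
\]
with high probability once $q = \Omega(\smu \log \smu)$. In finite dimensions this follows from a matrix Bernstein bound on sums of independent rank-one terms whose variance is controlled by $\max_t \tau_{\mu,\epsilon}(t)/p(t) \le \smu$; the challenge here is to run the same argument for sums of rank-one \emph{operators} on an infinite-dimensional $L_2$ space. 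I would handle this by passing through the eigendecomposition of the trace-class operator $\mathcal{K}_\mu$ to reduce the effective problem to a countable spectrum, applying an intrinsic-dimension matrix Bernstein inequality with effective dimension $\smu$, and truncating tails using the trace-class property and the $\log\smu$ factor to control the union bound. This step is the genuinely new ingredient that the paper advertises under the banner of ``randomized operator theory.''

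Finally, I would close the argument by standard regression algebra: the spectral approximation above implies that the empirical objective $\sum_i \frac{1}{q\,p(t_i)}|\mathcal{F}_\mu^* \tilde x(t_i) - y(t_i) - n(t_i)|^2 + \epsilon\|\tilde x\|_\mu^2$ is within constants of the population objective on the entire range space of $\mathcal{F}_\mu^*$. Minimizing the empirical objective then yields $\tilde y$ satisfying $\|y-\tilde y\|_T^2 \le \epsilon \|x\|_\mu^2 + C\|n\|_T^2$ with $C = O(1)$, as required by \eqref{eq:main_guarantee}.
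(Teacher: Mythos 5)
Your plan matches the paper's route: reduce Problem~\ref{prob:unformal_interp} to kernel ridge regression over $L_2(\mu)$, sample time points proportionally to the ridge leverage function, prove an operator spectral approximation by a Bernstein/Minsker-style truncation argument, and close via regression analysis. (A small notational slip: the object whose spectral approximation you need is the Gram operator $\Gmu = \Fmu\Fmu^*$ on $L_2(\mu)$, not $\Kmu = \Fmu^*\Fmu$ on $L_2(T)$; and $\frac{1}{T}\int_0^T \varphi_t\otimes\varphi_t\,dt = \Gmu$. They are isospectral, but the leverage-score inner-product formula and the concentration argument live on $L_2(\mu)$.)

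The gap is in the last step. You assert that the two-sided spectral approximation of $\Gmu+\epsilon\Imu$ by itself implies the empirical objective is uniformly within constants of the population objective. That inference does not hold: the spectral bound controls $\|\bv{F}^*h\|_2^2 + \epsilon\|h\|_\mu^2$ against $\|\Fmu^* h\|_T^2 + \epsilon\|h\|_\mu^2$ for $h\in L_2(\mu)$, but the regression objective after expanding around the population minimizer $g^\star$ contains a cross term $\Re\langle \bv{F}^*(g-g^\star),\, \bv{F}^*g^\star - \bv{\bar y}\rangle$ involving the residual $\Fmu^*g^\star - (y+n)$. In the population problem this term vanishes by orthogonality, but after sampling it does not, and the spectral approximation alone gives no control over it. The paper handles it with a separate approximate-operator-application bound (Claim~\ref{clm:approxMatrixMult}), proved by a Markov-type variance argument, and this is where the $1/\delta$ failure probability actually gets spent. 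If you only care about the exact empirical minimizer $\tilde g$, one can sidestep the uniform affine embedding: apply the orthogonal decomposition (Claim~\ref{clm:affineDecomp}) to the empirical problem at $\tilde g$, evaluate at $g^\star$, drop the nonnegative empirical optimum, bound $\|\bv{F}^*g^\star - \bv{\bar y}\|_2^2$ by Markov for the fixed vector $g^\star$, and then transfer $\|\bv{F}^*(\tilde g - g^\star)\|_2^2 + \epsilon\|\tilde g - g^\star\|_\mu^2$ back to the continuous norm using only the one-sided spectral bound $\Gmu+\epsilon\Imu \preceq 2(\widehat{\Gmu}+\epsilon\Imu)$. Either way, you still need a second probabilistic ingredient beyond the spectral approximation; this is a genuine step, not pure ``algebra,'' and it must be supplied to make the proof complete.
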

What does Theorem \ref{thm:informal_sample_complexity} imply for common classes of functions with constrained Fourier transforms? Table \ref{tab:stat_dim} includes a list of upper bounds on $\smu$ for many standard priors.

\begin{table}[h]
	\centering
	\renewcommand{\arraystretch}{1.2}
	\begin{tabular}{ l  l  l }
		\toprule
		Fourier prior, $\mu$ & Statistical dimension, $\smu$ & Proof \\ \midrule
		$k$-sparse & $k$ & Since $\Kmu$ has rank $k$. \\
		bandlimited to $[-F,F]$ & $O\left(FT + \log(1/\epsilon)\right)$ & Theorem \ref{thm:bandlimited_leverage_scores}.\\
		multiband, widths $F_1, \ldots, F_s$ & $O\left(\sum_i F_iT + s\log(1/\epsilon)
		\right)$ & Theorem \ref{theorem-multiband-statdim}.\footnotemark\\
		Gaussian, variance $F$ & $O\left(F T\sqrt{\log(1/\epsilon)} + \log(1/\epsilon) \right)$ & Theorem \ref{thm:guassianStatDim}.  \\
		Cauchy-Lorentz, scale $F$ & $O\left(F T\sqrt{1/\epsilon} + \sqrt{1/\epsilon}\right)$ & Theorem \ref{thm:lorentzStatDim}.  \\
		\bottomrule
	\end{tabular}
	\caption {Statistical dimension upper bounds for common Fourier interpolation problems. Our result (Theorem \ref{thm:informal_sample_complexity}) requires $O(\smu\cdot \log \smu)$ samples.} \label{tab:stat_dim}
\end{table}
\footnotetext{Just as Theorem \ref{thm:bandlimited_leverage_scores} intuitively matches the Nyquist sampling rate, Theorem \ref{theorem-multiband-statdim} intuitively matches the Landau rate for asymptotic recovery of multiband functions \cite{Landau:1967a}.}

A complexity of $O(\smu\cdot \log \smu)$ equates to $\tilde{O}(k)$ samples for $k$-sparse functions and $\tilde{O}(FT + \log1/\epsilon)$ for bandlimited functions. 
Up to log factors, these bounds are tight for these well studied problems. In Section \ref{sec:lb}, we show that Theorem \ref{thm:informal_sample_complexity} is actually tight for all common Fourier transform priors:  $\Omega(\smu)$ time points are required for solving Problem \ref{prob:unformal_interp} as long as $\smu$ grows slower than $1/\epsilon^p$ for some $p < 1$. This property holds for all $\mu$ in Table \ref{tab:stat_dim}. 
We conjecture that our lower bound can be extended to hold even {without this weak assumption}.

To compliment the sample complexity bound of Theorem \ref{thm:informal_sample_complexity}, we introduce a \emph{universal method} for selecting samples $t_1, \ldots, t_q$ that nearly matches this complexity. Our method selects samples at random, in a way that \emph{does not depend} on the specific prior $\mu$. 

\begin{theorem}[Main result, sampling distribution]
	\label{thm:informal_sample_dist}
	For any sample size $q$, there is a fixed probability density $p_q$ over $[0,T]$ such that, if $q$ time points $t_1, \ldots, t_q$ are selected independently at random according to $p_q$, and $q \geq c\cdot \smu\cdot\log^2 \smu$ for some fixed constant $c$, then it is possible to solve Problem \ref{prob:unformal_interp} with probability 99/100 using the noisy signal samples $y(t_1) + n(t_1), \ldots, y(t_q) + n(t_q)$.\footnote{In Section \ref{sec:puttingTogether}, we formally quantify the tradeoff between success probability and sample complexity.}
\end{theorem}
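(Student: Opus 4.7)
The plan is to build on ridge leverage score sampling from randomized linear algebra, lifted to the continuous operator setting. For the measure $\mu$ and parameter $\epsilon$, define the ridge leverage function $\tmu(t) = \langle k_t, (\Kmu+\epsilon \mathcal{I}_T)^{-1}k_t\rangle_{L_2(T)}$, where $k_t(\cdot) \eqdef [\Kmu \delta_t](\cdot)$ is the time-$t$ slice of the kernel. Two standard leverage facts then hold: $\frac{1}{T}\int_0^T \tmu(t)\,dt = \smu$, and drawing $O(\smu \log \smu)$ points i.i.d.\ from the density $\tmu(t)/(T\smu)$ and solving a weighted kernel ridge regression with kernel $\hat{\mu}$ produces an estimate $\tilde y$ satisfying \eqref{eq:main_guarantee}. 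The first step of the proof is to make this sampling-plus-regression pipeline rigorous in the operator setting, via an operator Bernstein inequality (Tropp / Minsker intrinsic-dimension form) applied to a sum of random self-adjoint rank-one operators whose expectation is $\mathcal{F}_\mu^*\mathcal{F}_\mu$, measured in the $(\Kmu+\epsilon\mathcal{I}_T)^{-1/2}$-spectral norm.

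The second and central step is to replace the $\mu$-dependent density $\tmu/(T\smu)$ with a fixed, spectrum-blind density $p_q$. I would do this by establishing a \emph{universal envelope}: for every probability measure $\mu$, every $\epsilon$, and every $t \in [0,T]$,
\[
\frac{\tmu(t)}{T\cdot\smu} \;\le\; C\log(\smu)\cdot \tilde p\!\left(t;\, \smu, T\right),
\]
where $\tilde p(\,\cdot\,;\, s, T)$ is an explicit, $\mu$-free probability density on $[0,T]$. Motivated by the known behavior for bandlimited signals and by the Landau--Slepian--Pollak analysis, I expect $\tilde p$ to be a mixture of the uniform density on $[0,T]$ and a boundary-weighted arcsine-type density proportional to $1/\sqrt{t(T-t)}$, with mixture weights parametrized by $s$. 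Since $p_q$ may only depend on $q$ and $T$, I would set $p_q = \tilde p(\,\cdot\,;\, s_q, T)$ with $s_q \asymp q/\log^2 q$; the hypothesis $q \ge c\cdot \smu\log^2 \smu$ guarantees $s_q \ge \smu$, and monotonicity of the envelope in $s$ preserves pointwise domination.

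With the envelope in hand, sampling $t_1,\dots,t_q$ i.i.d.\ from $p_q$ is equivalent to true ridge leverage score sampling with every importance weight inflated by at most a factor of $O(\log \smu)$. Plugging this inflation into the operator Bernstein bound of the first step shows that $q = O(\smu\log^2 \smu)$ such samples yield, with probability at least $99/100$, an $\epsilon$-spectral approximation of $\Kmu$ sufficient for the regression guarantee. Solving the weighted kernel ridge regression problem with kernel $\hat\mu$ on the resulting samples then gives $\tilde y$ satisfying \eqref{eq:main_guarantee}, matching Theorem \ref{thm:informal_sample_complexity} up to the extra log factor.

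The hard part is the universal envelope. Bounding $\tmu(t)$ by a $\mu$-free function of $t$ requires exploiting the plane-wave structure $e^{2\pi i \xi t}$ of the Fourier integrand and a variational/duality argument: $\tmu(t)$ equals the maximum squared value at $t$ over functions in the RKHS of $\Kmu$ of bounded regularized norm, and taking a further supremum over $\mu$ reduces to a $\mu$-independent extremal problem for $L_2([0,T])$ functions with controlled effective spectral concentration. Solving this extremal problem is where I expect all the technical difficulty to live; once it produces an arcsine-type envelope, the rest of the proof is a direct assembly of continuous leverage score sampling and kernel ridge regression estimates.
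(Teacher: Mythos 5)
Your overall framework is exactly the paper's: reduce to ridge leverage function sampling, invoke an operator-Bernstein concentration bound (Tropp/Minsker intrinsic-dimension form, the paper's Lemma on operator concentration) to show that importance sampling by any pointwise upper bound on $\tmu$ discretizes the regression problem, and then establish a $\mu$-free envelope so that the sampling density can be made spectrum-blind. Up to a normalization issue in your definition of $\tmu$ (your $\langle k_t,(\Kmu+\epsilon\mathcal{I}_T)^{-1}k_t\rangle_T$ corresponds to the diagonal of $\Kmu(\Kmu+\epsilon\mathcal{I}_T)^{-1}\Kmu$ rather than $\Kmu(\Kmu+\epsilon\mathcal{I}_T)^{-1}$; you want a square-root factor), this first stage matches Claim~\ref{claim:regression_reduction}, Theorem~\ref{thm:baseSampling}, and Lemma~\ref{lem:operatorApproximation}, and the way you tune $p_q$ via $s_q\asymp q/\log^2 q$ to cover the hypothesis $q\geq c\,\smu\log^2\smu$ is consistent with Section~\ref{sec:puttingTogether}.

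The gap is that the ``universal envelope'' is not a sub-step you can defer: it \emph{is} the theorem, and your proposal explicitly punts it (``this is where I expect all the technical difficulty to live''). The paper proves it by combining two arguments you do not anticipate. First, a uniform bound $\tmu(t)\le\poly(\smu)/T$ (Theorem~\ref{thm:uniformBound}): one shows, via a continuous analog of column subset selection / BSS sparsification (Theorem~\ref{thm:css} and Lemma~\ref{lem:cssSpectral}), that $\Fmu^*$ admits a near-optimal approximation whose columns are spanned by $O(\smu)$ frequencies; projecting onto those frequencies yields $O(\smu)$-sparse Fourier functions, whose pointwise blow-up over $[0,T]$ is controlled by the Chen–Kane–Price smoothness lemma (Lemma~\ref{lem:92}). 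Second, a gap-based bound $\tmu(t)\le\smu/\min(t,T-t)$ (Theorem~\ref{thm:gapBound}), via a shift argument: restrict the Fourier operator to $[0,t]$, use pigeonhole on the restricted statistical dimension to find a point $s^\star\in[0,t]$ with small leverage, and translate the certifying $\beta^\star$ from $s^\star$ to $t$ exploiting the modulation identity $\varphi_t(\xi)=e^{-2\pi i(t-s^\star)\xi}\varphi_{s^\star}(\xi)$. Combining these gives $\tilde\tau_\alpha(t)=\alpha/(256\min(t,T-t))$ away from the endpoints, capped at $\alpha^6/T$ near them. Your proposed variational ``extremal problem over $L_2([0,T])$ with controlled spectral concentration'' does not engage with either mechanism, and it is unclear how a direct duality argument could produce the required $\mu$-independence.

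Separately, your guessed envelope shape $\propto 1/\sqrt{t(T-t)}$ is too optimistic for the general case. An arcsine-type envelope is what the paper proves for bandlimited $\mu$ only (Theorem~\ref{thm:bandlimited_leverage_scores}, via Legendre/Bernstein polynomial smoothness), and it gives sample complexity $O(\smu\log\smu)$ with no extra log. For general $\mu$ the paper only establishes the steeper $1/\min(t,T-t)$ behavior, which is not integrable on $[0,T]$ and therefore must be truncated near the endpoints by the uniform bound; it is exactly this truncation that produces the $\log\smu$ inflation in $\tsmu$. If an arcsine envelope held universally, the extra log would disappear — but no argument for that appears in your proposal or in the paper, and it is not clear that it is true (e.g., for $\mu$ concentrated on a few Dirac masses).
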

Theorem \ref{thm:informal_sample_dist} is our main technical contribution. By achieving near optimal sample complexity with a universal distribution, it shows that wide range of Fourier constrained interpolation problems considered in the literature are more closely related than previously understood.

Moreover, $p_q$ (which is formally defined in Theorem \ref{thm:fullBound}) is very simple to describe and sample from. As may be intuitive from results on polynomial interpolation, bandlimited approximation, and other function fitting problems, it is more concentrated towards the endpoints of $[0,T]$, so our sampling scheme selects more time points in these regions. The density is shown in Figure \ref{fig:our_dist}.

\begin{figure}[h]
	\centering
	\captionsetup{width=1\linewidth}
	\begin{subfigure}[t]{0.48\textwidth}
		\centering
		\includegraphics[width=.7\textwidth]{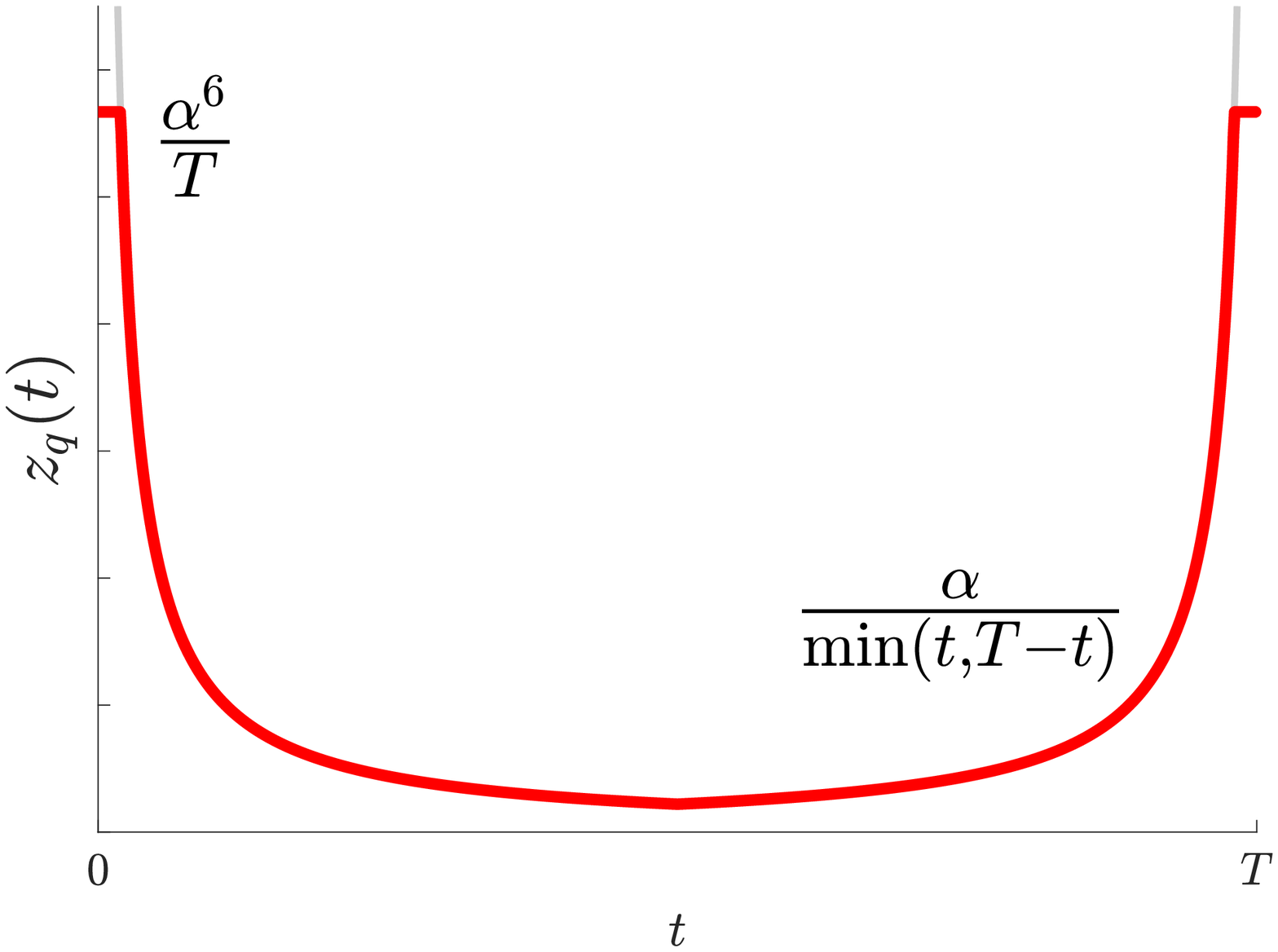}
		\caption{Density for selecting time points.}
	\end{subfigure}
	~
	\begin{subfigure}[t]{0.48\textwidth}
		\centering
		\includegraphics[width=.7\textwidth]{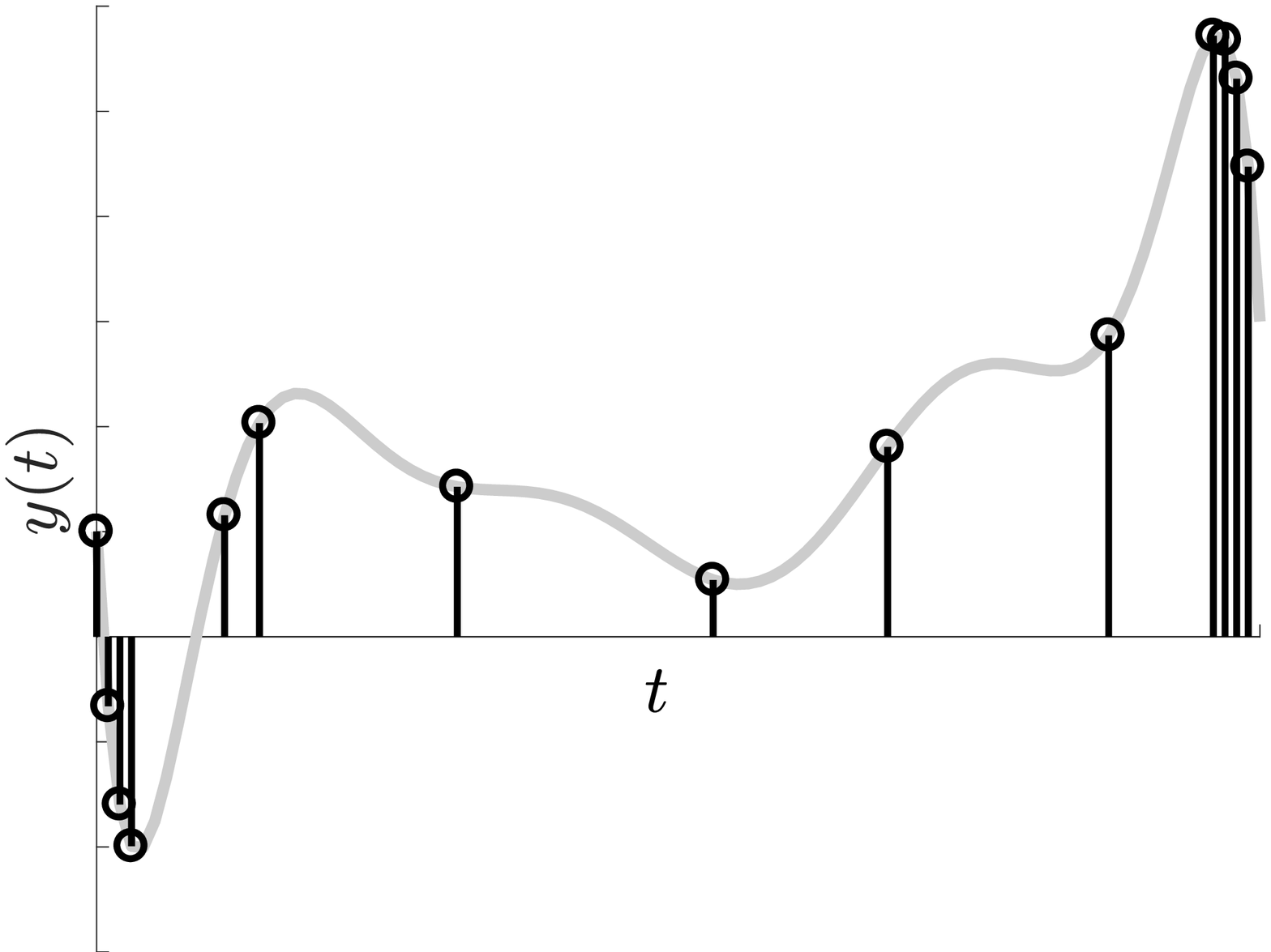}
		\caption{Example set of nodes sampled according to $p_q$.}
	\end{subfigure}
	\vspace{-.5em}
	\caption{A visualization of the universal sampling distribution, $p_q$, which can be used for reconstructing a signal under any Fourier transform prior $\mu$. To obtain $p_q$ for a given number of samples $q$, choose $\alpha$ so that $q = \Theta(\alpha \log^2 \alpha)$. Set $z_q(t)$ equal to $\alpha/\min(t,T-t)$, except near $0$ and $T$, where the function is capped at $z_q(t) = \alpha^6$. Construct $p_q$ by normalizing $z_q$ to integrate to 1.}
	\label{fig:our_dist}
\end{figure}


\subsection{Algorithmic complexity}\label{sec:ourAlgo} 
While Theorem \ref{thm:informal_sample_dist} immediately yields an approach for selecting samples $t_1,\ldots,t_q$, it is only useful if we can \emph{efficiently} solve Problem \ref{prob:unformal_interp} given the noisy measurements $y(t_1)+n(t_1),\ldots,y(t_q)+n(t_q)$. We show that this is possible
 for a broad class of constraint measures. Specifically, we need only assume that we can efficiently compute the positive-definite kernel function\footnote{When $y$ is real valued, it makes sense to consider symmetric $\mu$. In this case, $k_\mu$ is also real valued. However, in general it may be complex valued.}:
 \begin{align}
\label{eq:kernel_function}
k_\mu(t_1,t_2) = \int_{\xi\in \RR} e^{-2\pi i (t_1 - t_2) \xi} d\mu(\xi).
\end{align}
The above integral can be approximated via numerical quadrature, but for many of the aforementioned applications, it has a closed-form. For example, when $\mu$ is supported on just $k$ frequencies, it is a sum of these frequencies. When $\mu$ is uniform on $[-F, F]$, $k_\mu(t_1,t_2) = \sinc(2\pi F (t_1-t_2))$. For multiband signals with $s$ bands, $k_\mu(t_1,t_2)$ is a sum of $s$ modulated sinc functions. In fact, $k_\mu(t_1,t_2)$ has a closed-form for all $\mu$ illustrated in Figure \ref{fig:constraints}. Further details are discussed in Appendix \ref{app:kernel_computation}. Assuming a subroutine for computing $k_\mu(t_1,t_2)$, our main algorithmic result is as follows:

\begin{theorem}(Main result, algorithmic complexity)
	\label{thm:informal_main}
	There is an algorithm that solves Problem \ref{prob:unformal_interp} with probability $99/100$ which uses $O\left (\smu \cdot \log^2 (\smu)\right)$ time domain samples (sampled according to the distribution given by Theorem \ref{thm:informal_sample_dist}) and runs in $\tilde O(\smu^\omega + \smu^2 \cdot  Z)$ time, assuming the ability  to compute $k_\mu(t_1,t_2)$ for any $t_1,t_2 \in[0,T]$ in $Z$ time.\footnote{For conciseness, we use $\tilde{O}(z)$ to denote $\tilde{O}(z\log^c z)$, where $c$ is some fixed constant (usually $\le 2$). In formal theorem statements we give $c$ explicitly. $\omega < 2.373$ is the current  exponent of fast matrix multiplication \cite{williams2012multiplying}.} The algorithm returns a representation of $\tilde y(t)$ that can be evaluated in $\tilde O(\smu  \cdot Z)$ time for any $t$.
\end{theorem}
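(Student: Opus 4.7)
Theorem~\ref{thm:informal_sample_dist} already supplies the sampling scheme and the bound on the number of samples needed; the remaining work is to specify an efficient recovery procedure from $(t_i, y(t_i)+n(t_i))_{i=1}^q$ and bookkeep its running time. The plan is to recover $\tilde{y}$ as the solution of a weighted kernel ridge regression (KRR) problem with the positive-definite kernel $k_\mu$ from \eqref{eq:kernel_function}. The crucial observation is that $k_\mu$ is the reproducing kernel of the feature map $\xi \mapsto e^{2\pi i \xi t}$ with inner product in $L_2(\mu)$, so the function class of interest in Problem~\ref{prob:unformal_interp} --- namely $\{\Fmu^* x : x \in L_2(\mu)\}$ --- is exactly (a subset of) the RKHS $\HH_\mu$ of $k_\mu$, with $\|\Fmu^* x\|_{\HH_\mu} \le \|x\|_\mu$.

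Write $w_i = 1/(q\,p_q(t_i))$ for the importance-sampling weights associated with Theorem~\ref{thm:informal_sample_dist}. The algorithm outputs
\begin{equation*}
\tilde y \;=\; \argmin_{f \in \HH_\mu} \;\; \sum_{i=1}^q w_i \bigl| (y(t_i)+n(t_i)) - f(t_i)\bigr|^2 \;+\; \epsilon T \,\|f\|_{\HH_\mu}^2.
\end{equation*}
By the representer theorem the minimizer has the form $\tilde y(t) = \sum_i \alpha_i k_\mu(t,t_i)$, and the coefficient vector $\alpha \in \CC^q$ is determined by a single $q\times q$ linear system in the (reweighted) Gram matrix $K_{ij}=\sqrt{w_i w_j}\,k_\mu(t_i,t_j)$ plus a scalar ridge term. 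This is where the running-time estimate comes from: building $K$ costs $O(q^2 \cdot Z)$ kernel evaluations, and solving $(K + \epsilon T\cdot I)\alpha = $ weighted observations costs $O(q^\omega)$ via fast matrix multiplication. With $q = O(\smu \log^2 \smu)$ from Theorem~\ref{thm:informal_sample_dist}, this gives the claimed $\tilde O(\smu^\omega + \smu^2 \cdot Z)$ total. The representation of $\tilde y$ is just the pair $(\alpha, \{t_i,w_i\})$; evaluating $\tilde y(t)=\sum_i \alpha_i \sqrt{w_i}\, k_\mu(t,t_i)$ at a new $t$ requires $q$ kernel evaluations, i.e.\ $\tilde O(\smu \cdot Z)$ time.

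For correctness, the plan is to pass through the population KRR estimator in $\HH_\mu$. Standard bias/variance analysis for KRR on the operator $\Kmu$, combined with the definition $\smu = \tr(\Kmu(\Kmu+\epsilon \mathcal{I}_T)^{-1})$, shows that the population minimizer of $\|f-y\|_T^2 + \epsilon \|f\|_{\HH_\mu}^2$ already satisfies a bound of the form \eqref{eq:main_guarantee}. It then suffices to show that with high probability the \emph{empirical} KRR estimator is within a constant factor in $\|\cdot\|_T$ of the population one. The mechanism for this, which is exactly what Theorem~\ref{thm:informal_sample_dist} is engineered to provide, is that the importance-weighted empirical quadratic form $\tfrac{1}{q}\sum_i w_i\,\overline{f(t_i)}g(t_i)$ is a spectral approximation to the population inner product $\tfrac{1}{T}\int_0^T \overline{f(t)}g(t)\,dt$ restricted to the relevant eigenspaces of $\Kmu$. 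One then invokes a standard KRR-via-subsampling argument (as in leverage-score ridge regression) to transport this spectral closeness into a bound on the recovered $\tilde y$.

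The main obstacle is not the bookkeeping above but the correctness half: concretely, verifying that the universal density $p_q$ upper-bounds, up to polylog factors, the \emph{ridge leverage function} of the Fourier operator $\Fmu^*$ for every admissible $\mu$, and then converting the resulting spectral approximation of $\Kmu$ into a statement about the KRR minimizer that matches the guarantee of Problem~\ref{prob:unformal_interp}. This is precisely the randomized-operator-theory machinery that the introduction promises, so I would expect the correctness argument to be assembled by chaining the universal-leverage-score bound behind Theorem~\ref{thm:informal_sample_dist} with a standard KRR bias-variance decomposition adapted to the trace-class operator $\Kmu$; the algorithmic theorem itself is then essentially a corollary, with the time and evaluation bounds read off from the KRR implementation described above.
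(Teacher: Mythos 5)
Your proposal matches the paper's proof, which in Section~\ref{sec:puttingTogether} derives this theorem directly by combining Theorem~\ref{thm:mainAlg} (kernel ridge regression on leverage-sampled points, solved as a $q\times q$ linear system via Algorithm~\ref{alg:main}) with Theorem~\ref{thm:fullBound} (the spectrum-blind leverage upper bound with integral $O(\smu\log\smu)$ and $O(1)$-time sampling); your RKHS/representer-theorem framing is just the time-domain view of the paper's $L_2(\mu)$ parameterization via Lemma~\ref{lem:ridgeMinimizer}, and your weighted objective with $w_i = 1/(q\,p_q(t_i))$ coincides with the paper's $\|\bv{F}^*g - (\bv y+\bv n)\|_2^2$ after the squaring. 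One small imprecision in your sketch: you invoke a bias/variance decomposition tied to the value of $\smu$ to argue that the population minimizer satisfies~\eqref{eq:main_guarantee}, but in the paper $\smu$ never enters the correctness argument at all---Claim~\ref{claim:regression_reduction} obtains the guarantee directly by plugging $g=x$ into the objective and applying the triangle inequality, while the role of $\smu$ is confined to bounding the leverage-score integral and hence the number of samples.
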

For bandlimited, Gaussian, or Cauchy-Lorentz priors $\mu$, $Z = O(1)$. For $s$ sparse signals or multiband signals with $s$ blocks, $Z = O(s)$.

We note that, while Theorem \ref{thm:informal_main} holds when  $\tilde O\left (\smu \right)$ samples  are taken, $\smu$ may be not be known and thus it may  be unclear how to set the sample size. In our full statement  of the Theorem in Section \ref{sec:puttingTogether} we make it clear that any upper bound on $\smu$ suffices to set the sample size. The sample complexity will depend on how tight this upper bound is. In Appendix \ref{app:stat_dimension} we give upper bounds on $\smu$ for a number of common $\mu$, which can be plugged into Theorem \ref{thm:informal_main}.

\subsection{Our approach}
Theorems \ref{thm:informal_sample_complexity}, \ref{thm:informal_sample_dist}, and \ref{thm:informal_main} are achieved through a simple and practical algorithmic framework. In Section \ref{sec:regresision}, we show that Problem \ref{prob:unformal_interp} can be modeled as a least squares regression problem with $\ell_2$ regularization. As long as we can compute $k_\mu(t_1, t_2)$, we can solve this problem using \emph{kernel ridge regression}, a popular function fitting technique in nonparametric statistics \cite{shawe2004kernel}.

Naively, the kernel regression problem is infinite dimensional: it needs to be solved over the \emph{continuous} time domain $[0,T]$ to solve our signal reconstruction problem. This is where sampling comes in. We need to  discretize the problem and establish that our solution over a fixed set of time samples nearly matches the solution over the continuous interval. To bound the error of discretization, we turn to a tool from randomized numerical linear algebra: \emph{statistical leverage score sampling} \cite{SpielmanSrivastava:2011,DrineasMahoney:2016}. We show how to \emph{randomly} discretize Problem \ref{prob:unformal_interp} by sampling time points with probability proportional to an appropriately defined non-uniform {leverage score distribution} on $[0,T]$. The required number of samples is $O(\smu\log \smu)$, which proves Theorem \ref{thm:informal_sample_complexity}.

Unfortunately, the leverage score distribution does not have a closed-form, varies depending on $\epsilon$, $T$,  and $\mu$, and likely cannot be sampled from exactly. To prove Theorem \ref{thm:informal_sample_dist}, we show that for any $\mu$, for large enough $q$, the closed form distribution $p_q$ \emph{upper bounds} the leverage score distribution.
This upper bound closely approximates the true leverage score distribution and, therefore, can be used in its place during sampling, losing only a $\log \smu$ factor in the sample complexity.

The leverage score distribution roughly measures, for each time point $t$, how large $|y(t)|^2$ can be compared to $\norm{y}_T^2$ when $y$'s Fourier transform is constrained by $\mu$ (i.e., when $\norm{x}_\mu^2$ as defined in Problem \ref{prob:unformal_interp} is bounded). To upper bound this measure we turn to another powerful result from the randomized numerical linear algebra literature: every  matrix contains a small subset of columns that span a near-optimal low-rank approximation to that matrix \cite{sarlos2006improved,BoutsidisMahoneyDrineas:2009a,DeshpandeRademacher:2010}. In other words, every  matrix admits a near-optimal low-rank approximation with \emph{sparse column support}.
By  extending this result to continuous linear operators, we prove that  the smoothness of a signal whose Fourier transform has $\norm{x}_\mu^2$ bounded can be bounded by the smoothness of an $O(\smu)$ sparse Fourier function. This lets us apply recent results of  \cite{ChenKanePrice:2016,ChenPrice:2018} that bound $|y(t)|^2$ in terms of $\norm{y}_T^2$ for any sparse Fourier function $y$ . Intuitively, our result shows that  the simplicity of sparse Fourier functions governs the simplicity of \emph{any class} of Fourier constrained functions.

The above argument yields Theorem \ref{thm:informal_sample_dist}. Since we can sample from $p_q$ in $O(1)$ time, we can efficiently sample the time domain to $O(\smu \cdot \log^2 \smu)$ points and then solve Problem \ref{prob:unformal_interp} by applying kernel ridge regression to these points, which takes $\tilde O(\smu^\omega + \smu^2 \cdot Z)$ time, assuming the ability  to compute $k_\mu(\cdot,\cdot)$ in $Z$ time. This yields the algorithmic result of Theorem \ref{thm:informal_main}.

\subsection{Roadmap}
\label{sec:prelims}
The rest of this paper is devoted to proving Theorems \ref{thm:informal_sample_complexity}, \ref{thm:informal_sample_dist}, and \ref{thm:informal_main}, and is structured as follows:
\begin{description}
	\item[Section \ref{sec:notation}] We lay out basic notation that is used throughout the paper.
	\item[Section \ref{sec:regresision}]  We reduce Problem \ref{prob:unformal_interp} to a kernel ridge regression problem and explain how to randomly discretize and solve this problem via leverage score sampling, proving Theorem \ref{thm:informal_sample_complexity}.
	\item[Section \ref{sec:general}] We give an upper bound on the leverage score distribution for general priors, proving Theorems \ref{thm:informal_sample_dist} and  \ref{thm:informal_main}.
	\item [Section \ref{sec:lb}] We prove that, under a mild assumption, the statistical dimension tightly characterizes the sample complexity  of solving Problem \ref{prob:unformal_interp}, and thus that our results are nearly optimal.
	\item[Section \ref{sec:conclusion}] We conclude with a discussion of open questions.
\end{description}

We defer an in depth overview of related work to Appendix \ref{app:prior_work}. In Appendix \ref{app:op} we give operator theory preliminaries.  In Appendix \ref{app:leverage_scores} we prove our extensions of a number of randomized linear algebra primitives to continuous operators.
In Appendix \ref{sec:bandlimited}, we bound the statistical dimension for the important case of bandlimited functions. We use this result in Appendix \ref{app:stat_dimension} to prove statistical dimension bounds for multiband, Gaussian, and Cauchy-Lorentz priors (shown in Table \ref{tab:stat_dim}). In Appendix \ref{app:kernel_computation}, we show how to compute the kernel function $k_\mu$ for these common priors. In Appendix \ref{app:bayes}, we discuss a Bayesian approach to signal reconstruction under a Fourier transform prior.

\section{Notation}\label{sec:notation}
Let $\mu$ be a probability measure on $(\RR,\mathcal{B})$, where $\mathcal{B}$ is the Borel $\sigma$-algebra on $\RR$. Let $L_2(\mu)$ denote the space of complex-valued square integrable functions with respect to $\mu$. For $a,b \in L_2(\mu)$, let $\langle a,b\rangle_\mu$ denote $\int_{\xi \in \RR} a(\xi)^* b(\xi) \, d\mu(\xi)$ where for any  $x \in \CC$, $x^*$ is its complex conjugate. Let $\|a\|_{\mu}^2$ denote $\langle a,a\rangle_\mu$. Let $\Imu$ denote the identity operator on $L_2(\mu)$. Note that for any $\mu$, $L_2(\mu)$ is a separable Hilbert space and thus has a countably infinite orthonormal basis  \cite{HunterNachtergaele:2001}. 

We overload notation and use $L_2(T)$ to denote the space of complex-valued square integrable functions with respect to the uniform probability measure on $[0,T]$. It will be clear from context that $T$ is not a measure. For $a,b \in L_2(T)$, let $\langle a,b\rangle_{T}$ denote $\frac{1}{T}\int_{0}^T a(t)^* b(t) \,dt$ and let $\|a\|_{T}^2$ denote $\langle a,a\rangle_{T}$. Let $\mathcal{I}_T$ denote the identity operator on $L_2(T)$.

Define the Fourier transform operator $\Fmu: L_2(T) \rightarrow L_2(\mu)$ as:
\begin{align}
\left[\Fmu\, f\right](\xi) = \frac{1}{T}\int_0^T f(t)e^{-2\pi i t \xi} dt.
\end{align}

The adjoint of $\Fmu$ is the unique operator $\Fmu^*: L_2(\mu) \rightarrow L_2(T)$ such that for all $f \in L_2(T), g\in L_2(\mu)$ we have
$\langle g,\Fmu\, f\rangle_\mu = \langle \Fmu^*\, g,f\rangle_T$. It is not hard to see that $\Fmu^*$ is the inverse Fourier transform operator with respect to $\mu$ as defined in
Section \ref{sec:prob_statement}, equation \eqref{eq:mu_transform}:
\begin{align}\label{eq:mu_transform2}
	 \left[\Fmu^* \,g\right](t) \eqdef \int_{\RR} g(\xi) e^{2\pi i \xi t}\, d\mu(\xi).
\end{align}
Note that the kernel operator $\Kmu: L_2(T) \rightarrow L_2(T)$ originally defined in \eqref{eq:kernel_op} is equal to
\begin{align*}
\Kmu = \Fmu^*\Fmu.
\end{align*}
 $\Kmu$ is self-adjoint, positive semidefinite and trace-class
and an integral operator with kernel $k_\mu$:
 \begin{align*}
     [\Kmu z](t) =\frac{1}{T} \int^T_0 k_\mu(s,t) z(s) ds,
\end{align*}
where $k_\mu$ is as defined in \eqref{eq:kernel_function}. The trace of $\Kmu$ is equal to $1$.\footnote{Since the kernel is a Fourier transform of a probability measure, it is Hermitian positive definite (Bochner's Theorem). Then we can conclude that $\Kmu$ is trace-class from Mercer's theorem, and calculate $\tr(\Kmu) = \frac{1}{T}\int^T_0 k_\mu(t,t)dt=1$.}
We will also make use of the Gram operator: $\cal{G}_\mu \eqdef \cal{F}_\mu \cal{F}^*_\mu$. ${\cal G}_\mu$ is also self-adjoint, positive semidefinite, and trace-class.

\spara{Remark:}
It may be useful  for the reader to informally regard $\Fmu$ as an infinite matrix with rows  indexed by $\xi \in \RR$ and columns indexed by $t \in [0,T]$. Following the definition of $\Fmu$ above, and assuming that $\mu$ has a density $p$, this infinite matrix has entries given by:
\begin{align}\label{eq:informal}
\Fmu(\xi,t) = \sqrt{\frac{p(\xi)}{T}} \cdot e^{-2\pi i t \xi}.
\end{align}
The results we apply on leverage score sampling can all be seen as extending results for finite matrices from the randomized numerical linear algebra literature  to this infinite matrix.

\section{Function fitting with least squares regression}
\label{sec:regresision}

Least squares regression provides a natural approach to solving the interpolation task of Problem \ref{prob:unformal_interp}. In particular, consider the following regularized minimization problem over functions $g \in L_2(\mu)$\footnote{The fact that the minimum is attainable is a simple consequence of the extreme value theorem, since the search space can be restricted to
	$\|g\|_\mu^2 \leq \|(y+n)\|^2_T / \epsilon$.}:
\begin{align}
\label{eq:least_squares_setup}
	\min_{g\in L_2(\mu)} \|\Fmu^* g - (y+n)\|_T^2 + \epsilon\|g\|_\mu^2.
\end{align}
The first term encourages us to find a function $g$ whose inverse Fourier transform is close to our measured signal $y+n$. The second term encourages us to find a low energy solution -- ultimately, we solve \eqref{eq:least_squares_setup} based on only a small number of samples $y(t_1), \ldots, y(t_k)$, and smoother, lower energy solutions will better generalize to the entire interval $[0,T]$. We remark that it is well known that least squares approximations benefit from regularization even in the noiseless case~\cite{CDL13}.

We first state a straightforward fact: if we minimize \eqref{eq:least_squares_setup}, even to a coarse approximation, then we are able to solve Problem \ref{prob:unformal_interp}.
\begin{claim}
	\label{claim:regression_reduction}
	Let $y = \Fmu^* x$, $n \in L_2(T)$ be an arbitrary noise function, and for any $C\geq 1$, let $\tilde{g}\in L_2(\mu)$ be a function satisfying:
	\begin{align*}
		\|\Fmu^* \tilde{g} - (y+n)\|_T^2 + \epsilon\|\tilde{g}\|_\mu^2 \leq C\cdot\min_{g\in L_2(\mu)}\left[  \|\Fmu^* g - (y+n)\|_T^2 + \epsilon \|g\|_\mu^2\right].
	\end{align*}
	Then
	\begin{align*}
		\|\Fmu^* \tilde{g} - y\|_T^2 \leq 2C\epsilon \|x\|_\mu^2 + 2(C+1)\|n\|_T^2.
	\end{align*}
\end{claim}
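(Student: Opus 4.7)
The plan is a short triangle-inequality argument that hinges on a single clever choice of comparator in the regularized least squares objective.

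First, I would upper bound the minimum value of the objective by plugging in $g = x$, the frequency-domain function whose inverse Fourier transform equals $y$. Since $\Fmu^* x = y$, this gives
\begin{equation*}
\min_{g \in L_2(\mu)} \bigl[\|\Fmu^* g - (y+n)\|_T^2 + \epsilon\|g\|_\mu^2\bigr] \leq \|\Fmu^* x - (y+n)\|_T^2 + \epsilon\|x\|_\mu^2 = \|n\|_T^2 + \epsilon\|x\|_\mu^2.
\end{equation*}
Combining with the hypothesis that $\tilde{g}$ achieves a $C$-approximation of the minimum, and dropping the non-negative regularization term on the left, yields
\begin{equation*}
\|\Fmu^* \tilde{g} - (y+n)\|_T^2 \leq C\bigl(\|n\|_T^2 + \epsilon\|x\|_\mu^2\bigr).
\end{equation*}

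Next, I would recover a bound on $\|\Fmu^* \tilde{g} - y\|_T^2$ from the noisy residual via the triangle inequality in $L_2(T)$: writing $\Fmu^* \tilde{g} - y = (\Fmu^* \tilde{g} - (y+n)) + n$ and applying the inequality $(a+b)^2 \leq 2a^2 + 2b^2$, we get
\begin{equation*}
\|\Fmu^* \tilde{g} - y\|_T^2 \leq 2\|\Fmu^* \tilde{g} - (y+n)\|_T^2 + 2\|n\|_T^2 \leq 2C\epsilon\|x\|_\mu^2 + (2C+2)\|n\|_T^2,
\end{equation*}
which is precisely the claimed bound.

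The argument has no real obstacle: the only non-mechanical step is recognizing that the true frequency-domain representation $x$ is a legitimate test function to certify a small value of the minimum, which converts the unknown ``optimum'' on the right-hand side of the approximation hypothesis into the transparent quantity $\|n\|_T^2 + \epsilon\|x\|_\mu^2$. Everything else is the standard factor-of-two loss from splitting an $L_2$ norm through the triangle inequality.
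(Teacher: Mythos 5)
Your proof is correct and is essentially the same as the paper's: both bound the minimum by testing $g = x$, drop the regularizer, and then lose a factor of two via the triangle inequality and $(a+b)^2 \le 2a^2 + 2b^2$. The only cosmetic difference is that you apply the elementary inequality directly to the decomposition $\Fmu^* \tilde g - y = (\Fmu^* \tilde g - (y+n)) + n$, while the paper first invokes the reverse triangle inequality on norms and then squares; the resulting constants are identical.
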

\begin{proof}
	Since $y = \Fmu^* x$, $\min_{g\in L_2(\mu)}\left[  \|\Fmu^* g - (y+n)\|_T^2 + \epsilon\|g\|_\mu^2\right] \leq \|n\|_T^2 + \epsilon\|{x}\|_\mu^2$. Thus, $\|\Fmu^* \tilde{g} - (y+n)\|_T^2 \leq C\epsilon \|x\|_\mu^2 + C\|n\|_T^2$. The claim then follows via triangle inequality:
	\begin{align*}
	 \|\Fmu^* \tilde g - y \|_T - \|n\|_T &\le \|\Fmu^* \tilde{g} - (y+n)\|_T\\
	 \|\Fmu^* \tilde g - y \|_T &\le \sqrt{C\epsilon \|x\|_\mu^2 + C\|n\|_T^2} + \|n\|_T\\
	 \|\Fmu^* \tilde g - y \|_T^2 &\le 2C\epsilon \|x\|_\mu^2 + 2(C+1)\|n\|_T^2.
	\end{align*}
\end{proof}

Claim \ref{claim:regression_reduction} shows that approximately solving the regression problem in \eqref{eq:least_squares_setup}, with regularization parameter $\epsilon$  gives a solution to Problem \ref{prob:unformal_interp} with parameter $2C\epsilon$ (decreasing the regularization parameter to $\frac{\epsilon}{2C}$ will let us solve with parameter $\epsilon$).
But how can we solve the regression problem efficiently? Not only does the problem involve a possibly infinite dimensional parameter vector $g$, but the objective function also involves the continuous time interval $[0,T]$.

\subsection{Random discretization via leverage function sampling}
\label{sec:random_discrete}
The first step is to deal with the latter challenge, i.e., that of a continuous time domain. We show that it is possible to \emph{randomly discretize} the time domain of \eqref{eq:least_squares_setup}, thereby reducing our problem to a regression problem on a finite set of times $t_1, \ldots, t_q$. In particular, we can sample time points with probability proportional to the so-called \emph{ridge leverage function}, a specific non-uniform distribution that has been applied widely in randomized algorithms for regression and other linear algebra problems on discrete matrices \cite{AlaouiMahoney:2015,CalandrielloLazaricValko:2016,CohenMuscoMusco:2017,MuscoMusco:2017,MuscoWoodruff:2017}.

While we cannot compute the leverage function explicitly for our problem, an issue highlighted in \cite{Bach:2017}, our main result (Theorem \ref{thm:informal_sample_dist}) uses a simple, but very accurate, closed form approximation in its place.
We start with the definition of the ridge leverage function:

\begin{defn}[Ridge leverage function]\label{def:ridgeScores}
	For a probability measure $\mu$ on $\RR$, time length $T > 0$, and $\epsilon \ge 0$, we define the $\epsilon$-ridge leverage function for $t \in [0,T]$ as\footnote{Formally $L_2(T)$ is a space of equivalence classes of functions that differ at a set of points with measure $0$. For notational simplicity, here and throughout we use $\Fmu^* \alpha$ to denote the specific representative of the equivalence class $\Fmu^* \alpha \in L_2(T)$ given by \eqref{eq:mu_transform2}. In this way, we can consider the pointwise value $[\Fmu^* \alpha](t)$, which we could alternatively express as $\langle \varphi_t,\alpha\rangle_\mu$, for $\varphi_t(\xi) \eqdef e^{-2\pi i t \xi}$.}:
	\begin{align}
	\label{eq:leverage_def}
		\tmu(t) = \frac{1}{T} \cdot \max_{\{\alpha \in L_2(\mu):\, \norm{\alpha}_\mu > 0\}} \frac{\left|[\Fmu^* \alpha](t) \right|^2 }{\norm{\Fmu^* \alpha}_T^2 + \epsilon \norm{\alpha}_\mu^2}.
	\end{align}
\end{defn}

Intuitively, the ridge leverage function at time $t$ is an upper bound of how much a function can  ``blow up'' at $t$ when its Fourier transform is constrained by $\mu$. The denominator term $\norm{\Fmu^* \alpha}_T^2$ is the average squared magnitude of the function $F_\mu^*\alpha$, while the numerator term, $|[\Fmu^* \alpha](t)|^2$, is the squared magnitude at $t$.
The regularization term $\epsilon \norm{\alpha}_\mu^2$ reflects the fact that, to solve \eqref{eq:least_squares_setup}, we only need to bound the smoothness for functions with bounded Fourier energy under $\mu$. As observed in \cite{pauwels2018relating}, the ridge leverage function can be viewed as a type of \emph{Christoffel function}, studied in the literature on orthogonal polynomials  and approximation theory \cite{pauwels2018relating,nevai1986geza,totik2000asymptotics,borwein2012polynomials}.

The larger the leverage ``score'' $\tmu(t)$, the higher the probability we will sample time $t$, to ensure that our sample points well reflect any possibly significant components or `spikes' of the function $y$. Ultimately, the integral of the ridge leverage function $\int_0^T \tmu(t) dt$ determines how many samples we require to solve \eqref{eq:least_squares_setup} to a given accuracy. Theorem \ref{thm:leverageProps} below states the already known fact that  the  ridge leverage function integrates to the statistical dimension \cite{AvronKapralovMusco:2017}, which will ultimately allow us to achieve the $\tilde O(\smu)$ sample complexity bound of Theorems \ref{thm:informal_sample_complexity} and \ref{thm:informal_sample_dist}. Theorem \ref{thm:leverageProps} also gives two alternative characterizations of the leverage function that will prove useful. The theorem is proven in Appendix \ref{app:leverage_scores}, using techniques for finite matrices, adapted to the operator setting.

\begin{theorem}[Leverage function properties]\label{thm:leverageProps}
Let $\tmu(t)$ be the ridge leverage function (Definition \ref{def:ridgeScores}) and define $\varphi_t \in L_2(\mu)$ by $\varphi_t(\xi) \eqdef e^{-2\pi i t \xi}$. We have:
\begin{itemize}
\item The ridge  leverage function integrates to the statistical dimension:
\begin{align}\label{eq:StatDim}
	\int_0^T \tmu(t) dt= \smu \eqdef  \tr(\Kmu (\Kmu + \epsilon \mathcal{I}_T)^{-1}).
\end{align}
\item Inner Product characterization:
\begin{align}\label{eq:InnerProd}
\tmu(t) =\frac{1}{T} \cdot \langle \varphi_t,  (\Gmu + \epsilon \Imu)^{-1}  \varphi_t \rangle_\mu.
\end{align}
\item Minimization Characterization:
\begin{align}\label{eq:Min}
\tmu(t) = \frac{1}{T} \cdot \min_{\beta \in L_2(T)} \frac{\norm{\Fmu \beta - \varphi_t}_\mu^2}{\epsilon} + \norm{\beta}_{T}^2.
\end{align}
\end{itemize}
\end{theorem}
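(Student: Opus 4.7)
The plan is to establish the three characterizations in an order that lets each build on the previous one: first the inner product form, then the minimization form by duality, and finally the trace identity by integrating the diagonal of a kernel.

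First I would prove the inner product characterization \eqref{eq:InnerProd}. The starting observation is that $[\Fmu^* \alpha](t) = \langle \varphi_t, \alpha\rangle_\mu$ and, using $\Fmu\Fmu^* = \Gmu$, that $\|\Fmu^* \alpha\|_T^2 = \langle \alpha, \Gmu \alpha\rangle_\mu$. The quantity in \eqref{eq:leverage_def} therefore becomes
\begin{align*}
\tmu(t) = \frac{1}{T}\max_{\alpha\ne 0} \frac{|\langle \varphi_t,\alpha\rangle_\mu|^2}{\langle \alpha, (\Gmu + \epsilon \Imu)\,\alpha\rangle_\mu}.
\end{align*}
Since $\Gmu$ is self-adjoint, positive semidefinite and trace-class, $(\Gmu + \epsilon\Imu)^{1/2}$ is bounded and boundedly invertible. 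Substituting $\beta = (\Gmu + \epsilon\Imu)^{1/2}\alpha$ reduces the right-hand side to a standard Rayleigh quotient whose maximum over $\beta$ is $\|(\Gmu+\epsilon\Imu)^{-1/2}\varphi_t\|_\mu^2$. This yields \eqref{eq:InnerProd}.

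Next I would derive the minimization characterization \eqref{eq:Min}. The functional $\beta\mapsto \frac{1}{\epsilon}\|\Fmu\beta-\varphi_t\|_\mu^2 + \|\beta\|_T^2$ is strictly convex and coercive, so it has a unique minimizer determined by the normal equations $(\Kmu + \epsilon \mathcal{I}_T)\beta^* = \Fmu^*\varphi_t$. Substituting back and simplifying yields the closed form $\|\varphi_t\|_\mu^2 - \langle \varphi_t, \Fmu(\Kmu+\epsilon\mathcal{I}_T)^{-1}\Fmu^*\varphi_t\rangle_\mu$ for the minimum value divided by $\epsilon$ times $\epsilon$; more cleanly, the operator push-through identity
\begin{align*}
\Fmu(\Kmu + \epsilon\mathcal{I}_T)^{-1}\Fmu^* \;=\; \Imu - \epsilon\,(\Gmu + \epsilon\Imu)^{-1},
\end{align*}
which is valid since $\Kmu = \Fmu^*\Fmu$ and $\Gmu = \Fmu\Fmu^*$ have the same nonzero spectrum, shows that the minimum equals $\langle \varphi_t, (\Gmu + \epsilon\Imu)^{-1}\varphi_t\rangle_\mu$, matching \eqref{eq:InnerProd}. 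I would verify this identity on the spectral decomposition of $\Gmu$ (or equivalently by a Mercer expansion), which is clean because $\Gmu$ is compact.

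For the integral identity \eqref{eq:StatDim}, I would first use the same push-through identity in its ``dual'' form,
\begin{align*}
\Kmu(\Kmu + \epsilon \mathcal{I}_T)^{-1} \;=\; \Fmu^*(\Gmu + \epsilon \Imu)^{-1}\Fmu,
\end{align*}
so that $\smu = \tr\bigl(\Fmu^*(\Gmu + \epsilon\Imu)^{-1}\Fmu\bigr)$. Unwinding the definitions, this operator acts on $z\in L_2(T)$ as an integral operator with kernel
\begin{align*}
K(t,s) \;=\; \langle \varphi_t,\, (\Gmu + \epsilon\Imu)^{-1}\varphi_s\rangle_\mu,
\end{align*}
as one checks by expanding $[\Fmu z](\xi) = \tfrac{1}{T}\int_0^T z(s)\varphi_s(\xi)\,ds$ and then applying $(\Gmu+\epsilon\Imu)^{-1}$ and $\Fmu^*$. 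Since the operator is trace-class and positive, its trace equals the integral of the diagonal of the kernel against the underlying measure; here the measure on $[0,T]$ is uniform with normalization $1/T$, giving
\begin{align*}
\smu \;=\; \frac{1}{T}\int_0^T K(t,t)\,dt \;=\; \int_0^T \tmu(t)\,dt,
\end{align*}
by \eqref{eq:InnerProd}.

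The main obstacle is justifying the push-through identity and the trace-from-diagonal formula in the operator setting rather than taking them for granted. I would handle this by invoking the spectral decomposition of the compact, self-adjoint operator $\Gmu$ (with eigenpairs $(\lambda_i,u_i)$ inherited from $\Kmu$ via $u_i = \Fmu v_i/\sqrt{\lambda_i}$), which both makes the identity a coordinate-wise statement $\lambda_i/(\lambda_i+\epsilon)$ on each eigenspace and lets the trace be computed as $\sum_i \lambda_i/(\lambda_i+\epsilon) = \smu$, agreeing with \eqref{def:stat_dim_sum_version}. The rest of the argument is bookkeeping with the definitions and the inner product structure of $L_2(T)$ and $L_2(\mu)$.
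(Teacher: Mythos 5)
Your proposal is correct, and it reaches the same three identities by a somewhat different path than the paper, so a comparison is worthwhile.

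For the equivalence of the max form \eqref{eq:leverage_def} with the inner-product form \eqref{eq:InnerProd}, you make the change of variable $\beta = (\Gmu+\epsilon\Imu)^{1/2}\alpha$ and read off the answer from the Cauchy--Schwarz/Rayleigh quotient identity $\sup_{\beta\ne 0}|\langle v,\beta\rangle_\mu|^2/\|\beta\|_\mu^2 = \|v\|_\mu^2$ with $v = (\Gmu+\epsilon\Imu)^{-1/2}\varphi_t$. This is cleaner and more direct than the paper's route, which first establishes the minimization form \eqref{eq:Min} via a ridge-regression minimizer lemma, rewrites that minimum as a constrained optimization $\min\{\|\beta\|_T^2 + \|u\|_\mu^2 : \Fmu\beta + \sqrt{\epsilon}\,u = \varphi_t\}$, and then extracts the maximization form by applying Cauchy--Schwarz twice to the constraint and exhibiting an $\alpha^\star$ attaining equality. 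Your substitution collapses this into one step, at the small cost of needing to justify that $(\Gmu+\epsilon\Imu)^{1/2}$ is a bounded bijection of $L_2(\mu)$ (which holds since $\Gmu+\epsilon\Imu$ is bounded, self-adjoint, and bounded below by $\epsilon\Imu$). For the minimization form you go through the normal equations and the push-through identity $\Fmu^*(\Gmu+\epsilon\Imu)^{-1} = (\Kmu+\epsilon\mathcal{I}_T)^{-1}\Fmu^*$; the paper's Lemma \ref{lem:ridgeMinimizer} is a packaged version of the same computation, phrased so as to also establish existence/uniqueness of the minimizer in the infinite-dimensional setting. Note the push-through identity holds by the elementary algebra $\Fmu^*(\Gmu+\epsilon\Imu) = (\Kmu+\epsilon\mathcal{I}_T)\Fmu^*$ and does not actually require compactness or a shared-spectrum argument, though invoking the spectral decomposition does no harm.

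The one place you should be a bit more careful is the trace identity. You write that the trace of the positive trace-class integral operator with kernel $K(t,s) = \langle\varphi_t,(\Gmu+\epsilon\Imu)^{-1}\varphi_s\rangle_\mu$ equals $\frac{1}{T}\int_0^T K(t,t)\,dt$. That formula is \emph{not} true for arbitrary trace-class integral operators; it requires some regularity of the kernel (Mercer's theorem assumes continuity; Brislawn's theorem replaces the pointwise diagonal by a Lebesgue-point average). Here $K$ is in fact continuous, since $t\mapsto\varphi_t$ is continuous in $L_2(\mu)$ by dominated convergence, so Mercer applies and your argument goes through, but this needs to be said. The paper sidesteps this entirely by working with the weak-integral representation $\Gmu = \frac{1}{T}\int_0^T(\varphi_t\otimes\varphi_t)\,dt$ and a dedicated operator-theoretic claim (Claim \ref{claim:int-inner-to-trace}) that equates $\int_0^T\langle\varphi_t,\mathcal{A}\varphi_t\rangle_\mu\,\frac{dt}{T}$ with $\tr(\mathcal{A}\Gmu)$ by expanding the trace in an orthonormal eigenbasis of $\mathcal{A}$ and invoking Tonelli to swap sum and integral; this avoids any reference to kernel continuity. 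Your closing remark that you would verify everything via the spectral decomposition $\Gmu=\sum_i\lambda_i\,u_i\otimes u_i$ and Tonelli is essentially a compressed version of that same argument, and would close this gap cleanly; just make that computation explicit rather than appealing to "trace equals integral of diagonal" as a black box.
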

\ifdraft
\Haim{In the theorems we require $\tilde{\tau}$ to be measurable. One can ask whether $\tmu$ is measurable. Property \eqref{eq:InnerProd} makes sure that $\tmu$ is continuous, so it is measurable with respect to the Lebesgue $\sigma$-algebra~\cite[Example 12.22]{HunterNachtergaele:2001}. But we say we work with the Borel $\sigma$-algebra...}
\fi

In Theorem \ref{thm:baseSampling}, 
we give our formal statement that the ridge leverage function can be used to randomly sample time domain points to discretize the regression problem in \eqref{eq:least_squares_setup} and solve it approximately. While complex in appearance, readers familiar with randomized linear algebra will recognize  Theorem \ref{thm:baseSampling} as closely analogous to standard approximate regression results for leverage score sampling from finite matrices \cite{ClarksonWoodruff:2013}. As discussed, since we are typically unable to sample according to the true ridge leverage function, we give a general result, showing that sampling with any upper bound function with a finite integral suffices.

\begin{theorem}[Approximate regression via leverage function sampling]\label{thm:baseSampling}
	Assume that $\epsilon \leq \opnorm{\Kmu}$.\footnote{If $\epsilon > \opnorm{\Kmu}$ then \eqref{eq:least_squares_setup} is  solved to a constant approximation factor by the trivial solution $g = 0$.} Consider a measurable function $\ttmu(t)$ with $\ttmu(t)  \ge \tmu(t)$ for all $t$ and let $\tsmu  = \int_0^T \ttmu(t) dt$. 
	Let $s = c \cdot \tsmu \cdot \left(\log\tsmu + 1/\delta\right)$  for sufficiently large fixed constant $c$ and let $t_1,\ldots,t_s$ be time points selected by drawing each randomly from $[0,T]$  with probability proportional to $\ttmu(t)$. For $j \in 1,\ldots,s$, let $w_j =  \sqrt{\frac{1}{sT} \cdot \frac{\tsmu}{\ttmu(t_j)}} $. Let $\bv{F}: \CC^s \rightarrow L_2(\mu)$ be the operator defined by:
	\begin{align*}
		\left[\bv{F} \,g\right](\xi) = \sum_{j=1}^s w_j \cdot g(j) \cdot e^{- 2\pi i \xi t_j}
	\end{align*}
	and $\bv{y},\bv{n} \in \RR^s$ be the vectors with $\bv{y}(j) = w_j \cdot y(t_j)$ and $\bv{n}(j) = w_j \cdot n(t_j)$.
	Let:
	\begin{align}
		\tilde{g} = \argmin_{g \in L_2(\mu)}\left[  \|\bv{F}^* g - (\bv y+\bv n)\|_2^2 + \epsilon\|g\|_\mu^2\right]\label{eq:approxkrr}
	\end{align}
	With probability $\ge 1- \delta$:
	\begin{align}
	\label{12again2}
		\|\Fmu^* \tilde{g} - (y+n)\|_T^2 + \epsilon\|\tilde{g}\|_\mu^2 \leq 3\min_{g \in L_2(\mu)}\left[  \|\Fmu^* g - (y+n)\|_T^2 + \epsilon\|g\|_\mu^2\right].
	\end{align}
\end{theorem}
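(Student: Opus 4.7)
The plan is to prove Theorem~\ref{thm:baseSampling} by a two-step argument mirroring the standard finite-dimensional analysis of leverage score sampling for ridge regression, adapted to the continuous operator setting on $L_2(\mu)$. First, I would establish a regularized spectral approximation
\begin{align*}
\tfrac{1}{2}(\Gmu + \epsilon \Imu) \preceq \bv{F}\bv{F}^* + \epsilon \Imu \preceq \tfrac{3}{2}(\Gmu + \epsilon \Imu)
\end{align*}
with probability at least $1-\delta$. Second, I would translate this spectral approximation into the regression guarantee \eqref{12again2} by expanding the two objectives around the continuous optimum $g^\star = \argmin_{g \in L_2(\mu)} \|\Fmu^* g - (y+n)\|_T^2 + \epsilon\|g\|_\mu^2$.

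For the spectral approximation, I would write $\bv{F}\bv{F}^* = \sum_{j=1}^s w_j^2\, \varphi_{t_j}\varphi_{t_j}^*$ as a sum of $s$ independent, self-adjoint, rank-one random operators, where $\varphi_t(\xi) = e^{-2\pi i t\xi}$ so that $[\Fmu^* h](t) = \langle \varphi_t, h\rangle_\mu$. A direct computation using the density $\ttmu(t)/\tsmu$ and $w_j^2 = \frac{\tsmu}{sT\,\ttmu(t_j)}$ shows $\E[\bv{F}\bv{F}^*] = \Gmu$. Conjugating each summand by $(\Gmu + \epsilon\Imu)^{-1/2}$ normalizes the expected sum to an operator with norm at most $1$, and the per-sample operator-norm bound is
\begin{align*}
\bigl\| (\Gmu + \epsilon \Imu)^{-1/2} w_j^2 \varphi_{t_j}\varphi_{t_j}^* (\Gmu + \epsilon \Imu)^{-1/2} \bigr\|_{\mathrm{op}} = w_j^2 \langle \varphi_{t_j}, (\Gmu + \epsilon \Imu)^{-1} \varphi_{t_j}\rangle_\mu = \frac{\tsmu\, \tmu(t_j)}{s\, \ttmu(t_j)} \le \frac{\tsmu}{s},
\end{align*}
where the middle equality uses the inner-product characterization \eqref{eq:InnerProd} and the final inequality uses the hypothesis $\ttmu(t)\ge\tmu(t)$. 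A Bernstein inequality for sums of independent self-adjoint operators on a separable Hilbert space (the continuous analogue of the matrix version, which I would invoke from the infrastructure built in Appendix~\ref{app:leverage_scores}) then yields the stated $(1\pm\tfrac{1}{2})$ spectral approximation once $s = c\,\tsmu(\log\tsmu + 1/\delta)$ for a sufficiently large constant $c$.

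Given the spectral approximation, the regression bound is a routine reduction. Writing $J(g) \eqdef \|\Fmu^* g - (y+n)\|_T^2 + \epsilon\|g\|_\mu^2$ and its sampled counterpart $\widehat J(g) \eqdef \|\bv{F}^* g - (\bv y+\bv n)\|_2^2 + \epsilon\|g\|_\mu^2$, the first-order optimality condition at $g^\star$ decomposes $J(g) - J(g^\star) = \langle g-g^\star,(\Gmu + \epsilon\Imu)(g-g^\star)\rangle_\mu$ and analogously for $\widehat J$ with $\bv{F}\bv{F}^*$ in place of $\Gmu$, plus a linear cross-term involving the residual $r \eqdef y+n-\Fmu^* g^\star$. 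The spectral approximation transfers the quadratic part between $J$ and $\widehat J$ up to the factor of $3$, and a second, union-bounded application of operator Bernstein controls the cross-term by bounding $\|(\Gmu + \epsilon\Imu)^{-1/2}(\bv{F}\bv{r} - \Fmu r)\|_\mu$, where $\bv{r}(j) = w_j\, r(t_j)$; this again reduces to leverage-score-controlled per-sample variance. The main obstacle is the first step: carefully porting matrix Bernstein and its regularized leverage-score variant to rank-one operators on the separable Hilbert space $L_2(\mu)$, with attention to trace-class and measurability assumptions. Once that operator-theoretic toolkit is in place (via Appendices~\ref{app:op}--\ref{app:leverage_scores}), the second step is an essentially mechanical transcription of the finite-dimensional proofs in \cite{AlaouiMahoney:2015,CohenMuscoMusco:2017}.
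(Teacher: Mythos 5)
Your proposal follows essentially the same two-step route as the paper: Lemma~\ref{lem:operatorApproximation} establishes exactly the regularized spectral approximation $\tfrac12(\Gmu+\epsilon\Imu)\preceq\bv{F}\bv{F}^*+\epsilon\Imu\preceq\tfrac32(\Gmu+\epsilon\Imu)$ via an operator Bernstein inequality with the same per-sample norm bound, and the regression guarantee then follows by expanding both objectives around $g^\star$ and controlling the cross term separately. The one small discrepancy is that for the cross term the paper does not use a Bernstein-type bound but a simple second-moment plus Markov argument (Claim~\ref{clm:approxMatrixMult}, in the style of approximate matrix multiplication), which is what produces the $1/\delta$ rather than $\log(1/\delta)$ dependence in the stated sample size; your proposed Bernstein route, if made precise with a vector rather than operator concentration bound, would be an acceptable alternative yielding a somewhat stronger dependence on $\delta$.
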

A generalized version of this result is proven in Appendix \ref{app:leverage_scores}, which holds even when $\tilde{g}$ is only an approximate minimizer of \eqref{eq:approxkrr}.

Theorem \ref{thm:baseSampling} shows that
 $\tilde{g}$ obtained from solving the discretized regression problem provides an approximate solution to \eqref{eq:least_squares_setup} and by Claim \ref{claim:regression_reduction}, $\tilde{y} = \Fmu^* \tilde{g}$ solves Problem \ref{prob:unformal_interp} with parameter $\Theta(\epsilon)$. 
If we have $\ttmu(t) = \tmu(t)$, Theorem \ref{thm:baseSampling} combined with Claim \ref{claim:regression_reduction} shows that Problem \ref{prob:unformal_interp} with parameter $\Theta(\epsilon)$ can be solved with sample complexity $O\left ( \smu \cdot \log \smu \right )$, since by \eqref{eq:StatDim}, $\int_0^T \tmu(t) dt= \smu$. Note that, by simply decreasing the regularization parameter in \eqref{eq:least_squares_setup} by a constant factor, we can solve Problem \ref{prob:unformal_interp} with parameter $\epsilon.$ The asymptotic complexity is identical since, by \eqref{eq:Min}, for any $c \le 1$ and any $t \in [0,T]$, $\tau_{\mu,c\epsilon}(t) \le \frac{1}{c} \tmu(t)$ and so: 
\begin{align}\label{constAdjust}
s_{\mu,c\epsilon} \le \frac{1}{c} \smu.
\end{align}

This proves the sample complexity result of Theorem \ref{thm:informal_sample_complexity}.
 However, since it is not clear that sampling according to $\tmu(t)$ can be done efficiently (or at all), it does not yet give an algorithm yielding this complexity.\footnote{We conjecture that the existential sample complexity  can in fact be upper bounded by $O(\smu)$ by adapting deterministic sampling methods for finite matrices to the operator setting \cite{CohenNelsonWoodruff16}, like we do in Lemma \ref{lem:cssSpectral}.} This issue will be addressed in Section \ref{sec:general}, where we prove Theorem \ref{thm:informal_sample_dist}.

We prove Theorem \ref{thm:baseSampling} in Appendix \ref{app:leverage_scores}.
We show that leverage function sampling satisfies, with good probability, an affine embedding guarantee: that $\|\bv{F}^* {g} - (\bv{y}+ \bv n)\|_2^2 + \epsilon\|{g}\|_\mu^2$ closely approximates $\|\Fmu^* {g} - (y+n)\|_T^2 + \epsilon\|{g}\|_\mu^2$ \emph{for all} $g \in L_2(\mu)$. Thus, a (near) optimal solution to the discretized problem, $\min_{g \in L_2(\mu)}\left[  \|\bv{F}^* g - (\bv y+\bv n)\|_2^2 + \epsilon\|g\|_\mu^2\right]$, gives a near optimal solution to the original problem, $\min_{g \in L_2(\mu)}\left[  \|\Fmu^* g - (y+n)\|_T^2 + \epsilon\|g\|_\mu^2\right]$. Our proof of the affine embedding property is analogous to existing proofs for finite dimensional matrices \cite{ClarksonWoodruff:2013,AvronClarksonWoodruff:2017}.

\subsection{Efficient solution of the discretized problem}

Given an upper bound on the ridge leverage function $\ttmu(t) \ge  \tmu(t)$, we can apply  Theorem \ref{thm:baseSampling} to approximately solve the ridge regression problem of
\eqref{eq:least_squares_setup} and  therefore Problem \ref{prob:unformal_interp} by Claim \ref{claim:regression_reduction}. In Section
\ref{sec:general} we show how to obtain such an upper bound for any {$\mu$} 
using a universal distribution.

First, however, we demonstrate how to apply Theorem \ref{thm:baseSampling} algorithmically. Specifically, we show how to solve the randomly discretized problem of \eqref{eq:approxkrr} efficiently.
Combined with Theorem \ref{thm:baseSampling} and our bound on $\tmu(t)$ given in Section \ref{sec:general}, this yields a randomized algorithm (Algorithm \ref{alg:main}) for Problem \ref{prob:unformal_interp}. The formal analysis of Algorithm \ref{alg:main} is given in Theorem \ref{thm:mainAlg}.

 \begin{algorithm}[H]
\caption{\algoname{Time Point Sampling and Signal Reconstruction}}
{\bf input}: Probability measure $\mu(\xi)$, $\epsilon,\delta > 0$, time bound $T$, and function $y: [0,T] \rightarrow \RR$. Ridge leverage function upper bound $\ttmu(t) \ge \tmu(t) $ with $\tsmu=\int_0^T \ttmu(t) dt$. \\
{\bf output}: $t_1,\ldots, t_{s} \in [0,T]$ and $\bv{z} \in \CC^{s}$.
\begin{algorithmic}[1]
\State{Let $s = c \cdot \tsmu \cdot \left(\log \tsmu + \frac{1}{\delta}\right)$ for a sufficiently  large constant $c$.\label{step:sets}}
\State{Independently sample $ t_{1},\ldots, t_{s} \in [0,T]$ with probability  proportional to $\ttmu(t)$ and set the weight $w_i :=  \sqrt{\frac{1}{sT} \cdot \frac{ \tsmu}{\ttmu(t_i)}}$.\label{step:initialSample}}
\State{Let $\bv{ K} \in \CC^{ s \times  s}$ be the matrix with $\bv{ K}(i,j) = w_{i}  w_{j} \cdot k_\mu( t_{i}, t_{j})$.\label{step3}}
\State{Let $\bv{\bar y} \in \CC^s$ be the vector with $\bv{\bar y}(i) = w_i \cdot [y(t_i) + n(t_i)]$.\label{formY}}
\State{Compute $\bv{\bar z} := (\bv{K} + \epsilon  \bv{I})^{-1}\bv{\bar y}$.\label{step8}}
\\\Return{${t}_1,\ldots,{t}_{s}\in [0,T]$ and $\bv{z} \in \CC^{s}$ with $\bv{z}(i) = \bv{\bar z}(i) \cdot {w}_i$.\label{finalStep}}
\end{algorithmic}
\label{alg:main}
\end{algorithm}

 \begin{algorithm}[H]
\caption{\algoname{Evaluation of Reconstructed Signal}}
{\bf input}: Probability measure $\mu(\xi)$, $t_1,\ldots,t_s \in [0,T]$, $\bv{z} \in \CC^s$, and evaluation point $t \in [0,T]$. \\
{\bf output}: Reconstructed function value $\tilde y(t)$.
\begin{algorithmic}[1]
\State{For $i \in \{1,\ldots,s\}$, compute $k_\mu(t_i,t) = \int_{\xi \in \RR} e^{-2\pi i(t_i-t)} d\mu(\xi)$.}
\\\Return{$\tilde y(t) = \sum_{i=1}^s \bv{z}(i) \cdot k_\mu(t_i,t)$.}
\end{algorithmic}
\label{alg:main2}
\end{algorithm}

 \begin{theorem}[Efficient signal reconstruction given leverage function upper bounds]\label{thm:mainAlg} 	Assume that $\epsilon \leq \opnorm{\Kmu}$.\footnote{As discussed for Theorem \ref{thm:baseSampling}, if $\epsilon > \opnorm{\Kmu}$, Problem \ref{prob:unformal_interp} is trivially solved by $\tilde y  = 0$.}
 Algorithm \ref{alg:main} returns $t_1,\ldots,t_{s} \in [0,T]$ and $\bv{z} \in \CC^{s}$ such that $\tilde y(t) = \sum_{i=1}^{s} \bv{z}(i) \cdot k_\mu(t_i,t)$ (as computed in Algorithm \ref{alg:main2}) satisfies with probability $\ge 1-\delta$:
\begin{align*}
\norm{\tilde y - y}_T^2  \le 6 \epsilon \norm{x}_\mu^2 + 8\norm{n}_T^2.
\end{align*}
Suppose we can sample $t \in [0,T]$ with probability  proportional to $\ttmu(t)$ in time $W$ and compute the kernel function $k_\mu(t_1,t_2) = \int_{\xi \in \RR}  e^{-2\pi i (t_1 - t_2)} d\mu(\xi)$ in time $Z$.
Algorithm \ref{alg:main} queries $y+n$ at $s$ points and runs in $ O \left  (s \cdot W + s^2 \cdot Z + s^{\omega} \right )$ time\footnote{Here $\omega < 2.373$ is the exponent of fast matrix multiplication. $s^\omega$ is the theoretically fastest runtime required to invert a dense $s \times s$ matrix. We note that the $s^\omega$ term may be thought of as $s^3$ in practice, and potentially could be accelerated using a variety  of techniques for fast (regularized) linear system solvers.} where $s = O\left (\tsmu \cdot \left(\log \tsmu + 1/\delta\right) \right  )$. Algorithm \ref{alg:main2} evaluates $\tilde y(t)$ in $O(s \cdot Z)$ time for any  $t$.
\end{theorem}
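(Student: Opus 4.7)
My plan is to reduce Algorithm \ref{alg:main} to the discretized regression problem \eqref{eq:approxkrr} of Theorem \ref{thm:baseSampling}, invoke that theorem to bound the objective of $\tilde{g}$, and then pass from $\tilde{g}$ to the time-domain reconstruction error via Claim \ref{claim:regression_reduction}. The weights $w_i$ and the matrix $\bv{K}$ built in Algorithm \ref{alg:main} are chosen precisely to match the operator $\bv{F}:\CC^s\to L_2(\mu)$ appearing in Theorem \ref{thm:baseSampling}, so the main analytic work is showing that the output of Algorithm \ref{alg:main} encodes an exact (not merely approximate) minimizer of \eqref{eq:approxkrr}, and that the reconstruction formula in Algorithm \ref{alg:main2} equals the evaluation of $\Fmu^*\tilde g$.

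The core step is a representer argument. For any $g\in L_2(\mu)$ we decompose $g = \bv{F}\bv{\bar z} + g^\perp$ with $g^\perp\in\ker(\bv{F}^*)$. Because $\bv{F}^* g^\perp=0$, adding $g^\perp$ does not change the data-fit term $\|\bv{F}^* g-(\bv{y}+\bv{n})\|_2^2$ but strictly increases $\epsilon\|g\|_\mu^2$ unless $g^\perp=0$. Hence the minimizer of \eqref{eq:approxkrr} has the form $\tilde g=\bv{F}\bv{\bar z}$ for some $\bv{\bar z}\in\CC^s$. A direct computation using $[\bv{F} e_j](\xi)=w_j e^{-2\pi i \xi t_j}$ and $k_\mu(t_i,t_j)=\int e^{-2\pi i(t_i-t_j)\xi}\,d\mu(\xi)$ gives $\bv{F}^*\bv{F}=\bv{K}$, and $\|\bv{F}\bv{\bar z}\|_\mu^2=\bv{\bar z}^*\bv{K}\bv{\bar z}$. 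Substituting reduces \eqref{eq:approxkrr} to
\begin{align*}
\min_{\bv{\bar z}\in\CC^s}\ \|\bv{K}\bv{\bar z}-\bv{\bar y}\|_2^2 + \epsilon\,\bv{\bar z}^*\bv{K}\bv{\bar z},
\end{align*}
whose normal equations read $\bv{K}(\bv{K}+\epsilon\bv{I})\bv{\bar z}=\bv{K}\bv{\bar y}$; on $\range(\bv{K})$ this is solved by $\bv{\bar z}=(\bv{K}+\epsilon\bv{I})^{-1}\bv{\bar y}$, exactly Step \ref{step8}. An analogous computation shows $[\Fmu^*(\bv{F}\bv{\bar z})](t)=\sum_i w_i\bv{\bar z}(i)\,k_\mu(t_i,t)=\sum_i \bv{z}(i)\,k_\mu(t_i,t)$, matching Algorithm \ref{alg:main2}. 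Thus $\tilde y=\Fmu^*\tilde g$.

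With this identification, Theorem \ref{thm:baseSampling} (applied with the upper bound $\ttmu$) guarantees, with probability $\ge 1-\delta$, that $\|\Fmu^*\tilde g-(y+n)\|_T^2+\epsilon\|\tilde g\|_\mu^2\le 3\min_{g\in L_2(\mu)}[\|\Fmu^* g-(y+n)\|_T^2+\epsilon\|g\|_\mu^2]$. Feeding this into Claim \ref{claim:regression_reduction} with $C=3$ yields $\|\tilde y-y\|_T^2\le 6\epsilon\|x\|_\mu^2+8\|n\|_T^2$, the stated bound. For the runtime, Step \ref{step:initialSample} costs $O(sW)$, forming $\bv{K}$ in Step \ref{step3} costs $O(s^2 Z)$, building $\bv{\bar y}$ in Step \ref{formY} costs $O(s)$ after $s$ queries to $y+n$, and Step \ref{step8} is a dense $s\times s$ regularized solve costing $O(s^\omega)$; the sample size $s=O(\tsmu(\log\tsmu+1/\delta))$ comes directly from Theorem \ref{thm:baseSampling}. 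Algorithm \ref{alg:main2} evaluates $s$ kernel values and sums them in $O(sZ)$ time. The only subtle point I expect is verifying that the representer reduction is exact in this infinite-dimensional Hilbert-space setting—i.e.\ that $\ker(\bv{F}^*)$ is closed in $L_2(\mu)$ so the orthogonal decomposition is valid—and confirming that the normal equations admit a solution in $\range(\bv{K})$ that realizes the true minimum of \eqref{eq:approxkrr}; everything else is bookkeeping.
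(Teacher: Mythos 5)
Your proposal is correct and reaches the same closed form for $\tilde g$ and the same error/runtime bounds as the paper, but it derives the minimizer by a genuinely different route. The paper invokes Lemma~\ref{lem:ridgeMinimizer} directly, which gives $x^\star = \mathcal{A}^*(\mathcal{A}\mathcal{A}^* + \lambda\mathcal{I})^{-1}b$ for any bounded operator between Hilbert spaces; specializing $\mathcal{A} = \bv{F}^*$ immediately yields $\tilde g = \bv{F}(\bv{K}+\epsilon\bv{I})^{-1}\bv{\bar y}$. You instead run a representer-theorem reduction: decompose $L_2(\mu) = \range(\bv{F}) \oplus \ker(\bv{F}^*)$, observe that any component in $\ker(\bv{F}^*)$ only inflates the ridge penalty, conclude the minimizer lies in $\range(\bv{F})$, and then solve the finite-dimensional normal equations. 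Both are sound; yours is the more ``kernel-methods'' flavored argument and is self-contained (it does not presuppose Lemma~\ref{lem:ridgeMinimizer}), while the paper's is shorter given that the lemma is already proved in the appendix. Your two flagged ``subtle points'' are indeed routine but worth stating: $\ker(\bv{F}^*)$ is closed because $\bv{F}^*$ is bounded, and $\range(\bv{F})$ is closed because $\bv{F}$ has finite rank, so the orthogonal decomposition is valid; and although the normal equations $\bv{K}(\bv{K}+\epsilon\bv{I})\bv{\bar z} = \bv{K}\bv{\bar y}$ may have multiple solutions when $\bv{K}$ is singular, they all differ by elements of $\ker(\bv{K}) = \ker(\bv{F})$, so they produce the same $g = \bv{F}\bv{\bar z}$, which is the unique minimizer of the strictly-convex regularized objective over $L_2(\mu)$. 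Everything else — the application of Theorem~\ref{thm:baseSampling} with $C = 3$, passing through Claim~\ref{claim:regression_reduction}, the identification $[\Fmu^*(\bv{F}\bv{\bar z})](t) = \sum_i \bv{z}(i)k_\mu(t_i,t)$, and the runtime accounting — matches the paper's proof.
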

\begin{proof}
In Step \ref{step:initialSample} of Algorithm \ref{alg:main},
$t_{1},\ldots, t_{s}$ are sampled according to $\ttmu(t)$, which upper bounds $\tmu(t)$. We can thus apply Theorem \ref{thm:baseSampling}. If the constant $c$ in Step \ref{step:sets} is set large enough, with probability  $\ge 1-\delta$, letting $\bv{F},\bv{y},$ and $\bv{n}$ be as defined in that theorem, \eqref{12again2} holds for
\begin{align*}
		 \tilde g =\argmin_{g \in L_2(\mu)}\left[  \|\bv{F}^* g - (\bv y+\bv{n})\|_2^2 + \epsilon\|g\|_\mu^2\right].
	\end{align*}
	Therefore, 	letting $\tilde y \eqdef \Fmu^* \tilde {g}$ and applying Claim \ref{claim:regression_reduction}, with probability $\ge 1-\delta$,
	\begin{align}\label{eq:mainAc}
	\norm{\tilde y - y}_T^2  \le 6\epsilon \norm{x}_\mu^2 + 8 \norm{n}_T^2.
	\end{align}
	Further, the minimizer $ \tilde g$ is indeed unique and can be written as (see Lemma \ref{lem:ridgeMinimizer} in Appendix \ref{app:leverage_scores}):
	$$
	\tilde  g = \bv{F} (\bv{K}  + \epsilon \bv{I})^{-1} (\bv y+\bv{n}) = \bv{F} (\bv{K}  + \epsilon \bv{I})^{-1} \bv{\bar y}
	$$
where $\bv{K} = \bv{F}^* \bv{F}$ is as defined in Step \ref{step3} of Algorithm \ref{alg:main} and $\bv{\bar y} = \bv{y} + \bv{n}$ is formed in Step \ref{formY}.
If we let $\bv{\bar z} = (\bv{K}  + \epsilon \bv{I})^{-1} \bv{\bar y}$ and let $\bv{z}$ have $\bv{z}(i) = \bv{\bar z}(i) \cdot w_i$ as in Steps \ref{step8} and \ref{finalStep}, we can see that:
\begin{align*}
\tilde y = \Fmu^*  \tilde g &= \sum_{i=1}^s  \bv{\bar z}(i) \cdot w_i \cdot k_\mu(t_i,t)\\
& = \sum_{i=1}^s  \bv{ z}(i) \cdot k_\mu(t_i,t),
\end{align*}
giving the expression returned in Algorithm \ref{alg:main2}. Combined with \eqref{eq:mainAc}, this completes the accuracy bound of the theorem.
The runtime and sample complexity bounds follow from observing that:
\begin{itemize}
\item $s \cdot W$ time is required to sample $t_1,\ldots,t_s$ in Step \ref{step:initialSample}.
\item $s^2 \cdot Z$ time is required to form $\bv{K}$ in Step \ref{step3}.
\item $s$ queries to $y+n$ are required to form $\bv{\bar y}$ in Step \ref{formY}.
\item $O(s^\omega)$ time is required to compute $\bv{\bar z} := (\bv{K} + \epsilon  \bv{I})^{-1}\bv{\bar y}$ in Step \ref{step8}. This runtime could potentially be improved with a variety of fast system solvers. We take  $s^\omega$ as a simple upper bound.
\item $O(s \cdot Z)$ time is required to compute $k(t_1,t),\ldots,k(t_s,t)$ to evaluate $\tilde y(t)$ in Algorithm \ref{alg:main2}.
\end{itemize}
This completes the proof of Theorem \ref{thm:mainAlg}.
\end{proof}

\spara{Remark: }
As discussed, in Section \ref{sec:general} we will give a ridge leverage function upper bound that can be sampled from in $W = O(1)$ time and closely  bounds the true leverage function for any $\mu$, giving $\tsmu= O( \smu \cdot \log \smu)$. Using this upper bound to sample time domain points, our sample complexity $s$ is thus within a $O(  \log \smu )$ factor of the best possible using Theorem \ref{thm:baseSampling}, which we would achieve if sampling using the true ridge leverage function.

In Appendix \ref{sec:bandlimited} we prove a tighter leverage function bound than the one in Section \ref{sec:general} for bandlimited signals, removing the logarithmic factor in this case.
It is not hard to see that for general $\mu$ we can also achieve optimal sample complexity by further subsampling $t_1,\ldots,t_s$ using the ridge leverage scores of $\bv{K}^{1/2}$. These scores can be computed in $\tilde O(s \cdot \smu^2)$ time using known techniques for finite kernel matrices \cite{MuscoMusco:2017}. Subsampling $O\left (\frac{\smu \log \smu}{\delta^2}\right )$ time domain points according to these scores lets us approximately solve the discretized problem of \eqref{eq:approxkrr} to error $(1+\delta)$. 

Applying the more general version of Theorem \ref{thm:baseSampling} stated in Appendix \ref{app:leverage_scores}, this yields an approximate solution to \eqref{eq:least_squares_setup} and thus to Problem \ref{prob:unformal_interp}. For constant $\delta$, we need just  $ O(\smu \cdot \log \smu)$ time samples to to solve the subsampled regression problem, matching the best possible sample complexity  of Theorem \ref{thm:baseSampling}. By the lower bound given in Section \ref{sec:lb}, Theorem \ref{thm:mainLB}, this complexity  is within a $O(\log \smu)$ factor of optimal in nearly all settings. We conjecture that one can in fact achieve within an $O(1)$ factor of the optimal sample complexity by applying deterministic selection methods to $\bv{F}$ \cite{CohenNelsonWoodruff16}, similar to the techniques used to prove Lemma \ref{lem:cssSpectral}.

\section{A near-optimal spectrum blind sampling distribution}\label{sec:general}
In the previous section, we showed how to solve Problem \ref{prob:unformal_interp} given the ability to sample time points according to the ridge leverage function $\tmu$. In general, this function depends strongly on $T$, $\mu$, and $\epsilon$, and it is not clear if it can be computed or sampled from directly.

Nevertheless, in this section we show that it is possible to efficiently obtain samples from a function that \emph{very closely} approximates the true leverage function for \emph{any} constraint measure $\mu$.
In particular we describe a set of closed form functions $\tilde \tau_{\alpha}(t)$, each parameterized by $\alpha > 0$. $\tilde \tau_{\alpha}$ upper bounds the leverage function $\tmu$ for {any} $\mu$ and $\epsilon$, as long as the statistical dimension $\smu \leq O(\alpha)$. Our upper bound satisfies
$$\int_0^T \tilde \tau_{\alpha}(t) dt = O(\smu \cdot \log \smu),$$  which means it can be used in place of the true ridge leverage function to give near optimal sample complexity via Theorem \ref{thm:baseSampling} and \ref{thm:mainAlg}. This result is proven formally in Theorem \ref{thm:fullBound}, which as a consequence immediately yields our main technical result, Theorem \ref{thm:informal_sample_dist}. The majority of this section is devoted towards building tools necessary for proving Theorem \ref{thm:fullBound}. 

\subsection{Uniform leverage bound via Fourier sparsification}\label{sec:general1}
We seek a simple closed form function that upper bounds the leverage function $\tmu$. Ultimately, we want this upper bound to be very tight, but a natural first question is whether it should exists at all. Is it possible to prove any finite upper bound on $\tmu$ without using specific knowledge of $\mu$?

We answer this first question by showing that $\tmu$ can be upper bounded by a constant function. Specifically, we show that for $t\in [0,T]$, $\tmu(t) \leq C$ for $C = \poly(\smu)$. This upper bound depends on the statistical dimension, but importantly, it does not depend on $\mu$. 
Formally we show:

\begin{theorem}[Uniform leverage function bound]\label{thm:uniformBound}
	For all $t \in [0,T]$ and $\epsilon \le 1$\footnote{If $\epsilon > 1 = \tr(\Kmu)$, Problem \ref{prob:unformal_interp} is trivially solved by returning $\tilde y  = 0$.}
	$$ \tmu(t)  \le  \frac{ 2^{41} (\smu)^5 \log^3  (40 \smu)}{T}.$$
\end{theorem}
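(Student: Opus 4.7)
The proof reduces the pointwise control of $\tmu(t)$ to a pointwise growth bound for sparse Fourier functions, combined with a sparsification of $\Fmu$ via column subset selection. By Theorem \ref{thm:leverageProps} we already have $\int_0^T \tmu(t)\,dt = \smu$, so the \emph{average} leverage score is $\smu/T$. The challenge is to promote this average to a uniform bound at every $t \in [0,T]$, without exploiting any structure of $\mu$ beyond the statistical dimension. The key idea is that every function $f = \Fmu^*\alpha$ is, morally, a continuous-frequency Fourier series, so after sparsifying its frequency content to $O(\smu)$ atoms we can invoke the Chen--Kane--Price \cite{ChenKanePrice:2016} and Chen--Price \cite{ChenPrice:2018} pointwise bound: for any $s$-sparse Fourier function $h(t) = \sum_{j=1}^s c_j e^{2\pi i \xi_j t}$ on $[0,T]$,
\begin{align*}
\max_{t \in [0,T]} |h(t)|^2 \;\le\; s^{O(1)} \log^{O(1)}(s)\,\|h\|_T^2,
\end{align*}
with \emph{no dependence on the frequency locations} $\xi_j$. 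The explicit exponents $5$ and $3$ in the statement come directly from the exponents in this sparse-Fourier bound.

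\textbf{Step 1 (Operator CSS).} The first step is to sparsify the frequency content. Using the extension of deterministic column-subset-selection methods (in the spirit of Batson--Spielman--Srivastava \cite{BSS14} and Cohen--Nelson--Woodruff \cite{CohenNelsonWoodruff16}) to the continuous operator $\Fmu$, which is developed in Appendix \ref{app:leverage_scores} as Lemma \ref{lem:cssSpectral}, I obtain $s = O(\smu)$ frequencies $\xi_1,\dots,\xi_s$ and nonnegative weights $w_1,\dots,w_s$ such that the finite-rank surrogate $\widetilde{\Gmu}$ built from the sparse Fourier atoms at these frequencies satisfies a ridge-spectral approximation of the form $\widetilde{\Gmu} + \epsilon \Imu \succeq c\,(\Gmu + \epsilon \Imu)$ for a positive constant $c$. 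Via the minimization characterization \eqref{eq:Min}, this lets me upper bound $\tmu(t)$ by (a constant times) the analogous ridge leverage score of the sparsified problem, which lives in an $s$-dimensional subspace spanned by sparse Fourier exponentials.

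\textbf{Step 2 (Pointwise bound via sparse Fourier).} Once reduced to this rank-$s$ sparse-Fourier subspace, the numerator $|[\Fmu^*\alpha](t)|^2$ in the leverage score ratio becomes, up to a controllable residual, the pointwise value of an $s$-sparse Fourier function at $t$. Applying the Chen--Price bound above with $s = O(\smu)$ yields a uniform-in-$t$ inequality $\tmu(t) \le s^{O(1)}\log^{O(1)}(s)/T$. Careful bookkeeping of the absolute constants and logarithmic factors then produces the specific bound $2^{41}(\smu)^5\log^3(40\smu)/T$ stated in the theorem.

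\textbf{Main obstacle.} The delicate point is Step 1: the CSS result delivers a \emph{spectral}, i.e.\ $L_2$-type, approximation of $\Gmu$, whereas what we ultimately need is a \emph{pointwise} bound on $|[\Fmu^*\alpha](t)|^2$. Bridging this gap requires showing that the component of $\Fmu^*\alpha$ orthogonal to the sparse subspace contributes at most $O(\epsilon\|\alpha\|_\mu^2)$, an amount that is absorbed by the regularization term $\epsilon\|\alpha\|_\mu^2$ in the denominator of the leverage score ratio. Balancing this residual bound against the sparse-Fourier polynomial blow-up---and making sure that the quality of the CSS approximation, the sparsity $s$, and the ridge parameter $\epsilon$ all interact consistently---is what drives the precise polynomial exponent and the logarithmic powers in the final statement.
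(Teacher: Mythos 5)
Your high-level strategy matches the paper's: sparsify the frequency content of $\Fmu^*$ via an operator column-subset-selection argument, then leverage the Chen--Kane--Price / Chen--Price pointwise bound on sparse Fourier exponential sums. You also correctly identify the delicate point -- controlling the component of $\varphi_t$ orthogonal to the sparse subspace. But there is a genuine gap in how you propose to close that loop.

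\smallskip

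First, the CSS tool you want to invoke is the \emph{spectral} approximation of Lemma~\ref{lem:cssSpectral} (or equivalently of Lemma~\ref{lem:cssSpectral2}), $\widehat{\Gmu} + \epsilon\Imu \succeq c\,(\Gmu+\epsilon\Imu)$. That alone does not suffice: the ridge leverage score $\frac{1}{T}\langle\varphi_t,(\widehat{\Gmu}+\epsilon\Imu)^{-1}\varphi_t\rangle_\mu$ of a rank-$s$ operator plus ridge has a term of order $\|\varphi_t^\perp\|_\mu^2/(T\epsilon)$, so you still need a uniform-in-$t$ bound on $\|\varphi_t^\perp\|_\mu^2 = \|\varphi_t - \bv{Z}^*\bs{\phi}_t\|_\mu^2$. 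What the paper actually uses is the stronger \emph{Frobenius/trace} guarantee of Theorem~\ref{thm:css}, namely
\begin{align*}
\frac{1}{T}\int_0^T \|\varphi_t - \bv{Z}^*\bs{\phi}_t\|_\mu^2\,dt \le 4\epsilon\,\smu,
\end{align*}
which only bounds $\|\varphi_t^\perp\|_\mu^2$ \emph{on average}. Your proposal, by contrast, asserts (in the ``main obstacle'' discussion) that the orthogonal component contributes $O(\epsilon\|\alpha\|_\mu^2)$ \emph{pointwise}, which is exactly the nontrivial step that needs a further argument; the spectral CSS bound gives you no handle on $\|\varphi_t^\perp\|_\mu^2$ for a worst-case $t$ near the endpoints.

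\smallskip

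The missing idea that converts the average bound to a uniform bound is a \emph{second} application of the sparse-Fourier smoothness lemma, this time applied to the ``columns'' of the residual operator $\Fmu^* - \bv{C}_s^*\bv{Z}$: for each fixed frequency $\xi$, the function $t\mapsto r_\xi(t) = e^{-2\pi i\xi t} - \sum_j \bv{z}_\xi^*(j)e^{-2\pi i\xi_j t}$ is an $(s+1)$-sparse Fourier exponential sum, so no single value $|r_\xi(t)|^2$ can exceed $1540(s+1)^4\log^3(s+1)\cdot \|r_\xi\|_T^2$. Integrating over $\xi$ and invoking the trace bound then gives $r(t) \le \mathrm{poly}(s)\log^3(s)\cdot\epsilon\,\smu/T$ uniformly in $t$. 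You apply Lemma~\ref{lem:92} once (to the sparsified part $\bv{C}_s^*\bv{Z}\alpha$), but this second application -- to the columns of the residual -- is where the real work is and where the polynomial/logarithmic blow-up in the final bound comes from; ``balancing'' the residual against ``the'' sparse-Fourier blow-up, as you put it, doesn't capture that the residual itself must be attacked via a fresh sparse-Fourier argument. Without this step, your Step~1 reduction leaves $\tmu(t)$ with an unbounded $1/\epsilon$ term from the orthogonal complement of the sparsified subspace, and the proof does not close.

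Minor additional points: the paper works with the \emph{unweighted} projection $\bv{C}_s^*(\bv{C}_s\bv{C}_s^*)^{-1}\bv{C}_s$ (so the decomposition $\tmu(t) \le 2(\ell(t) + r(t)/\epsilon)$ in Theorem~\ref{thm:css2} can exploit that the projection is a contraction in $L_2(T)$), not the weighted surrogate of Lemma~\ref{lem:cssSpectral} that you describe; and the paper derives the decomposition from the maximization characterization of $\tmu$, not the minimization characterization you cite. Neither of these is fatal, but combined with the missing second sparse-Fourier application they indicate the proposed argument would not compile as written.
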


While Theorem \ref{thm:uniformBound} appears to give a relatively weak bound, proving this statement is a key technical challenge. Ultimately, it is used in Section \ref{sec:tight_bound} as one of two main ingredients in proving the much tighter leverage function bound that yields Theorem \ref{thm:fullBound} and Theorem \ref{thm:informal_sample_dist}.

Towards a proof of Theorem \ref{thm:uniformBound}, we consider the operator $\Fmu$ defined in Section \ref{sec:notation}. 
Since $\Fmu$ has statistical dimension $\smu$, $\Kmu = \Fmu^*\Fmu$ can have at most $2\smu$ eigenvalues $\ge \epsilon$:
\begin{align}\label{eq:nespBound}
\smu = \sum_{i=1}^\infty \frac{\lambda_i(\Kmu)}{\lambda_i(\Kmu)+\epsilon} \ge \sum_{i: \lambda_i(\Kmu) \ge \epsilon} \frac{\lambda_i(\Kmu)}{\lambda_i(\Kmu)+\epsilon} \ge \frac{\left | i: \lambda_i(\Kmu) \ge \epsilon \right |}{2}.
\end{align}
Thus, if we project
$\Fmu$ onto the span of $\Kmu$'s top $2 \smu$ eigenfunctions (when $\mu$ is uniform on an interval these are the prolate spherical wave functions of Slepian and Pollak \cite{SlepianPollak:1961}) we will approximate $\Kmu$ up to its small eigenvalues. The total mass of these eigenvalues is bounded by:
\begin{align*}
\sum_{i: \lambda_i(\Kmu) \le \epsilon} \lambda_i(\Kmu) \le 2\epsilon \cdot
\sum_{i: \lambda_i(\Kmu) \le \epsilon} \frac{\lambda_i(\Kmu)}{\lambda_i(\Kmu)+\epsilon}
\le 2\epsilon \cdot \smu.
\end{align*}

Alternatively, instead of projecting onto the span of the eigenfunctions, we can approximate $\Kmu$ nearly optimally by projecting $\Fmu$ onto the span of a subset of $O(\smu)$ of its	``rows" -- i.e. frequencies in the support of $\mu$.
For finite linear operators, is well known that such a subset exists: the problem of finding these subsets has been studied extensively in the literature on randomized low-rank  matrix approximation under the name \emph{column subset selection} \cite{sarlos2006improved,BoutsidisMahoneyDrineas:2009a,DeshpandeRademacher:2010}. In Appendix \ref{app:leverage_scores} we show that an analogous result extends to the continuous operator $\Fmu$:

 \begin{theorem}[Frequency subset selection]\label{thm:css}
For some $s \le \lceil 36 \cdot \smu\rceil$ there exists a set of distinct frequencies $\xi_1,\ldots,\xi_s \in \CC$ such that, if $\bv{C}_s: L_2(T) \rightarrow \CC^{s}$ and $\bv{Z}: L_2(\mu) \rightarrow \CC^{s}$ are defined by:
 \begin{align}\label{eq:needForPoint}
 [\bv{C}_sg](j) &= \frac{1}{T}\int_0^T g(t) e^{-2 \pi i \xi_j t} dt & \bv{Z} &= (\bv{C}_s \bv{C}_s^*)^{-1} \bv{C}_s \Fmu^*,\footnotemark
 \end{align}
 \footnotetext{The fact that $\xi_1,\ldots,\xi_s$ are distinct ensures that $(\bv{C}_s \bv{C}^*_s)^{-1}$ exists.}
then
\begin{align}\label{eq:frobNormBound}
 \tr(\Kmu - \bv{C}_s^*\bv{Z}\bv{Z}^*\bv{C}_s)  \le 4\epsilon \cdot \smu.
\end{align}
\end{theorem}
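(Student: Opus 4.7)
The plan is to reformulate the statement as a Hilbert--Schmidt-norm column subset selection problem for $\Fmu^*$ and then appeal to an operator-theoretic extension of standard finite-matrix selection results. The first step is to observe that the operator $\bv{C}_s^* \bv{Z} \bv{Z}^* \bv{C}_s$ equals $\Pi_s \Kmu \Pi_s$, where $\Pi_s \eqdef \bv{C}_s^* (\bv{C}_s \bv{C}_s^*)^{-1} \bv{C}_s$ is the orthogonal projection in $L_2(T)$ onto $\mathrm{range}(\bv{C}_s^*) = \mathrm{span}\{e^{2\pi i \xi_j t} : j = 1,\ldots,s\}$. Since $\Kmu = \Fmu^* \Fmu$ and $\Pi_s$ is self-adjoint and idempotent, cyclicity of trace gives
\begin{equation*}
\tr(\Kmu - \bv{C}_s^* \bv{Z} \bv{Z}^* \bv{C}_s) \;=\; \tr((\mathcal{I}_T - \Pi_s)\Kmu) \;=\; \|(\mathcal{I}_T - \Pi_s)\Fmu^*\|_{HS}^2,
\end{equation*}
so the task reduces to choosing $s \le \lceil 36 \smu \rceil$ frequencies whose associated exponentials span a subspace onto which projecting $\Fmu^*$ costs at most $4\epsilon \smu$ in squared Hilbert--Schmidt norm.

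Next I would use the statistical dimension to fix a target rank. By the calculation giving~\eqref{eq:nespBound}, at most $2\smu$ eigenvalues of $\Kmu$ exceed $\epsilon$, and the tail satisfies $\sum_{i\,:\, \lambda_i(\Kmu) \le \epsilon} \lambda_i(\Kmu) \le 2\epsilon\smu$. Hence the best rank-$2\smu$ truncation $[\Fmu^*]_{2\smu}$ obeys $\|\Fmu^* - [\Fmu^*]_{2\smu}\|_{HS}^2 \le 2\epsilon \smu$. It therefore suffices to exhibit $s \le 36\smu$ frequencies in $\supp(\mu)$ whose column span captures $\Fmu^*$ up to Hilbert--Schmidt squared error within a factor $2$ of this best rank-$2\smu$ error.

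The central step is an operator-theoretic analogue of classical column subset selection for the Frobenius norm. For finite matrices, results of Boutsidis--Drineas--Mahoney, Deshpande--Rademacher, and Cohen--Nelson--Woodruff guarantee that for any matrix $\bv{A}$ and any target rank $k$, some $O(k)$-size subset of columns spans a subspace onto which projecting $\bv{A}$ incurs Frobenius squared error at most $2\|\bv{A} - [\bv{A}]_k\|_F^2$, with explicit constants small enough (roughly $18$ columns per unit of target rank) to yield the promised $\lceil 36 \smu\rceil$. I would lift one such selection theorem to the trace-class operator $\Fmu^*$ --- in the same spirit as the column subset selection primitives deferred to Appendix~\ref{app:leverage_scores} --- and apply it with $k = 2\smu$ to obtain the required frequencies. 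A small perturbation (shifting any coincident $\xi_j$ by an arbitrarily small amount) then ensures the $\xi_j$ are distinct, so that $\bv{C}_s \bv{C}_s^*$ is invertible, while changing the HS error negligibly.

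The hard part will be this lift from the finite-matrix setting to the continuous operator $\Fmu^*$, whose ``columns'' are indexed by the (possibly uncountable) support of $\mu$. My default plan is a discretize-and-limit argument: approximate $\mu$ by atomic measures $\mu_n$ on progressively finer grids, apply the finite selection theorem to each $\mu_n$ (using that $s_{\mu_n,\epsilon}$ converges to $\smu$), and extract a limit of the selected frequency sets via compactness and a subsequence argument, verifying continuity of $\|\Fmu^* - \Pi_s \Fmu^*\|_{HS}^2$ in both the measure and the choice of frequencies. An alternative is to run a BSS-style greedy potential-function selection directly on the operator $\Fmu^*$, which sidesteps the limiting step but demands more delicate trace-class bookkeeping to preserve the constants.
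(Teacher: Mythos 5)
Your reduction and accounting are correct: $\bv{C}_s^*\bv{Z}\bv{Z}^*\bv{C}_s = \Pi_s\Kmu\Pi_s$, so the trace quantity is $\|(\mathcal{I}_T-\Pi_s)\Fmu^*\|_{HS}^2$, and the best rank-$\lceil 2\smu\rceil$ truncation error is bounded by the tail mass $\sum_{i:\lambda_i(\Kmu)<\epsilon}\lambda_i(\Kmu)\le 2\epsilon\smu$. That much matches the paper's bookkeeping exactly.

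However, the route you propose is genuinely different from the paper's, and you should be aware of where each puts the work. The paper does \emph{not} invoke a Frobenius/Hilbert--Schmidt column subset selection theorem at all. Its Lemma~\ref{lem:cssSpectral} runs a BSS-style potential-function iteration directly on the operator $\Fmu$ and produces a \emph{spectral} guarantee, namely $\frac{1}{2}(\Kmu+\epsilon\mathcal{I}_T)\preceq\widehat{\Kmu}+\epsilon\mathcal{I}_T\preceq\frac{3}{2}(\Kmu+\epsilon\mathcal{I}_T)$; Lemma~\ref{lem:cssSpectral2} converts this to the projection-based spectral bound $\widehat{\Gmu}\preceq\Gmu\preceq\widehat{\Gmu}+\epsilon\Imu$; and the proof of Theorem~\ref{thm:css} then obtains the trace bound by splitting $\tr(\Gmu-\widehat{\Gmu})$ across the top-$\lceil 2\smu\rceil$ eigenspace and its complement, using the spectral bound only on the low-rank piece and the eigenvalue tail on the rest. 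So the paper gets a Frobenius-type conclusion from a spectral-type primitive plus the tail argument, whereas you propose going straight through a Frobenius CSS primitive. Both can work with the same $s\le\lceil 36\smu\rceil$ budget; the paper's route is arguably easier to carry out in infinite dimensions precisely because ridge-leverage BSS is additive in the measure and tracks a scalar potential, which the appendix handles carefully with weak operator integrals and operator pseudo-inversion lemmas.

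Where your sketch has a genuine gap is the operator lift itself. The discretize-and-limit plan you describe as the default is much more delicate than it looks: $\mu$ may have unbounded support, so the selected frequencies for $\mu_n$ need not lie in a compact set and a convergent subsequence is not guaranteed; the statistical dimension $s_{\mu_n,\epsilon}\to\smu$ requires a continuity argument that you would need to prove (it is not automatic under weak convergence of measures); and the map from an $s$-tuple of frequencies to $\|(\mathcal{I}_T-\Pi_s)\Fmu^*\|_{HS}^2$ degenerates as frequencies coalesce (since $\bv{C}_s\bv{C}_s^*$ becomes singular), so passing to a limit of near-coincident frequencies does not trivially preserve the bound. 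None of these are fatal, but each requires a lemma you have not sketched. Your ``alternative'' of running BSS directly on the operator is in fact the route the paper takes (see the proof of Lemma~\ref{lem:cssSpectral}, which adapts the finite-dimensional BSS argument via the weak-integral machinery of Appendix~\ref{app:op}); if you pursue that you should be clear that the BSS iterate controls a spectral quantity, and you would then still need the tail-splitting step to reach the trace bound, exactly as in the paper's proof of Theorem~\ref{thm:css}. Finally, a small point: the theorem statement allows $\xi_j\in\CC$, but your ``pick columns from $\supp(\mu)$'' forces $\xi_j\in\RR$; this happens to be what the paper produces as well, but it is worth noting you are proving something slightly stronger than claimed, and the distinctness fix should be stated as merging coincident atoms (as the paper does) rather than perturbation, since perturbing off $\supp(\mu)$ changes which ``columns'' you are allowed to pick.
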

\noindent Note that, if $\varphi_t \in L_2(\mu)$ is defined $\varphi_t(\xi) = e^{-2\pi i t \xi}$ and  $\bs{\phi}_t \in \CC^s$ is defined $\bs{\phi}_t(j) = \varphi_t(\xi_j)$, we have:
\begin{align*}
\tr(\Kmu - \bv{C}_s^*\bv{Z}\bv{Z}^*\bv{C}_s) = \frac{1}{T} \int_{t \in[0,T]}  \norm{\varphi_t - \bv Z^* \bs{\phi}_t}_\mu^2\, dt.
\end{align*}


\spara{Leverage function bound proof sketch.}
With Theorem \ref{thm:css} in place, we explain how to use this result to prove Theorem \ref{thm:uniformBound}, i.e., to establish a universal bound on the leverage function of $\Fmu$.
For the sake of exposition, we use the term ``row'' of an operator $\mathcal{A}: L_2(\mu) \rightarrow L_2(T)$ to refer to the corresponding operator restricted to some time $t$. We use the term ``column'' of an operator as the adjoint of a row of $\mathcal{A}^*: L_2(T) \rightarrow L_2(\mu)$, i.e., the adjoint operator restricted to some frequency  $\xi$.

By Theorem \ref{thm:css}, $\bv C_s^* \bv Z:  L_2(\mu) \rightarrow  L_2(T)$ (the projection of $\Fmu^*$ onto the range of $\bv{C}_s$) closely approximates the operator $\Fmu^*$ yet has columns spanned by just $O(\smu)$ frequencies: $\xi_1,\ldots,\xi_s$. Thus, for any $\alpha \in L_2(\mu)$, $\bv C_s^* \bv Z \alpha \in L_2(T)$ is  just an $O(\smu)$ sparse Fourier function. Using the maximization characterization of Definition \ref{def:ridgeScores}, we can thus bound the time domain ridge leverage function of $\bv C_s^* \bv Z$ by appealing to known smoothness bounds for Fourier sparse functions \cite{ChenPrice:2018}, even for $\epsilon = 0$. When $\epsilon = 0$, the ridge leverage function is known as the \emph{standard leverage function} in the randomized numerical linear algebra literature, and we will refer to them as such.

We can use a similar argument to bound the row norms  of the residual  operator $[\Fmu^* - \bv C_s^* \bv Z]$. The columns of this residual operator are each spanned by $O(\smu)$ frequencies, and so are again sparse Fourier functions whose smoothness we can bound. This smoothness ensures that no row can have norm significantly  higher than average.

Finally, we note that the time domain ridge leverage function of $\Fmu$ is approximated to within a constant factor by the sum of the standard row leverage function of $\bv C_s^* \bv Z$ along with row norms of $\Fmu-\bv  C_s^*  \bv Z$. This gives us a bound on $\Fmu$'s ridge leverage function. We prove this formally below:
%
\begin{theorem}[Ridge leverage function approximation]\label{thm:css2}
Let $\bv C_s$ and $\bv Z$ be the operators guaranteed to exist by Theorem \ref{thm:css}. Let $\ell(t)$ be the standard leverage function of $t$ in $\bv C_s^* \bv Z$:\footnote{Analogously to how $[\Fmu^* \alpha](t)$ is used in Definition \ref{def:ridgeScores}, while $L_2(T)$ is formally a space of equivalence classes of functions, here we use $\bv{C}^*_s \bv{Z} \alpha$ to denote the specific representative of the equivalence class $\bv{C}^*_s \bv{Z} \alpha \in L_2(T)$ given by $[\bv{C}^*_s \bv{Z}\alpha](t) = \sum_{j=1}^s [\bv{Z}\alpha](j) \cdot e^{2\pi i \xi_j t} = \langle \bs{\phi}_t, \bv{Z}\alpha \rangle_{\CC^s}$. In this way, we can consider the pointwise value $[\bv{C}^*_s \bv{Z} \alpha](t)$.}
\begin{align*}
\ell(t) \eqdef \max_{\{\alpha \in L_2(\mu):\, \norm{\alpha}_\mu > 0\}} \frac{1}{T} \cdot \frac{|[\bv  C_s^* \bv Z\alpha](t)|^2}{\norm{\bv  C_s^* \bv Z\alpha}_T^2}.
\end{align*}
 Let $r(t)$ be  the residual:
 \begin{align*}
\frac{1}{T} \cdot \norm{\varphi_t-\bv Z^* \bs{\phi}_t}_\mu^2
 \end{align*}
 where $\varphi_t$ and $\bs{\phi}_t$  are as defined in Theorem \ref{thm:css}.
 Then for all $t$:
\begin{align*}
\tmu(t) \le 2 \cdot \left (\ell(t)+\frac{r(t)}{\epsilon}\right)
\end{align*}
\end{theorem}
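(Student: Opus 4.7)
The plan is to work with the maximization characterization of $\tmu(t)$ from Definition \ref{def:ridgeScores} and to split the numerator $|[\Fmu^* \alpha](t)|^2$ into a piece controlled by $\ell(t)$ and a residual piece controlled by $r(t)$. The key observation I want to exploit is that $\bv{C}_s^* \bv{Z} = \bv{C}_s^* (\bv{C}_s \bv{C}_s^*)^{-1} \bv{C}_s \Fmu^*$ acts as the orthogonal projection in $L_2(T)$ of $\Fmu^*$ onto the range of $\bv{C}_s^*$. In particular, for any $\alpha \in L_2(\mu)$, $\|\bv{C}_s^* \bv{Z} \alpha\|_T \le \|\Fmu^* \alpha\|_T$. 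This projection property is what will let me convert a bound involving $\|\bv{C}_s^* \bv{Z} \alpha\|_T^2$ (arising from the definition of $\ell(t)$) into one involving $\|\Fmu^* \alpha\|_T^2$ (as appearing in the definition of $\tmu(t)$).

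First I would rewrite the pointwise evaluations as inner products in $L_2(\mu)$: $[\Fmu^* \alpha](t) = \langle \varphi_t, \alpha \rangle_\mu$, and $[\bv{C}_s^* \bv{Z} \alpha](t) = \langle \bs{\phi}_t, \bv{Z}\alpha\rangle_{\CC^s} = \langle \bv{Z}^* \bs{\phi}_t, \alpha\rangle_\mu$, so that
\begin{equation*}
[\Fmu^* \alpha](t) - [\bv{C}_s^* \bv{Z} \alpha](t) = \langle \varphi_t - \bv{Z}^* \bs{\phi}_t, \alpha\rangle_\mu.
\end{equation*}
By the elementary inequality $(a+b)^2 \le 2a^2 + 2b^2$ and then Cauchy--Schwarz, I then get
\begin{equation*}
|[\Fmu^* \alpha](t)|^2 \le 2|[\bv{C}_s^* \bv{Z} \alpha](t)|^2 + 2\|\varphi_t - \bv{Z}^* \bs{\phi}_t\|_\mu^2 \cdot \|\alpha\|_\mu^2 = 2|[\bv{C}_s^* \bv{Z} \alpha](t)|^2 + 2T r(t) \|\alpha\|_\mu^2.
\end{equation*}
The definition of $\ell(t)$ supplies $|[\bv{C}_s^* \bv{Z} \alpha](t)|^2 \le T \ell(t) \|\bv{C}_s^* \bv{Z} \alpha\|_T^2$, and the projection property above further upgrades this to $|[\bv{C}_s^* \bv{Z} \alpha](t)|^2 \le T \ell(t) \|\Fmu^* \alpha\|_T^2$.

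Plugging back into the maximization characterization of Definition \ref{def:ridgeScores}, for any nonzero $\alpha$,
\begin{equation*}
\frac{1}{T} \cdot \frac{|[\Fmu^* \alpha](t)|^2}{\|\Fmu^* \alpha\|_T^2 + \epsilon \|\alpha\|_\mu^2} \le \frac{2\ell(t) \|\Fmu^* \alpha\|_T^2 + 2 r(t) \|\alpha\|_\mu^2}{\|\Fmu^* \alpha\|_T^2 + \epsilon \|\alpha\|_\mu^2}.
\end{equation*}
Finally I would finish via the pointwise inequality $\tfrac{a+b}{c+d} \le \tfrac{a}{c} + \tfrac{b}{d}$ (valid for positive numbers, as seen by cross-multiplication), applied with $a = 2\ell(t)\|\Fmu^*\alpha\|_T^2$, $c = \|\Fmu^*\alpha\|_T^2$, $b = 2r(t)\|\alpha\|_\mu^2$, $d = \epsilon\|\alpha\|_\mu^2$, giving an upper bound of $2\ell(t) + 2r(t)/\epsilon$ that is independent of $\alpha$. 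Taking the supremum over $\alpha$ yields the claim.

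I do not expect any serious technical obstacle: the ingredients are all algebraic (Cauchy--Schwarz, the parallelogram-style splitting, and the projection bound coming from the construction of $\bv{Z}$ in Theorem \ref{thm:css}). The one step that needs a small amount of care is verifying that $\bv{C}_s^* \bv{Z}$ is genuinely the orthogonal projection of $\Fmu^*$ onto $\text{range}(\bv{C}_s^*)$, which follows from $\bv{Z} = (\bv{C}_s\bv{C}_s^*)^{-1}\bv{C}_s \Fmu^*$ and the invertibility of $\bv{C}_s\bv{C}_s^*$ ensured by the distinctness of the frequencies $\xi_1,\ldots,\xi_s$. With that in hand, everything else reduces to manipulating the leverage-score variational formula, and no infinite-dimensional subtleties arise because all the objects live in Hilbert spaces on which the relevant operators are bounded.
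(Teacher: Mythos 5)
Your proof is correct and follows essentially the same route as the paper's: decompose $[\Fmu^*\alpha](t)$ as $[\bv{C}_s^*\bv{Z}\alpha](t)$ plus the residual $\langle\varphi_t - \bv{Z}^*\bs{\phi}_t,\alpha\rangle_\mu$, apply $(a+b)^2\le 2a^2+2b^2$ and Cauchy--Schwarz, and use the fact that $\bv{C}_s^*\bv{Z}=\mathcal{P}_s\Fmu^*$ with $\mathcal{P}_s$ an orthogonal projection so that $\norm{\bv{C}_s^*\bv{Z}\alpha}_T\le\norm{\Fmu^*\alpha}_T$. The paper rearranges the fractions slightly differently (dropping one term at a time from the denominator rather than invoking the mediant-type inequality $\frac{a+b}{c+d}\le\frac{a}{c}+\frac{b}{d}$), but this is a purely cosmetic difference.
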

\begin{proof}
For any  $\alpha \in L_2(\mu)$
we can write $[\Fmu^* \alpha](t) = \langle \varphi_t , \alpha \rangle_\mu$ and $[\bv C_s^* \bv Z\alpha](t) = \langle \bs{\phi}_t, \bv Z \alpha \rangle_{\CC^s} = \langle \bv Z^* \bs{\phi}_t , \alpha \rangle_\mu $. 
By the maximization characterization of the ridge leverage function in Definition \ref{def:ridgeScores},
\begin{align*}
\tmu(t) &= \frac{1}{T} \cdot \max_{\{\alpha \in L_2(\mu):\norm{\alpha}_\mu > 0\}} \frac{ \langle \varphi_t , \alpha \rangle_\mu^2}{\norm{\Fmu^* \alpha}_T^2 + \epsilon \norm{\alpha}_\mu^2}\\
&\le  \frac{2}{T} \cdot \max_{\{\alpha \in L_2(\mu):\norm{\alpha}_\mu > 0\}} \left (\frac{ \langle
\bv Z^* \bs{\phi}_t , \alpha \rangle_\mu^2}{\norm{\Fmu^* \alpha}_T^2} + \frac{ \langle \varphi_t -\bv  Z^* \bs{\phi}_t , \alpha \rangle_\mu^2}{ \epsilon \norm{\alpha}_\mu^2} \right )\\
&\le  \frac{2}{T} \cdot \max_{\{\alpha \in L_2(\mu):\norm{\alpha}_\mu > 0\}} \left (\frac{ \langle \bv Z^* \bs{\phi}_t , \alpha \rangle_\mu^2}{\norm{\bv  C_s^* \bv Z \alpha}_T^2} + \frac{\norm{\varphi_t -\bv Z^* \bs{\phi}_t}_\mu^2}{ \epsilon} \right )\\
&= 2 \cdot \left (\ell(t)+\frac{r(t)}{\epsilon}\right)
\end{align*}
where the second to last line follows from observing that due to Cauchy-Schwarz,
$$\langle \varphi_t -\bv Z^* \bs{\phi}_t , \alpha \rangle_\mu^2 \le \norm{\alpha}_\mu^2  \cdot \norm{\varphi_t -\bv  Z^* \bs{\phi}_t}_\mu^2,$$
and that, letting $\mathcal{P}_s =  \bv{C}_s^*(\bv{C}_s \bv{C}_s^*)^{-1} \bv{C}_s$:
\begin{align*}
\norm{\Fmu^* \alpha}_T^2 &= \langle \alpha, \Fmu \Fmu^* \alpha \rangle_\mu\\
&\ge \langle \alpha, \Fmu \mathcal{P}_s \Fmu^* \alpha \rangle_\mu\\
&= \langle \alpha, \bv Z^*\bv C_s \bv C_s^* \bv Z \alpha \rangle_\mu = \norm{\bv C_s^* \bv Z \alpha}_T^2.
\end{align*}
In the above, the inequality is due to the fact that $\mathcal{P}_s$ is an orthogonal projection, so $\mathcal{P}_s \preceq {\cal I}_\mu$.
This completes the proof.
\end{proof}

With Theorem \ref{thm:css2} in place, we now bound $\btmu(t) = 2 \left (\ell(t) + \frac{r(t)}{\epsilon}\right)$, which yields a uniform bound on the true ridge leverage scores.

\begin{lemma}\label{lem:levBound}
Let $\ell(t),r(t)$ be as defined in Theorem \ref{thm:css2} and $\btmu(t) \eqdef 2 \cdot \left (\ell(t)+\frac{r(t)}{\epsilon}\right)$.
For all  $t \in [0,T]$:
\begin{align*}
\btmu(t) \le  \frac{ 15400(36 \smu+2)^5 \log^3  (36 \smu+2)}{T}.\end{align*}
\end{lemma}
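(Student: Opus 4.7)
By the decomposition $\btmu(t) = 2(\ell(t) + r(t)/\epsilon)$, it suffices to bound each of $\ell(t)$ and $r(t)/\epsilon$ by $O(s^5 \log^3 s / T)$, where $s \le \lceil 36 \smu \rceil$ is the number of selected frequencies from Theorem~\ref{thm:css}. The central tool will be a pointwise smoothness bound for Fourier-sparse functions due to Chen--Kane--Price / Chen--Price \cite{ChenKanePrice:2016, ChenPrice:2018}: every $k$-sparse Fourier function $f : \RR \to \CC$ satisfies $|f(t)|^2 \le C_{\mathrm{CP}} \cdot k^a \log^b k \cdot \|f\|_T^2$ uniformly on $[0,T]$, with $(a,b)$ such that $k^a \log^b k$ is absorbed into the claimed $s^5 \log^3 s$ factor. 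I will cite this as a black box.

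\textbf{Bounding $\ell(t)$.} Every function in the range of $\bv{C}_s^*\bv{Z}$ has the form $t \mapsto \sum_{j=1}^s c_j e^{2\pi i \xi_j t}$, so is $s$-sparse Fourier. Plugging the sparse-Fourier smoothness bound directly into the max characterization of $\ell(t)$ from Theorem~\ref{thm:css2} yields $\ell(t) \le O(s^5 \log^3 s)/T$ for all $t$.

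\textbf{Bounding $r(t)/\epsilon$.} Theorem~\ref{thm:css} gives $\int_0^T \|\varphi_t - \bv{Z}^*\bs\phi_t\|_\mu^2\, dt \le 4\epsilon \smu T$, i.e., the \emph{average} of $r(t)$ is of the desired order. The plan is to upgrade this to a pointwise bound. I use the dual form $\|\varphi_{t_0} - \bv{Z}^*\bs\phi_{t_0}\|_\mu = \sup_{\|\alpha\|_\mu \le 1} |[(\mathcal{I}_T - \mathcal{P}_s)\Fmu^*\alpha](t_0)|$, where $\mathcal{P}_s = \bv{C}_s^*(\bv{C}_s \bv{C}_s^*)^{-1} \bv{C}_s$ is the $L_2(T)$-projection onto $\mathrm{span}\{e^{2\pi i \xi_j t}\}_{j=1}^s$. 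For fixed $t_0$, I augment the frequency set with one carefully chosen extra frequency $\xi^\star = \xi^\star(t_0)$ to form $\mathcal{P}_{s+1}$, then split $(\mathcal{I}_T - \mathcal{P}_s)\Fmu^*\alpha = (\mathcal{P}_{s+1} - \mathcal{P}_s)\Fmu^*\alpha + (\mathcal{I}_T - \mathcal{P}_{s+1})\Fmu^*\alpha$ and evaluate at $t_0$. The first piece is $(s+1)$-sparse Fourier, and by the trace identity $\tr((\mathcal{I}_T - \mathcal{P}_s)\Kmu(\mathcal{I}_T - \mathcal{P}_s)) = \tr(\Kmu - \bv{C}_s^*\bv{Z}\bv{Z}^*\bv{C}_s) \le 4\epsilon \smu$, its $L_2(T)$ norm is of order $\sqrt{\epsilon \smu}\,\|\alpha\|_\mu$; the sparse-Fourier smoothness bound then gives a pointwise estimate of order $\epsilon \cdot s^5 \log^3 s / T$ for this piece. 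Choosing $\xi^\star$ so that the leftover piece $(\mathcal{I}_T - \mathcal{P}_{s+1})\Fmu^*\alpha$ vanishes (or is provably small) at $t_0$ finishes the bound on $r(t_0)/\epsilon$.

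\textbf{Main obstacle.} The delicate step is the augmentation: I need to pick a one-dimensional correction $\xi^\star$ (equivalently, an auxiliary basis function added to $\mathrm{span}\{e^{2\pi i \xi_j t}\}_{j=1}^s$) such that (i) the rank-one extension is Fourier-sparse enough for Chen--Price to apply with a $(s+1)^5 \log^3(s+1)$ factor, (ii) its $L_2(T)$ norm inherits the $\sqrt{\epsilon \smu}$ trace bound, and (iii) the residual after projection vanishes or is negligible at $t_0$. A potentially cleaner alternative is to work from the minimization characterization \eqref{eq:Min} of $\tmu$ in Theorem~\ref{thm:leverageProps} with a trial function $\beta$ close to $\bv{C}_s^*(\bv{C}_s\bv{C}_s^*)^{-1}\bs\phi_{t_0}$ but perturbed to cancel the pointwise residual at $t_0$; either route reduces to the same core difficulty of turning a Hilbert--Schmidt bound of order $\epsilon \smu$ into a pointwise bound via Fourier sparsity, which is what accounts for the $s^5 \log^3 s$ blow-up in the final statement.
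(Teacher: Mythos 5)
Your bound on $\ell(t)$ is essentially the paper's: every function in the range of $\bv{C}_s^*\bv{Z}$ is $s$-sparse Fourier, so the Chen--Kane--Price smoothness lemma applies directly to the max characterization. That part is fine.

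The plan for $r(t)/\epsilon$, however, has a genuine gap at exactly the step you flag as the ``main obstacle.'' You propose to augment the frequency set with one $t_0$-dependent frequency $\xi^\star$ so that the leftover $(\mathcal{I}_T-\mathcal{P}_{s+1})\Fmu^*\alpha$ vanishes (or is negligible) at $t_0$. But $\mathcal{P}_{s+1}$ is an \emph{orthogonal} projection in $L_2(T)$ onto a span of $s+1$ exponentials; orthogonal projections do not interpolate, so there is no reason a single added frequency should force the residual to vanish at a prescribed point --- and it certainly cannot do so uniformly over all $\alpha$ in the unit ball, which is what the dual characterization $\|\varphi_{t_0}-\bv{Z}^*\bs\phi_{t_0}\|_\mu = \sup_{\|\alpha\|_\mu\le 1}|[(\mathcal{I}_T-\mathcal{P}_s)\Fmu^*\alpha](t_0)|$ requires. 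As written this step does not go through, and there is no evident way to repair it along these lines.

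The paper avoids the issue by switching which index is ``fixed'' before invoking Fourier sparsity. Writing $r_\xi(t) = e^{-2\pi i \xi t} - \sum_{j=1}^s \bv{z}_\xi^*(j) e^{-2\pi i\xi_j t}$, one has $r(t) = \frac{1}{T}\int_\RR |r_\xi(t)|^2\, d\mu(\xi)$, and for each \emph{fixed} $\xi$ the map $t\mapsto r_\xi(t)$ is already an $(s+1)$-sparse Fourier function --- the ``extra'' exponential is $e^{-2\pi i\xi t}$ itself, no augmentation of the projection required. Applying the Chen--Price bound columnwise gives $|r_\xi(t)|^2 \le 1540(s+1)^4\log^3(s+1)\,\|r_\xi(\cdot)\|_T^2$ for every $\xi$, and then
\begin{align*}
r(t) \le \frac{1540(s+1)^4\log^3(s+1)}{T}\int_\RR \|r_\xi(\cdot)\|_T^2\, d\mu(\xi)
= \frac{1540(s+1)^4\log^3(s+1)}{T^2}\int_0^T \|\varphi_w-\bv{Z}^*\bs{\phi}_w\|_\mu^2\, dw
\end{align*}
by Fubini, and the last integral is $\le 4\epsilon\smu T$ by Theorem~\ref{thm:css}. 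This turns the average (trace) bound into a pointwise bound with no auxiliary frequency and no interpolation requirement; the extra power of $s$ in the final estimate comes simply from the $4\smu$ factor in the trace bound, not from any augmentation. I would redo your residual argument this way; the ``augmented projection'' route is not just harder, it appears to be wrong in principle.
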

Combining Lemma \ref{lem:levBound} with Theorem \ref{thm:css2} yields Theorem \ref{thm:uniformBound}. We just simplify the constants by  noting that for $\epsilon \le 1$, $\smu \ge \frac{\tr(\Kmu)}{2} = \frac{1}{2}$ and so $36\smu + 2 \le 40 \smu$.


\begin{proof}[Proof of Lemma \ref{lem:levBound}]
We separately bound the leverage score $\ell(t)$ and residual $r(t)$ components of $\btmu(t)$ using a similar argument based on the smoothness of sparse Fourier functions for both. 
Specifically, for both bounds we employ  the following smoothness bound of Chen et al.:

\begin{lemma}[Follows from Lemma 5.1 of \cite{ChenKanePrice:2016}]\label{lem:92}
For any $f(t) = \sum_{j=1}^k v_j e^{2\pi i \xi_j t}$,
\begin{align*}
\max_{x \in [0,T]} \frac{|f(x)|^2}{\norm{f}_T^2} = 1540 \cdot k^4 \log^3 k.
\end{align*}
\end{lemma}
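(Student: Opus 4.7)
The plan is to prove the Nikolskii-type (``max-to-average'') inequality for $k$-sparse complex exponential sums in two stages: first, reduce the claim to an $L_\infty$--$L_2$ bound for $K$-sparse exponentials via a squaring trick; second, establish that $L_2$-level bound by relating the ambient $K$-dimensional space of exponentials to a low-degree polynomial surrogate. The output will be a uniform bound over \emph{all} choices of frequencies $\xi_1,\ldots,\xi_k$, so the argument cannot use any a priori bound on $\max_j|\xi_j|$.

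\textbf{Stage 1 (Squaring reduction).} First I would consider $g(t) := |f(t)|^2 = \sum_{j,\ell=1}^{k} v_j \overline{v_\ell}\, e^{2\pi i(\xi_j-\xi_\ell)t}$, which is a non-negative, real, $k^2$-sparse exponential sum satisfying $\tfrac{1}{T}\int_0^T g = \|f\|_T^2$ and $\|g\|_{\infty,[0,T]} = \max_x|f(x)|^2$. Suppose, for every $K$-sparse exponential sum $h$ on $[0,T]$, I can establish the $L_\infty$--$L_2$ bound
$$\|h\|_{\infty,[0,T]}^2 \;\le\; C\,K^2 \log^3 K \cdot \|h\|_T^2.$$
Applying this with $h = g$ (so $K = k^2$) gives $\|g\|_\infty^2 \le C\,k^4 \log^3(k^2)\cdot \|g\|_T^2$. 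Because $g\ge 0$, we have $\|g\|_T^2 = \tfrac{1}{T}\int_0^T g^2 \le \|g\|_\infty\cdot \tfrac{1}{T}\int_0^T g = \|g\|_\infty \cdot \|f\|_T^2$. Substituting and rearranging gives $\|g\|_\infty \le C' k^4 \log^3 k \cdot \|f\|_T^2$, which is exactly the claimed inequality (up to absorbing constants into $1540$).

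\textbf{Stage 2 ($L_2$-Nikolskii via a polynomial surrogate).} For a $K$-dimensional subspace $V\subset L_2(T)$ with $\langle\cdot,\cdot\rangle_T$-orthonormal basis $\{\phi_i\}$, the pointwise-to-$L_2$ ratio $\sup_x |h(x)|^2/\|h\|_T^2$ is exactly the Christoffel function $\mathcal{K}_V(x)=\sum_i |\phi_i(x)|^2$, and $\tfrac{1}{T}\int_0^T \mathcal{K}_V(x)\,dx = K$ (so $K$ is the \emph{average} value and a uniform pointwise bound of $C K^2\log^3 K$ is what must be shown). I would bound $\mathcal{K}_V$ uniformly over all $V = \operatorname{span}\{e^{2\pi i \xi_j t}\}_{j=1}^K$ by approximating each $\varphi_j := e^{2\pi i \xi_j t}\big|_{[0,T]}$ via its $L_2$-orthogonal projection onto a Chebyshev/Legendre polynomial space of degree $d = O(K\log K)$, chosen so that the residuals have $\|\cdot\|_T$-norm at most $1/\operatorname{poly}(K)$. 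This reduces the Christoffel bound for $V$ to the classical one for polynomials of degree $d$, namely $\mathcal{K}_{\mathrm{poly}(d)}(x)\le O(d^2)$, yielding an $O(K^2\log^2 K)$ contribution; the remaining $\log K$ factor absorbs the residual/perturbation errors introduced by replacing exponentials with polynomial surrogates.

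\textbf{Main obstacle.} The hard part will be Stage 2. A naive Bernstein- or Markov-type argument would scale with $\max_j|\xi_j|\,T$, which is unbounded since the frequencies are arbitrary. The key insight to push through is that although individual exponentials $e^{2\pi i\xi_j t}$ may oscillate arbitrarily fast, each has $L_\infty$ norm $1$ on $[0,T]$, and only the $K$-dimensional \emph{span} needs a low-degree surrogate---not each exponential separately in uniform norm. Producing a polynomial space of degree only $O(K\log K)$ that simultaneously captures all $K$ directions, together with carefully tracking how approximation errors in each $\varphi_j$ propagate through $\mathcal{K}_V(x)$, is the delicate step (this is the Chen--Kane--Price contribution); together with Stage 1's squaring trick it produces the final $1540\,k^4 \log^3 k$ constant.
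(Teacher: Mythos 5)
The paper does not reprove this lemma: its entire ``proof'' is a citation to Lemma 5.1 of Chen--Kane--Price, together with the remark that an explicit constant of $1540$ can be extracted from their argument. You, by contrast, are attempting a from-scratch derivation, so your route is necessarily different from the paper's; the question is whether it works.

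It does not, and the main issue is a combination of an internal inconsistency and a broken key step. The inconsistency: your Stage~2 claims the $L_\infty$--$L_2$ Nikolskii bound $\|h\|_{\infty,[0,T]}^2 \le C K^2\log^3 K \cdot \|h\|_T^2$ for every $K$-sparse exponential sum. If that were available, applying it \emph{directly} to $f$ with $K=k$ already gives $\max_x|f(x)|^2 \le Ck^2\log^3 k\,\|f\|_T^2$, which is strictly stronger than the lemma's $k^4\log^3 k$ and makes your squaring reduction pointless (and you should be suspicious: you would be improving on Chen--Kane--Price's published exponent by two powers of $k$ via an ``easy'' polynomial argument). The broken step is precisely Stage~2, which is where all the content of the lemma resides and which you only sketch. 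The polynomial-surrogate idea cannot work as stated: if some $|\xi_j|T \gg K\log K$, then $e^{2\pi i\xi_j t}$ is nearly orthogonal in $L_2([0,T])$ to every polynomial of degree $O(K\log K)$, so its $L_2$-projection residual is $\Theta(1)$, not $1/\mathrm{poly}(K)$; and since the span $V$ contains that exponential as a basis element, $V$ itself cannot be captured by a low-degree polynomial space. You name this obstruction yourself, but the ``key insight'' you invoke to get around it is exactly the unproven part --- there is no argument given that a $K$-dimensional span of arbitrary-frequency exponentials admits a degree-$O(K\log K)$ polynomial surrogate in the sense needed for a Christoffel-function comparison. (The Chen--Kane--Price proof does not proceed via such a surrogate; it exploits that the samples of a $k$-sparse exponential sum satisfy an order-$k$ linear recurrence, and the extra factors of $k$ and $\log k$ come from controlling the recurrence/interpolation coefficients.) If your goal is a self-contained proof, you must either actually carry out Stage~2 with the correct exponent or adopt a genuinely different mechanism; as written, the proposal reduces the hard lemma to a harder (and in fact false-as-stated, in the $K^2$ form) sublemma.
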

\begin{proof}
This follows from Lemma 5.1 of \cite{ChenKanePrice:2016}, which gives the bound without an explicit constant. It is not hard to check that their proof gives the constant of $1540$ stated above.
\end{proof}

\medskip
\noindent \textbf{Bounding the leverage scores $\ell(t)$ of $\bv C_s^* \bv Z$.}
\medskip

For every $\alpha \in L_2(\mu)$, $\bv  C_s^*  \bv Z \alpha$ is an $s = O(\smu)$ sparse Fourier function. Specifically, we have:
$$[\bv C_s^*  \bv Z \alpha](t) = \sum_{j=1}^s [\bv{Z}\alpha](j) \cdot e^{2\pi i \xi_j t},$$
for frequencies $\xi_1,\ldots,\xi_s \in \CC$ given by Theorem \ref{thm:css}. We can thus directly apply  Lemma \ref{lem:92} giving for any  $t  \in [0,T]$:
\begin{align}\label{eq:lBound}
\ell(t) &\eqdef \max_{\{\alpha \in L_2(\mu):\norm{\alpha}_\mu > 0\}} \frac{1}{T} \cdot \frac{|[\bv  C_s^* \bv Z\alpha](t)|^2}{\norm{\bv  C_s^* \bv Z\alpha}_T^2}\nonumber\\
&\le \max_{\{\alpha \in L_2(\mu):\norm{\alpha}_\mu > 0\}} \left [\frac{1}{T} \cdot \max_{t' \in [0,T]} \frac{|[\bv  C_s^* \bv Z\alpha](t')|^2}{\norm{\bv  C_s^* \bv Z\alpha}_T^2}\right]\nonumber\\
&\le \frac{1540}{T} \cdot s^4 \log^3 s
\end{align}

\medskip
\noindent \textbf{Bounding the residuals $r(t)$.}
\medskip

We first give some intuition. To bound  the squared row norms of the residual $\Fmu^* - \bv C_s^* \bv Z$ we show that each ``column'' of this residual is an $s+1 = O(\smu)$ sparse Fourier function. Thus, applying Lemma \ref{lem:92}, no entry's squared value can significantly  exceed the average squared value in the column. This lets us show that no squared row norm $r(t)$ can significantly exceed the average squared row norm, which is bounded by Theorem \ref{thm:css}.

Concretely, define $\vartheta_\xi \in L_2(T)$ by $\vartheta_\xi(t) \eqdef e^{2\pi i t \xi}$, and notice that given $g \in L_2(T)$ the function $\xi \mapsto \langle \vartheta_\xi , g\rangle_T$ is equal to $\Fmu g$ in the $L_2(T)$ sense (i.e., is a member of the equivalence class $\Fmu g$).
For $\xi \in \RR$, let  $\bv  z_\xi \in \CC^s$ be given by $\bv  z_\xi(j) = \langle \vartheta_\xi, \bv C^*_s (\bv{C}_s \bv{C}_s^*)^{-1} \bv e_j \rangle^*_T$  where $\bv  e_j$ is the $j^{th}$ standard basis vector in $\CC^s$. The function $\xi \mapsto \langle z_\xi, \bs{\phi}_t \rangle = \sum_{j=1}^s \bv{z}_\xi^*(j) e^{- 2 \pi i \xi_j t}$ is equal in the $L_2(\mu)$ sense to $\bv Z^* \bs{\phi}_t$. Let us define: 
$$r_\xi(t) = e^{-2\pi i \xi t} - \sum_{j=1}^s \bv{z}_\xi^*(j) e^{- 2 \pi i \xi_j t}.$$
For a fixed $t$, consider the function $\xi \mapsto r_\xi(t)$, which we denote by $r_\cdot(t)$. We have $r_\cdot(t) =  \varphi_t - \bv Z^* \bs{\phi}_t$, again in the $L_2(\mu)$ sense. Thus, we can write 
\begin{align}\label{eq:colrewrite}
r(t) &= \frac{1}{T} \norm{\varphi_t - \bv Z^* \bs{\phi}_t}_\mu^2\nonumber\\
& = \frac{1}{T}\int_{\xi \in \RR} |r_\xi(t)|^2\, d\mu(\xi).
\end{align}
Further, for a fixed $\xi$, if we consider the function $t \mapsto {r}_\xi(t)$, which we denote by ${r}_\xi(\cdot)$, we notice that it is a $s+1 = O(\smu)$ sparse Fourier function,
so applying Lemma \ref{lem:92} we have for any $\xi\in \RR$ and $t \in [0,T]$:
\begin{align}\label{eq:columnBound}
\frac{|r_\xi(t)|^2}{\norm{r_\xi(\cdot)}_T^2} &\le 1540 (s+1)^4 \log^3  (s+1).
\end{align}
Combining \eqref{eq:columnBound} with \eqref{eq:colrewrite} we can thus bound for any  $t \in [0,T]$:
\begin{align}\label{eq:redBound}
r(t) &\le 1540 (s+1)^4 \log^3  (s+1) \cdot \frac{1}{T}\int_{\xi \in \RR} \norm{r_\xi(\cdot)}_T^2 \,  d\mu(\xi)\nonumber\\
& = 1540 (s+1)^4 \log^3  (s+1) \cdot \frac{1}{T^2} \int_{w \in[0,T]}\int_{\xi \in \RR} |r_\xi(w)|^2 \, d\mu(\xi)\, dw\nonumber\\
& = 1540 (s+1)^4 \log^3  (s+1) \cdot \frac{1}{T^2} \int_{w \in[0,T]}  \norm{\varphi_w - \bv Z^* \bs{\phi}_w}_\mu^2\, dw
\end{align}
where the last bound again follows from \eqref{eq:colrewrite}. By Theorem \ref{thm:css} we have $ \frac{1}{T} \int_{w \in[0,T]}  \norm{\varphi_w - \bv Z^* \bs{\phi}_w}_\mu^2\, dw \le 4\epsilon \cdot \smu$. Plugging into \eqref{eq:redBound} and using that we can choose $s \le 36 \cdot \smu +1$, for all $t \in [0,T]$:
\begin{align}\label{eq:rBound}
r(t) \le  \frac{\epsilon \cdot 6160(36 \smu+2)^5 \log^3  (36 \smu+2)}{T}.
\end{align}
Combining \eqref{eq:lBound} and \eqref{eq:rBound} completes the proof of Lemma \ref{lem:levBound} since $\btmu(t) \eqdef 2 \cdot \left (\ell(t)+\frac{r(t)}{\epsilon}\right)$ and thus
$$\btmu(t) \le  \frac{ 15400(36 \smu+2)^5 \log^3  (36 \smu+2)}{T}.$$
\end{proof}

Theorem \ref{thm:uniformBound} gives a universal uniform bound on the ridge leverage scores corresponding to measure $\mu$ in terms of $\smu$. If we directly sample time points according to the uniform distribution over $[0,T]$, this theorem shows that $\poly(\smu)$ samples and $\poly(\smu)$ runtime suffice to apply Theorem \ref{thm:mainAlg} and solve Problem \ref{prob:unformal_interp} with good probability. This is already a surprising result, showing that the simplest sampling scheme, uniform random sampling, can give bounds in terms of the optimal complexity  $\smu$ for \emph{any $\mu$}. Existing methods with similar complexity, such as those that interpolate bandlimited signals using prolate spheroidal wave functions \cite{XiaoRokhlinYarvin:2001,ShkolniskyTygertRokhlin:2006} require nonuniform sampling. Methods that use uniform sampling, such as truncated Whittaker-Shannon, have sample complexity depending polynomially rather than logarithmically on the desired error $\epsilon$.

\subsection{Gap-based leverage score bound}\label{sec:gap}

Our final result gives a much tighter bound on the ridge leverage scores than the uniform bound of Theorem \ref{thm:uniformBound}. The key idea is to show that the bound is loose for $t$ bounded away from the edges of $[0,T]$. Specifically we have:
\begin{theorem}[Gap-Based Leverage Score Bound]\label{thm:gapBound}
For all $t$,
$$ \tmu(t)  \le  \frac{ \smu}{  \min(t,T-t)}.$$
\end{theorem}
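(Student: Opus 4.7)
By the translation invariance of the Fourier transform, the unitary operator $M_c$ of multiplication by $e^{-2\pi i c\xi}$ on $L_2(\mu)$ establishes a symmetry $\tmu(t) = \tmu(T-t)$: if $\alpha$ realizes the maximum in Definition \ref{def:ridgeScores} at $t$, then $M_T \alpha$ realizes the same value at $T-t$. Hence without loss of generality assume $t \le T/2$, so $\min(t, T-t) = t$, and it suffices to prove $\tmu(t) \le \smu/t$.

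The plan is to first reduce the bound to one on the shorter symmetric sub-interval $[0, 2t]$, in which $t$ is the midpoint. For any $\alpha \in L_2(\mu)$ with $y = \Fmu^*\alpha$, since $2t \le T$,
\[
\norm{y}_T^2 + \epsilon\norm{\alpha}_\mu^2 \;\ge\; \frac{2t}{T}\left(\frac{1}{2t}\int_0^{2t}|y(s)|^2\,ds + \epsilon\norm{\alpha}_\mu^2\right),
\]
and substituting into the denominator of the maximization characterization in Definition \ref{def:ridgeScores} yields $\tmu(t) \le \tau^{[0, 2t]}_{\mu,\epsilon}(t)$, where $\tau^{[0, 2t]}_{\mu,\epsilon}$ denotes the analogous ridge leverage function defined with time interval $[0, 2t]$ in place of $[0, T]$. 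By translation invariance this restricted leverage function is symmetric about the midpoint $t$, and by Theorem \ref{thm:leverageProps} integrates over $[0, 2t]$ to the statistical dimension $s^{[0, 2t]}_{\mu,\epsilon}$. A standard operator-monotonicity argument comparing $\mathcal{G}_\mu^{[0, 2t]}$ with $\mathcal{G}_\mu^{[0, T]}$ gives $s^{[0, 2t]}_{\mu,\epsilon} \le \smu$, so it suffices to establish the midpoint bound
\[
\tau^{[0, 2t]}_{\mu,\epsilon}(t) \;\le\; \frac{s^{[0, 2t]}_{\mu,\epsilon}}{t},
\]
which asserts that the symmetric leverage function on $[0, 2t]$ at its midpoint is at most twice its average over $[0, 2t]$.

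The main obstacle is this midpoint inequality. For structured priors such as bandlimited $\mu$, the leverage function is unimodal with minimum at the midpoint, so the average is already an upper bound; but for general $\mu$ (e.g., a measure supported on a few closely spaced frequencies), the restricted leverage function can oscillate and even be locally maximized at $t$, so a monotonicity argument does not suffice. The intended route is to combine the symmetry about $t$ with the minimization characterization of Theorem \ref{thm:leverageProps}: using a symmetrized test function $\beta \in L_2([0, 2t])$ built from the top eigenfunctions of $\mathcal{G}_\mu^{[0, 2t]}$ centered around the midpoint, one can arrange for $\norm{\beta}_{[0,2t]}^2 = O(1/t)$ while bounding the residual term $\norm{\Fmu^{[0,2t]}\beta - \varphi_t}_\mu^2/\epsilon$ by a spectral sum that telescopes to $O(s^{[0, 2t]}_{\mu,\epsilon}/t)$. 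Summing these contributions gives $\tau^{[0, 2t]}_{\mu,\epsilon}(t) \le s^{[0, 2t]}_{\mu,\epsilon}/t$, and chaining with $s^{[0, 2t]}_{\mu,\epsilon} \le \smu$ and $\tmu(t) \le \tau^{[0, 2t]}_{\mu,\epsilon}(t)$ yields the desired $\tmu(t) \le \smu/t$.
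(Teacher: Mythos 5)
The proposal takes a genuinely different route from the paper, and the route has two gaps that it does not close.

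\textbf{Gap 1: the statistical-dimension comparison.} You claim $s^{[0,2t]}_{\mu,\epsilon} \le \smu$ by ``a standard operator-monotonicity argument comparing $\mathcal{G}_\mu^{[0,2t]}$ with $\mathcal{G}_\mu^{[0,T]}$.'' This argument cannot work as stated. Because you renormalize to uniform measure on $[0,2t]$, both Gram operators satisfy $\tr(\mathcal{G}_\mu^{[0,2t]}) = \tr(\mathcal{G}_\mu^{[0,T]}) = 1$; two distinct positive trace-class operators with equal trace are never Loewner-comparable. The only bound your setup yields is $s^{[0,2t]}_{\mu,\epsilon} \le \frac{T}{2t}\smu$ (since $\lambda_i(\mathcal{K}^{[0,2t]}_\mu) \le \frac{T}{2t}\lambda_i(\Kmu)$), and plugging that into your chain gives $\tmu(t) \le T\smu/(2t^2)$ rather than $\smu/t$. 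The inequality $s^{[0,2t]}_{\mu,\epsilon}\le\smu$ may well be true, but it is not a corollary of operator monotonicity and you have not supplied a substitute argument.

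\textbf{Gap 2: the midpoint bound.} You correctly identify $\tau^{[0,2t]}_{\mu,\epsilon}(t) \le s^{[0,2t]}_{\mu,\epsilon}/t$ as the crux, and you also correctly observe why a naive monotonicity argument fails. But the proposed route (symmetrized test function from top eigenfunctions, ``telescoping'' spectral sum) is left as a sketch; no mechanism is given for why the residual $\norm{\Fmu^{[0,2t]}\beta - \varphi_t}_\mu^2/\epsilon$ should be controlled by $s^{[0,2t]}/t$, and the translation idea you would naturally reach for does not apply here, because a shift of a test function supported on all of $[0,2t]$ leaves the interval. Even granting the sketch, the stated conclusion with $O(\cdot)$ constants would not match the clean $\smu/\min(t,T-t)$ the theorem asserts.

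The paper sidesteps both obstructions by \emph{not} renormalizing: it introduces $\Fmut$, which restricts the integration in $\Fmu$ to $[0,t]$ but keeps the $1/T$ factor. Then $\Fmut\Fmut^* \preceq \Gmu$ genuinely holds (the restricted operator has trace $t/T < 1$), so monotonicity of $x\mapsto x/(x+\epsilon)$ gives $s_{\mu,\epsilon,t}\le\smu$ directly. It then avoids any midpoint claim entirely: averaging the restricted leverage function over $[0,t]$ guarantees \emph{some} $s^\star\in[0,t]$ with $\tau_{\mu,\epsilon,t}(s^\star)\le\smu/t$, and the bound is transferred to $t$ by translating the minimizing $\beta^\star\in L_2(T)$ (supported on $[0,t]$) by $t-s^\star$. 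The shift stays inside $[0,T]$ because $t\le T/2$, the Fourier side acquires only a unimodular phase, and both terms of the minimization characterization \eqref{eq:Min} are unchanged. That translation-invariance step is the key move that your proposal is missing.
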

\begin{proof}

Consider $t \in [0,T/2]$. We will show that $ \tmu(t)  \le \frac{\smu}{t} .$ A symmetric proof will hold for $t \in [T/2,T]$, giving the theorem.
We define an auxiliary operator: $\Fmut: L_2(T) \rightarrow L_2(\mu)$ which is given by restricting the integration in $\Fmu$ to $[0,t]$. Specifically, for $f \in L_2(T)$ we have:
\begin{align}\label{eq:f1def}
[\Fmut f](\xi) = \frac{1}{T} \int_{0}^{t} f(s) e^{-2 \pi i s \xi}\, ds.
\end{align}
We can see that $[\Fmut^* g](s) = \int_\RR g(\xi) e^{2\pi i s \xi}\, d\mu(\xi)$ for $s \in [0,t]$ and $[\Fmut^* g](s) = 0$ for $s \in (t,T]$. 
We will use the leverage score of some $s \in [0,t]$ in the restricted operator $\Fmut$ to upper bound those of $t$ in $\Fmu$. We start by defining these scores analogously to Definition \ref{def:ridgeScores} for $\Fmu$.

\begin{defn}[Restricted ridge leverage scores]\label{def:generalRidgeScores}
For probability measure $\mu$ on $\RR$, time length $T$, $t \in [0,T]$ and $\epsilon \ge 0$, define the $\epsilon$-ridge leverage score of $s \in [0,t]$ in $\Fmut$ as:
\begin{align*}
 \tau_{\mu,\epsilon,t}(s) = \frac{1}{T} \cdot  \max_{\{\alpha \in L_2(\mu):\, \norm{\alpha}_\mu > 0\}} \frac{|[\Fmut \alpha](s) |^2 }{\norm{\Fmut^* \alpha}_T^2 + \epsilon \norm{\alpha}_\mu^2}.
 \end{align*}
\end{defn}

We have the following leverage score properties, analogous to those given for $\Fmu$ in Theorem \ref{thm:leverageProps}:
\begin{theorem}[Restricted leverage score properties]\label{thm:genLeverageProps}
Let  $ \tau_{\mu,\epsilon,t}(s)$ be as defined in Definition \ref{def:generalRidgeScores}.
\begin{itemize}
\item The leverage scores integrate to the statistical dimension:
\begin{align}\label{eq:genStatDim}
\int_{0}^t \tau_{\mu,\epsilon,t}(s)\, ds = s_{\mu,\epsilon,t} \eqdef \tr(\Fmut^*\Fmut(\Fmut^*\Fmut+\epsilon \mathcal{I}_T)^{-1}).
\end{align}
\item Inner Product Characterization: Letting $\varphi_{s} \in L_2(\mu)$ have $\varphi_{s}(\xi) = e^{-2\pi i s \xi}$ for $s \in [0,t]$,
\begin{align}\label{eq:genInnerProd}
\tau_{\mu,\epsilon,t}(s) = \frac{1}{T} \cdot \langle \varphi_{s},  (\Fmut \Fmut^* + \epsilon \Imu)^{-1}  \varphi_{s} \rangle_\mu.
\end{align}
\item Minimization Characterization:
\begin{align}\label{eq:genMin}
\tau_{\mu,\epsilon,t}(s) = \frac{1}{T} \cdot \min_{\beta \in L_2(T)} \frac{\norm{\Fmut \beta - \varphi_{s}}_\mu^2}{\epsilon} + \norm{\beta}_{T}^2.
\end{align}
\end{itemize}
\end{theorem}
We first show that the restricted leverage scores of Definition \ref{def:generalRidgeScores} are not too large on average.
\begin{claim}[Restricted statistical dimension bound]\label{clm:splitAverage}
\begin{align}
\int_{0}^T \tau_{\mu,\epsilon,t}(s)\, ds  \le \smu.
\end{align}
\end{claim}
\begin{proof}
Via \eqref{eq:genStatDim} we have $\int_{0}^t \tau_{\mu,\epsilon,t}(s)\, ds = s_{\mu,\epsilon,t}$ which we can write as:\begin{align*}
s_{\mu,\epsilon,t}  =  \tr(\Fmut^*\Fmut(\Fmut^*\Fmut+\epsilon \mathcal{I}_T)^{-1}) = \sum_{i=1}^\infty \frac{\lambda_i(\Fmut^*\Fmut)}{\lambda_i(\Fmut^*\Fmut)+\epsilon}.
\end{align*}
From Claim~\ref{claim:dominated-expectation} we have $\Fmut \Fmut^* \preceq \Fmu \Fmu^* = \Gmu$. Since $(\Fmut \Fmut^*+\epsilon{\cal I}_\mu)^{-1/2}(\Fmut \Fmut^*)(\Fmut \Fmut^*+\epsilon{\cal I}_\mu)^{-1/2} = {\cal I}_\mu - \epsilon(\Fmut \Fmut^*+\epsilon{\cal I}_\mu)^{-1}$ and $(\Gmu +\epsilon{\cal I}_\mu)^{-1/2}\Gmu(\Gmu +\epsilon{\cal I}_\mu)^{-1/2} = {\cal I}_\mu - \epsilon(\Gmu+\epsilon{\cal I}_\mu)^{-1}$ we have from Claim~\ref{claim:bound-to-inverse-bound} $(\Fmut \Fmut^*+\epsilon{\cal I}_\mu)^{-1/2}(\Fmut \Fmut^*)(\Fmut \Fmut^*+\epsilon{\cal I}_\mu)^{-1/2} \preceq (\Gmu +\epsilon{\cal I}_\mu)^{-1/2}\Gmu(\Gmu +\epsilon{\cal I}_\mu)^{-1/2}$, and since the trace is monotone for trace-class operators, we have
\begin{align*}
s_{\mu,\epsilon,t} =& \tr((\Fmut \Fmut^*+\epsilon{\cal I}_\mu)^{-1/2}(\Fmut \Fmut^*)(\Fmut \Fmut^*+\epsilon{\cal I}_\mu)^{-1/2})\\ \leq & \tr((\Gmu +\epsilon{\cal I}_\mu)^{-1/2}\Gmu(\Gmu +\epsilon{\cal I}_\mu)^{-1/2}) = \smu
\end{align*}
which gives the claim.
\end{proof}
From Claim \ref{clm:splitAverage} we immediately have:
\begin{claim}\label{clm:splitAverage2}
There exists  $s^\star \in [0,t]$ with $\tau_{\mu,\epsilon,t}(s^\star) \le \frac{\smu}{t}$.
\end{claim}
\begin{proof}
Assume for the sake of contradiction that $ \tau_{\mu,\epsilon,t}(s) > \frac{\smu}{t}$ for all $s \in [0,t]$.
The by \eqref{eq:genStatDim},
$$\int_{0}^t \tau_{\mu,\epsilon,t}(s)\, ds > t \cdot \frac{\smu}{t} =  \smu.$$ This contradicts Claim \ref{clm:splitAverage}, giving the claim.
\end{proof}

We now show that the leverage score of $s^\star$ in $\Fmut$ upper bounds the leverage score of $t$ in $\Fmu$, completing the proof of Theorem \ref{thm:gapBound}.
We apply the minimization characterization of Theorem \ref{thm:genLeverageProps}, equation \eqref{eq:genMin}, showing that by simply shifting an optimal solution for $s^\star$ we can show the existence of a good solution for $t$, upper bounding its leverage score by that of $s^\star$ and giving $\tmu(t) \le \tau_{\mu,\epsilon,t}(s^\star) \le \frac{\smu}{t}$ by Claim \ref{clm:splitAverage2}.

 Formally, by Claim \ref{clm:splitAverage2} and \eqref{eq:genMin}, there is some $\beta^\star \in L_2(T)$ achieving:
\begin{align}\label{eq:initialShiftBound}
\frac{1}{T} \cdot \frac{\norm{\Fmut \beta^\star - \varphi_{s^\star}}_\mu^2}{\epsilon} + \norm{\beta^\star}_{T}^2 &= \tau_{\mu,\epsilon,t}(s^\star) \le \frac{\smu}{t}.
\end{align}
We can assume without loss of generality that $\beta^\star(s) = 0$ for $s \notin [0,t]$, since $\Fmut \beta^\star$ is unchanged if we set $\beta^\star(s) = 0$ on this range and since doing this cannot increase $\norm{\beta}_T^2$. Now, let $\bar \beta \in L_2(T)$ be given by $\bar \beta(s) = \beta^\star(s- (t-s^\star))$. That is, $\bar  \beta$ is just $\beta^\star$ shifted from the range $[0,t]$ to the range $[t-s^\star,2t-s^\star]$. Note that since we are assuming $t \le T/2$, $[t-s^\star,2t-s^\star] \subset [0,T]$. For any $\xi$:
\begin{align}
[\Fmu \bar \beta](\xi) &= \frac{1}{T} \int_{0}^T \bar \beta(s) e^{-2\pi i s \xi} ds\nonumber\\
& = \frac{1}{T} \int_{t-s^\star}^{2t-s^\star} \beta^\star(s - (t-s^\star)) e^{-2\pi i s \xi} d s\nonumber\\
& =  \frac{1}{T} \int_{0}^{t} \beta^\star(s) e^{-2\pi i (s +  (t-s^\star))} \xi ds\nonumber\\
& =  [\Fmut \beta^\star](\xi) \cdot e^{-2\pi i (t-s^\star)\xi}
\label{eq:shift}.
\end{align}
Now,
$$\varphi_{t}(\xi) = e^{-2\pi i t \xi} = e^{-2\pi i (t-s^\star) \xi} \cdot \varphi_{s^\star}(\xi).$$
Combined with \eqref{eq:shift} this gives:
\begin{align}\label{eq:shiftNorm}
\norm{\Fmu \bar \beta - \varphi_{t}}_\mu^2 = \int_\xi \left |[\Fmu \bar \beta](\xi) - \varphi_{t}\right |^2\, d\mu(\xi) &= \int_\xi \left |([\Fmut \beta^\star](\xi) - \varphi_{s^\star})\cdot e^{-2\pi i (t-s^\star) \xi} \right |^2\, d\mu(\xi)\nonumber\\
&= \int_\xi \left |([\Fmut \beta^\star](\xi) - \varphi_{s^\star}) \right |^2\, d\mu(\xi)\nonumber\\
&= \norm{\Fmut \beta^\star-\varphi_{s^\star}}_\mu^2.
\end{align}
Finally, noting that $\norm{\bar \beta}_T = \norm{\beta^\star}_T$ and applying the minimization characterization of Theorem \ref{thm:leverageProps}, the bound in \eqref{eq:shiftNorm} along with \eqref{eq:initialShiftBound} gives:
\begin{align*}
\tmu(t) \le \frac{1}{T} \cdot \frac{\norm{\Fmu \bar \beta - \varphi_{t}}_\mu^2}{\epsilon} + \norm{\bar \beta}_T^2 = \frac{\norm{\Fmut \beta^\star - \varphi_{s^\star}}_\mu^2}{\epsilon} + \norm{\beta^\star}_T^2 \le \frac{\smu}{t},
\end{align*}
which completes the theorem.

\end{proof}

\subsection{Nearly tight leverage score bound}
\label{sec:tight_bound}
Combining Theorems \ref{thm:uniformBound} and \ref{thm:gapBound} gives our tight, spectrum blind leverage score bound:
\begin{theorem}[Spectrum Blind Leverage Score Bound]\label{thm:fullBound}
For any $\alpha,T \ge 0$ let $\tilde \tau_{\alpha}(t)$ be given by:
\begin{align*}
\tilde \tau_{\alpha}(t) = \begin{cases} \frac{\alpha}{256 \cdot \min(t,T-t)}\text{  for }t \in [T/\alpha^6, T(1-1/\alpha^6)]\\
\frac{\alpha^6}{ T} \text{ for }t \in [0,T/\alpha^6] \cup [T(1-1/\alpha^6),T].
\end{cases}
\end{align*}
For any probability measure $\mu$, $T \ge 0$, $0 \le \epsilon \le 1$ and $t \in [0,T]$, if $\alpha \ge 256 \cdot \smu$:
\begin{align*}
\tmu(t) \le \tilde \tau_{\alpha}(t)\text{ and }\tilde s_{\alpha} \eqdef \int_0^T \tilde \tau_{\alpha}(t)\, dt \le \frac{ \alpha \cdot \log \alpha}{19}.
\end{align*}
\end{theorem}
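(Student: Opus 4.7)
The plan is to combine the two pointwise bounds already established (Theorem~\ref{thm:uniformBound} and Theorem~\ref{thm:gapBound}) in a region-by-region fashion, and then compute the integral of the piecewise definition of $\tilde\tau_\alpha$ in closed form. The key hypothesis $\alpha \geq 256\,\smu$ is exactly what is needed to trade off the constants in the two bounds.

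First I would verify the pointwise inequality $\tmu(t) \leq \tilde\tau_\alpha(t)$ in the two regimes separately. For $t$ in the bulk region $[T/\alpha^6,\,T(1-1/\alpha^6)]$, I would invoke Theorem~\ref{thm:gapBound} to write $\tmu(t) \leq \smu/\min(t,T-t)$ and use $\smu \leq \alpha/256$ to conclude $\tmu(t) \leq \alpha/(256\min(t,T-t)) = \tilde\tau_\alpha(t)$. For $t$ in the two boundary strips of width $T/\alpha^6$, I would invoke Theorem~\ref{thm:uniformBound} to write
\begin{equation*}
\tmu(t) \leq \frac{2^{41}\,(\smu)^5 \log^3(40\smu)}{T},
\end{equation*}
and then check that $2^{41}(\alpha/256)^5 \log^3(40\alpha/256) \leq \alpha^6$ for every $\alpha\geq 256\smu \geq 128$ (using $\smu \geq \tr(\Kmu)/2 = 1/2$). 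Since $256^5 = 2^{40}$, this reduces to $2\log^3(\alpha/6.4) \leq \alpha$, which is easy to verify for $\alpha\geq 128$. This would give $\tmu(t) \leq \alpha^6/T = \tilde\tau_\alpha(t)$ on the boundary strips.

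Next I would compute the integral $\tilde s_\alpha = \int_0^T \tilde\tau_\alpha(t)\,dt$ explicitly by splitting into the three pieces of the definition. The two boundary strips contribute
\begin{equation*}
2 \cdot \int_0^{T/\alpha^6} \frac{\alpha^6}{T}\,dt = 2.
\end{equation*}
The symmetric bulk contribution, using that $\min(t,T-t) = t$ on $[T/\alpha^6,T/2]$ and the corresponding symmetric statement on $[T/2,T(1-1/\alpha^6)]$, equals
\begin{equation*}
2\int_{T/\alpha^6}^{T/2} \frac{\alpha}{256\,t}\,dt = \frac{\alpha}{128}\bigl(6\ln\alpha - \ln 2\bigr) \leq \frac{6\alpha \ln\alpha}{128} = \frac{3\alpha\ln\alpha}{64}.
\end{equation*}
Adding these gives $\tilde s_\alpha \leq \tfrac{3\alpha\ln\alpha}{64} + 2$, and since $\alpha\geq 128$ we have $\alpha\ln\alpha \geq 128\ln 128 > 600$, so $2 \leq \tfrac{7\alpha\ln\alpha}{1216}$ and thus $\tfrac{3\alpha\ln\alpha}{64} + 2 \leq \tfrac{\alpha\ln\alpha}{19}$, as required.

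The only place that requires any real care is the boundary regime, where I must confirm that the polynomial-in-$\smu$ uniform bound from Theorem~\ref{thm:uniformBound} is actually dominated by $\alpha^6/T$; the logarithmic slack between $2^{41}(\smu)^5\log^3(40\smu)$ and $\alpha^6$ (together with the factor $256^5 = 2^{40}$ coming from $\alpha \geq 256\smu$) is what makes this comfortable. The integral computation is essentially a one-line logarithm calculation; the main ``obstacle'' is merely bookkeeping the constants so that $3/64 < 1/19$ after absorbing the additive $2$ from the boundary mass.
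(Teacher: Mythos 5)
Your proposal is correct and follows the paper's own proof essentially step for step: the same two-region split of $[0,T]$, the same invocation of Theorem~\ref{thm:gapBound} on the bulk and Theorem~\ref{thm:uniformBound} on the boundary strips, and the same integral computation $\frac{\alpha}{128}(6\log\alpha - \log 2) + 2 \le \frac{\alpha\log\alpha}{19}$. The only cosmetic difference is that you verify the boundary-strip inequality by substituting $\smu\le\alpha/256$ directly and checking $2\log^3(\alpha/6.4)\le\alpha$, whereas the paper routes through the bound $\log^3(40x)\le 64x$ to get $2^{47}\smu^6/T\le\alpha^6/T$; both give the same conclusion.
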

A visualization of $\tilde \tau_{\alpha}$ is given in Figure \ref{fig:our_dist}.
\begin{proof}
The fact that $\tmu(t) \le \tilde \tau_{\alpha}(t)$ follows from Theorems \ref{thm:uniformBound} and \ref{thm:gapBound}:
\begin{itemize}
\item For $t \in [T/\alpha^6, T(1-1/\alpha^6)]$,  by Theorem \ref{thm:gapBound} if $\alpha \ge 256 \cdot \smu$ we have $$\tilde \tau_{\alpha}(t) = \frac{\alpha}{256\cdot \min(t,T-t)} \ge \tmu(t).$$
\item For $t \in [0,T/\alpha^6] \cup [T(1-1/\alpha^6),T]$, by Theorem \ref{thm:uniformBound} we can bound,
$$\tmu(t) \le \frac{2^{41} \smu^5 \log^3 (40\smu)}{T} \le \frac{2^{47} \smu^6}{T} \le \frac{\alpha^6}{T}$$
for $\alpha \ge 256 \cdot \smu$. Note that the second inequality uses that $\log^3 (40x) \le 64 x$ for any $x$.
\end{itemize}
The integral of the approximate scores $\tilde s_{\alpha}$ is bounded as:
\begin{align}
\int_0^T \tilde \tau_{\alpha}(t)\, dt &= \int_{T/\alpha^6}^{T(1-1/\alpha^6)} \frac{\alpha}{256 \cdot \min(t,T-t) }\, dt + 2 \int_0^{T/\alpha^6} \frac{\alpha^6}{T}\, dt\nonumber\\
&= \frac{2}{256}\int_{T/\alpha^6}^{T/2} \frac{\alpha}{t }\, dt + 2\nonumber\\
&= \frac{\alpha}{128} \cdot [\log(T/2)-\log(T/\alpha^6)] + 2\label{eq:cdf}\\
&\le \frac{6 \alpha \log \alpha}{128} + 2 \le \frac{\alpha \log \alpha}{19}.\nonumber
\end{align}
where the last inequality follows since for $\epsilon \le 1$, $\smu \ge 1/2$ and so $\alpha \ge 128$.
\end{proof}

\subsection{Putting it all together: generic signal reconstruction}\label{sec:puttingTogether}

Finally, we combine the leverage score bound of Theorem \ref{thm:fullBound} with Theorem \ref{thm:mainAlg} to give our main algorithmic result,  Theorem \ref{thm:informal_main} (and as a corollary, Theorem \ref{thm:informal_sample_dist}). We state the full theorem below:

\begin{reptheorem}{thm:informal_main}[Main result, algorithmic complexity]
Consider any measure $\mu$, for which we can compute the kernel function $k_\mu(t_1,t_2) = \int_{\xi \in \RR}  e^{-2\pi i (t_1 - t_2)} d\mu(\xi)$ for any $t_1,t_2\in[0,T]$ in time $Z$.

Let $\tilde \tau_{\alpha}(t)$ be as defined in Theorem \ref{thm:fullBound}.
For any $\epsilon \le \opnorm{\Kmu}$ and $T > 0$, let $\ttmu(t) = \tilde \tau_{\alpha}(t)$ for $\alpha = \beta \cdot \smu$ with $\beta \ge 256$.
Algorithm \ref{alg:main} applied with $\ttmu(t)$ and failure probability $\delta$ returns $t_1,\ldots,t_{s} \in [0,T]$ and $\bv{z} \in \CC^{s}$ such that $\tilde y(t) = \sum_{i=1}^{s} \bv{z}(i) \cdot k_\mu(t_i,t)$ solves Problem \ref{prob:unformal_interp} with parameter $6 \epsilon$ and probability $\ge 1-\delta$. That is, with probability of at least $1-\delta$:
\begin{align*}
\norm{\tilde y - y}_T^2  \le 6 \epsilon \norm{x}_\mu^2 + 8\norm{n}_T^2.
\end{align*}
The algorithm queries $y+n$ at $s$ points and runs in $ O \left  (s^2 \cdot Z + s^\omega \right )$ time where $$s = O\left (\beta \cdot \smu \log(\beta \cdot \smu) \cdot [ \log(\beta \cdot \smu)  + 1/\delta] \right  ) = \tilde O \left (\frac{\beta \cdot \smu}{\delta}\right).$$ The output $\tilde y(t)$ can be evaluated in $O(s \cdot Z)$ time for any  $t$ using Algorithm \ref{alg:main2}.

\end{reptheorem}
Note that if we want to solve Problem \ref{prob:unformal_interp} with parameter $\epsilon$, it suffices to apply Theorem \ref{thm:informal_main} with parameter $\epsilon' = \epsilon/6$. The asymptotic complexity will be identical since, by \eqref{constAdjust}, $s_{\mu,\epsilon/6} \le 6 \smu$.
\begin{proof}
The theorem follows directly from Theorem \ref{thm:mainAlg}, along with Theorem \ref{thm:fullBound} which shows that, for $\alpha = \beta \cdot \smu$ with $\beta \ge c_1$ and  $\ttmu(t) = \tilde \tau_{\alpha}(t)$ we have:
\begin{enumerate}
\item $\ttmu(t) \ge \tmu(t)$ for all $t \in [0,T]$.
\item $\tsmu = \int_{0}^T \ttmu(t)\, dt = O \left (\beta \cdot \smu \log(\beta \cdot \smu) \right )$.
\end{enumerate}
The runtime bound follows after noting that we can sample according to $\tau_{\alpha}$ in $W = O(1)$ time using inverse transform sampling since it is straightforward to derive an explicit expression for the CDF and compute the inverse (see~\eqref{eq:cdf}).

\end{proof}

\section{Lower bound}\label{sec:lb}

We conclude by  showing that the statistical dimension $\smu$ tightly characterizes the sample complexity  of solving Problem \ref{prob:unformal_interp}, under a very mild assumption on $\mu$ that holds for all natural constraints we discuss in this paper. Thus, Theorem \ref{thm:informal_sample_complexity} is tight up to logarithmic factors.

We first define a quantity, $n_{\mu,\epsilon}$ that gives a natural lower bound on $\smu$.
For any $\mu,\epsilon$, let
\begin{align}
n_{\mu,\epsilon} \eqdef \sum_{i=1}^\infty \mathbb{I}[\lambda_i(\Kmu) \ge \epsilon].
\end{align}
That is,  $n_{\mu,\epsilon}$ is the number of eigenvalues of $\Kmu$ that are larger than $\epsilon$. As shown in \eqref{eq:nespBound},
we always have $n_{\mu,\epsilon} \le 2 \smu$.
 We first prove that solving Problem \ref{prob:unformal_interp} requires $\Omega(n_{\mu,\epsilon})$ samples. We then show that, under a very mild constraint on $\mu$ (which holds for all $\mu$ we consider including sparse, bandlimited, multiband, Gaussian, and Cauchy-Lorentz), $n_{\mu,\epsilon} = \Omega(\smu)$.
 Thus, $\smu$ gives a tight bound on the query complexity of solving Problem \ref{prob:unformal_interp}.
\begin{theorem}[Lower bound in terms of eigenvalue count]\label{thm:initialLB} Consider a measure $\mu$, an error parameter $\epsilon > 0$, and any (possibly  randomized) algorithm that solves Problem \ref{prob:unformal_interp} with probability  $\ge 2/3$ for  any  function $y$ and makes at most $r$ (possibly adaptive) queries on any  input. Then $r \ge  n_{\mu,\ce\epsilon}/20$.
\end{theorem}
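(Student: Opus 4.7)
Set $k := n_{\mu,c_p\epsilon}$ and let $v_1,\ldots,v_k \in L_2(T)$ be an $L_2(T)$-orthonormal basis of the top-$k$ eigenspace of $\Kmu = \Fmu^*\Fmu$, with eigenvalues $\lambda_i \ge c_p\epsilon$. From the singular value decomposition of $\Fmu$, each $v_i = \Fmu^* x_i$ with $\langle x_i,x_j\rangle_\mu = \delta_{ij}/\lambda_i$, so any $v \in V := \mathrm{span}(v_i)$ with $\|v\|_T = 1$ satisfies $\epsilon\|x_v\|_\mu^2 \le 1/c_p$. We use the signal family $y_\sigma := k^{-1/2}\sum_i \sigma_i v_i$ for $\sigma \in \{\pm 1\}^k$; each $y_\sigma$ has unit $L_2(T)$ norm and $\epsilon\|x_{y_\sigma}\|_\mu^2 \le 1/c_p$. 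Given any output $\tilde y$ meeting the accuracy guarantee $\|\tilde y - y_\sigma\|_T^2 \le 1/c_p + C\|n\|_T^2$, the extracted sign vector $\hat\sigma_i := \mathrm{sign}\bigl(\sqrt k\,\langle \tilde y, v_i\rangle_T\bigr)$ has Hamming distance $d_H(\sigma,\hat\sigma) \le k(1/c_p + C\|n\|_T^2)$, since each wrong sign contributes at least $1/k$ to $\|y_\sigma - \tilde y\|_T^2$. Thus solving Problem~\ref{prob:unformal_interp} on this family is at least as hard as approximate sign recovery on the Boolean hypercube.

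\textbf{Adaptive binary adversary.} By Yao's minimax principle, fix any deterministic adaptive algorithm $A$ with $r$ queries and build the following adversary. Initialize $P_0 := \{\pm 1\}^k$. At the $j$-th query $t_j$, the adversary picks a median $\theta_j$ of the multiset $\{y_\sigma(t_j) : \sigma \in P_{j-1}\}$, feeds $\theta_j$ to $A$, and updates $P_j$ to be the larger of $\{\sigma \in P_{j-1} : y_\sigma(t_j) \le \theta_j\}$ and $\{\sigma \in P_{j-1} : y_\sigma(t_j) > \theta_j\}$; thus $|P_r| \ge 2^{k-r}$. Finally sample $\sigma^\star$ uniformly from $P_r$ and define $n \in L_2(T)$ as a sum of narrow bump functions centered at $t_1,\ldots,t_r$ with pointwise values $n(t_j) := \theta_j - y_{\sigma^\star}(t_j)$; the bump widths can be chosen small enough that $\|n\|_T^2 \le 1/(3Cc_p)$. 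Because the responses fed to $A$ on input $(y_{\sigma^\star},n)$ are exactly $(\theta_1,\ldots,\theta_r)$ regardless of which $\sigma^\star \in P_r$ is chosen, $A$ returns a single deterministic output $\tilde y$.

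\textbf{Counting and conclusion.} With the above noise bound, the guarantee forces $d_H(\sigma^\star,\hat\sigma) \le k(1/c_p + C\|n\|_T^2) \le 4k/(3c_p)$. Since $\hat\sigma$ depends only on the transcript and is therefore the same for every $\sigma^\star \in P_r$, the set of $\sigma \in P_r$ within Hamming distance $4k/(3c_p)$ of $\hat\sigma$ has cardinality at most the Hamming ball volume $2^{k H(4/(3c_p))}$, where $H$ is the binary entropy. Hence the success probability of $A$ on $\sigma^\star \sim \mathrm{Unif}(P_r)$ is at most $2^{k H(4/(3c_p)) - (k-r)}$. For $c_p = \ce$ one checks $H(4/(3\ce)) < 19/20$, so whenever $r \le n_{\mu,c_p\epsilon}/20$ and $k$ exceeds an absolute constant this probability falls strictly below $1/3$, contradicting the assumed $2/3$ success probability; Yao then rules out any randomized algorithm with $r < n_{\mu,c_p\epsilon}/20$ queries.

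\textbf{Main obstacle.} The most delicate point will be making the adversary's construction compatible with the problem's assumption that the noise $n$ is a \emph{fixed} function chosen before the algorithm runs. This is handled by the observation that for \emph{deterministic} $A$ the full trajectory $(t_j,\theta_j)_{j=1}^r$ is computable offline from $A$'s decision tree together with the binary-median responses, so $n$ can be committed to in advance. Constructing bump functions with prescribed values at finitely many points and arbitrarily small $L_2(T)$ norm is elementary (e.g., scaled Gaussian bumps of vanishing width), but one has to track constants carefully so that the Hamming-ball inflation due to the $C\|n\|_T^2$ slack stays well below the entropy threshold that produces the $1/20$ factor.
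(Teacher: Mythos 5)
Your appeal to Yao's minimax principle is the genuine gap. Yao requires exhibiting a \emph{single}, algorithm-independent distribution on inputs such that every deterministic algorithm with $r$ queries fails with probability $>1/3$. You instead construct a \emph{per-algorithm} hard distribution: the candidate sets $P_r$ and median responses $\theta_j$ are read off the decision tree of the specific deterministic $A$, and the noise $n$ is then tailored to blind exactly $A$'s trajectory, so the input law depends on $A$. What this establishes is $\min_A\max_{\mathcal D}\Pr_{I\sim\mathcal D}[A\text{ fails}]>1/3$; Yao needs $\max_{\mathcal D}\min_A\Pr_{I\sim\mathcal D}[A\text{ fails}]>1/3$, and weak duality points the wrong way. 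Inverting the quantifiers would require a strong-duality/minimax argument over the continuous query space $[0,T]^r$ together with $L_2(T)$-valued noise, which you neither state nor verify and which is not routine in this setting.

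The gap is not cosmetic. Remove the adversarially-fitted noise and the Rademacher family with $n\equiv 0$ is simply not a hard distribution: each query $y_\sigma(t)=k^{-1/2}\sum_i\sigma_i v_i(t)$ is one real-valued linear functional of $\sigma$, and a functional with generic coefficients already separates all $2^k$ sign patterns, so a single noiseless query can in principle recover $\sigma$ exactly. The paper avoids the issue by making the coefficient vector Gaussian, $\bv c\sim\mathcal N(0,\frac{1}{k}\bv I)$ with $k=n_{\mu,\ce\epsilon}$ and $n\equiv 0$, with no adversary at all: rotational invariance of the Gaussian ensures that after any $r$ adaptive queries the remaining $k-r$ coordinates of $\bv c$, in an orthonormal frame adapted to the queries, are still i.i.d.\ $\mathcal N(0,\frac 1k)$, and a $\chi^2$ lower tail forces a constant reconstruction error when $r=k/20$. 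That distribution is fixed once and for all, independent of the algorithm, so Yao applies directly. Your hard family, the sign-extraction/Hamming-ball counting, and the bump-function bookkeeping for $\|n\|_T^2$ are all fine as far as they go; it is the reduction from randomized to deterministic algorithms that cannot be carried out this way, and the most natural repair is to switch to the paper's Gaussian construction.
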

\begin{proof}
We describe a distribution on inputs $y$ on which any  deterministic algorithm that takes $r = o( n_{\mu,\ce\epsilon})$ samples on any input fails with probability $> 1/3$. The theorem then follows by Yao's principle.

\medskip
\spara{Notation:}
Let $v_1,\ldots,v_{n_{\mu,\ce\epsilon}} \in L_2(\mu)$ be the eigenfunctions of $\Gmu$ corresponding to its top  $n_{\mu,\ce\epsilon}$ eigenvalues. Let $\bv{Z}: L_2(\mu) \rightarrow \mathbb{C}^{n_{\mu,\ce\epsilon}}$ be the operator with $v_i$ as its $i^{th}$ row -- i.e., $[\bv{Z} g](i) = \langle v_i,g\rangle_{\mu}$. Note that $\bv{Z}$ has orthonormal rows. Let  $\bv{D} \in \RR^{ n_{\mu,\ce\epsilon} \times n_{\mu,\ce\epsilon}}$ be a diagonal matrix with $\bv{D}_{ii} = \sqrt{\lambda_i(\Kmu)}$. Let $\bv{U} =  \Fmu^* \bv{Z}^* \bv{D}^{-1}$. We can see that $\bv{Z} \Fmu \Fmu^* \bv{Z}^* = \bv{Z} \Gmu \bv{Z}^* = \bv{D}^2$ and hence,  $\bv{U}^* \bv{U} = \bv{D}^{-1} \bv{Z} \Fmu  \Fmu^* \bv{Z}^* \bv{D}^{-1} = \bv{I}$. While not needed for our proof, we can check that $\bv{U}: \CC^{n_{\mu,\ce\epsilon}} \rightarrow L_2(T)$ is an operator with columns corresponding to all eigenfunctions of $\Kmu$ with eigenvalue $\ge \ce\epsilon$.

\medskip
\spara{Hard Input Distribution:}
Let $\bv{c} \in \RR^{n_{\mu,\ce\epsilon}}$ be a random vector with each entry distributed independently as a Gaussian: $\bv{c}(i) \sim \mathcal{N}(0,\frac{1}{n_{\mu,\ce\epsilon}})$. Let $\bv{\bar c} = \bv{D}^{-1} \bv{c}$, $x = \bv{Z}^* \bv{\bar c}$, and the random input be $y  =  \Fmu^* x$. That is, $y =  \Fmu^* \bv{Z}^*  \bv{D}^{-1} \bv{c} = \bv{U} \bv{c}$ is a random linear combination of the top eigenfunctions of $\Kmu$. While, formally, $\Fmu^* x \in L_2(T)$ is an equivalence class of functions, since our input model requires that $y$ admits pointwise evaluation, we will abuse notation, letting $y$ denote the member of this class with $y(t) = \langle \varphi_t, \bv{Z}^*\bv{D}^{-1}\bv{c}\rangle_\mu = \langle \bv{D}^{-1}\bv{Z} \varphi_t,\bv{c} \rangle$, where $\varphi_{t}(\xi) = e^{-2\pi i t \xi}$.

We prove that accurately reconstructing $y$ drawn from the hard input distribution yields an accurate reconstruction of the random vector $\bv{c}$. Since $\bv{c}$ is $n_{\mu,\ce\epsilon}$ dimensional, this reconstruction requires $\Omega(n_{\mu,\ce\epsilon})$ samples, giving us a lower bound for accurately  reconstructing $y$.

\begin{claim}\label{clm:smallNorm}
For random $x$ distributed as described above, with probability $\ge 5/6$,
$\norm{x}_\mu^2 \le \frac{1}{12\epsilon}$.
\end{claim}
\begin{proof}
$$\norm{x}_\mu^2 = \langle \bv{Z}^*\bv{\bar c}, \bv{Z}^*\bv{\bar c}\rangle_\mu = \langle \bv{\bar c}, \bv{Z}\bv{Z}^*\bv{\bar c}\rangle = \norm{\bv{\bar c}}_2^2.$$
We then bound $\norm{\bv{\bar c}}_2^2 \le \norm{\bv{c}}_2^2/ \lambda_{n_{\mu,\ce\epsilon}}(\Kmu) \le \frac{\norm{\bv c}_2^2}{\ce\epsilon}$ since $ \lambda_{n_{\mu,\ce\epsilon}}(\Kmu) \ge \ce\epsilon$ by definition. Finally, note that $\norm{\bv{c}}_2^2$ is a Chi-squared random variable, with $\E[\norm{\bv{c}}_2^2] = 1$. So loosely, by  Markov's inequality, with probability $\ge 5/6$, ${\norm{\bv{c}}_2^2} \le 6$, which gives the claim.
\end{proof}

From Claim \ref{clm:smallNorm} we have:
\begin{claim}\label{clm:acracyProb}
Given random input $y = \Fmu^* x$ generated as described above, with probability $\ge 5/6$,
to solve Problem \ref{prob:unformal_interp}, an algorithm must return a representation of $\tilde y$ with $\norm{y - \tilde y}_T^2 \le \frac{1}{\cef}$.
\end{claim}
\begin{proof}
Solving Problem \ref{prob:unformal_interp} requires finding a representation of $\tilde y$ with $\norm{y-\tilde y}_T^2 \le \epsilon \norm{x}_\mu^2 + C \norm{n}_T^2$. By Claim \ref{clm:smallNorm}  and the fact that for our input $\norm{n}_T^2 = 0$, with probability $\ge 5/6$ one has that $\epsilon \norm{x}_\mu^2 + C \norm{n}_T^2 \le \frac{1}{\cef}$, yielding the claim.
\end{proof}
We next show that finding a $\tilde y$ satisfying the condition of Claim \ref{clm:acracyProb} is at least as hard as finding an accurate approximation to $\bv{c}$.

\begin{claim}\label{clm:conversion} For $\tilde y$ with $\norm{y - \tilde y}_T^2 \le \frac{1}{\cef}$, $\bv{\tilde c} = \bv{U}^* \tilde  y$ satisfies $\norm{\bv{c}-\bv{\tilde c}}_2^2 \le \frac{1}{\cef}$.
\end{claim}
\begin{proof}
Recalling that $y =\bv{U} \bv{c}$, for $\bv{\tilde c} = \bv{U}^* \tilde  y$ we have:
\begin{align*}
\bv{\tilde c} = \bv{U}^* y  +\bv{U}^*( \tilde y - y) =\bv{U}^* \bv{U} \bv{c}  +\bv{U}^*( \tilde y- y).
\end{align*}
Recalling that $\bv{U}^* \bv{U} = \bv{I}$ we thus have:
\begin{align*}
\norm{\bv{c}-\bv{\tilde c}}_2^2 &= \norm{\bv{U}^*( \tilde y- y)}_2^2\\
&\le \norm{\tilde y- y}_T^2 \le \frac{1}{12}.
\end{align*}
The second to last inequality follows since $\bv{U}^* \bv{U} = \bv{I}$ and $\bv{U}\bv{U}^*$ are finite rank, so are compact and share the same non-zero eigenvalues. 
Thus, $\bv{U}\bv{U}^* \preceq {\cal I}_T$~\cite[Lemma 8.26]{HunterNachtergaele:2001}.
This completes the claim.
\end{proof}

Combining Claims \ref{clm:acracyProb} and \ref{clm:conversion} we have:
\begin{claim}\label{clm:overalsuccesss}
If a deterministic algorithm solves Problem \ref{prob:unformal_interp} with probability $\ge 2/3$ over our random input $y = \bv{U}\bv{c}$, then with probability $\ge 1/2$, letting $\tilde y$ be the output of the algorithm, $\bv{\tilde c} = \bv{U}^* \tilde y$ satisfies $\norm{\bv{c}-\bv{\tilde c}}_2^2 \le \frac{1}{\cef}$.
\end{claim}
\begin{proof}
If an algorithm solves Problem \ref{prob:unformal_interp} probability $\ge 2/3$ then by Claim \ref{clm:acracyProb}, it returns $\tilde y$ with $\norm{y - \tilde y}_T^2 \le \frac{1}{\cef}$ with probability $\ge 2/3-1/6 = 1/2$. Thus, by  Claim \ref{clm:conversion}, $\bv{\tilde c}$ satisfies $\norm{\bv{c}-\bv{\tilde c}}_2^2 \le \frac{1}{\cef}$ with probability $\ge 1/2$.
\end{proof}

%

Finally, we complete the proof of Theorem \ref{thm:initialLB} by  arguing that if $\tilde y$ is formed using $o(n_{\mu,\ce\epsilon})$ queries, then for $\bv{\tilde c}=\bv{U}^* \tilde y$, $\norm{\bv{c}-\bv{\tilde c}}_2^2 > \frac{1}{\cef}$ with good probability. Thus the bound of Claim \ref{clm:overalsuccesss} cannot hold and so $\tilde y$ cannot be a solution to Problem \ref{prob:unformal_interp} with good probability.

Assume for the sake of contradiction that there is a deterministic algorithm solving Problem \ref{prob:unformal_interp} with probability $\ge 2/3$ over the random input $\bv{U}\bv{c}$ that makes $r = \frac{n_{\mu,\ce\epsilon}}{20}$ queries on any input (note that if there exists an algorithm that makes fewer queries on some inputs, we can always modify it to make exactly $\frac{n_{\mu,\ce\epsilon}}{20}$ queries and return the same output.)

As discussed, each query to $y$ is a query to $y(t) =  \langle \bv{D}^{-1} \bv{Z} \varphi_t, \bv{c} \rangle$. Consider a deterministic function $Q$, that is given input $\bv{V} \in \CC^{i \times n_{\mu,\ce\epsilon}}$ (for any positive integer $i$) and outputs $Q(\bv{V}) \in \CC^{n_{\mu,\ce\epsilon} \times n_{\mu,\ce\epsilon}}$ such that $Q(\bv{V})$ has orthonormal rows with the first $i$ spanning the $i$ rows of $\bv{V}$. For example, $Q$ may run Gram-Schmidt orthogonalization on $\bv{V}$ fixing its first $\rank(\bv{V}) \le i$ rows and then fill out the remaining $ n_{\mu,\ce\epsilon} - \rank(\bv{V})$ rows using some canonical approach. Letting 
$\bv{D}^{-1} \bv{Z} \varphi_{t_1},\ldots,\bv{D}^{-1} \bv{Z} \varphi_{t_r}$ denote the queries made by our algorithm on random input $\bv{c}$, let $\bv{Q}^i = Q([\bv{D}^{-1} \bv{Z} \varphi_{t_1},\ldots,\bv{D}^{-1} \bv{Z} \varphi_{t_i}]^*)$. That is $\bv{Q}^i$ is an orthonormal matrix whose first $i$ rows span our first $i$ queries. Note that since our algorithm is deterministic, $\bv{Q}^i$ is a deterministic function of the random input $\bv{c}$. We have the following claim:

\begin{claim}\label{clm:stillGauss}
Conditioned on the queries $y(t_1),\ldots y(t_r)$, for $j > r$, each $[\bv{Q}^r \bv{c}](j)$ is distributed independently as $\mathcal{N}\left (0, \frac{1}{n_{\mu,\ce\epsilon}}\right )$.
\end{claim}
\begin{proof}
We prove the claim via induction on the number of queries considered. For the base case set $i = 1$. $\bv{Q}^1$ is a deterministic matrix (since the choice of our first query is made determinstically before seeing any input) and so by the rotational invariance of the Gaussian distribution, the entries of $\bv{Q}^1 \bv{c}$ are distributed independently as $\mathcal{N}\left (0, \frac{1}{n_{\mu,\ce\epsilon}}\right )$ (the same as the entries of $\bv{c}$). The first row of $\bv{Q}^1$ spans our first query, and thus this row is just equal to $\bv{D}^{-1} \bv{Z} \varphi_{t_1}$ scaled to have unit norm. Thus $y(t_1) = \bv{D}^{-1} \bv{Z} \varphi_{t_1} \bv{c}$ is just a fixed scaling of $[\bv{Q}^1 \bv{c}](1)$. So conditioning on $y(t_1) $, we still have [$\bv{Q}^1 \bv{c}](j)$ for $j > 1$ distributed independently as $\mathcal{N}\left (0, \frac{1}{n_{\mu,\ce\epsilon}}\right )$.

Now, consider $i > 1$. By the inductive assumption, conditioned on $y(t_1),\ldots y(t_{i-1})$, for $j \ge i$, $[\bv{Q}^{i-1} \bv{c}](j)$, are distributed independently as $\mathcal{N}\left (0, \frac{1}{n_{\mu,\ce\epsilon}}\right )$. We can see that both $\bv{Q}^{i-1}$ and $\bv{Q}^i$ are fixed conditioned on $y(t_1),\ldots y(t_{i-1})$ (since the $i^{th}$ query is chosen deterministically, possibly  adaptively as a function of the previously seen queries $y(t_1),\ldots y(t_{i-1})$). Additionally, since they share their first $i-1$ rows, the remaining $n_{\mu,\ce\epsilon} - i +1$ rows of $\bv{Q}^{i-1}$ and $\bv{Q}^i$ have the same rowspans. Thus we can write $\bv{Q}^{i} = [\bv{I};\bv{R}]\bv{Q}^{i-1} $ where $\bv{R} \in \CC^{n_{\mu,\ce\epsilon} - i +1 \times n_{\mu,\ce\epsilon} - i +1}$ is some fixed rotation with $\bv{R}^* \bv{R} = \bv{I}$. Thus, by the rotational  invariance of the Gaussian, for all $j \ge i$, $[\bv{Q}^{i}\bv{c}](j) $ are distributed independently  as $\mathcal{N}\left (0, \frac{1}{n_{\mu,\ce\epsilon}}\right )$ (the same as $[\bv{Q}^{i-1}\bv{c}](j) $). Further conditioning on $y(t_i)$, which is a deterministic function of $[\bv{Q}^i \bv{c}](i)$ and $y(t_1)\ldots y(t_{i-1})$, we still have that for $j > i$, $[\bv{Q}^{i}\bv{c}](j) $ are distributed independently  as $\mathcal{N}\left (0, \frac{1}{n_{\mu,\ce\epsilon}}\right )$.
This completes the inductive step and so the claim.
\end{proof}
%
%
%

Armed with Claim \ref{clm:stillGauss} we can compute:
\begin{align}\label{eq:le14}
\Pr\left [ \norm{\bv{c}-\bv{\tilde c}}_2^2  \le \frac{1}{\cef} \right ] &= \Pr\left [ \norm{\bv{Q}^r\bv{c}-\bv{Q}^r\bv{\tilde c}}_2^2\le \frac{1}{\cef} \right  ]\tag{Since $\bv{Q}^r$ is orthonormal.}\nonumber  \\
&\le \Pr\left [ \sum_{i=r+1}^{n_{\mu,\ce\epsilon}}\left |[\bv{Q}^r\bv{c}](i) - \bv{Q}^r \bv{\tilde c}(i)\right|^2  \le \frac{1}{\cef}  \right ]\nonumber\\
&= \E_{y(t_1),...,y(t_r)} \left [\Pr\left [ \sum_{i=r+1}^{n_{\mu,\ce\epsilon}}\left |[\bv{Q}^r\bv{c}](i) - \bv{Q}^r \bv{\tilde c}(i)\right|^2  \le \frac{1}{\cef} \mid  y(t_1)...,y(t_r)\right ] \right ]\nonumber\\
&\le  \E_{y(t_1),...,y(t_r)} \left [\Pr\left [ \sum_{i=r+1}^{n_{\mu,\ce\epsilon}}\left |[\bv{Q}^r\bv{c}](i)\right|^2  \le \frac{1}{\cef} \mid y(t_1)...,y(t_r)\right ] \right ]\label{eq:le14}
\end{align}
where the last line follows since, conditioned on $y(t_1)...,y(t_r)$, $\bv{Q}^r \bv{\tilde c}$ is fixed and for $i \ge r+1$, $\bv{Q}^r \bv{c}(i)$ are distributed independently as Gaussians centered around $0$ (by Claim \ref{clm:stillGauss}). So the probability of the sum of differences being small is only smaller than if we replaced each $\bv{Q}^r \bv{\tilde c}(i) $ by $0$. 

Now, conditioned on $y(t_1)...,y(t_r)$,
$\sum_{i=r+1}^{n_{\mu,\ce\epsilon}}\left |[\bv{Q}^r\bv{c}](i)\right|^2$ is a Chi-squared random variable with
$$\E \left [ \sum_{i=r+1}^{n_{\mu,\ce\epsilon}}\left |[\bv{Q}^r\bv{c}](i)\right|^2  \mid y(t_1)...,y(t_r)\right ] =  \frac{n_{\mu,\ce\epsilon}-r}{n_{\mu,\ce\epsilon}}.$$
For $r = \frac{n_{\mu,\ce\epsilon}}{20}$, we thus have $\E \left [ \sum_{i=r+1}^{n_{\mu,\ce\epsilon}}\left |[\bv{Q}^r\bv{c}](i)\right|^2  \mid y(t_1)...,y(t_r)\right ] \ge \frac{19}{20}$. We can loosely upper bound the probability in \eqref{eq:le14}, using that for a Chi-squared random variable $X$ with $k$ degrees of freedom, $\Pr[X \le \delta \E[X] ] \le (\delta e^{1-\delta})^{k/2} \le (\delta e^{1-\delta})^{1/2}$. So,
\begin{align*}
\Pr\left [ \sum_{i=k(\bv{c})+1}^{n_{\mu,\ce\epsilon}}\left |[\bv{Q}^r\bv{c}](i)\right|^2  \le \frac{1}{\cef}\mid y(t_1)...,y(t_r) \right ] \le \left( \frac{20}{19\cdot \cef} e^{1-\frac{20}{19\cdot \cef}} \right  )^{1/2} < \frac{47}{100}.
\end{align*}
Plugging back into \eqref{eq:le14} gives:
\begin{align*}
\Pr\left [ \norm{\bv{c}-\bv{\tilde c}}_2^2  \le  \frac{1}{\cef} \right ] \le  \E_{y(t_1),...,y(t_r)} \left [\Pr\left [ \sum_{i=r+1}^{n_{\mu,\ce\epsilon}}\left |[\bv{Q}^r\bv{c}](i)\right|^2  \le \frac{1}{\cef} \mid y(t_1)...,y(t_r)\right ] \right ] < \frac{47}{100}.
\end{align*}
However, we have assumed that our algorithm  solves Problem \ref{prob:unformal_interp} with probability $\ge 2/3$, and hence, by Claim \ref{clm:overalsuccesss}, $\Pr\left [ \norm{\bv{c}-\bv{\tilde c}}_2^2  \le \frac{1}{\cef} \right ] \ge \frac{1}{2}$. This is a contradiction, yielding the theorem.

\end{proof}

\subsection{Statistical Dimension Lower Bound}
We now use Theorem \ref{thm:initialLB} to prove that the statistical dimension tightly  characterizes the sample complexity  of solving Problem \ref{prob:unformal_interp}  for any constraint measure $\mu$ satisfying a simple condition: we must have $\smu = O(1/\epsilon^p)$ for some $p < 1$. Note that this assumption holds for all $\mu$ considered in this work (including bandlimited, multiband, sparse, Gaussian, and Cauchy-Lorentz), where $\smu$ either grows as $\log(1/\epsilon)$ or $1/\sqrt{\epsilon}$. Also note that by \eqref{def:stat_dim_sum_version} we can always bound $\smu \le \tr(\Kmu)/\epsilon = 1/\epsilon$. So this assumption holds whenever we have a nontrivial upper bound on $\smu $.

\begin{theorem}[Statistical Dimension Lower Bound]\label{thm:mainLB}
For any probability measure $\mu$, suppose that $\smu = O(1/\epsilon^p)$ for some constant $p < 1$.
Consider any (possibly  randomized) algorithm that solves Problem \ref{prob:unformal_interp} with probability  $\ge 2/3$ for any  function $y$ and any $\epsilon > 0$ and makes at most $r_{\mu,\epsilon}$ (possibly adaptive) queries on any input. Then $r_{\mu,\epsilon} = \Omega(\smu)$.\footnote{Here we follow the Hardy-Littlewood definition \cite{hardy1914some}, using $f(\epsilon) = \Omega(g(\epsilon))$ to denote that $\limsup_{x \to \infty} \frac{f(\epsilon)}{g(\epsilon)} > 0$. Thus the lower bound shows that, for some fixed constant $c > 0$, for every $\epsilon$, there is at least some $\epsilon' < \epsilon$ where the number of queries used by any algorithm solving Problem \ref{prob:unformal_interp} with probability $\ge 2/3$ is at least $c \cdot \smu$. In other words, the lower bound rules out the possibility that the number of queries is $o(\smu)$.}
\end{theorem}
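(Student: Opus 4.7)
The plan is to combine Theorem~\ref{thm:initialLB}, which gives $r_{\mu,\epsilon}\ge n_{\mu,72\epsilon}/20$, with the claim that under the assumption $s_{\mu,\epsilon}=O(\epsilon^{-p})$ for some $p<1$ we have $\limsup_{\epsilon\to 0} n_{\mu,72\epsilon}/s_{\mu,\epsilon}>0$. Together these give $r_{\mu,\epsilon}=\Omega(s_{\mu,\epsilon})$ in the Hardy--Littlewood sense. The finite-rank case (e.g.\ $k$-sparse $\mu$) is immediate since eventually $n_{\mu,72\epsilon}=\rank(\Kmu)\ge s_{\mu,\epsilon}$, so I focus on the infinite-rank case, where $s_{\mu,\epsilon}\to\infty$.

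My main tool is the integral representation obtained from $\lambda/(\lambda+\epsilon)=\int_0^\lambda \epsilon/(u+\epsilon)^2\,du$ and Fubini:
\begin{equation*}
s_{\mu,\epsilon}=\sum_i\frac{\lambda_i}{\lambda_i+\epsilon}=\int_0^\infty\frac{\epsilon\,N(u)}{(u+\epsilon)^2}\,du,\qquad N(u):=n_{\mu,u}.
\end{equation*}
Splitting this at $u=72\epsilon$, the tail is at most $N(72\epsilon)\int_{72\epsilon}^\infty \epsilon/(u+\epsilon)^2\,du=N(72\epsilon)/73$, so the question is whether the head $\int_0^{72\epsilon}\epsilon N(u)/(u+\epsilon)^2\,du$ can swallow essentially all of $s_{\mu,\epsilon}$.

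Suppose toward contradiction that $n_{\mu,72\epsilon}/s_{\mu,\epsilon}\to 0$. For every $\eta>0$, eventually $N(72\epsilon)\le\eta s_{\mu,\epsilon}\le\eta C\epsilon^{-p}$; rewriting this with $u=72\epsilon$ gives the pointwise bound $N(u)\le\eta C\cdot 72^{p}u^{-p}$ for all small $u$. Plugging this into the head and substituting $u=\epsilon v$ yields
\begin{equation*}
\int_0^{72\epsilon}\frac{\epsilon N(u)}{(u+\epsilon)^2}\,du\;\le\;\eta C\cdot 72^{p}\epsilon^{-p}\int_0^{72}\frac{v^{-p}}{(v+1)^2}\,dv,
\end{equation*}
where the inner integral $J(72)$ is finite \emph{precisely} because $p<1$. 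Combining head and tail, $s_{\mu,\epsilon}\le\eta C\cdot 72^{p}J(72)\epsilon^{-p}+\eta s_{\mu,\epsilon}/73$, which rearranges to $\epsilon^{p}s_{\mu,\epsilon}\le\eta\cdot 72^{p}J(72)C/(1-\eta/73)$. Sending $\eta\to 0$ forces $\epsilon^{p}s_{\mu,\epsilon}\to 0$, i.e.\ $s_{\mu,\epsilon}=o(\epsilon^{-p})$, an improvement strictly stronger than the starting assumption.

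The concluding step converts this improvement into an outright contradiction. If the exponent $p$ is already tight in the sense that $\limsup_{\epsilon\to 0}\epsilon^{p}s_{\mu,\epsilon}>0$, the last display contradicts this immediately. Otherwise I pass to the infimal exponent $p^*=\inf\{q>0:s_{\mu,\epsilon}=O(\epsilon^{-q})\}$ and argue along a subsequence on which $\log s_{\mu,\epsilon}/\log(1/\epsilon)\to p^*$; on such a subsequence $\epsilon^{q}s_{\mu,\epsilon}$ stays bounded away from zero for any $q$ slightly larger than $p^*$, so the same head/tail argument, run with that $q$ in place of $p$, contradicts the hypothesis. The main obstacle I anticipate is this last descent---ensuring the effective exponent used in the bound stays strictly below $1$ so that $J(72)$ remains finite. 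That is handled by the standing assumption $p<1$, which allows any $q\in(p^*,1)$ to be chosen, and by noting that the finite-rank case is already disposed of so $p^*\ge 0$ is harmless.
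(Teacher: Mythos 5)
Your first five steps are a genuinely different route from the paper's, and they are correct as far as they go. The identity
\[
\smu=\int_0^\infty\frac{\epsilon\,N(u)}{(u+\epsilon)^2}\,du,\qquad N(u)=\#\{i:\lambda_i(\Kmu)>u\},
\]
together with the head/tail split at $u=72\epsilon$, is a clean continuous analogue of the paper's discrete splitting of the sum $\sum_i\lambda_i/(\lambda_i+\epsilon)$ at index $n_{\mu,\epsilon}$, and the computation that the tail is at most $N(72\epsilon)/73$ while the head is controlled by $\eta C\,72^p J(72)\,\epsilon^{-p}$ (with $J(72)<\infty$ because $p<1$) is right. That chain gives: if $n_{\mu,72\epsilon}/\smu\to 0$ and $\smu=O(\epsilon^{-p})$ with $p<1$, then $\smu=o(\epsilon^{-p})$.

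The gap is in your concluding descent, and it is a real one. You claim that on a subsequence along which $\log\smu/\log(1/\epsilon)\to p^*$, the quantity $\epsilon^q\smu$ ``stays bounded away from zero for any $q$ slightly larger than $p^*$.'' This is false. On such a subsequence $\smu=\epsilon^{-p^*+o(1)}$, so for $q>p^*$ one has $\epsilon^q\smu=\epsilon^{q-p^*+o(1)}\to 0$, not bounded away from zero. For $q<p^*$ it does blow up, but then $\smu=O(\epsilon^{-q})$ fails by the very definition of $p^*$, so the head/tail argument cannot be rerun with that $q$. At $q=p^*$ the behaviour is borderline and uncontrolled. More fundamentally, the implication you proved, ``$\smu=O(\epsilon^{-q})\Rightarrow\smu=o(\epsilon^{-q})$,'' is a self-improvement \emph{at the same exponent}: rerunning it only shrinks the implied constant, never the exponent, so it can never contradict the original $O(\epsilon^{-p})$ hypothesis on its own. (It is also entirely consistent with, e.g., subpolynomial growth, where $\smu=o(\epsilon^{-q})$ for every $q>0$.) The obstacle you flag in your last paragraph --- keeping the exponent below $1$ so $J(72)$ converges --- is not the issue; the issue is that $o(\epsilon^{-p})$ does not contradict $O(\epsilon^{-p})$.

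What the paper does instead is engineer a contradiction that amplifies across scales rather than at a single scale: from the for-contradiction hypothesis $n_{\mu,\epsilon}\le\tfrac12 s_{\mu,c_p\epsilon}$ (with $c_p=2^{4/(1-p)}$) it derives the multiplicative growth inequality $\smu\ge\tfrac{c_p}{4}\,s_{\mu,c_p\epsilon}$, and iterating this over a geometric sequence of scales $\epsilon_0,\epsilon_0/c_p,\epsilon_0/c_p^2,\dots$ forces $\smu\gtrsim(1/\epsilon)^{(1+p)/2}$, which strictly exceeds $O(\epsilon^{-p})$ since $p<1$. That cross-scale recursion is the missing ingredient: it converts the smallness of $n$ relative to $s$ into a growth \emph{rate} for $s$, not merely a smaller constant. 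To repair your argument you would need an analogous mechanism --- for instance, a two-scale version of your integral bound that relates $\smu$ and $s_{\mu,c\epsilon}$ multiplicatively --- rather than the single-scale self-improvement you currently have.
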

\begin{proof}
We simply prove that for this class of measures, $n_{\mu,\ce\epsilon} = \Omega(\smu)$ and then apply Theorem \ref{thm:initialLB}. It suffices to show that  $n_{\mu,\epsilon} = \Omega(s_{\mu,c\epsilon})$ for any fixed constant $c \ge 1$ since by \eqref{constAdjust}, $s_{\mu,c\epsilon} \ge \frac{\smu}{c}$.
Thus $n_{\mu,\epsilon} = \Omega(s_{\mu,c\epsilon})$ gives that  $n_{\mu,\ce\epsilon} = \Omega(s_{\mu,\ce c\epsilon}) = \Omega(\smu)$, giving the theorem.

Let $\ceu = 2^{\frac{4}{1-p}} > 1$. Assume for the sake of contradiction that $n_{\mu,\epsilon} = o(s_{\mu,c_p \epsilon})$. By this assumption, there is some fixed $\epsilon_0$ such that,
\begin{align}\label{eq:eps0}
\text{For all }\epsilon \le \epsilon_0\,, n_{\mu,\epsilon} \le \frac{s_{\mu,\ceu\epsilon}}{2}.
\end{align}
We can bound:
\begin{align*}
 s_{\mu,\ceu\epsilon} = \sum_{i=1}^\infty \frac{\lambda_i(\Kmu)}{\lambda_i(\Kmu)+\ceu\epsilon} \le n_{\mu,\epsilon} +  \sum_{i=n_{\mu,\epsilon}+1}^\infty \frac{\lambda_i(\Kmu)}{\ceu\epsilon}
\end{align*}
and thus by \eqref{eq:eps0} have for any $\epsilon \le \epsilon_0$:
\begin{align}\label{eq:smuLowerBound}
\frac{1}{2} \cdot  s_{\mu,\ceu\epsilon} & \le \sum_{i=n_{\mu,\epsilon}+1}^\infty \frac{\lambda_i(\Kmu)}{\ceu\epsilon}.
\end{align}
Now we also have:
\begin{align*}
\smu = \sum_{i=1}^\infty \frac{\lambda_i(\Kmu)}{\lambda_i(\Kmu)+\epsilon}&\ge  \sum_{i=n_{\mu,\epsilon}+1}^\infty \frac{\lambda_i(\Kmu)}{\lambda_i(\Kmu)+ \epsilon}\\
&\ge \sum_{i=n_{\mu,\epsilon}+1}^\infty \frac{\lambda_i(\Kmu)}{2\epsilon}\\
& = \frac{\ceu}{2} \cdot  \sum_{i=n_{\mu,\epsilon}+1}^\infty \frac{\lambda_i(\Kmu)}{\ceu \epsilon}.
\end{align*}
Combined with \eqref{eq:smuLowerBound} this gives that for any $\epsilon \le \epsilon_0$:
\begin{align}\label{eq:growthBound}
\smu  \ge \frac{\ceu}{4} \cdot s_{\mu,\ceu \epsilon}.
\end{align}
By \eqref{eq:growthBound} we in turn have that, for every $\epsilon \le \epsilon_0$,
\begin{align*}
\smu \ge s_{\mu,\epsilon_0} \cdot \left (\frac{\ceu}{4} \right)^{\lfloor \log_{\ceu} \epsilon_0/\epsilon \rfloor}.
\end{align*}
Using that $\lfloor \log_{\ceu} \epsilon_0/\epsilon \rfloor \ge \log_{\ceu} \epsilon_0/\epsilon-1$ and that $\ceu = 2^{\frac{4}{1-p}} \ge 16$ we can then bound, for all $\epsilon \le \epsilon_0$:
\begin{align*}
\smu \ge  \left (\frac{ \ceu}{4} \right )^{\log_{\ceu} \epsilon_0 - \log_{\ceu} \epsilon - 1} &= \left (\frac{ \ceu}{4} \right )^{\log_{\ceu} \epsilon_0 - 1} \cdot \ceu^{\log_{\ceu} 1/\epsilon} \cdot \left ( \frac{1}{4} \right )^{\log_{\ceu} 1/\epsilon}\\
&\ge \left (\frac{ \ceu}{4} \right )^{\log_{\ceu} \epsilon_0 - 1}  \cdot \frac{1}{\epsilon} \cdot \epsilon^{\frac{1-p}{2}}\\
&\ge \left (\frac{ \ceu}{4} \right )^{\log_{\ceu} \epsilon_0 - 1}  \cdot \frac{1}{\epsilon^{p + \frac{1-p}{2}}} .
\end{align*}
Note that $\left (\frac{ \ceu}{4} \right )^{\log_{\ceu} \epsilon_0 - 1}$ is a constant independent of $\epsilon$. Thus, the above contradicts the assumption that $\smu = O(1/\epsilon^p)$, giving the theorem.
\end{proof}

\spara{Remark}
We remark that a similar technique to Theorem \ref{thm:mainLB} can be used to show that $n_{\mu,\epsilon} = \Omega(\smu/\epsilon^p)$ for any $p > 0$, without any  assumptions on $\smu$. 


\section{Conclusion and Open Problems}\label{sec:conclusion}

We view our work as the starting point for further exploring the application of techniques from the randomized numerical linear algebra literature (such as leverage score sampling, column based matrix reconstruction, and random projection) in signal processing. We lay out a number of open directions that we consider interesting below:

\begin{itemize}
\item The most immediate question is to generalize our results for interpolation over an interval to higher dimensional spaces. Fourier constrained interpolation in two or three dimensions is important in many areas, such as the earth and geosciences \cite{Ripley:1989,Ripley:2005} and image processing \cite{Pesquet-PopescuVehel:2002,RamaniVilleUnser:2006}. Interpolation in even higher dimensions is common in Gaussian process methods in machine learning. We believe that our techniques should extend to higher dimensions in a similar manner to prior related work on kernel approximation \cite{AvronKapralovMusco:2017}.
\item We have considered a simple signal reconstruction problem, where we wish to reconstruct  a function over a fixed interval given sample access at points in that interval. There are many interesting variations of this problem. For example, can better  bounds be achieved if samples can be taken from the interval $[0,T]$, but  we only  consider reconstruction error over a subset  of this interval? In this setting, can uniform sampling give optimal bounds? How can one formulate a similar reconstruction problem and adapt our techniques to the streaming setting, where we hope to estimate a signal at any  given point in time using measurements at past samples (and perhaps must limit memory/computation at  any given time)? Can our techniques be extended to the setting where the error is averaged using a non-uniform measure in time domain? This question is especially relevant  for applications to machine learning, where we may wish to approximate the signal well on average on input points drawn from some non-uniform distribution. In traditional supervised learning, reconstruction would be performed using points drawn from this same distribution.  However, in an \emph{active learning} setting, we may be allowed to drawn points from some other distribution, such as the leverage score distribution, which yields better error.  
\item In our work we have assumed knowledge of the constraint $\mu$. However, as discussed, in the case of sparse and multiband signal reconstruction, it is important to learn $\mu$ (i.e., the locations of the frequencies or frequency bands) as part of the reconstruction process.  Understanding how to do this, perhaps by combining existing techniques \cite{MishaliEldar:2009,Moitra:2015,PriceSong:2015,ChenKanePrice:2016} with our own is an important direction. More generally, in many applications, $\mu$ is derived from the signal itself, by estimating the signal's autocorrelation, which corresponds to our kernel function $k_\mu$. Can our techniques be used to give bounds in this setting?
\ifdraft\Haim{An interesting case might be to know/assume that $\mu$ comes from a family of distributions, and learn the hyperparameters. For examples, in Gaussian Process Regression they assume that the covariance function has certain hyperparameters, and optimize them using maximum likelihood.}\fi
\item Can our techniques be extended to learning signals giving constraints on other transforms such as the short-time Fourier transform (the signal's spectrogram), the wavelet transform, etc.? More generally, can leverage  score sampling  be used to approximate these transforms and to approximately  apply filters or other signal modifications based on them?
\item What is the connection between our randomized leverage score sampling method and deterministic `sampling' methods such as Chebyshev interpolation for low-degree polynomials, uniform sampling for bandlimited signal reconstruction,  and non-uniform ``multicoset'' sampling schemes considered in the signal processing literature \cite{FengBresler:1996,VenkataramaniBresler:2000,Bresler:2008,MishaliEldar:2009}. Can our results be made deterministic, perhaps using deterministic sampling methods for operator approximation like those employed in our proof of Lemma \ref{lem:cssSpectral}?
\end{itemize}

\subsection*{Acknowledgements}\label{sec:ack}

We thank Ron Levie for helpful discussions on weak integrals in Hilbert spaces and Zhao Song for discussions on smoothness bounds for sparse Fourier functions. We also thank Yonina Eldar for helpful discussions and pointers to related work.  

Haim Avron's work is supported in part by Israel Science Foundation (grant no. 1272/17) and United States-Israel Binational Science Foundation (grant no. 2017698). Michael Kapralov is supported in part by ERC Starting Grant 759471.
 
\bibliographystyle{alpha}
\bibliography{sinc}	

\pagebreak
\appendix

\section{Prior work on Fourier constrained interpolation}
\label{app:prior_work}

As mentioned in Section \ref{sec:intro}, constrained interpolation problems similar to Problem \ref{prob:unformal_interp} have been studied for decades in a number of different communities, often with widely varying computational models, assumptions, and goals. We discuss the most relevant prior work here.

\spara{Bandlimited functions.} The most well studied special case of Problem \ref{prob:unformal_interp} is when $\mu$ is uniform on an interval $[-F,F]$, which corresponds to reconstructing a bandlimited function from discrete samples. Work on this problem goes back to famous results of Nyquist, Shannon, and others \cite{Whittaker:1915,Nyquist:1928,Kotelnikov:1933,Shannon:1949}, who showed that it suffices to sample time uniformly with frequency $O(1/F)$. While this rate roughly suggests that $O(FT)$ samples should be required to reconstruct a signal on $[0,T]$, this does not follow directly: common reconstruction methods like Whittaker-Shannon sinc intetrpolation infer $y(t)$ from an \emph{infinite} sum of past and future samples from $y$. It is possible to perform approximate reconstruction by truncating this sum, however the number of samples required to gives error $\epsilon$ will be large: $\Omega(1/\epsilon)$. See Example \ref{shannonLB} at the end of the section.

Progress on the finite time reconstruction question beyond truncated Whittaker-Shannon began with the pioneering work Slepain, Landau, and Pollak,  who study the operator $\Kmu$ for uniform, bandlimited measures $\mu$ \cite{SlepianPollak:1961,LandauPollak:1961,LandauPollak:1962}. They bound the number of eigenvalues of $\Kmu$ above $\epsilon$, a quantity that is at most a constant factor larger than our $\smu$.
Using this bound, it is possible to argue that Problem \ref{prob:unformal_interp} can be solved via regression onto at most $O(\smu)$ \emph{prolate spheroidal wave functions (PSWFs)}.

While the prolate spheroidal wave functions cannot be  explicitly represented and used directly in a regression algorithm, later work presents practical methods for working with them using quadrature rules and a finite number of time samples \cite{XiaoRokhlinYarvin:2001,ShkolniskyTygertRokhlin:2006, OsipovRokhlin:2014}.
For the noiseless version of Problem \ref{prob:unformal_interp}, that work, combined with the statistical dimension bound of Landau and Widom \cite{LandauWidom:1980}, yields algorithms that take roughly $\tilde{O}\left(FT + \log(1/\epsilon)\right)$ samples and $\tilde{O}\left(\left(FT + \log(1/\epsilon)\right)^\omega\right)$ time, matching our results up to log factors.\footnote{Landau and Widom's bounds can be used to show that $\smu = \tilde \Theta(FT + \log(1/\epsilon))$, however their result only holds asymptotically as $FT$ goes to infinity. Ours holds for all values of $F,T,\epsilon$ -- see Theorem \ref{thm:bandlimited_leverage_scores}.}

We note that existing quadrature methods access the function $f(t)$ at a pre-determined set of time domain points. Thus, they are inherently not robust to noise, since the noise function $n(t)$ of Problem \ref{prob:unformal_interp} can place arbitrarily bad corruptions on $\Fmu x$ at the pre-determined sample points. To the best of our knowledge our work is the first to solve Problem \ref{prob:unformal_interp} for bandlimited signals in the adversarial noise setting.

\spara{Sparse functions.}
Signal interpolation has also been studied extensively when $y$ is assumed to have a sparse Fourier transform: this is the basic problem of compressed sensing and sparse recovery. While most results in the compressed sensing literature are for {discrete functions} and address sparsity in the \emph{discrete Fourier transform}, there has been interest in extending that work to the continuous case \cite{DavenportWakin:2012}. Furthermore, there are a number of results on the continuous problem that predate compressed sensing: reader's may be familiar with Prony's method \cite{Prony:1795}, Pisarenko's method \cite{Pisarenko:1973}, the matrix pencil method \cite{BreslerMacovski:1986}, or the MUSIC algorithm \cite{Schmidt:1986}.

While these methods do not provide direct guarantees for Problem \ref{prob:unformal_interp}, recently Chen, Kane, Price, and Song study a formulation of the sparse signal interpolation problem that closely matches our formulation \cite{ChenKanePrice:2016}. Follow-up work in \cite{ChenPrice:2018} achieves a sample complexity of $\tilde{O}(k\log^2 k)$, exactly matching our bounds. In fact, our proofs for general constraint measures rely directly on two essentially lemmas on the smoothness of Fourier-sparse functions from \cite{ChenKanePrice:2016} and \cite{ChenPrice:2018}.

We note that most compressed sensing type results, including those of \cite{ChenKanePrice:2016,ChenPrice:2018}, are distinguished from our work in that they can also learn the support of $\mu$ -- our methods assume that this support is known \emph{a priori}. We believe that our methods can be combined with existing techniques for learning the Fourier support and view this is an interesting open direction.

\spara{Multiband functions.}
Due to applications in radio, radar, medical imaging, and many other areas, there has  been substantial interest in sample efficient algorithms for reconstructing multiband functions \cite{Eldar:2015}. Landau \cite{Landau:1967a,Landau:1967} was the first to characterize the sample complexity of reconstructing such functions in the sense of the Nyquist-Shannon sampling theorem, showing that to recover a signal with $s$ frequency bands of widths $F_1, \ldots, F_s$, the average sampling rate must be at least $1/\sum_i F_i$. Unlike bandlimited interpolation, it is not obvious how to construct sampling schemes that achieve this optimal rate, and doing so has been the subject of a rich line of work on non-uniform ``multicoset'' sampling schemes \cite{FengBresler:1996,VenkataramaniBresler:2000,Bresler:2008,MishaliEldar:2009}.

As in the bandlimited setting, the rate of the infinite time-horizon problem suggests, but does not imply, that the finite-time problem can be solved with roughly $\sum_i F_iT$ samples. Via a direct analysis of $\Kmu$, results on prolate spheroidal wave functions can be used to upper bound the Fourier statistical dimension for a multiband support by $O\left(\sum_i F_iT + s\log(1/\epsilon)\right)$ \cite{LandauWidom:1980}, matching our bound in Theorem \ref{theorem-multiband-statdim}. However, we are unaware of existing work that solves Problem \ref{prob:unformal_interp} with a number of samples matching this statistical dimension bound. We suspect that, as in the bandlimited case, in the noiseless setting, our methods could be matched via a combination of numerical quadrature and PSWF regression. There has been some initial work in that direction \cite{LakeyHogan:2012}.

\spara{General constraints.}
Beyond the three standard settings discussed above, there has been an effort to understand the complexity of approximately reconstructing functions with more general Fourier transform constraints. In the discrete setting, \emph{model based compressed sensing} has proven to be a powerful framework \cite{BaraniukCevherDuarte:2010,HegdeIndykSchmidt:2015}. Similar ideas have been extended to continuous functions \cite{EldarMishali:2009,Eldar:2009}. However, the constraints considered in prior work do not correspond with those captured by Problem \ref{prob:unformal_interp}. We are interested in a more refined understanding of how sample complexity depends on a the complexity of a function's representation in the Fourier basis. Model based compressed sensing focuses on functions that can be sparsely represented in other bases.

\spara{Leverage score sampling.} Finally, we note that,
beyond widespread applications in randomized numerical linear algebra, there has been prior work studying leverage score sampling  schemes for discretizing continuous operators, which is the approach we take to solving Problem \ref{prob:unformal_interp}. See for example \cite{CohenMigliorati:2017,Bach:2017,pauwels2018relating}. Our main contribution is demonstrating how to actually upper bound the leverage scores for operators of interest, which is the missing ingredient that typically prevents such sampling results from being algorithmic. We think that the tools presented in this paper offer a powerful approach to discretization in general, with significant potential for future research. We use similar methods in our recent work on randomized approximation schemes for Gaussian kernel matrices \cite{AvronKapralovMusco:2017}.

\begin{example}\label{shannonLB}
Truncated Whittaker-Shannon interpolation requires $\Omega(1/\epsilon)$ samples to approximate $y(t)$ with bandlimit $F = 1/2$ on $[0,1]$ up to error $\epsilon$ (i.e., to solve Problem \ref{prob:unformal_interp}, outputting $\tilde y(t)$ with $\norm{y-\tilde y}_{[0,1]}^2 \le \epsilon \norm{\hat y}_{\mu}^2$ where $\mu$ is the uniform probability measure on $[-1/2,1/2]$ and $\hat y$ is the Fourier transform of $y$ with $y = \Fmu^* \hat y$.)
\end{example}
\begin{proof}
Let $\mathcal{E}$ be the set of even integers in $[\lfloor 1/2\epsilon \rfloor,\lfloor 1/\epsilon \rfloor]$. Note that $|\mathcal E| = \Theta(1/\epsilon)$.
Let $y$ be a sum of $\Theta(1/\epsilon)$ standard sinc functions centered at the points in $\mathcal E$: 
\begin{align*}
y(t) =  \sum_{k\in \mathcal E} y_k(t)\text{ where } y_k(t) = \frac{\sin(\pi\cdot (t-  k))}{\pi \cdot (t-  k)}.
\end{align*}
The Fourier transform $\hat y_k(\xi)$ is the box on $[-1/2,1/2]$ multiplied by $e^{-2\pi i k}$. Thus the Fourier transform $\hat y(\xi) = \sum_{k\in \mathcal E} \hat y_k(\xi)$ is supported on $[-1/2,1/2]$ and so the Nyquist rate is $1$ and Whittaker-Shannon interpolation reconstructs $y(t)$ as a sum of sinc functions centered at the integers: 
$$y(t) = \sum_{k=-\infty}^\infty y(k) \cdot  \frac{\sin(\pi\cdot (t-  k))}{\pi \cdot (t- k)}.$$ We can see that this reconstruction is exact since $y(k) = 0$ for all integer $k$ except $y(k) =1$ for $k \in \mathcal{E}$. However, if we approximate $y(t)$ on the range $[0,1]$ by truncating the Whittaker-Shannon sum to $\le \lfloor 1/\epsilon\rfloor$ samples centered at $0$, we will not include the terms corresponding to  $k \in [\lfloor 1/2\epsilon\rfloor,\lfloor 1/\epsilon\rfloor] \supseteq \mathcal{E}$ and so will have $\tilde y(t) = 0$ and so $\norm{y - \tilde y}_{[0,1]}^2 = \norm{y}_{[0,1]}^2$. Since $\mathcal{E}$ is the set of even integers in $[\lfloor 1/2\epsilon\rfloor,\lfloor 1/\epsilon\rfloor]$:
\begin{align}\label{eq:errorCam}
\norm{y}_{[0,1]}^2 = \int_0^1 y(t)^2 \, dt &= \int_{0}^{1} \left (\sum_{k \in \mathcal{E}} \frac{\sin(\pi\cdot (t-  k))}{\pi \cdot (t-  k)} \right )^2 \, dt\nonumber\\
&= \Omega \left (\epsilon^2 \right) \int_{0}^{1} \left (\sum_{k \in \mathcal{E}} \sin(\pi\cdot t)\right )^2 \, dt = \Omega \left ( 1 \right ).
\end{align}
Finally we note that the for $j \neq k$ $\langle \hat y_k,\hat y_j \rangle_\mu = \int_{-1/2}^{1/2} e^{-2\pi i (j-k)\xi}d\xi = 0$. Thus $$\norm{\hat y}_\mu^2 = \sum_{k \in \mathcal E} \norm{\hat y_k}_\mu^2 = \Theta(1/\epsilon).$$
Combined with \eqref{eq:errorCam} this gives:
\begin{align*}
\norm{y - \tilde y}_{[0,1]}^2 = \norm{y}_{[0,1]}^2 = \Omega \left ( 1 \right ) = \Omega(\epsilon) \cdot \norm{\hat y}_\mu^2 
\end{align*}
which completes the lower bound.
\end{proof}

\section{Operator theory preliminaries}\label{app:op}

Throughout the paper, we use the term {\em operator} for linear transformation between two Hilbert spaces. In this section we discuss and prove basic results on operators that we use throughout the paper.

\subsection{Basic definitions and the Loewner partial ordering}
Consider two Hilbert spaces ${\cal H}_1$ and ${\cal H}_2$ with inner products $\langle \cdot ,\cdot \rangle_{\mathcal{H}_1}$ and $\langle \cdot ,\cdot \rangle_{\mathcal{H}_2}$.
We denote by ${\mathbb B}({\cal H}_1, {\cal H}_2)$ the set of bounded operators from ${\cal H}_1$ to ${\cal H}_2$, and abbreviate ${\mathbb B}({\cal H})$ if ${\cal H}_1 = {\cal H}_2 = {\cal H}$. We denote by $\BTC({\cal H})$ and $\BHS({\cal H})$ the set of trace-class and Hilbert-Schmidt operators (respectively) on ${\cal H}$ (i.e. from ${\cal H}$ to ${\cal H}$). Note that $\BTC({\cal H})\subset \BHS({\cal H})\subset {\mathbb B}({\cal H})$. Recall that for operators,  boundedness is equivalent to continuity. The open mapping theorem states that if ${\cal A}$ is invertible, then ${\cal A}^{-1}$ is bounded. This implies that a compact operator is not invertible. If ${\cal A} \in {\mathbb B}({\cal H})$ and ${\cal B} \in \BTC({\cal H})$ then ${\cal A}{\cal B}, {\cal B}{\cal A} \in \BTC({\cal H})$ and $\tr({\cal A}{\cal B}) = \tr({\cal B}{\cal A})$.

We call self-adjoint ${\cal A}$ {\em positive semidefinite} (or simply \emph{positive}) and write $\mathcal{A} \succeq 0$ if $\langle x, {\cal A} x \rangle_{\mathcal{H}} \geq 0$ for all $x\in{\cal H}$. We write ${\cal A} \succsim 0$ if ${\cal A}$ is {\em positive definite}, i.e. $\langle x, {\cal A} x \rangle > 0$ for all $x\in{\cal H}$. We denote ${\cal A} \succ 0$ is ${\cal A}$ if ${\cal A}$ is {\em strictly positive}, i.e.  there exist a $c > 0$ such that ${\cal A} \succsim c \cdot {\cal I}_{\mathcal{H}}$ where ${\cal I}_{\mathcal{H}}$ is the identity operator on $\mathcal{H}$. Note that for operators on finite dimensional Hilbert spaces, ${\cal A} \succsim 0$ if and only if ${\cal A} \succ 0$, but this is not always the case for infinite dimensional Hilbert spaces.  The notation for ${\cal A} \succeq {\cal B}$, ${\cal A} \succsim {\cal B}$, and ${\cal A} \succ {\cal B}$ follow in the standard way.

If ${\cal A} \succeq 0$ is self-adjoint and bounded, then it possesses a unique self-adjoint bounded square root ${\cal A}^{1/2} \succeq 0$~\cite{Wouk66}. Furthermore, if ${\cal A}$ is strictly positive then so is ${\cal A}^{1/2}$. This implies that if ${\cal A}$ is strictly positive and bounded, then ${\cal A}$ is bounded below and that the inverse of of the square root of ${\cal A}$ is ${\cal A}^{-1/2} \eqdef ({\cal A}^{-1})^{1/2}$. Lidskii's theorem states the that trace of a trace-class operator is the sum of its eigenvalues.
\ifdraft
\Haim{1) if ${\cal A}$ is bounded, then ${\cal A}^{1/2}$ is bounded because of Cauchy-Schwartz 2) Strict positivity + bounded + self-adjoint implies invertible (too basic to include, but keeping it as comments in case we wonder later): Since ${\cal A}$ bounded (continuous) and and bounded below, then it is one-on-one and has a closed range~\cite[Theorem 2.5]{AA02}. Thus, ${\cal H} = \range({\cal A}) \oplus \ker({\cal A}^*)$. But ${\cal A}$ is self-adjoint so $\ker({\cal A}^*) = \ker({\cal A}) = {0}$ where the last equality is due to the fact that {\cal A} is positive definite. So, $\range({\cal A}) = {\cal H}$ and ${\cal A}$ is onto as well as one-on-one.}
\fi

Many of the following claims are well known of matrices, and the proofs in most cases, but not all, mirror the matrix case. However, for the operator case we need to be more careful with the conditions due to the aforementioned distinction between $\succsim$ and $\succ$.

\begin{claim}
	Suppose that ${\cal A}$ is a self-adjoint bounded positive semidefinite operator on an Hilbert space ${\cal H}$. For every $\epsilon > 0$, the operator ${\cal A} + \epsilon {\cal I}_{\cal H}$ is bounded, strictly positive and invertible, and the inverse is bounded.
\end{claim}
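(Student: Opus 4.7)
The plan is to verify the four assertions in order, each following essentially directly from the previous one together with standard facts listed at the start of Appendix~\ref{app:op}. First I would observe that $\mathcal{A}+\epsilon\mathcal{I}_{\mathcal{H}}$ is bounded as the sum of two bounded operators (the triangle inequality for the operator norm gives $\|\mathcal{A}+\epsilon\mathcal{I}_{\mathcal{H}}\|_{\mathrm{op}}\le\|\mathcal{A}\|_{\mathrm{op}}+\epsilon$).

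Next I would prove strict positivity by computing, for any $x\in\mathcal{H}$,
\[
\langle x,(\mathcal{A}+\epsilon\mathcal{I}_{\mathcal{H}})x\rangle_{\mathcal{H}}=\langle x,\mathcal{A}x\rangle_{\mathcal{H}}+\epsilon\|x\|_{\mathcal{H}}^{2}\ge \epsilon\|x\|_{\mathcal{H}}^{2},
\]
using $\mathcal{A}\succeq 0$. Thus $\mathcal{A}+\epsilon\mathcal{I}_{\mathcal{H}}\succsim \epsilon\mathcal{I}_{\mathcal{H}}$, so it is strictly positive in the sense defined in the excerpt.

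For invertibility, the key observation (and really the only place where a little care is needed beyond the finite dimensional case) is that strict positivity gives a lower bound: by Cauchy-Schwarz,
\[
\epsilon\|x\|_{\mathcal{H}}^{2}\le \langle x,(\mathcal{A}+\epsilon\mathcal{I}_{\mathcal{H}})x\rangle_{\mathcal{H}}\le \|x\|_{\mathcal{H}}\cdot\|(\mathcal{A}+\epsilon\mathcal{I}_{\mathcal{H}})x\|_{\mathcal{H}},
\]
so $\|(\mathcal{A}+\epsilon\mathcal{I}_{\mathcal{H}})x\|_{\mathcal{H}}\ge \epsilon\|x\|_{\mathcal{H}}$. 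Hence $\mathcal{A}+\epsilon\mathcal{I}_{\mathcal{H}}$ is injective and has closed range. Since it is self-adjoint, we have the orthogonal decomposition $\mathcal{H}=\overline{\mathrm{range}(\mathcal{A}+\epsilon\mathcal{I}_{\mathcal{H}})}\oplus\ker(\mathcal{A}+\epsilon\mathcal{I}_{\mathcal{H}})$, and injectivity together with closedness of the range yields $\mathrm{range}(\mathcal{A}+\epsilon\mathcal{I}_{\mathcal{H}})=\mathcal{H}$. Thus the operator is a bijection, hence invertible.

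Finally, boundedness of the inverse follows from the open mapping theorem as stated in the excerpt (a bounded bijection between Banach spaces has bounded inverse); alternatively, the bound $\|(\mathcal{A}+\epsilon\mathcal{I}_{\mathcal{H}})x\|_{\mathcal{H}}\ge \epsilon\|x\|_{\mathcal{H}}$ derived above gives directly $\|(\mathcal{A}+\epsilon\mathcal{I}_{\mathcal{H}})^{-1}\|_{\mathrm{op}}\le 1/\epsilon$. I don't anticipate any real obstacle here — the only subtlety, compared to the matrix case, is that strict positivity by itself (in the infinite dimensional sense of $\succsim 0$) would not suffice; it is the uniform lower bound $\mathcal{A}+\epsilon\mathcal{I}_{\mathcal{H}}\succsim\epsilon\mathcal{I}_{\mathcal{H}}$ (i.e.\ the $\succ$ relation) that gives the bounded-below property needed to conclude invertibility with bounded inverse.
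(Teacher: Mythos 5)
Your proof is correct and takes essentially the same route as the paper: boundedness from the sum of two bounded operators, strict positivity from the pointwise bound $\langle x,(\mathcal{A}+\epsilon\mathcal{I}_{\mathcal{H}})x\rangle_{\mathcal{H}}\ge\epsilon\|x\|_{\mathcal{H}}^2$, invertibility from the bounded-below property, and a bounded inverse from the open mapping theorem (or, as you note, directly from the estimate $\|(\mathcal{A}+\epsilon\mathcal{I}_{\mathcal{H}})^{-1}\|_{\mathrm{op}}\le 1/\epsilon$). You are in fact a bit more careful than the paper's one-line assertion that ``a bounded bounded-below operator is invertible'' --- which, stated in that generality, is false (the right shift on $\ell^2$ is a counterexample) --- since you explicitly invoke self-adjointness via the decomposition $\mathcal{H}=\overline{\mathrm{range}}\oplus\ker$ to get surjectivity; that is exactly the missing detail the paper's argument is tacitly relying on.
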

\begin{proof}
	The operator ${\cal A} + \epsilon {\cal I}_{\cal H}$ is the sum of two bounded operators, and so it is bounded. It is also clearly bounded below, since ${\cal A} + \epsilon {\cal I}_{\cal H} \succeq \epsilon {\cal I}_{\cal H} \succ 0$. A continuous (i.e., bounded) bounded-below operator is invertible, so ${\cal A} + \epsilon {\cal I}_{\cal H}$ is invertible. The inverse is bounded due to the open mapping theorem.
\end{proof}

\begin{claim}
	\label{claim:bound-to-inverse-bound}
	Suppose that $0 \prec {\cal A} \preceq {\cal I}_{\cal H}$ for a self-adjoint operator ${\cal A}$. Then, ${\cal A}^{-1} \succeq {\cal I}_{\cal H}$.
\end{claim}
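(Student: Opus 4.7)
The plan is to reduce to a substitution argument via the operator square root, exactly paralleling the finite-dimensional matrix proof but with careful attention to the distinction between $\succ 0$ and $\succsim 0$ in the operator setting.

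First I would invoke the preliminary facts established in the preceding paragraphs: since $\mathcal{A}$ is self-adjoint, bounded, and positive semidefinite, it has a unique self-adjoint bounded square root $\mathcal{B} \eqdef \mathcal{A}^{1/2} \succeq 0$. Moreover, since $\mathcal{A} \succ 0$ (i.e., $\mathcal{A} \succsim c \cdot \mathcal{I}_{\mathcal{H}}$ for some $c > 0$), the square root $\mathcal{B}$ is also strictly positive, and therefore invertible with bounded inverse $\mathcal{B}^{-1} = \mathcal{A}^{-1/2}$. In particular $\mathcal{B}$ is a self-adjoint bijection of $\mathcal{H}$ onto itself.

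Next, I would translate the hypothesis $\mathcal{A} \preceq \mathcal{I}_{\mathcal{H}}$ into a bound on $\mathcal{B}$: for every $y \in \mathcal{H}$,
\begin{equation*}
\langle \mathcal{B} y, \mathcal{B} y \rangle_{\mathcal{H}} = \langle y, \mathcal{A} y \rangle_{\mathcal{H}} \leq \langle y, y \rangle_{\mathcal{H}}.
\end{equation*}
Then, to verify $\mathcal{A}^{-1} \succeq \mathcal{I}_{\mathcal{H}}$, I take an arbitrary $x \in \mathcal{H}$ and set $y = \mathcal{B}^{-1} x$, which is well-defined since $\mathcal{B}$ is invertible. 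This substitution yields $\langle x, \mathcal{A}^{-1} x \rangle_{\mathcal{H}} = \langle \mathcal{B} y, \mathcal{B}^{-2} \mathcal{B} y \rangle_{\mathcal{H}} = \langle y, y \rangle_{\mathcal{H}}$ and $\langle x, x \rangle_{\mathcal{H}} = \langle \mathcal{B} y, \mathcal{B} y \rangle_{\mathcal{H}} \leq \langle y, y \rangle_{\mathcal{H}}$ by the displayed bound above. Combining these gives $\langle x, \mathcal{A}^{-1} x \rangle_{\mathcal{H}} \geq \langle x, x \rangle_{\mathcal{H}}$, as desired.

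There is no real obstacle beyond citing the square-root / invertibility facts already recorded in Appendix~\ref{app:op}; the only subtlety to be careful about is that the proof uses $\mathcal{A} \succ 0$ (strict positivity), not merely $\mathcal{A} \succeq 0$, since otherwise $\mathcal{B}^{-1}$ need not exist on all of $\mathcal{H}$ and the substitution $y = \mathcal{B}^{-1} x$ is not available. This is exactly the finite vs.\ infinite dimensional distinction flagged at the start of the subsection, and it is what makes the hypothesis $\mathcal{A} \succ 0$ (as opposed to $\succeq 0$) necessary in the statement.
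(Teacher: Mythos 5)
Your proof is correct and takes essentially the same approach as the paper: both use the substitution $y \mapsto \mathcal{A}^{-1/2}y$ (you call it $x \mapsto \mathcal{B}^{-1}x$) to transfer the hypothesis $\mathcal{A} \preceq \mathcal{I}_\mathcal{H}$ into the desired bound on $\mathcal{A}^{-1}$, with the only difference being which variable is named ``arbitrary'' and which is the image under $\mathcal{A}^{\pm 1/2}$. Your explanatory remark about why $\mathcal{A} \succ 0$ (rather than $\succsim 0$) is needed for $\mathcal{B}^{-1}$ to exist on all of $\mathcal{H}$ is accurate and matches the reason the paper states the hypothesis that way.
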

\begin{proof}
	For every $x\in{\cal H}$ we have $\langle x, {\cal A} x \rangle_\mathcal{H} \leq \langle x, x \rangle_\mathcal{H}$. Given $y$, let $x={\cal A}^{-1/2} y$. Then $\langle y, y \rangle_\mathcal{H} = \langle {\cal A}^{1/2} x, {\cal A}^{1/2} x\rangle_\mathcal{H} = \langle x, {\cal A} x \rangle_\mathcal{H} \leq \langle x, x \rangle_\mathcal{H} =  \langle {\cal A}^{-1/2} y, {\cal A}^{-1/2} y\rangle_\mathcal{H}  = \langle y, {\cal A}^{-1} y\rangle_\mathcal{H}$ so ${\cal A}^{-1} \succeq {\cal I}_{\cal H}$.
\end{proof}

\begin{claim}
	\label{clm:trace-sqrt}
	Suppose that ${\cal A}\in {\mathbb B}({\cal H})$ and that ${\cal B}\succeq 0$ is self-adjoint trace-class operator. Then, ${\cal B}^{1/2} {\cal A}{\cal B}^{1/2}$ is trace-class, and $\tr({\cal B}^{1/2} {\cal A}{\cal B}^{1/2}) = \tr({\cal A}{\cal B})$.
\end{claim}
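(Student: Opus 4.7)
The plan is to reduce this to the cyclic property of the trace, which holds whenever one of the factors is trace-class and the other is bounded, or (more flexibly) whenever both factors are Hilbert--Schmidt. The key observation is that the square root of a positive trace-class operator is Hilbert--Schmidt, so we can split $\mathcal{B}^{1/2}\mathcal{A}\mathcal{B}^{1/2}$ into a product of two Hilbert--Schmidt operators and apply cyclicity.

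First, I would verify that $\mathcal{B}^{1/2}$ is Hilbert--Schmidt. Since $\mathcal{B} \succeq 0$ is self-adjoint and trace-class, the square root $\mathcal{B}^{1/2}$ exists, is self-adjoint, and satisfies $\mathcal{B}^{1/2}\mathcal{B}^{1/2} = \mathcal{B}$. Hence $\|\mathcal{B}^{1/2}\|_{\mathrm{HS}}^2 = \tr\!\left((\mathcal{B}^{1/2})^*\mathcal{B}^{1/2}\right) = \tr(\mathcal{B}) < \infty$, so $\mathcal{B}^{1/2} \in \BHS(\mathcal{H})$.

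Next, I would use standard ideal properties: the Hilbert--Schmidt operators form a two-sided ideal in $\mathbb{B}(\mathcal{H})$, so $\mathcal{A}\mathcal{B}^{1/2} \in \BHS(\mathcal{H})$ because $\mathcal{A} \in \mathbb{B}(\mathcal{H})$ and $\mathcal{B}^{1/2} \in \BHS(\mathcal{H})$. The product of two Hilbert--Schmidt operators is trace-class, so
\begin{equation*}
\mathcal{B}^{1/2}\mathcal{A}\mathcal{B}^{1/2} = \mathcal{B}^{1/2} \cdot (\mathcal{A}\mathcal{B}^{1/2}) \in \BTC(\mathcal{H}),
\end{equation*}
which gives the first assertion.

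Finally, for the trace identity I would invoke the cyclicity of the trace for a product of two Hilbert--Schmidt operators (equivalently, cyclicity for a bounded operator and a trace-class operator, which the excerpt already notes): with $\mathcal{U} := \mathcal{B}^{1/2}$ and $\mathcal{V} := \mathcal{A}\mathcal{B}^{1/2}$ both in $\BHS(\mathcal{H})$,
\begin{equation*}
\tr\!\left(\mathcal{B}^{1/2}\mathcal{A}\mathcal{B}^{1/2}\right) = \tr(\mathcal{U}\mathcal{V}) = \tr(\mathcal{V}\mathcal{U}) = \tr\!\left(\mathcal{A}\mathcal{B}^{1/2}\mathcal{B}^{1/2}\right) = \tr(\mathcal{A}\mathcal{B}),
\end{equation*}
where the last equality uses $\mathcal{B}^{1/2}\mathcal{B}^{1/2} = \mathcal{B}$. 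The main (minor) obstacle is just keeping track of which ideal relations are being used at each step; the only nontrivial input beyond boundedness is that $\tr(\mathcal{B}) < \infty$ forces $\mathcal{B}^{1/2}$ into $\BHS(\mathcal{H})$, after which everything follows from standard trace-ideal calculus.
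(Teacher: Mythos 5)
Your proof is correct and follows essentially the same route as the paper: factor $\mathcal{B}^{1/2}\mathcal{A}\mathcal{B}^{1/2}$ as $\mathcal{B}^{1/2}\cdot(\mathcal{A}\mathcal{B}^{1/2})$, note that $\mathcal{B}^{1/2}$ is Hilbert--Schmidt since $\mathcal{B}$ is trace-class, conclude the product is trace-class, and apply cyclicity of the trace. You just spell out the ideal-property bookkeeping a bit more explicitly than the paper does.
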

\begin{proof}
Since ${\cal B}$ is trace-class, ${\cal B}^{1/2} \in \BHS({\cal H})$. This implies that ${\cal A}{\cal B}^{1/2}$ is also Hilbert-Schmidt. Thus, ${\cal B}^{1/2} {\cal A}{\cal B}^{1/2}$ is the product of two Hilbert-Schmidt operators, so it is trace-class. The trace equality follows from the cyclic property of the trace.
\end{proof}

\begin{claim}
	\label{clm:trace-to-ord}
	Suppose that ${\cal A} \succ 0$ is a self-adjoint bounded operator, and that ${\cal B}\succeq 0$ is self-adjoint trace-class operator, both on a separable Hilbert space ${\cal H}$. Suppose we have $\tr({\cal A}{\cal B}) \leq 1$. Then, ${\cal B} \preceq {\cal A}^{-1}$.
\end{claim}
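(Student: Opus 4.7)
The plan is to reduce the inequality $\mathcal{B} \preceq \mathcal{A}^{-1}$ to a statement about a single positive trace-class operator whose trace we already control, and then conjugate back. Concretely, since $\mathcal{A} \succ 0$ is self-adjoint and bounded, the square root $\mathcal{A}^{1/2}$ exists, is bounded, strictly positive, and invertible, with bounded inverse $\mathcal{A}^{-1/2}$ (as noted in the preliminaries). I would therefore introduce the auxiliary operator $\mathcal{C} \eqdef \mathcal{A}^{1/2} \mathcal{B} \mathcal{A}^{1/2}$, which is self-adjoint, positive semidefinite, and trace-class by Claim \ref{clm:trace-sqrt}, and satisfies $\tr(\mathcal{C}) = \tr(\mathcal{A}\mathcal{B}) \leq 1$.

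Next, I would show $\mathcal{C} \preceq \mathcal{I}_{\mathcal{H}}$. The key fact is that for a positive semidefinite trace-class (hence compact) self-adjoint operator on a separable Hilbert space, the operator norm equals the largest eigenvalue, and the trace equals the sum of all (nonnegative) eigenvalues by Lidskii's theorem. Hence $\opnorm{\mathcal{C}} = \lambda_1(\mathcal{C}) \leq \sum_i \lambda_i(\mathcal{C}) = \tr(\mathcal{C}) \leq 1$, which together with $\mathcal{C} \succeq 0$ yields $\mathcal{C} \preceq \mathcal{I}_{\mathcal{H}}$.

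Finally, I would conjugate this inequality by the bounded self-adjoint operator $\mathcal{A}^{-1/2}$. Specifically, for any $y \in \mathcal{H}$, setting $x = \mathcal{A}^{-1/2} y$ gives
\begin{align*}
\langle y, \mathcal{B} y \rangle_{\mathcal{H}} = \langle \mathcal{A}^{1/2} x, \mathcal{B} \mathcal{A}^{1/2} x \rangle_{\mathcal{H}} = \langle x, \mathcal{C} x \rangle_{\mathcal{H}} \leq \langle x, x \rangle_{\mathcal{H}} = \langle y, \mathcal{A}^{-1} y \rangle_{\mathcal{H}},
\end{align*}
where the last equality uses $\mathcal{A}^{-1/2}$ self-adjoint and $(\mathcal{A}^{-1/2})^2 = \mathcal{A}^{-1}$. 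This is exactly $\mathcal{B} \preceq \mathcal{A}^{-1}$.

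I do not expect any genuine obstacle: the only subtlety is justifying $\opnorm{\mathcal{C}} \leq \tr(\mathcal{C})$ in the operator setting, which requires invoking compactness of trace-class operators plus the spectral theorem for self-adjoint compact operators (available because $\mathcal{H}$ is separable). Everything else is formal manipulation that parallels the finite-dimensional argument, with Claim \ref{clm:trace-sqrt} ensuring the trace identity used in step one.
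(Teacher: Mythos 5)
Your proof is correct and follows essentially the same route as the paper's: conjugate $\mathcal{B}$ by $\mathcal{A}^{1/2}$ to obtain a positive trace-class operator with trace at most $1$, bound its operator norm by its trace via Lidskii's theorem and compactness, and conjugate back by $\mathcal{A}^{-1/2}$. You are a bit more explicit than the paper in invoking Claim \ref{clm:trace-sqrt} and writing out the final substitution $x = \mathcal{A}^{-1/2}y$, but there is no substantive difference.
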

\begin{proof}
	Due to the cyclicity of the trace $\tr({\cal A}^{1/2}{\cal B}{\cal A}^{1/2}) \leq 1$. The operator ${\cal A}^{1/2}{\cal B}{\cal A}^{1/2}$ is positive semidefinite, so due to Lidskii's theorem it's largest eigenvalue $\leq 1$. For ${\cal A}^{1/2}{\cal B}{\cal A}^{1/2}$, the largest eigenvalue is equal to the operator norm \ifdraft\Haim{\cite[Page 225]{HunterNachtergaele:2001}}\fi, so for any $y$, \ifdraft\Haim{using \cite[Lemma 8.26]{HunterNachtergaele:2001}}\fi
	$$
	\langle y, {\cal A}^{1/2}{\cal B}{\cal A}^{1/2} y\rangle_\mathcal{H} \leq \langle y, y\rangle_\mathcal{H}.
	$$
	Since ${\cal A}^{1/2}$ is invertible, with inverse ${\cal A}^{-1/2}$, the conclusion of the claim follows.
\end{proof}
\begin{claim}
	\label{clm:invordering}
	Let ${\cal A}, {\cal B}$ be two self-adjoint, bounded, strictly positive operators. If ${\cal A} \preceq {\cal B}$ then ${\cal A}^{-1} \succeq {\cal B}^{-1}$.
\end{claim}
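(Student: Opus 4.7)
The plan is to reduce this to Claim~\ref{claim:bound-to-inverse-bound} by a standard conjugation trick. Because $\mathcal{A}$ and $\mathcal{B}$ are bounded, self-adjoint, and strictly positive, the preliminary discussion guarantees that their square roots $\mathcal{A}^{1/2}, \mathcal{B}^{1/2}$ exist, are self-adjoint, bounded, strictly positive, and hence invertible with bounded inverses $\mathcal{A}^{-1/2}, \mathcal{B}^{-1/2}$. In particular, all the expressions I will write down are well-defined bounded operators.

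First I would conjugate the inequality $\mathcal{A} \preceq \mathcal{B}$ by $\mathcal{B}^{-1/2}$. Concretely, for every $x \in \mathcal{H}$, setting $y = \mathcal{B}^{-1/2} x$ and using $\langle y, \mathcal{A} y\rangle_\mathcal{H} \leq \langle y, \mathcal{B} y\rangle_\mathcal{H}$ yields
\begin{equation*}
\langle x, \mathcal{B}^{-1/2}\mathcal{A}\mathcal{B}^{-1/2} x\rangle_\mathcal{H} \leq \langle x, \mathcal{B}^{-1/2}\mathcal{B}\mathcal{B}^{-1/2} x\rangle_\mathcal{H} = \langle x, x\rangle_\mathcal{H},
\end{equation*}
so $\mathcal{C} \eqdef \mathcal{B}^{-1/2}\mathcal{A}\mathcal{B}^{-1/2} \preceq \mathcal{I}_\mathcal{H}$. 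The operator $\mathcal{C}$ is self-adjoint, and it is strictly positive because $\mathcal{A} \succ 0$ together with the invertibility of $\mathcal{B}^{-1/2}$ gives $\langle x, \mathcal{C} x\rangle_\mathcal{H} = \langle \mathcal{B}^{-1/2}x, \mathcal{A}\mathcal{B}^{-1/2}x\rangle_\mathcal{H} \geq c\|\mathcal{B}^{-1/2}x\|_\mathcal{H}^2 \geq c' \|x\|_\mathcal{H}^2$ for some $c, c' > 0$.

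Next I would apply Claim~\ref{claim:bound-to-inverse-bound} to $\mathcal{C}$, which gives $\mathcal{C}^{-1} \succeq \mathcal{I}_\mathcal{H}$. Computing $\mathcal{C}^{-1} = (\mathcal{B}^{-1/2}\mathcal{A}\mathcal{B}^{-1/2})^{-1} = \mathcal{B}^{1/2}\mathcal{A}^{-1}\mathcal{B}^{1/2}$, I then conjugate back by $\mathcal{B}^{-1/2}$: for any $y \in \mathcal{H}$, set $x = \mathcal{B}^{1/2} y$ to obtain
\begin{equation*}
\langle y, \mathcal{A}^{-1} y\rangle_\mathcal{H} = \langle x, \mathcal{B}^{-1/2}\mathcal{A}^{-1}\mathcal{B}^{-1/2}\,x\rangle_\mathcal{H}\cdot\tfrac{\|y\|^2}{\|y\|^2}
\end{equation*}
— more cleanly, $\langle y, \mathcal{A}^{-1}y\rangle_\mathcal{H} = \langle \mathcal{B}^{1/2}y, \mathcal{B}^{-1/2}\mathcal{A}^{-1}\mathcal{B}^{-1/2}\mathcal{B}^{1/2}y\rangle_\mathcal{H}$, and substituting $x = \mathcal{B}^{1/2}y$ in $\mathcal{C}^{-1} \succeq \mathcal{I}_\mathcal{H}$ yields $\langle y, \mathcal{B}^{1/2}\mathcal{A}^{-1}\mathcal{B}^{1/2}y\rangle \geq \langle y, y\rangle$, i.e.\ $\mathcal{B}^{1/2}\mathcal{A}^{-1}\mathcal{B}^{1/2} \succeq \mathcal{I}_\mathcal{H}$. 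One more conjugation by $\mathcal{B}^{-1/2}$ (now on the derived inequality, letting $z = \mathcal{B}^{-1/2}w$) then gives $\mathcal{A}^{-1} \succeq \mathcal{B}^{-1}$, which is the claim.

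There is no real obstacle; the only subtlety is ensuring that all manipulations involving $\mathcal{B}^{-1/2}$ are legitimate in the infinite-dimensional setting, which is why the strict positivity hypothesis (as opposed to mere positive semidefiniteness) is needed. This is exactly what the preliminaries establish: strict positivity together with boundedness and self-adjointness gives a bounded self-adjoint strictly positive square root with a bounded inverse, so conjugation preserves the operator ordering in both directions.
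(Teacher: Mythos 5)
Your proof is correct and follows essentially the same route as the paper: conjugate $\mathcal{A}\preceq\mathcal{B}$ by $\mathcal{B}^{-1/2}$ to obtain $\mathcal{B}^{-1/2}\mathcal{A}\mathcal{B}^{-1/2}\preceq\mathcal{I}_\mathcal{H}$, invoke Claim~\ref{claim:bound-to-inverse-bound} on the strictly positive conjugate, and conjugate back by $\mathcal{B}^{-1/2}$. You spell out the verification that the conjugate is strictly positive (which the paper asserts briefly), and the first display in your final step has a stray $\cdot\tfrac{\|y\|^2}{\|y\|^2}$ that you immediately correct with the clean substitution $x=\mathcal{B}^{1/2}y$, so the argument as a whole is sound.
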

\begin{proof}
	Since ${\cal B}$ is bounded and strictly positive, then it is invertible and has an invertible square root. For any $y \in {\cal H}$  let $x = {\cal B}^{-1/2} y$. We have
	\begin{eqnarray*}
		\langle y, {\cal B}^{-1/2} {\cal A} {\cal B}^{-1/2} y \rangle_\mathcal{H} &= &\langle {\cal B}^{-1/2} y, {\cal A} {\cal B}^{-1/2} y \rangle_\mathcal{H}\\
		& = & \langle x, {\cal A}x \rangle_\mathcal{H} \\
		& \leq &  \langle x, {\cal B}x \rangle_\mathcal{H} \\
		& = & \langle y, y \rangle_\mathcal{H}.
	\end{eqnarray*}
	So $ {\cal B}^{-1/2} {\cal A} {\cal B}^{-1/2} \preceq {\cal I}_{\cal H}$. Since both ${\cal A}$ and ${\cal B}$ are strictly positive, then $ {\cal B}^{-1/2} {\cal A} {\cal B}^{-1/2}$ is also strictly positive. Thus, according to Claim~\ref{claim:bound-to-inverse-bound}, ${\cal B}^{1/2} {\cal A}^{-1} {\cal B}^{1/2} \succeq {\cal I}_{\cal H}$, from which the claim easily follows.
\end{proof}
\begin{claim}
	\label{claim:stays-positive}
	Suppose that ${\cal A} \succ 0$ and ${\cal A} \succeq {\cal B}$. Then for any $0 \leq c < 1$ we have ${\cal A} - c{\cal B} \succ 0$.
\end{claim}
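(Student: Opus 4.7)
The plan is to rewrite $\mathcal{A} - c\mathcal{B}$ as a convex-like combination that manifestly splits into a strictly positive part and a positive semidefinite part. Specifically, I would write
\begin{align*}
\mathcal{A} - c\mathcal{B} = (1-c)\mathcal{A} + c(\mathcal{A} - \mathcal{B}).
\end{align*}
This is the entire idea; the rest is just checking that each summand behaves as claimed.

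First I would unpack the hypothesis $\mathcal{A} \succ 0$. By the definition of strict positivity stated in Appendix~\ref{app:op}, this means there exists $\delta > 0$ with $\mathcal{A} \succeq \delta\, \mathcal{I}_{\mathcal{H}}$. Since $0 \le c < 1$ we have $1-c > 0$, so
\begin{align*}
(1-c)\mathcal{A} \succeq (1-c)\delta\, \mathcal{I}_{\mathcal{H}},
\end{align*}
which is a strictly positive multiple of the identity. Next, the hypothesis $\mathcal{A} \succeq \mathcal{B}$ means exactly that $\mathcal{A} - \mathcal{B} \succeq 0$, and since $c \ge 0$, scaling preserves positive semidefiniteness, giving $c(\mathcal{A} - \mathcal{B}) \succeq 0$.

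Finally, I would add the two estimates. For any $x \in \mathcal{H}$,
\begin{align*}
\langle x, (\mathcal{A} - c\mathcal{B}) x\rangle_{\mathcal{H}} = (1-c)\,\langle x, \mathcal{A} x\rangle_{\mathcal{H}} + c\,\langle x, (\mathcal{A} - \mathcal{B}) x\rangle_{\mathcal{H}} \ge (1-c)\delta\, \langle x, x\rangle_{\mathcal{H}},
\end{align*}
so $\mathcal{A} - c\mathcal{B} \succeq (1-c)\delta\, \mathcal{I}_{\mathcal{H}} \succ 0$, which is the desired strict positivity. There is no real obstacle here: the only subtlety, as flagged in Appendix~\ref{app:op}, is that on infinite-dimensional Hilbert spaces strict positivity is genuinely stronger than $\succsim 0$, so one must produce the explicit constant $(1-c)\delta$ rather than merely observing $\langle x, (\mathcal{A} - c\mathcal{B})x\rangle_{\mathcal{H}} > 0$ pointwise; the decomposition above delivers exactly this constant.
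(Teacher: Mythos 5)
Your proof is correct, and it is genuinely different from the paper's argument. The paper proceeds by contradiction: it assumes ${\cal A}-c{\cal B}\not\succ 0$, extracts for each $\epsilon>0$ a unit vector $x$ with $\langle x,({\cal A}-c{\cal B})x\rangle_{\mathcal{H}}\le\epsilon$, divides by $c$ to bound $\langle x,{\cal B}x\rangle_{\mathcal{H}}$ from below, and then uses that $\langle x,{\cal A}x\rangle_{\mathcal{H}}$ is bounded away from zero to force $\langle x,({\cal A}-{\cal B})x\rangle_{\mathcal{H}}<0$ for small enough $\epsilon$, contradicting ${\cal A}\succeq{\cal B}$. Your decomposition ${\cal A}-c{\cal B}=(1-c){\cal A}+c({\cal A}-{\cal B})$ sidesteps the contradiction altogether and produces the coercivity constant $(1-c)\delta$ directly, which is both shorter and more transparent. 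A small bonus of your version: the paper's proof divides by $c$ and so implicitly assumes $c>0$ (the case $c=0$ being trivial), whereas your decomposition handles $c=0$ uniformly. One stylistic remark: the paper's Appendix~\ref{app:op} defines ${\cal A}\succ 0$ via the existence of $\delta>0$ with ${\cal A}\succsim\delta\,{\cal I}_{\mathcal{H}}$; you work with the nonstrict form ${\cal A}\succeq\delta\,{\cal I}_{\mathcal{H}}$, which is the standard coercivity condition and is equivalent (halving $\delta$ turns $\succeq$ into $\succsim$), so no gap results, but if you want to match the paper's notation exactly you would carry $\succsim$ through the same three lines.
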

\begin{proof}
	Suppose by contradiction that ${\cal A} - c{\cal B} \not\succ 0$. Then for any $\epsilon > 0$ there exists a $x$ with unit norm ($\langle x, x\rangle_\mathcal{H} = 1$) such that $\langle x, ({\cal A} - c{\cal B})x \rangle_\mathcal{H} \leq \epsilon$. We have $\langle x, {\cal B} x \rangle_\mathcal{H} \geq (\langle x, {\cal A}x\rangle_\mathcal{H} - \epsilon) / c$, and since $\langle x, {\cal A}x\rangle_\mathcal{H}$ is bounded away from zero and $c < 1$, for sufficiently small $\epsilon$ we have $\langle x, {\cal B} x \rangle_\mathcal{H} > \langle x, {\cal A} x \rangle_\mathcal{H}$ so $\langle x, ({\cal A} - {\cal B})x \rangle_\mathcal{H} < 0$ which contradicts the assumption that ${\cal A} \succeq {\cal B}$.
\end{proof}

\begin{defn}
Given $x\in{\cal H}_1$ and $y \in {\cal H}_2$, we define the operator $x \otimes y : {\cal H}_2 \to {\cal H}_1$ by
\begin{equation*}
(x \otimes y)z \eqdef \langle y,z\rangle_{{\cal H}_2} x.
\end{equation*}
\end{defn}
	
\begin{claim}\label{claim:int-inner-to-trace1}
	Let ${\cal H}$ be a separable Hilbert space,  and assume that ${\cal A} \in \mathbb{B}({\cal H})$ and $v\in {\cal H}$. Then, $\langle v, {\cal A} v \rangle_{\cal H} = \tr({\cal A} (v \otimes v))$. (We remark that ${\cal A} (v \otimes v)$ is trace-class since $v \otimes v$ has finite-rank and ${\cal A}$ is bounded.)
\end{claim}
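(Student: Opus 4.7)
The plan is to compute the trace directly using the definition $\tr(\mathcal{B}) = \sum_n \langle e_n, \mathcal{B} e_n\rangle_{\cal H}$ for some convenient orthonormal basis $\{e_n\}$ of ${\cal H}$ (this definition is basis-independent for trace-class operators, and ${\cal A}(v\otimes v)$ is trace-class since $v\otimes v$ is rank-one and ${\cal A}$ is bounded, so the product is trace-class). The case $v=0$ is trivial since both sides vanish, so assume $v\neq 0$.

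The key simplification is to choose the orthonormal basis so that $e_1 = v/\|v\|_{\cal H}$; such a basis exists because ${\cal H}$ is separable. With this choice, $(v\otimes v)e_n = \langle v, e_n\rangle_{\cal H}\, v = 0$ for every $n\geq 2$, since $e_n\perp e_1 = v/\|v\|_{\cal H}$, and $(v\otimes v)e_1 = \langle v, v/\|v\|_{\cal H}\rangle_{\cal H}\, v = \|v\|_{\cal H}\, v$. Consequently every term in the trace sum except the first vanishes, leaving
\begin{equation*}
\tr\bigl({\cal A}(v\otimes v)\bigr) = \bigl\langle e_1,\, {\cal A}(v\otimes v)e_1\bigr\rangle_{\cal H} = \|v\|_{\cal H}\,\bigl\langle v/\|v\|_{\cal H},\, {\cal A} v\bigr\rangle_{\cal H} = \langle v,\, {\cal A} v\rangle_{\cal H},
\end{equation*}
which is exactly the claimed identity.

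There is essentially no obstacle here: the whole argument rests on (i) the well-definedness of the trace via any orthonormal basis for trace-class operators, and (ii) the freedom (via separability) to pick a basis containing $v/\|v\|_{\cal H}$. If one prefers not to single out $v$ in the basis, an alternative route is to write, for any orthonormal basis, $\tr({\cal A}(v\otimes v)) = \sum_n \langle v, e_n\rangle_{\cal H}\,\langle e_n, {\cal A} v\rangle_{\cal H}$ and recognize this as the Parseval expansion of $\langle v, {\cal A} v\rangle_{\cal H}$; this gives the same conclusion but requires slightly more care with the conjugate-linearity convention.
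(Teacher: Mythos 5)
Your argument is correct, and it is a small but genuine variant of the paper's proof. Both compute $\tr({\cal A}(v\otimes v))$ as a sum $\sum_n \langle e_n, {\cal A}(v\otimes v)e_n\rangle_{\cal H}$ over an orthonormal basis; the difference is in how the sum is simplified. The paper keeps the basis arbitrary, expands $v = \sum_i \alpha_i e_i$, and matches the resulting Parseval-type series $\sum_i \alpha_i^*\langle e_i, {\cal A}v\rangle_{\cal H}$ against the analogous expansion of $\langle v, {\cal A}v\rangle_{\cal H}$. You instead exploit the freedom of basis choice (available by separability) to take $e_1 = v/\|v\|_{\cal H}$, which kills every term with $n\ge 2$ because $(v\otimes v)e_n = \langle v, e_n\rangle_{\cal H}\, v = 0$, leaving a one-line computation. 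Your collapse trick is a bit more elementary and avoids having to track the conjugate-linearity of the inner product through two parallel series; the paper's version needs no special basis and is the one you sketch in your closing remark. Both are sound; you handled the $v=0$ case and the conjugation of $\|v\|_{\cal H}$ (trivially real) correctly.
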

\begin{proof}
	Let $e_1, e_2,\ldots$ be an orthonormal basis for ${\cal H}$. Write $v = \sum^\infty_{i=1} \alpha_i e_i$. On one hand we have
	\begin{align*}
	\langle v, {\cal A} v \rangle_{\cal H} &= \left\langle \sum^\infty_{i=1} \alpha_i e_i, {\cal A} v \right\rangle_{\cal H} \\
	&= \frac{1}{T} \sum^\infty_{i=1} \alpha^*_i \langle e_i, {\cal A} v \rangle_{\cal H}.
	\end{align*}
	On the other we have
	\begin{align*}
	\tr({\cal A} (v \otimes v)) &= \sum^\infty_{i=1} \langle e_i, {\cal A} (v \otimes v)e_i \rangle_{\cal H} \\
	&= \sum^\infty_{i=1} \langle e_i, {\cal A}\langle v, e_i \rangle_{\cal H} v \rangle_{\cal H}  \\
	&= \sum^\infty_{i=1} \langle v, e_i \rangle_{\cal H} \langle e_i, {\cal A} v \rangle_{\cal H} \\
	&=  \sum^\infty_{i=1} \alpha^*_i \langle e_i, {\cal A} v \rangle_{\cal H},
	\end{align*}
	so the two terms are equal.
\end{proof}

\subsection{Weak integrals of operators}

We are going to work with operator-valued random variables. To reason about the expected value, we need a notion of an integral of operator-valued functions. We use a generalization of the concept of weak integrals (also called Pettis integral) of vector-valued functions~\cite{Pettis38}. 
\begin{defn}
  Let ${\cal H}_1, {\cal H}_2$ be two separable Hilbert spaces,  $G$ a measurable space and $\mu$ a measure on $G$, and consider a mapping $\mathcal{A}:G\to\mathbb{B}({\cal H}_1, {\cal H}_2)$.
  If the mapping
  $(x,z)\mapsto \int_G \langle x, {\cal A}(\xi)z \rangle_{{\cal H}_2}\, d\mu(\xi)$ is a bounded sesquilinear map in $x,z$, then we say that ${\cal A}$ is a {\em weakly integrable operator valued function} and the {\em weak operator integral} is defined to be the unique bounded operator
  $$
  \int_G {\cal A}(\xi)\, d\mu(\xi) \in \mathbb{B}({\cal H}_1, {\cal H}_2)
  $$
  such that for all $x$ and $z$ we have
  $$
  \int_G \langle x, {\cal A}(\xi)z \rangle_{{\cal H}_2}\, d\mu(\xi) = \left\langle x, \left( \int_G {\cal A}(\xi)\, d\mu(\xi) \right)z \right\rangle_{{\cal H}_2}.
  $$
  The existence and uniqueness of such an operator is guaranteed by the Riesz representation theorem for sesquilinear maps~\cite[Page 92, Theorem 5]{Gilbert69}.\footnote{We remark that \cite[Page 92, Theorem 5]{Gilbert69} is stated and proved only for sesquilinear forms on the same Hilbert space (i.e., ${\cal H}_1 = {\cal H}_2$). However, it is easy to verify that the result also holds for sesquilinear forms between two Hilbert spaces.}
\end{defn}

\begin{claim}
\label{claim:linear-weak-integral}
Suppose that ${\cal A}:G\to \mathbb{B}({\cal H}_1, {\cal H}_2)$ is weakly integrable operator valued function, and ${\cal S}\in\mathbb{B}({\cal H}_2),{\cal T}\in\mathbb{B}({\cal H}_1)$. Then $\xi \mapsto {\cal T} {\cal A}(\xi){\cal S}$ is also a weakly integrable operator valued function and $$\int_G{\cal T} {\cal A}(\xi) {\cal S}\, d\mu(\xi) = {\cal T} \left(\int_G {\cal A}(\xi)\, d\mu(\xi) \right){\cal S}.$$
\end{claim}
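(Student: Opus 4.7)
The plan is to reduce to the defining property of the weak operator integral for $\mathcal{A}$ by pulling the fixed bounded operators $\mathcal{T}$ and $\mathcal{S}$ outside of the inner product using adjoints, and then to invoke uniqueness in the Riesz representation theorem.

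First, I would check that $\xi \mapsto \mathcal{T}\mathcal{A}(\xi)\mathcal{S}$ is weakly integrable. For arbitrary vectors $x$ in the appropriate codomain space and $z$ in the appropriate domain, I would rewrite
\begin{equation*}
\langle x, \mathcal{T}\mathcal{A}(\xi)\mathcal{S} z\rangle = \langle \mathcal{T}^* x, \mathcal{A}(\xi)(\mathcal{S} z)\rangle,
\end{equation*}
using the fact that $\mathcal{T}$ is bounded (hence has a bounded adjoint) and the definition of the adjoint. Integrating over $G$ and applying the hypothesis that $\mathcal{A}$ is weakly integrable gives
\begin{equation*}
\int_G \langle x, \mathcal{T}\mathcal{A}(\xi)\mathcal{S} z\rangle\, d\mu(\xi) = \left\langle \mathcal{T}^* x,\; \left(\int_G \mathcal{A}(\xi)\, d\mu(\xi)\right)(\mathcal{S} z)\right\rangle.
\end{equation*}
Since $\mathcal{T}^*$ and $\mathcal{S}$ are bounded linear maps and the right-hand side is a bounded sesquilinear form in $(\mathcal{T}^* x, \mathcal{S} z)$ (because $\int_G \mathcal{A}\, d\mu$ is a bounded operator), the resulting form in $(x, z)$ is also bounded and sesquilinear. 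This verifies that $\mathcal{T}\mathcal{A}(\cdot)\mathcal{S}$ is weakly integrable.

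Next, I would use adjoints in the other direction to identify the integral. Specifically, the right-hand side of the displayed equation above equals
\begin{equation*}
\left\langle x,\; \mathcal{T}\left(\int_G \mathcal{A}(\xi)\, d\mu(\xi)\right)\mathcal{S} z\right\rangle,
\end{equation*}
so the bounded sesquilinear form $(x,z) \mapsto \int_G \langle x, \mathcal{T}\mathcal{A}(\xi)\mathcal{S} z\rangle\, d\mu(\xi)$ is represented by the bounded operator $\mathcal{T}\left(\int_G \mathcal{A}\, d\mu\right)\mathcal{S}$. By the uniqueness clause of the Riesz representation theorem invoked in the definition of weak operator integral, the operator representing this sesquilinear form is unique, so
\begin{equation*}
\int_G \mathcal{T}\mathcal{A}(\xi)\mathcal{S}\, d\mu(\xi) = \mathcal{T}\left(\int_G \mathcal{A}(\xi)\, d\mu(\xi)\right)\mathcal{S}.
\end{equation*}

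There is essentially no obstacle in this argument; it is a direct two-line calculation on top of the defining property of the weak integral. The only point to be mildly careful about is keeping track of which Hilbert space each $x, z, \mathcal{T}^*x, \mathcal{S} z$ lives in (and correspondingly which inner product $\langle \cdot, \cdot\rangle_{\mathcal{H}_1}$ or $\langle \cdot, \cdot\rangle_{\mathcal{H}_2}$ is in use) so that the sesquilinear form identification and the uniqueness statement are both applied in the correct space.
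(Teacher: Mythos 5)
Your proof is correct and takes essentially the same approach as the paper: both move $\mathcal{T}$ across the inner product via its adjoint, verify boundedness of the resulting sesquilinear form, and then invoke the defining property (and uniqueness) of the weak operator integral to identify it with $\mathcal{T}\bigl(\int_G \mathcal{A}\,d\mu\bigr)\mathcal{S}$. The only cosmetic difference is that the paper verifies boundedness with an explicit constant $\gamma$ and operator-norm inequalities, whereas you argue more abstractly that pre- and post-composing a bounded sesquilinear form with bounded linear maps yields a bounded sesquilinear form, which is equivalent.
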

\begin{proof}
Recall that $(x,z)\mapsto \int_G \langle x, {\cal A}(\xi) z\rangle_{{\cal H}_2}\, d\mu(\xi)$ is bounded, so there exists a $\gamma$ such that for every $x\in{\cal H}_2,z\in{\cal H}_1$
we have
$$
\left| \int_G \langle x, {\cal A}(\xi) z\rangle_{{\cal H}_2}\, d\mu(\xi) \right| \leq \gamma \|x\|_{{\cal H}_2} \|z\|_{{\cal H}_1}
$$
Let $x\in{\cal H}_2,z\in{\cal H}_1$. We have
$$
\left| \int_G \langle x, {\cal T}{\cal A}(\xi) {\cal S}z\rangle_{{\cal H}_2}d\mu(\xi) \right| =
\left| \int_G \langle {\cal T}^*x, {\cal A}(\xi) {\cal S}z\rangle_{{\cal H}_2}d\mu(\xi) \right| \leq
\gamma \|{\cal T}^* x\|_{{\cal H}_2} \|{\cal S}z\|_{{\cal H}_1} \leq \gamma \opnorm{{\cal T}}\opnorm{{\cal S}} \|x\|_{{\cal H}_2} \|z\|_{{\cal H}_1}
$$
where we used the fact that both ${\cal S}$ and ${\cal T}$ are bounded. So the mapping $(x,z)\mapsto \int_G \langle x, {\cal T}{\cal A}(\xi){\cal S}z \rangle_{{\cal H}_2} d\mu(\xi)$ is bounded and $\xi \mapsto {\cal T}{\cal A}(\xi) {\cal S}$ is weakly integrable.

We now show that the value of the integral is ${\cal T} \left(\int_G {\cal A}(\xi)d\mu(\xi) \right){\cal S}$. Again, for any $x\in{\cal H}_2,z\in{\cal H}_1$:
$$
\left\langle x,  {\cal T}\left( \int_G {\cal A}(\xi) d\mu(\xi) \right){\cal S}z\right\rangle_{{\cal H}_2} = \left\langle {\cal T}^* x,  \left( \int_G {\cal A}(\xi) d\mu(\xi) \right){\cal S}z\right\rangle_{{\cal H}_2}
$$
By definition of $\int_G {\cal A}(\xi) d\mu(\xi)$ we have
$$
\left\langle {\cal T}^* x,  \left( \int_G {\cal A}(\xi) d\mu(\xi) \right){\cal S}z\right\rangle_{{\cal H}_2} = \int_G \langle {\cal T}^* x, {\cal A}(\xi) {\cal S}z\rangle_{{\cal H}_2}
= \int_G \langle  x, {\cal T} {\cal A}(\xi) {\cal S}z\rangle_{{\cal H}_2}
$$
so indeed $\int_G{\cal T} {\cal A}(\xi) {\cal S}d\mu(\xi) = {\cal T} \left(\int_G {\cal A}(\xi)d\mu(\xi) \right){\cal S}$.
\end{proof}

\begin{claim}
	\label{claim:int-inner-to-trace}
	Let $\rho, \mu$ be two, possibly different, probability measures, on $\RR$, and let ${\cal A} \in {\mathbb B}(L_2(\rho))$ be self-adjoint and positive semi-definite, and let ${\cal B} \in \BTC(L_2(\rho))$. Assume that there exists an orthonormal basis for $L_2(\rho)$ consisting of eigenvectors of ${\cal A}$. Given a mapping $\eta\in\RR \mapsto v_\eta \in L_2(\rho)$ such that ${\cal B} = \int_\RR (v_\eta \otimes v_\eta)d\mu(\eta)$ we have:
	$$
	\int_\RR \langle v_\eta, {\cal A} v_\eta \rangle_\rho d\mu(\eta) = \tr({\cal A B})
	$$
\end{claim}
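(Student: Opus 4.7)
The natural approach is to diagonalize $\mathcal{A}$ and reduce the identity to an application of Tonelli's theorem on a double sum/integral of nonnegative quantities. Let $\{e_i\}_{i\ge 1}$ be an orthonormal basis of $L_2(\rho)$ consisting of eigenvectors of $\mathcal{A}$ with corresponding eigenvalues $\{\lambda_i\}_{i\ge 1}$, which are nonnegative because $\mathcal{A}\succeq 0$. Since $\mathcal{A}$ is bounded and $\mathcal{B}$ is trace-class, $\mathcal{A}\mathcal{B}\in\BTC(L_2(\rho))$, so its trace can be evaluated in any orthonormal basis; using the eigenbasis of $\mathcal{A}$ and the fact that $\mathcal{A}$ is self-adjoint,
\begin{equation*}
\tr(\mathcal{A}\mathcal{B}) \;=\; \sum_{i=1}^\infty \langle e_i, \mathcal{A}\mathcal{B}\, e_i\rangle_\rho \;=\; \sum_{i=1}^\infty \lambda_i\, \langle e_i, \mathcal{B}\, e_i\rangle_\rho.
\end{equation*}

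Next, I unpack $\langle e_i,\mathcal{B}\,e_i\rangle_\rho$ via the weak integral hypothesis. By the definition of the weak integral applied with $x=z=e_i$,
\begin{equation*}
\langle e_i,\mathcal{B}\,e_i\rangle_\rho \;=\; \int_\RR \langle e_i,(v_\eta\otimes v_\eta)\,e_i\rangle_\rho\, d\mu(\eta) \;=\; \int_\RR |\langle e_i, v_\eta\rangle_\rho|^2\, d\mu(\eta).
\end{equation*}
Plugging this back gives
\begin{equation*}
\tr(\mathcal{A}\mathcal{B}) \;=\; \sum_{i=1}^\infty \lambda_i \int_\RR |\langle e_i, v_\eta\rangle_\rho|^2\, d\mu(\eta).
\end{equation*}

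All integrands are nonnegative (since $\lambda_i\ge 0$), so by Tonelli's theorem I may interchange the sum and the integral to obtain
\begin{equation*}
\tr(\mathcal{A}\mathcal{B}) \;=\; \int_\RR \sum_{i=1}^\infty \lambda_i\, |\langle e_i, v_\eta\rangle_\rho|^2\, d\mu(\eta).
\end{equation*}
Finally, for any fixed $\eta$, expanding $v_\eta=\sum_i \langle e_i,v_\eta\rangle_\rho e_i$ and using self-adjointness of $\mathcal{A}$ yields $\langle v_\eta,\mathcal{A}\,v_\eta\rangle_\rho = \sum_i \lambda_i |\langle e_i,v_\eta\rangle_\rho|^2$ (alternatively, this is exactly $\tr(\mathcal{A}(v_\eta\otimes v_\eta))$ computed in the eigenbasis, which agrees with $\langle v_\eta,\mathcal{A}\,v_\eta\rangle_\rho$ by Claim \ref{claim:int-inner-to-trace1}). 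Substituting gives the desired identity.

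The only subtlety is the interchange of the infinite sum with the integral over $\mu$. The hypotheses that $\mathcal{A}\succeq 0$ and that $\mathcal{A}$ admits a diagonalizing orthonormal basis are precisely what makes the integrand a sum of nonnegative measurable functions, so Tonelli applies without any integrability assumption on $\eta\mapsto v_\eta$ beyond what is implicit in the weak integrability hypothesis; that is the main (and really only) point to check carefully.
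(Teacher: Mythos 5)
Your proof is correct and follows essentially the same route as the paper's: both diagonalize $\mathcal{A}$ in an orthonormal eigenbasis, reduce the identity to a double sum/integral of nonnegative terms, and justify the interchange by Tonelli's theorem using $\mathcal{A}\succeq 0$. The paper writes the chain of equalities starting from the integral and ending at $\tr(\mathcal{A}\mathcal{B})$ whereas you go in the opposite direction, but the ingredients and the justification of the interchange are identical.
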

\begin{proof}
	Let $e_1, e_2,\ldots$ be an orthonormal basis for $L_2(\rho)$ consisting of eigenvectors of ${\cal A}$. Using Claim \ref{claim:int-inner-to-trace1}, we have
	\begin{align*}
	\int_\RR \langle v_\eta, {\cal A} v_\eta \rangle_\rho d\mu(\eta)
	&= \int_\RR \tr({\cal A}(v_\eta \otimes v_\eta))d\mu(\eta)\\
	&=  \int_\RR \sum^\infty_{i=1} \langle e_i, {\cal A}(v_\eta \otimes v_\eta)e_i \rangle_\rho d\mu(\eta)\\
    &=   \sum^\infty_{i=1} \int_\RR \langle e_i, {\cal A}(v_\eta \otimes v_\eta)e_i \rangle_\rho d\mu(\eta)\\
    &=   \sum^\infty_{i=1}  \langle e_i, \int_\RR {\cal A}(v_\eta \otimes v_\eta)d\mu(\eta) e_i \rangle_\rho \\
    &=   \sum^\infty_{i=1}  \langle e_i, {\cal A} \int_\RR (v_\eta \otimes v_\eta)d\mu(\eta) e_i \rangle_\rho \\
	&= \sum^\infty_{i=1} \langle e_i, {\cal A B} e_i \rangle_\mu \\
	& = \tr( {\cal A B})
	\end{align*}
	where the exchange of the integral and infinite sum in the third equality is justified by Tonelli's Theorem. In order to apply Tonelli's theorem we need to show that $\langle e_i, {\cal A}(v_\eta \otimes v_\eta)e_i \rangle_\rho \geq 0$ for every $i$ and $\eta$. This is indeed the case since $\langle e_i, {\cal A}(v_\eta \otimes v_\eta)e_i \rangle_\rho = \langle {\cal A} e_i, (v_\eta \otimes v_\eta)e_i \rangle_\rho = \lambda_i \langle  e_i, (v_\eta \otimes v_\eta)e_i \rangle_\rho \geq 0$ where $\lambda_i$ is the eigenvalue corresponding to $e_i$. Note that since ${\cal A}$ is self-adjoint and positive semi-definite, $\lambda_i$ is real and non-negative. We also used the immediate fact that  $v_\eta \otimes v_\eta$ is positive semi-definite.
\end{proof}
\spara{Remark:}
One way to guarantee that there is an orthonormal basis of eigenvectors of ${\cal A}$ is to require ${\cal A}$ to be compact. However, it is quite possible for ${\cal A}$ not to be compact, and still have an orthonormal basis of eigenvectors. In fact, we primarily apply Claim~\ref{claim:int-inner-to-trace} to operators of the form $({\cal C} + \epsilon {\cal I})^{-1}$ where ${\cal C}$ is compact, and such operators have an orthonormal basis of eigenvectors (since they share eigenvectors with ${\cal C}$).

We say that a weakly integrable ${\cal A}(\cdot)$ is {\em self-adjoint} if ${\cal A}(\xi)$ is self-adjoint for all $\xi$. It is easy to verify that if ${\cal A}(\cdot)$ is self-adjoint, then $\int_G A(\xi)d\mu(\xi)$ is self-adjoint as well.
\begin{claim}
	\label{claim:dominated-expectation}
	Suppose that ${\cal A}, {\cal B}:G\to\mathbb{B}({\cal H})$ are two self-adjoint weakly integrable operator valued functions. If, with respect to a measure $\mu$ on $G$,  ${\cal A}(\xi) \preceq {\cal B}(\xi)$ almost everywhere, then $\int_G {\cal A}(\xi) d\mu(\xi) \preceq \int_G {\cal B}(\xi) d\mu(\xi)$.
\end{claim}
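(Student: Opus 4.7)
The strategy is to reduce the operator inequality to a pointwise scalar inequality via the defining property of the weak integral. Fix an arbitrary $x \in \mathcal{H}$. By definition of the weak operator integral, the scalar-valued function $\xi \mapsto \langle x, \mathcal{A}(\xi) x \rangle_\mathcal{H}$ is $\mu$-integrable (it equals the diagonal of the bounded sesquilinear form $(y,z) \mapsto \int_G \langle y, \mathcal{A}(\xi) z\rangle_\mathcal{H}\, d\mu(\xi)$), and
\begin{equation*}
\left\langle x, \Bigl(\int_G \mathcal{A}(\xi)\, d\mu(\xi)\Bigr) x \right\rangle_\mathcal{H} = \int_G \langle x, \mathcal{A}(\xi) x\rangle_\mathcal{H}\, d\mu(\xi),
\end{equation*}
and similarly for $\mathcal{B}$.

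Next I would use the hypothesis that $\mathcal{A}(\xi) \preceq \mathcal{B}(\xi)$ almost everywhere. Let $N \subset G$ be the $\mu$-null set on which this inequality fails. For every $\xi \in G \setminus N$, and in particular for our fixed $x$, we have the scalar inequality
\begin{equation*}
\langle x, \mathcal{A}(\xi) x\rangle_\mathcal{H} \le \langle x, \mathcal{B}(\xi) x\rangle_\mathcal{H}.
\end{equation*}
Integrating over $G$ against $\mu$ (and using that the values on $N$ are irrelevant for the integral), monotonicity of the Lebesgue integral yields
\begin{equation*}
\int_G \langle x, \mathcal{A}(\xi) x\rangle_\mathcal{H}\, d\mu(\xi) \le \int_G \langle x, \mathcal{B}(\xi) x\rangle_\mathcal{H}\, d\mu(\xi).
\end{equation*}

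Combining this with the identities above gives
\begin{equation*}
\left\langle x, \Bigl(\int_G \mathcal{A}(\xi)\, d\mu(\xi)\Bigr) x \right\rangle_\mathcal{H} \le \left\langle x, \Bigl(\int_G \mathcal{B}(\xi)\, d\mu(\xi)\Bigr) x \right\rangle_\mathcal{H}.
\end{equation*}
Since $x \in \mathcal{H}$ was arbitrary and both integrated operators are self-adjoint (as noted just before the claim, weak integrals of self-adjoint integrands are self-adjoint), this is exactly the statement $\int_G \mathcal{A}(\xi)\, d\mu(\xi) \preceq \int_G \mathcal{B}(\xi)\, d\mu(\xi)$, completing the proof.

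The only non-routine step is justifying that the real-valued functions $\xi \mapsto \langle x, \mathcal{A}(\xi) x\rangle_\mathcal{H}$ and $\xi \mapsto \langle x, \mathcal{B}(\xi) x\rangle_\mathcal{H}$ are actually $\mu$-measurable and integrable so that scalar monotonicity applies; this is where the weak integrability hypothesis does its work, since it builds integrability of the inner-product functions into the definition. No deeper machinery is required.
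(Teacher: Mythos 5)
Your proof is correct and takes essentially the same route as the paper: fix $x \in \mathcal{H}$, apply the defining property of the weak integral to pass to the scalar inequality $\langle x, \mathcal{A}(\xi)x\rangle_\mathcal{H} \le \langle x, \mathcal{B}(\xi)x\rangle_\mathcal{H}$ almost everywhere, and integrate. The paper's proof is just a one-line chain of equalities and an inequality; your additional remarks about measurability and self-adjointness of the integrated operators are accurate but not a different argument.
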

\begin{proof}
	For every $x\in {\cal H}$,
	$$
	\left\langle x, \int_G {\cal A}(\xi) d\mu(\xi) x \right\rangle_{\cal H} = \int_G \langle x, {\cal A}({\xi}) x \rangle_{\cal H}d\mu(\xi) \leq \int_G \langle x, {\cal B}({\xi}) x \rangle_{\cal H}d\mu(\xi) = \left\langle x, \int_G {\cal B}(\xi) d\mu(\xi) x \right\rangle_{\cal H}
	$$
	so indeed $\int_G {\cal A}(\xi) d\mu(\xi) \preceq \int_G {\cal B}(\xi) d\mu(\xi)$.
\end{proof}

\begin{claim}
	\label{claim:dominated-expectation2}
	Suppose that ${\cal B}:G\to\mathbb{B}({\cal H})$ is a self-adjoint weakly integrable operator valued function. Consider another self-adjoint operator valued function ${\cal A}:G\to\mathbb{B}({\cal H})$. If for every $\xi\in G$ we have $0 \preceq {\cal A}(\xi) \preceq {\cal B}(\xi)$, then ${\cal A}$ is weakly integrable and $\int_G {\cal A}(\xi) d\mu(\xi) \preceq \int_G {\cal B}(\xi) d\mu(\xi)$.
\end{claim}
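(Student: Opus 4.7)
The plan is to reduce everything to Claim \ref{claim:dominated-expectation}, which already handles the operator inequality once we know both operator-valued functions are weakly integrable. The only real work is thus to establish weak integrability of $\mathcal{A}$. To do this, I need to show that the candidate sesquilinear form $(x,z) \mapsto \int_G \langle x, \mathcal{A}(\xi) z\rangle_{\mathcal{H}}\, d\mu(\xi)$ is (i) well-defined, i.e.\ the integrand is $\mu$-integrable, and (ii) bounded. Sesquilinearity will then be automatic from linearity of the integral combined with the sesquilinearity of $(x,z) \mapsto \langle x, \mathcal{A}(\xi) z\rangle_{\mathcal{H}}$ for each fixed $\xi$.

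First I would handle the diagonal case $x = z$. For every $\xi$ the hypothesis $0 \preceq \mathcal{A}(\xi) \preceq \mathcal{B}(\xi)$ yields the pointwise bound $0 \le \langle x, \mathcal{A}(\xi) x\rangle_{\mathcal{H}} \le \langle x, \mathcal{B}(\xi) x\rangle_{\mathcal{H}}$, and the right-hand side is $\mu$-integrable because $\mathcal{B}$ is assumed weakly integrable. To dominate the off-diagonal integrand, the key step is the Cauchy--Schwarz inequality for the positive semidefinite sesquilinear form $\xi \mapsto \langle x, \mathcal{A}(\xi) z\rangle_{\mathcal{H}}$, giving
\begin{equation*}
|\langle x, \mathcal{A}(\xi) z\rangle_{\mathcal{H}}| \le \langle x, \mathcal{A}(\xi) x\rangle_{\mathcal{H}}^{1/2}\,\langle z, \mathcal{A}(\xi) z\rangle_{\mathcal{H}}^{1/2} \le \langle x, \mathcal{B}(\xi) x\rangle_{\mathcal{H}}^{1/2}\,\langle z, \mathcal{B}(\xi) z\rangle_{\mathcal{H}}^{1/2}.
\end{equation*}
Integrating and applying Cauchy--Schwarz a second time for the measure $\mu$, the right-hand side is bounded by $\langle x, \mathcal{B}_\mu x\rangle_{\mathcal{H}}^{1/2}\langle z, \mathcal{B}_\mu z\rangle_{\mathcal{H}}^{1/2} \le \opnorm{\mathcal{B}_\mu} \|x\|_{\mathcal{H}}\|z\|_{\mathcal{H}}$, where $\mathcal{B}_\mu \eqdef \int_G \mathcal{B}(\xi)\, d\mu(\xi)$. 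This simultaneously gives absolute $\mu$-integrability of $\xi \mapsto \langle x, \mathcal{A}(\xi) z\rangle_{\mathcal{H}}$ and the required boundedness of the sesquilinear form. The Riesz representation theorem for sesquilinear maps (as used in the definition of the weak operator integral) then produces the bounded operator $\int_G \mathcal{A}(\xi)\, d\mu(\xi)$, establishing weak integrability.

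Once weak integrability of $\mathcal{A}$ is in hand, the hypothesis $\mathcal{A}(\xi) \preceq \mathcal{B}(\xi)$ for every $\xi$ certainly holds $\mu$-almost everywhere, so Claim \ref{claim:dominated-expectation} applied to $\mathcal{A}$ and $\mathcal{B}$ immediately yields $\int_G \mathcal{A}(\xi)\, d\mu(\xi) \preceq \int_G \mathcal{B}(\xi)\, d\mu(\xi)$, completing the proof. The main obstacle, as indicated, is the verification of weak integrability: Claim \ref{claim:dominated-expectation} takes weak integrability of both operator-valued functions as a hypothesis, and our hypotheses only supply it for $\mathcal{B}$. The operator-valued Cauchy--Schwarz step above is what lets the domination of the diagonal values control the off-diagonal ones, which is exactly what is needed to promote the pointwise inequality into a sesquilinear bound.
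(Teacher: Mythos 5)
Your proof is correct, and it takes a genuinely different route from the paper's. The paper reduces everything to checking boundedness of the quadratic form $x \mapsto \int_G \langle x, \mathcal{A}(\xi)x\rangle_{\mathcal{H}}\,d\mu(\xi)$ and invokes the cited black-box fact that a sesquilinear form is bounded if and only if its associated quadratic form is bounded; once the diagonal is dominated by $\mathcal{B}$'s diagonal, the argument is finished. You instead derive the off-diagonal control explicitly, via the Cauchy--Schwarz inequality for the semi-inner product $\langle x, \mathcal{A}(\xi)z\rangle_{\mathcal{H}}$ at each $\xi$ followed by Cauchy--Schwarz in $L^2(\mu)$, ending with the clean bound $\lvert \int_G \langle x, \mathcal{A}(\xi)z\rangle_{\mathcal{H}}\,d\mu(\xi)\rvert \le \opnorm{\mathcal{B}_\mu}\norm{x}_{\mathcal{H}}\norm{z}_{\mathcal{H}}$ where $\mathcal{B}_\mu = \int_G \mathcal{B}(\xi)\,d\mu(\xi)$. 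The paper's version is shorter given the external reference; your version is self-contained and has a minor advantage in rigor: it explicitly establishes absolute $\mu$-integrability of the off-diagonal integrand $\xi \mapsto \langle x, \mathcal{A}(\xi)z\rangle_{\mathcal{H}}$, which the paper leaves implicit (it follows by polarization from the quadratic-form bound, but the paper does not say so). Both proofs end by invoking Claim \ref{claim:dominated-expectation}, which is the right finishing move.
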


\begin{proof}
	We need to prove only that ${\cal A}$ is weakly integrable, since the integral bound follows from Claim \ref{claim:dominated-expectation}. A sesquilinear form is bounded if and only if the associated quadratic form is bounded~\cite[Page 92, Theorem 3]{Gilbert69}, so we need to show that the integral of the quadratic form associated with ${\cal A}$ is bounded. Since ${\cal A}(\xi)$ is always positive semidefinite, for any $x$
    $$
   \left|\int_G \langle x, {\cal A}(\xi) x \rangle_{\cal H} d\mu(\xi) \right| = \int_G \langle x, {\cal A}(\xi) x \rangle_{\cal H} d\mu(\xi) \leq
    \int_G \langle x, {\cal B}(\xi) z \rangle_{\cal H} d\mu(\xi) = \left|\int_G \langle x, {\cal B}(\xi) x \rangle_{\cal H} d\mu(\xi)\right|
    $$
    and since the integral of the quadratic form associated with ${\cal B}$ is bounded (since ${\cal B}$ is weakly integrable) we conclude that integral quadratic form associated with ${\cal A}$ is bounded, so indeed ${\cal A}$ is weakly integrable.
\end{proof}

\subsection{Concentration of random operators}

Let ${\cal A}:G\to\mathbb{B}({\cal H})$ be a weakly integrable operator valued function. If the underlying measure $\mu$ is a probability measure, then we shall call ${\cal A}$ a {\em random operator}, and write
$$
\E({\cal A}) = \int_G {\cal A}(\xi) d\mu(\xi).
$$
Certain matrix concentration results can be generalized to the case that ${\cal A}$ is a random operator which takes only self-adjoint Hilbert-Schmidt values. The underlying reason is that Hilbert-Schmidt operators can be well-approximated using finite rank operators. The basic technique is outlined in \cite[Section 3.2]{Minsker:2017}. We use this technique to prove the following lemma.
\begin{lemma}
\label{lem:op-concentrate}
Suppose that ${\cal H}$ is a separable Hilbert space, and that ${\cal B}$ is a fixed self-adjoint Hilbert-Schmidt operator on ${\cal H}$. Let ${\cal R}$ be a self-adjoint Hilber-Schmidt random operator that satisfies
$$
\E({\cal R}) = {\cal B}\quad and \quad \opnorm{{\cal R}} \leq L$$
Let ${\cal M}$ be another self-adjoint trace-class operator such that $\E({\cal R}^2) \preceq {\cal M}$.
Form the operator sampling estimator
$$
\bar{{\cal R}}_n = \frac{1}{n} \sum^n_{k=1} {\cal R}_k
$$
where each ${\cal R}_k$ is an independent copy of ${\cal R}$. Then, for all $t > \sqrt{\opnorm{{\cal M}} / n} + 2L/3n$,
\begin{equation}
\label{eq:conc-operator}
\Pr\left(\opnorm{\bar{{\cal R}}_n - {\cal B}}  > t\right) \leq \frac{8\tr({\cal M})}{\opnorm{{\cal M}}}\exp\left(\frac{-nt^2 / 2}{\opnorm{{\cal M}} + 2Lt/3}\right).
\end{equation}
\end{lemma}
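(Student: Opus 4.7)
The plan is to follow the approach outlined in Section 3.2 of \cite{Minsker:2017}, reducing the infinite-dimensional concentration problem to a finite-dimensional matrix Bernstein inequality with an intrinsic-dimension factor. First I would center the variables by setting $\mathcal{S}_k = \mathcal{R}_k - \mathcal{B}$, so that $\E(\mathcal{S}_k) = 0$, $\opnorm{\mathcal{S}_k} \le 2L$, and $\E(\mathcal{S}_k^2) = \E(\mathcal{R}_k^2) - \mathcal{B}^2 \preceq \mathcal{M}$ (since $\mathcal{B}^2 \succeq 0$). Our goal then becomes to bound $\opnorm{\frac{1}{n}\sum_{k=1}^n \mathcal{S}_k}$ via a Bernstein-type tail bound whose dimensional prefactor is the intrinsic dimension $\tr(\mathcal{M})/\opnorm{\mathcal{M}}$ rather than an ambient dimension (which would be infinite).

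Next I would approximate $\mathcal{S}_k$ by a finite-rank operator so that Tropp's/Minsker's matrix Bernstein inequality applies directly. Since $\mathcal{M}$ is a positive self-adjoint trace-class operator on the separable Hilbert space $\mathcal{H}$, the spectral theorem gives an orthonormal eigenbasis $\{e_i\}_{i\ge 1}$ for $\mathcal{M}$ with eigenvalues $\lambda_1 \ge \lambda_2 \ge \cdots \ge 0$ summing to $\tr(\mathcal{M})$. Let $\mathcal{P}_d$ be the orthogonal projection onto $\mathrm{span}(e_1,\ldots,e_d)$, and define $\mathcal{S}_k^{(d)} = \mathcal{P}_d \mathcal{S}_k \mathcal{P}_d$. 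These are self-adjoint operators of rank at most $d$, supported in a $d$-dimensional subspace, so the classical matrix Bernstein inequality with intrinsic-dimension improvement (Minsker's Theorem 3.1, which replaces the factor $d$ by $\tr(\mathcal{M}_d)/\opnorm{\mathcal{M}_d}$ where $\mathcal{M}_d = \mathcal{P}_d \mathcal{M} \mathcal{P}_d$) yields, for $t$ above the stated threshold,
\begin{equation*}
\Pr\left(\opnorm{\tfrac{1}{n}\textstyle\sum_{k=1}^n \mathcal{S}_k^{(d)}} > t\right) \le \frac{8\,\tr(\mathcal{M}_d)}{\opnorm{\mathcal{M}_d}}\exp\!\left(\frac{-nt^2/2}{\opnorm{\mathcal{M}_d} + 2Lt/3}\right).
\end{equation*}
Here I am using that $\opnorm{\mathcal{S}_k^{(d)}} \le \opnorm{\mathcal{S}_k} \le 2L$ (possibly absorbing the factor of $2$ into the constant $L$ after centering, as is standard) and that $\E(\mathcal{S}_k^{(d)})^2 \preceq \mathcal{P}_d \E(\mathcal{S}_k^2) \mathcal{P}_d \preceq \mathcal{M}_d$, using Claim~\ref{claim:linear-weak-integral} to pull $\mathcal{P}_d$ outside of the expectation.

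Finally I would take the limit $d \to \infty$. The quantities $\tr(\mathcal{M}_d) \uparrow \tr(\mathcal{M})$ by monotone convergence, and $\opnorm{\mathcal{M}_d} = \opnorm{\mathcal{M}}$ for all $d \ge 1$ (since the top eigenvector of $\mathcal{M}$ is captured by $\mathcal{P}_1$). The key remaining step is to show that $\opnorm{\tfrac{1}{n}\sum_{k=1}^n \mathcal{S}_k^{(d)}} \to \opnorm{\tfrac{1}{n}\sum_{k=1}^n \mathcal{S}_k}$ almost surely (or at least that the limsup is controlled), so that the tail bound passes to the limit: each $\mathcal{S}_k$ is Hilbert--Schmidt, hence compact, so $\opnorm{\mathcal{S}_k - \mathcal{P}_d \mathcal{S}_k \mathcal{P}_d} \to 0$ as $d\to\infty$ almost surely, and the same holds for the finite average. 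Combining this limiting argument with the tail bound displayed above, and using lower semicontinuity of the operator norm, yields the claimed inequality \eqref{eq:conc-operator}. The main technical obstacle is verifying that Minsker's intrinsic-dimension Bernstein inequality, stated for finite-dimensional matrices, applies uniformly in $d$ with the correct constants; this reduces to checking that his Laplace-transform/Lieb-concavity argument depends only on the two quantities $\tr(\mathcal{M}_d)$ and $\opnorm{\mathcal{M}_d}$, both of which are controlled by $\mathcal{M}$ itself.
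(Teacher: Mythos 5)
Your proposal follows essentially the same approach as the paper's proof: project onto the span of the top $d$ eigenvectors of $\mathcal{M}$, apply the finite-dimensional intrinsic-dimension matrix Bernstein inequality to the truncated operators, and pass to the limit as $d \to \infty$. Centering first is a harmless presentational choice, and in fact no ``absorption'' of the factor of $2$ is needed: with $\opnorm{\mathcal{S}_k} \le 2L$, the standard Bernstein denominator $L' t/3$ with $L' = 2L$ gives exactly the $2Lt/3$ appearing in the stated bound (the paper simply quotes a version of the matrix lemma that already handles $\E(\mathcal{R}) = \mathcal{B}$ internally).

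The one place your write-up is loose is the final limit exchange: ``lower semicontinuity of the operator norm'' gestures at, but does not supply, the needed step. The paper makes this rigorous as follows. Since projection is norm-contractive, $\opnorm{\bar{\mathcal{R}}_n^{(j)} - \mathcal{B}^{(j)}} \le \opnorm{\bar{\mathcal{R}}_n - \mathcal{B}}$ for every $j$ and every sample point, and since $\bar{\mathcal{R}}_n - \mathcal{B}$ is Hilbert--Schmidt (hence compact), $\mathcal{P}_j(\bar{\mathcal{R}}_n - \mathcal{B})\mathcal{P}_j \to \bar{\mathcal{R}}_n - \mathcal{B}$ in operator norm. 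Together these imply that the indicator $f$ of the event $\{\opnorm{\bar{\mathcal{R}}_n - \mathcal{B}} > t\}$ equals $\liminf_j f_j$ pointwise, where $f_j$ is the indicator of the truncated event, and Fatou's lemma then gives $\Pr(f = 1) \le \liminf_j \Pr(f_j = 1)$. That Fatou-lemma argument, not uniformity of Minsker's constants in $d$, is the genuine remaining content that your sketch leaves implicit; once supplied, your proof coincides with the paper's.
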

\begin{proof}
	Let $e_1, e_2, \dots$ be the eigenvectors of ${\cal M}$, ordered according to the magnitude of the corresponding eigenvalue, and let ${\cal P}_j$ be the orthogonal projector on the span of $e_1, e_2, \dots, e_j$. Consider the finite-rank operators ${\cal R}^{(j)} = {\cal P}_j {\cal R} {\cal P}_j$, ${\cal R}^{(j)}_k = {\cal P}_j {\cal R}_k {\cal P}_j$,  $\bar{{\cal R}}^{(j)}_n = {\cal P}_j \bar{{\cal R}}_n {\cal P}_j$, ${\cal B}^{(j)} = {\cal P}_j {\cal B} {\cal P}_j$ and ${\cal M}^{(j)} = {\cal P}_j {\cal M} {\cal P}_j$. We will apply on these operator sequences the matrix version of the current lemma~\cite{AvronKapralovMusco:2017}\footnote{The lemma in~\cite{AvronKapralovMusco:2017} is stated as a bound on $\Pr\left(\opnorm{\bar{{\cal R}}_n - {\cal B}}  \geq t\right)$, while for operators strict inequality is necessary. It is easy to verify that the matrix version of the Lemma continues to hold for $\Pr\left(\opnorm{\bar{{\cal R}}_n - {\cal B}}  > t\right)$.}
	
	Due to linearity of weak operator integrals we have $\E({\cal R}^{(j)}) = {\cal P}_j {\cal B}^{(j)} {\cal P}_j$. We can bound the operator norm of ${\cal R}^{(j)}$: $\opnorm{{\cal R}^{(j)}} \leq \opnorm{{\cal P}_j {\cal R} {\cal P}_j} \leq \opnorm{{\cal P}_j}^2 \opnorm{{\cal R}} \leq L$ since the operator norm of a projection operator is $1$. Using the fact that ${\cal P}_j \preceq {\cal I}_{\cal H} $ and so ${\cal R} {\cal P}_j {\cal R} \preceq {\cal R}^2$ we have
	$$
	\E(({\cal R}^{(j)})^2) = {\cal P}_j \E({\cal R} {\cal P}_j {\cal R}) {\cal P}_j \preceq  {\cal P}_j \E({\cal R}^2) {\cal P}_j \preceq {\cal M}^{(j)}
	$$
	Now applying the aforementioned matrix version of the current lemma\footnote{Technically, the aforementioned concentration result is for {\em matrices}, while here we deal with abstract operators on finite dimensional subspaces. We can address this issue by using the corresponding transformation matrices, but we find that to be tedious details.} we find that
	\begin{equation}
	\label{eq:conc-finite-approx}
	\Pr\left(\opnorm{\bar{{\cal R}}^{(j)}_n - {\cal B}^{(j)}}  \geq t\right) \leq \frac{8\tr({\cal M}^{(j)})}{\opnorm{{\cal M}^{(j)}}}\exp\left(\frac{-nt^2 / 2}{\opnorm{{\cal M}^{(j)}} + 2Lt/3}\right).
	\end{equation}
	Due to the way we constructed ${\cal P}_j$, and ${\cal M}$ being trace-class, we have $\tr({\cal M}^{(j)}) \to \tr({\cal M})$ as $j\to\infty$.
	Furthermore, since ${\cal M}$ is trace-class, ${\cal P}_j {\cal M} \to {\cal M}$ uniformly~\cite[Theorem 9.21]{HunterNachtergaele:2001}, and so ${\cal M}^{(j)} \to {\cal M}$ while implies that $\opnorm{{\cal M}^{(j)}} \to \opnorm{{\cal M}}$. Thus, the entire right side of~\eqref{eq:conc-finite-approx} converges to the right side of~\eqref{eq:conc-operator}, so
	$$
	\liminf_{j\to\infty} \Pr\left(\opnorm{\bar{{\cal R}}^{(j)}_n - {\cal B}^{(j)}}  > t\right) \leq \frac{8\tr({\cal M})}{\opnorm{{\cal M}}}\exp\left(\frac{-nt^2 / 2}{\opnorm{{\cal M}} + 2Lt/3}\right).
	$$
	
	 Let $G$ and $\mu$ denote the underlying probability space and probability measure. Let $f_j$ now denote the indicator function for the event $\opnorm{\bar{{\cal R}}^{(j)}_n - {\cal B}^{(j)}}  > t$, and $f$ the indicator for the event
	$\opnorm{\bar{{\cal R}}_n - {{\cal B}}}  > t$. Again, due to the fact that $\bar{{\cal R}}_n - {{\cal B}}$ is Hilbert-Schmidt
	we have $\bar{{\cal R}}^{(j)}_n - {\cal B}^{(j)} \to \bar{{\cal R}}_n - {{\cal B}}$, while implies that that for any $\xi \in G$, $f(\xi) = \liminf_{j\to\infty}f_j(\xi)$.
	Now due to Fatou's lemma:
	\begin{align*}
	\Pr\left(\opnorm{\bar{{\cal R}}_n - {\cal B}}  > t\right) &= \int_G f(\xi)d\mu(\xi)\\
	&= \int_G \liminf_{j\to\infty}f_j(\xi) d\mu(\xi)\\
	&\leq \liminf_{j\to\infty} \int_G f_j(\xi) d\mu(\xi)\\
	& =\liminf_{j\to\infty} \Pr\left(\opnorm{\bar{{\cal R}}^{(j)}_n - {\cal B}^{(j)}}  > t\right)\\
	&\leq \frac{8\tr({\cal M})}{\opnorm{{\cal M}}}\exp\left(\frac{-nt^2 / 2}{\opnorm{{\cal M}} + 2Lt/3}\right).
	\end{align*}

\end{proof}

\section{Properties of the ridge leverage scores}
\label{app:leverage_scores}

\subsection{Basic facts about leverage scores}

In this section we prove Theorem \ref{thm:leverageProps}, giving fundamental properties of the ridge leverage scores that we use both in bounding these scores and proving that ridge leverage score sampling can be used to solve the regularized regression problem of \eqref{eq:least_squares_setup} (and hence Problem \ref{prob:unformal_interp} by Claim \ref{claim:regression_reduction}).
\begin{reptheorem}{thm:leverageProps}[Leverage Function Properties]
Letting $\tmu(t)$ be the ridge leverage function of Definition \ref{def:ridgeScores}, that is
	\begin{align}
	\label{eq:MaxRep}
	\tmu(t) = \frac{1}{T} \cdot \max_{\{\alpha \in L_2(\mu):\, \norm{\alpha}_\mu > 0\}} \frac{|[\Fmu^* \alpha](t) |^2 }{\norm{\Fmu^* \alpha}_T^2 + \epsilon \norm{\alpha}_\mu^2},
	\end{align}
 and let $\varphi_t \in L_2(\mu)$ be defined by $\varphi_t(\xi) = e^{-2\pi i t \xi}$, we have the following basic properties:
\begin{itemize}
\item The leverage scores integrate to the statistical dimension:
\begin{align}\label{eq:StatDimRep}
	\int_0^T \tmu(t) dt= \smu \eqdef  \tr(\Kmu (\Kmu + \epsilon \mathcal{I}_T)^{-1}).
\end{align}
\item Inner Product characterization:
\begin{align}\label{eq:InnerProdRep}
\tmu(t) =\frac{1}{T} \cdot \langle \varphi_t,  (\Gmu + \epsilon \Imu)^{-1}  \varphi_t \rangle_\mu.
\end{align}
\item Minimization Characterization:
\begin{align}\label{eq:MinRep}
\tmu(t) = \frac{1}{T} \cdot \min_{\beta \in L_2(T)} \frac{\norm{\Fmu \beta - \varphi_t}_\mu^2}{\epsilon} + \norm{\beta}_{T}^2.
\end{align}
\end{itemize}
\end{reptheorem}

\begin{proof}
	
Recall, that given $t \in [0,T]$, we defined $\varphi_t(\xi) \eqdef e^{-2\pi i t \xi}$ ($\varphi_t \in L_2 (\mu)$).
It is easy to verify that:
\begin{equation}
\label{eq:Gint}
{\cal G}_\mu = \frac{1}{T}\int^T_0 (\varphi_t \otimes \varphi_t) dt.
\end{equation}
	
To prove the equality between Equations~\eqref{eq:MaxRep}, \eqref{eq:InnerProdRep}, and \eqref{eq:MinRep}, we first show that the right hand side of \eqref{eq:InnerProdRep} is equal to the right hand side of \eqref{eq:MinRep} and then show that the right hand side of \eqref{eq:InnerProdRep} is equal to the right hand side of \eqref{eq:MaxRep}. 

First, we need an auxiliary lemma regarding the solution of regularized least squares problems. If $\cal A$ is matrix with full column rank or a one-to-one linear operator between finite-dimensional Hilbert spaces, and $b$ some vector, then $F(x) = \norm{{\cal A} x - b}^2$ has a unique minimizer. In infinite dimension spaces, this remains true if only the co-domain of $\cal A$ is infinite dimensional. However, if both the domain and co-domain are infinite dimensional there might not be a minimizer even if the $\cal A$ is bounded: the range of $\cal A$ might not be closed, so it is possible that $\norm{{\cal A}x - b} > 0$ for every $x$, but also that there exists a series $\{x_n\}$ such that $\norm{\mathcal{A}x_n - b} \to 0$ as $n \to \infty$.  However, once we introduce a ridge term (i.e., minimize $F(x) = \norm{\mathcal{A} x - b}^2 + \lambda \norm{x}^2$ for some $\lambda > 0$) there is always a unique minimizer (as long as $\mathcal{A}$ is bounded), due to the extreme value theorem (since we can bound the search domain). Furthermore, we can write an analytic expression for the minimizer in an analogous way to the finite dimensional case, as the following lemma shows.
\begin{lemma}[Regularized Least Squares Minimizer]\label{lem:ridgeMinimizer}
	Let $\HH_1$ and $\HH_2$ be two Hilbert spaces, and $\mathcal{A}:\HH_1 \to \HH_2$ be a bounded linear operator. Let $b \in \HH_2$ and $\lambda > 0$. The function
	$$
	F(x) = \norm{\mathcal{A} x - b}^2_{\HH_2} + \lambda \norm{x}^2_{\HH_1}
	$$
	has a unique minimizer which is $x^\star = \mathcal{A}^* (\mathcal{A}\mathcal{A}^* + \lambda \mathcal{I}_{\HH_2})^{-1} b$.
\end{lemma}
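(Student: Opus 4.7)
The plan is to prove existence-uniqueness and then verify that the claimed candidate $x^\star$ actually satisfies the normal equations. I would split the argument into three short steps.

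First, I would verify that $x^\star$ is well-defined. Since $\mathcal{A}\mathcal{A}^*$ is bounded, self-adjoint, and positive semidefinite on $\HH_2$, and $\lambda>0$, the operator $\mathcal{A}\mathcal{A}^* + \lambda\mathcal{I}_{\HH_2}$ is self-adjoint, bounded, and strictly positive (bounded below by $\lambda\mathcal{I}_{\HH_2}$). By the discussion in Appendix~\ref{app:op}, it is therefore invertible with a bounded inverse, so $x^\star = \mathcal{A}^*(\mathcal{A}\mathcal{A}^* + \lambda\mathcal{I}_{\HH_2})^{-1} b \in \HH_1$ is well-defined.

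Second, I would establish existence and uniqueness of a minimizer via strict convexity and coercivity. For any $x, h \in \HH_1$, direct expansion gives
\begin{equation*}
F(x + h) = F(x) + 2\Re\langle \mathcal{A}^*(\mathcal{A}x - b) + \lambda x,\, h\rangle_{\HH_1} + \|\mathcal{A}h\|_{\HH_2}^2 + \lambda\|h\|_{\HH_1}^2.
\end{equation*}
The quadratic remainder $\|\mathcal{A}h\|_{\HH_2}^2 + \lambda\|h\|_{\HH_1}^2 \ge \lambda\|h\|_{\HH_1}^2$ is strictly positive for $h\neq 0$, so $F$ is strictly convex. Coercivity ($F(x) \ge \lambda\|x\|_{\HH_1}^2 - 2\|b\|_{\HH_2}\|\mathcal{A}\|_{\mathrm{op}}\|x\|_{\HH_1}$ up to constants) lets us restrict minimization to a closed ball, on which the extreme value theorem applied to the continuous convex function $F$ produces a minimizer; strict convexity forces uniqueness.

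Third, I would identify the minimizer via the first-order condition. From the expansion above, $x$ is the minimizer if and only if the linear term vanishes for all $h \in \HH_1$, i.e.\ $\mathcal{A}^*\mathcal{A}x + \lambda x = \mathcal{A}^* b$, equivalently $(\mathcal{A}^*\mathcal{A} + \lambda\mathcal{I}_{\HH_1})x = \mathcal{A}^* b$. I would then verify that $x^\star$ satisfies this using the push-through identity $(\mathcal{A}^*\mathcal{A} + \lambda\mathcal{I}_{\HH_1})\mathcal{A}^* = \mathcal{A}^*(\mathcal{A}\mathcal{A}^* + \lambda\mathcal{I}_{\HH_2})$, which is immediate:
\begin{equation*}
(\mathcal{A}^*\mathcal{A} + \lambda\mathcal{I}_{\HH_1})x^\star = \mathcal{A}^*(\mathcal{A}\mathcal{A}^* + \lambda\mathcal{I}_{\HH_2})(\mathcal{A}\mathcal{A}^* + \lambda\mathcal{I}_{\HH_2})^{-1} b = \mathcal{A}^* b.
\end{equation*}
Combined with uniqueness, this identifies $x^\star$ as the minimizer.

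The only subtle point is the invertibility of $\mathcal{A}\mathcal{A}^* + \lambda\mathcal{I}_{\HH_2}$ on the infinite-dimensional space $\HH_2$ and the need to work with $(\mathcal{A}\mathcal{A}^* + \lambda\mathcal{I}_{\HH_2})^{-1}$ rather than $(\mathcal{A}^*\mathcal{A} + \lambda\mathcal{I}_{\HH_1})^{-1}$; the push-through identity is what makes the given closed form natural and avoids ever having to invert the latter. Everything else is a routine variational calculation.
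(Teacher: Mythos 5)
Your argument is correct, but it takes a genuinely different route from the paper's. The paper's proof absorbs the regularizer into a single unregularized least-squares problem over the product space $\HH_1 \times \HH_2$ via the stacking map $\mathcal{T}x = (\sqrt{\lambda}\,x,\,\mathcal{A}x)$, observes that $\mathcal{T}$ is bounded below by $\sqrt{\lambda}$ and hence has closed range, and then invokes the Hilbert-space projection theorem: there is a unique closest point of $\range(\mathcal{T})$ to $(0,b)$, characterized by orthogonality of the residual, and the claimed $x^\star$ is verified to satisfy that orthogonality condition. You instead work directly with the quadratic form on $\HH_1$: you expand $F(x+h)$, read off strict convexity, derive the normal equation $(\mathcal{A}^*\mathcal{A}+\lambda\mathcal{I}_{\HH_1})x = \mathcal{A}^*b$ as the first-order condition, and verify $x^\star$ satisfies it via the push-through identity $(\mathcal{A}^*\mathcal{A}+\lambda\mathcal{I}_{\HH_1})\mathcal{A}^* = \mathcal{A}^*(\mathcal{A}\mathcal{A}^*+\lambda\mathcal{I}_{\HH_2})$. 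Both are rigorous; the paper's approach reveals that the essential structural fact is the closed range of the stacked operator, while yours highlights the push-through identity as the algebraic reason the closed form on $\HH_2$ (rather than $\HH_1$) works. Your version is somewhat more elementary and closer to the standard textbook derivation of ridge regression.

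One small remark: the appeal to the extreme value theorem on a closed ball is not quite right as stated, since closed balls in an infinite-dimensional Hilbert space are not norm-compact (you would need weak compactness and weak lower semicontinuity of the convex continuous $F$, or a Neumann-series argument). However, this step is actually superfluous in your proof: once you have the expansion $F(x^\star + h) = F(x^\star) + \|\mathcal{A}h\|_{\HH_2}^2 + \lambda\|h\|_{\HH_1}^2$ (the linear term vanishes because $x^\star$ satisfies the normal equation), existence and uniqueness of the minimizer both fall out immediately, so you can drop the topological existence argument altogether.
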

\begin{proof}
	Consider the Hilbert space $\HH_1 \times \HH_2$ equipped with the inner product
	$$
	\langle (a_1, a_2), (b_1, b_2) \rangle_{\HH_1 \times \HH_2} \eqdef \langle a_1, b_1 \rangle_{\HH_1} + \langle a_2, b_2 \rangle_{\HH_2}.
	$$
	Define the operator $\mathcal{T}:\HH_1 \to \HH_1 \times \HH_2,\quad \mathcal{T}(x) = (\sqrt{\lambda} x,\mathcal{A}x)$. Let $y = (0,b) \in \HH_1 \times \HH_2$. We have $F(x) = \norm{\mathcal{T}x - y}^2_{ \HH_1 \times \HH_2}$.
	Thus, we need to show that there is a unique point $\tilde y \in \range(\mathcal T)$ that minimizes $\norm{\tilde y - y}^2_{\HH_1 \times \HH_2}$ and that $\tilde y = \mathcal{T}x^\star$ for  $x^\star = \mathcal{A}^* (\mathcal{A}\mathcal{A}^* + \lambda \mathcal{I}_{\HH_2})^{-1} b$.
	
	The operator $\mathcal{T}$ is a bounded linear operator, so it is continuous. We also have that for every $x\in \HH_1$, $\norm{\mathcal T x}^2_{ \HH_1 \times \HH_2} \geq \lambda \norm{x}^2_{\HH_1}$ where $\lambda > 0$, so $\mathcal T$ is bounded from below. So $\mathcal T$ has a closed range~\cite[Theorem 2.5]{AA02}. Thus, there is a unique $\tilde y \in \range(\mathcal T)$ that minimizes $\norm{\tilde y - y}^2_{\HH_1 \times \HH_2}$, and that $\tilde y$ is the unique element of $\range(\mathcal T)$ with the property $y - \tilde y \perp \range(\mathcal T)$~\cite[Theorem 6.13]{HunterNachtergaele:2001}. So it suffices to show that for every $x\in \HH_1$ we have $y - \mathcal Tx^\star \perp \mathcal Tx$. We compute:
	\begin{eqnarray*}
	\langle y - \mathcal Tx^\star, \mathcal Tx \rangle_{\HH_1 \times \HH_2} & = & \langle (-\sqrt{\lambda}\mathcal A^* (\mathcal A \mathcal A^* + \lambda \mathcal{I}_{\HH_2})^{-1} b,b-\mathcal A \mathcal A^* (\mathcal A\mathcal  A^* + \lambda \mathcal{I}_{\HH_2})^{-1} b), (\sqrt{\lambda} x,\mathcal Ax) \rangle_{\HH_1 \times \HH_2} \\
	& = & \langle (-\sqrt{\lambda}\mathcal A^* (\mathcal A \mathcal A^* + \lambda \mathcal{I}_{\HH_2})^{-1} b, \lambda (\mathcal A \mathcal A^* + \lambda \mathcal{I}_{\HH_2})^{-1} b) , (\sqrt{\lambda} x,\mathcal Ax) \rangle_{\HH_1 \times \HH_2} \\
	& = & -\lambda \langle  \mathcal A^* (\mathcal A \mathcal A^* + \lambda \mathcal{I}_{\HH_2})^{-1} b, x \rangle_{\HH_1} + \lambda \langle (\mathcal A \mathcal A^* + \lambda \mathcal{I}_{\HH_2})^{-1} b, \mathcal{A} x\rangle_{\HH_2} \\
	& = & -\lambda \langle  \mathcal A^*( \mathcal A \mathcal A^* + \lambda \mathcal{I}_{\HH_2})^{-1} b,  x \rangle_{\HH_1} + \lambda \langle \mathcal A ^*(A A^* + \lambda \mathcal{I}_{\HH_2})^{-1} b, x\rangle_{\HH_1} = 0.
	\end{eqnarray*}
	So indeed, for every $x\in \HH_1$ we have $y - \mathcal Tx^\star \perp \mathcal Tx$ and $x^\star$ is the unique minimizer.
\end{proof}

Using Lemma \ref{lem:ridgeMinimizer} we now proceed with the proof of Theorem \ref{thm:leverageProps}.
\begin{corollary}
	Let
	$$
	\beta^\star = \Fmu^* (\Gmu + \epsilon \Imu)^{-1} \varphi_t.
	$$
	Then,
	$$
	\frac{1}{T} \cdot \left( \frac{\norm{\Fmu \beta^\star - \varphi_t}_\mu^2}{\epsilon} + \norm{\beta^\star}_{T}^2 \right) =
	\frac{1}{T} \cdot \min_{\beta \in L_2(T)} \frac{\norm{\Fmu \beta - \varphi_t}_\mu^2}{\epsilon} + \norm{\beta}_{T}^2.
	$$
\end{corollary}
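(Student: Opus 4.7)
The plan is to derive this corollary as a direct application of Lemma \ref{lem:ridgeMinimizer}. First I would rescale the objective by a factor of $\epsilon$ (which is positive, so it does not change the argmin): minimizing $\frac{\norm{\Fmu \beta - \varphi_t}_\mu^2}{\epsilon} + \norm{\beta}_{T}^2$ over $\beta \in L_2(T)$ is equivalent to minimizing $\norm{\Fmu \beta - \varphi_t}_\mu^2 + \epsilon\norm{\beta}_{T}^2$.

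Next I would set up the hypotheses of Lemma \ref{lem:ridgeMinimizer} with $\HH_1 = L_2(T)$, $\HH_2 = L_2(\mu)$, $\mathcal{A} = \Fmu$, $b = \varphi_t$, and $\lambda = \epsilon > 0$. The operator $\Fmu$ is bounded (it is the Fourier transform operator introduced in Section \ref{sec:notation}, and its boundedness was noted earlier in the paper via the Gram operator $\Gmu = \Fmu\Fmu^*$ being trace-class and hence bounded), and $\varphi_t \in L_2(\mu)$ since $|\varphi_t(\xi)|=1$ and $\mu$ is a probability measure. The lemma then guarantees a unique minimizer given by $\mathcal{A}^*(\mathcal{A}\mathcal{A}^* + \lambda\mathcal{I}_{\HH_2})^{-1}b$, which in our notation reads $\Fmu^*(\Fmu\Fmu^* + \epsilon\Imu)^{-1}\varphi_t = \Fmu^*(\Gmu + \epsilon\Imu)^{-1}\varphi_t = \beta^\star$.

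Finally, since we only rescaled the objective by a positive constant, $\beta^\star$ is also the unique minimizer of the original functional, and evaluating both sides at $\beta^\star$ (followed by the $1/T$ normalization) yields the claimed equality. There is really no obstacle here: the only subtlety is confirming that the hypotheses of Lemma \ref{lem:ridgeMinimizer} apply in the infinite-dimensional setting, namely that $\Fmu$ is a bounded linear operator between Hilbert spaces and that the ridge parameter $\epsilon$ is strictly positive. Both hold by assumption, so the corollary follows.
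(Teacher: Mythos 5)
Your proof is correct and is precisely the intended argument: the paper states this as a corollary immediately following Lemma \ref{lem:ridgeMinimizer} without a separate proof, and the application you give (rescale by $\epsilon$, instantiate the lemma with $\mathcal{A}=\Fmu$, $b=\varphi_t$, $\lambda=\epsilon$) is exactly how it follows.
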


\begin{claim}
	We have
	$$
	\langle \varphi_t,  (\Gmu + \epsilon \Imu)^{-1}  \varphi_t \rangle_\mu =  \frac{\norm{\Fmu \beta^\star - \varphi_t}_\mu^2}{\epsilon} + \norm{\beta^\star}_{T}^2
	$$
	so the right hand side of \eqref{eq:InnerProdRep} is equal to the right hand side of \eqref{eq:MinRep}.
\end{claim}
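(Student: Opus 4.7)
The plan is direct computation using the closed-form expression for $\beta^\star$ given by the preceding corollary. First I would introduce the shorthand $\alpha \eqdef (\Gmu + \epsilon \Imu)^{-1} \varphi_t \in L_2(\mu)$, so that $\beta^\star = \Fmu^* \alpha$ and $\Fmu \beta^\star = \Fmu \Fmu^* \alpha = \Gmu \alpha$. This gives the key simplification
\begin{align*}
\Fmu \beta^\star - \varphi_t \;=\; \Gmu \alpha - (\Gmu + \epsilon \Imu)\alpha \;=\; -\epsilon\, \alpha,
\end{align*}
which is a trivial but crucial observation.

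From this identity the two summands on the right-hand side collapse into inner products against $\alpha$:
\begin{align*}
\frac{\norm{\Fmu \beta^\star - \varphi_t}_\mu^2}{\epsilon} \;=\; \epsilon\, \langle \alpha, \alpha \rangle_\mu, \qquad
\norm{\beta^\star}_T^2 \;=\; \langle \Fmu^* \alpha, \Fmu^* \alpha \rangle_T \;=\; \langle \alpha, \Gmu \alpha \rangle_\mu,
\end{align*}
where the second equality uses the defining adjoint relation between $\Fmu$ and $\Fmu^*$. Adding and factoring yields
\begin{align*}
\frac{\norm{\Fmu \beta^\star - \varphi_t}_\mu^2}{\epsilon} + \norm{\beta^\star}_T^2 \;=\; \langle \alpha, (\Gmu + \epsilon \Imu)\alpha \rangle_\mu.
\end{align*}

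Finally, substituting $\alpha = (\Gmu + \epsilon \Imu)^{-1} \varphi_t$ and using that $(\Gmu + \epsilon \Imu)^{-1}$ is self-adjoint (as $\Gmu$ is self-adjoint and positive semidefinite, making $\Gmu + \epsilon \Imu$ self-adjoint and strictly positive, hence with self-adjoint inverse), we obtain
\begin{align*}
\langle \alpha, (\Gmu + \epsilon \Imu)\alpha \rangle_\mu \;=\; \langle (\Gmu + \epsilon \Imu)^{-1} \varphi_t,\, \varphi_t \rangle_\mu \;=\; \langle \varphi_t,\, (\Gmu + \epsilon \Imu)^{-1} \varphi_t \rangle_\mu,
\end{align*}
which is the claimed equality. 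There is no real obstacle here: the only subtlety is ensuring $\alpha$ is well-defined, which follows because $\Gmu + \epsilon \Imu$ is bounded and strictly positive, hence invertible with bounded inverse (by the preliminaries in Appendix~\ref{app:op}).
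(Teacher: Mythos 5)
Your proof is correct and takes essentially the same approach as the paper: a direct algebraic verification starting from the closed form $\beta^\star = \Fmu^*(\Gmu + \epsilon\Imu)^{-1}\varphi_t$. The paper expands and cancels the two terms separately, whereas you introduce the shorthand $\alpha$ and notice the key simplification $\Fmu\beta^\star - \varphi_t = -\epsilon\alpha$ upfront, letting the sum factor cleanly as $\langle \alpha, (\Gmu+\epsilon\Imu)\alpha\rangle_\mu$ — a slightly tidier organization of the same computation.
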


\begin{proof}
	We compute:
	\begin{eqnarray*}
	\norm{\beta^\star}_{T}^2 &=  & \langle \Fmu^*(\Gmu + \epsilon \Imu)^{-1}  \varphi_t, \Fmu^* (\Gmu + \epsilon \Imu)^{-1}  \varphi_t \rangle_\mu\\
	& = & \langle(\Gmu + \epsilon \Imu)^{-1}  \varphi_t, \Gmu (\Gmu + \epsilon \Imu)^{-1}  \varphi_t \rangle_\mu \\
	& = & \langle(\Gmu + \epsilon \Imu)^{-1}  \varphi_t, (\Gmu + \epsilon \Imu - \epsilon \Imu)(\Gmu + \epsilon \Imu)^{-1}  \varphi_t \rangle_\mu \\
	& = & \langle \varphi_t,  (\Gmu + \epsilon \Imu)^{-1}  \varphi_t \rangle_\mu - \epsilon \langle \varphi_t,  (\Gmu + \epsilon \Imu)^{-2}  \varphi_t \rangle_\mu
	\end{eqnarray*}
	and
	\begin{eqnarray*}
	\norm{\Fmu \beta^\star - \varphi_t}_\mu^2 & = &\norm{\Fmu\Fmu^*(\Gmu + \epsilon \Imu)^{-1} \varphi_t - \varphi_t}_\mu^2 \\
	& = & \norm{\left(\Gmu(\Gmu + \epsilon \Imu)^{-1} -\Imu \right)\varphi_t}_\mu^2 \\
	& = & \norm{\left((\Gmu + \epsilon \Imu - \epsilon \Imu) (\Gmu + \epsilon \Imu)^{-1} -\Imu \right)\varphi_t}_\mu^2 \\
	& = & \norm{\epsilon (\Gmu + \epsilon \Imu)^{-1} \varphi_t}_\mu^2 \\
	& = & \epsilon^2 \langle \varphi_t,  (\Gmu + \epsilon \Imu)^{-2}  \varphi_t \rangle_\mu
	\end{eqnarray*}
	Summing the last equalities completes the proof.
\end{proof}

\begin{claim}
	We have
	$$
		\langle \varphi_t,  (\Gmu + \epsilon \Imu)^{-1}  \varphi_t \rangle_\mu = \max_{\{\alpha \in L_2(\mu):\, \norm{\alpha}_\mu > 0\}} \frac{|[\Fmu^* \alpha](t) |^2 }{\norm{\Fmu^* \alpha}_T^2 + \epsilon \norm{\alpha}_\mu^2}
	$$
	so the right hand side of \eqref{eq:InnerProdRep} is equal to the right hand side of \eqref{eq:MaxRep}.
\end{claim}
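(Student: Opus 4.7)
The plan is to rewrite the ratio in the maximization using inner product language and then reduce to a standard Cauchy--Schwarz (Rayleigh-quotient) identity on the Hilbert space $L_2(\mu)$. First, I would note that by the definition of $\Fmu^*$ in \eqref{eq:mu_transform2}, the pointwise value can be written as an inner product against $\varphi_t$:
\begin{equation*}
[\Fmu^* \alpha](t) = \int_{\RR} \alpha(\xi) e^{2\pi i t \xi}\, d\mu(\xi) = \langle \varphi_t, \alpha \rangle_\mu.
\end{equation*}
Similarly, since $\Gmu = \Fmu \Fmu^*$, one has $\|\Fmu^* \alpha\|_T^2 = \langle \alpha, \Gmu \alpha \rangle_\mu$. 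Thus, abbreviating $\mathcal{M} \eqdef \Gmu + \epsilon \Imu$, the right-hand side of \eqref{eq:MaxRep} becomes
\begin{equation*}
\max_{\{\alpha \in L_2(\mu):\,\|\alpha\|_\mu > 0\}} \frac{|\langle \varphi_t, \alpha \rangle_\mu|^2}{\langle \alpha, \mathcal{M}\, \alpha \rangle_\mu}.
\end{equation*}

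Next, I would use that $\mathcal{M}$ is a bounded, self-adjoint, strictly positive operator (since $\Gmu \succeq 0$ and $\epsilon > 0$), so by the discussion in Appendix~\ref{app:op} it admits a bounded, invertible, self-adjoint square root $\mathcal{M}^{1/2}$ with bounded inverse $\mathcal{M}^{-1/2}$. Making the substitution $\beta = \mathcal{M}^{1/2} \alpha$ (a bijection on $L_2(\mu)$ preserving nonzero norm), the ratio becomes
\begin{equation*}
\frac{|\langle \mathcal{M}^{-1/2} \varphi_t, \beta \rangle_\mu|^2}{\|\beta\|_\mu^2},
\end{equation*}
where I have used self-adjointness of $\mathcal{M}^{-1/2}$ to move it onto $\varphi_t$. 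By the standard Cauchy--Schwarz inequality in $L_2(\mu)$, this ratio is bounded above by $\|\mathcal{M}^{-1/2} \varphi_t\|_\mu^2$, with equality attained by choosing $\beta$ proportional to $\mathcal{M}^{-1/2}\varphi_t$ (equivalently, $\alpha \propto \mathcal{M}^{-1}\varphi_t$). Finally, $\|\mathcal{M}^{-1/2}\varphi_t\|_\mu^2 = \langle \varphi_t, \mathcal{M}^{-1}\varphi_t\rangle_\mu$ by self-adjointness, giving exactly $\langle \varphi_t, (\Gmu + \epsilon \Imu)^{-1} \varphi_t\rangle_\mu$, as desired.

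The only subtle point is ensuring that the supremum is attained (so ``$\max$'' is justified) and that the substitution is well-defined. Both follow because $\mathcal{M}^{1/2}$ and $\mathcal{M}^{-1/2}$ are bounded bijections on $L_2(\mu)$, so $\mathcal{M}^{-1}\varphi_t \in L_2(\mu)$ is a valid maximizer. No infinite-dimensional technicality beyond the existence of the bounded inverse square root (already established in Appendix~\ref{app:op}) is needed, so I do not expect any real obstacle here.
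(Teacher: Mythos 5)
Your proof is correct, but it takes a cleaner and more self-contained route than the paper does. The paper establishes the maximization characterization \eqref{eq:MaxRep} \emph{via} the minimization characterization \eqref{eq:MinRep}: it reformulates the inner product as the optimum of a constrained least-squares problem (using the ridge-minimizer lemma), applies Cauchy--Schwarz twice against that optimal $(\beta^\star, u^\star)$ pair to upper-bound the ratio, and then checks attainment at $\alpha^\star = (\Gmu + \epsilon\Imu)^{-1}\varphi_t$. You instead rewrite the ratio directly as a generalized Rayleigh quotient $|\langle\varphi_t,\alpha\rangle_\mu|^2 / \langle\alpha,\mathcal{M}\alpha\rangle_\mu$ with $\mathcal{M} = \Gmu + \epsilon\Imu$, perform the change of variable $\beta = \mathcal{M}^{1/2}\alpha$ (justified because $\mathcal{M} \succeq \epsilon\Imu$ is bounded, self-adjoint, and strictly positive, so its square root is a bounded bijection with bounded self-adjoint inverse, as established in the operator preliminaries), and apply plain Cauchy--Schwarz once, with the same maximizer $\alpha^\star = \mathcal{M}^{-1}\varphi_t$ falling out. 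Your argument avoids invoking the min-characterization machinery, so it stands independently and is arguably the ``textbook'' proof of a generalized Rayleigh-quotient identity; the paper's route has the advantage of exhibiting the max and min characterizations as two faces of the same constrained optimization, which is thematically aligned with the regression perspective driving the rest of the section, but logically your shortcut is just as rigorous.
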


\begin{proof}
	We can reformulate the previous claim as
	:\begin{equation*}
	\begin{aligned}
	\langle \varphi_t,  (\Gmu + \epsilon \Imu)^{-1}  \varphi_t \rangle_\mu  = & {\text{ minimum}}
	& & \norm{\beta}_\mu^2 + \norm{u}^2_T \\
	& \beta \in L_2(\mu) ;
	& & u \in L_2(T)\\
	& \text{subject to:}
	& & \Fmu \beta + \sqrt{\epsilon}u = \varphi_t.
	\end{aligned}
	\end{equation*}
	Let the optimal solution be $\beta^\star$ and $u^\star$. We have $\varphi_t = \Fmu \beta^\star + \sqrt{\epsilon}u^\star$, hence for any $0\neq\alpha\in L_2(\mu)$:
	\begin{eqnarray*}
	|[\Fmu^* \alpha](t) | & = & | \langle \varphi_t, \alpha \rangle_\mu | \\
	& = & | \langle \alpha, \varphi_t \rangle_\mu | \\
	& = & | \langle \alpha, \Fmu \beta^\star + \sqrt{\epsilon}u^\star \rangle_\mu | \\
	& \leq & | \langle \alpha, \Fmu \beta^\star \rangle_\mu | + | \langle \alpha, \sqrt{\epsilon}u^\star \rangle_\mu | \\
	& = & | \langle \Fmu^* \alpha, \beta^\star \rangle_T | + | \langle \alpha, \sqrt{\epsilon}u^\star \rangle_\mu | \\
	& \leq & \norm{(\Fmu^*\alpha)}_T \cdot \norm{\beta^\star}_T  + \sqrt{\epsilon} \norm{\alpha}_\mu \cdot \norm{u^\star}_\mu
	\end{eqnarray*}
	where the last inequality follows from Cauchy-Schwarz inequality. Using Cauchy-Schwarz again:
	\begin{eqnarray*}
		|[\Fmu^* \alpha](t) |^2 & \leq &\left( \norm{(\Fmu^*\alpha)}_T \cdot \norm{\beta^\star}_T  + \sqrt{\epsilon} \norm{\alpha}_\mu \cdot \norm{u^\star}_\mu\right)^2 \\
		& \leq & \left(\norm{\Fmu^*\alpha}^2_T   + \epsilon \norm{\alpha}^2_\mu \right) \cdot \left(\norm{\beta^\star}^2_T  + \norm{u^\star}^2_\mu \right)
	\end{eqnarray*}
	So for every $0\neq\alpha\in L_2(\mu)$:
	$$
	\frac{|[\Fmu^* \alpha](t) |^2 }{\norm{\Fmu^* \alpha}_T^2 + \epsilon \norm{\alpha}_\mu^2} \leq \norm{\beta^\star}^2_T  + \norm{u^\star}^2_\mu = \langle \varphi_t,  (\Gmu + \epsilon \Imu)^{-1}  \varphi_t \rangle_\mu
	$$
	We conclude by showing that the maximum value is attained. Let $\alpha^\star = (\Gmu + \epsilon \mathcal{I}_\epsilon)^{-1} \varphi_t$. We have
	$$
	\norm{\Fmu^* \alpha^\star}_T^2 + \epsilon \norm{\alpha^\star}_\mu^2 = \langle \alpha^\star, (\Gmu + \epsilon \mathcal{I}_\epsilon) \alpha^\star\rangle = \langle \varphi_t, (\Gmu + \epsilon \mathcal{I}_\epsilon)^{-1} \varphi_t \rangle_\mu
	$$ and finally,
	\begin{equation*}
	\frac{|[\Fmu^* \alpha^\star](t) |^2 }{\norm{\Fmu^* \alpha^\star}_T^2 + \epsilon \norm{\alpha}_\mu^2} = \frac{| \langle \varphi_t, \alpha^\star \rangle_\mu |^2}{\langle \varphi_t, (\Gmu + \epsilon \mathcal{I}_\epsilon)^{-1} \varphi_t \rangle_\mu} = \langle \varphi_t,  (\Gmu + \epsilon \Imu)^{-1}  \varphi_t \rangle_\mu.
	\end{equation*}
	
\end{proof}

We now turn to showing that the leverage function integrates to the statistical dimension.
\begin{claim}
	\label{claim:stat-int}
		\begin{align*}
		\int_0^T \tmu(t) dt= \smu.
		\end{align*}
\end{claim}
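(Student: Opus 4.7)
The plan is to combine the inner-product characterization \eqref{eq:InnerProdRep} already proved with the integral representation \eqref{eq:Gint} of $\Gmu$ and then apply Claim \ref{claim:int-inner-to-trace} to recognize the resulting integral as a trace.

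First, starting from \eqref{eq:InnerProdRep} we write
\begin{equation*}
\int_0^T \tmu(t)\, dt = \frac{1}{T}\int_0^T \langle \varphi_t, (\Gmu + \epsilon\Imu)^{-1} \varphi_t\rangle_\mu\, dt.
\end{equation*}
Next, I would invoke Claim \ref{claim:int-inner-to-trace} with $\rho = \mu$, with the self-adjoint bounded operator ${\cal A} = (\Gmu + \epsilon\Imu)^{-1}$, and with ${\cal B} = \Gmu = \frac{1}{T}\int_0^T (\varphi_t\otimes\varphi_t)\, dt$ (which is the identity \eqref{eq:Gint} given earlier in the proof). To apply the claim I need to verify its hypotheses: $\Gmu$ is trace-class as noted in Section \ref{sec:notation}; $(\Gmu+\epsilon\Imu)^{-1}$ is bounded and self-adjoint, and since $\Gmu$ is compact (being trace-class) it admits an orthonormal eigenbasis of $L_2(\mu)$, and this same basis diagonalizes $(\Gmu+\epsilon\Imu)^{-1}$. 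This gives
\begin{equation*}
\frac{1}{T}\int_0^T \langle \varphi_t, (\Gmu+\epsilon\Imu)^{-1}\varphi_t\rangle_\mu\, dt = \tr\bigl((\Gmu+\epsilon\Imu)^{-1}\Gmu\bigr).
\end{equation*}

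Finally, I would identify the right-hand side with $\smu$. Using the cyclicity of the trace for trace-class $\times$ bounded operators (justified because $\Gmu$ is trace-class and $(\Gmu+\epsilon\Imu)^{-1}$ is bounded), we have $\tr((\Gmu+\epsilon\Imu)^{-1}\Gmu) = \tr(\Gmu(\Gmu+\epsilon\Imu)^{-1})$. Because $\Gmu = \Fmu\Fmu^*$ and $\Kmu = \Fmu^*\Fmu$ share the same nonzero eigenvalues with the same multiplicities, applying Lidskii's theorem to both trace-class operators $\Gmu(\Gmu+\epsilon\Imu)^{-1}$ and $\Kmu(\Kmu+\epsilon\mathcal{I}_T)^{-1}$ (whose nonzero eigenvalues are $\lambda_i/(\lambda_i+\epsilon)$ for the common nonzero eigenvalues $\lambda_i$) yields $\tr(\Gmu(\Gmu+\epsilon\Imu)^{-1}) = \tr(\Kmu(\Kmu+\epsilon\mathcal{I}_T)^{-1}) = \smu$, completing the proof.

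The main (minor) obstacle is just bookkeeping: verifying the eigenbasis hypothesis of Claim \ref{claim:int-inner-to-trace} and the trace equality under $\Fmu\Fmu^*\leftrightarrow\Fmu^*\Fmu$. Neither step requires any new idea beyond the operator-theoretic preliminaries already developed in Appendix \ref{app:op}.
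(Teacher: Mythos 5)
Your proposal is correct and follows essentially the same route as the paper's proof: the paper likewise appeals to the already-established inner-product characterization \eqref{eq:InnerProdRep}, the representation \eqref{eq:Gint} of $\Gmu$, and Claim \ref{claim:int-inner-to-trace} to obtain $\tr((\Gmu+\epsilon\Imu)^{-1}\Gmu)$, and then concludes by observing that $\Gmu$ and $\Kmu$ share the same nonzero eigenvalues. You merely spell out the bookkeeping (eigenbasis hypothesis of Claim \ref{claim:int-inner-to-trace}, cyclicity of the trace, Lidskii) a bit more explicitly than the paper does.
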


\begin{proof}
	It follows from Eq.~\eqref{eq:Gint} and Claim~\ref{claim:int-inner-to-trace}  that $\int_0^T \tmu(t) = \tr( (\Gmu + \epsilon \Imu)^{-1}\Gmu)$. The claim follows by noting that ${\cal K}_\mu$ and ${\cal G}_\mu$ have the same eigenvalues (both operators are compact self adjoint operators, so their spectrum consists of only eigenvalues, and it is easy to verify that if $x$ is an eigenvector $\Kmu$ then $\Fmu x$ is eigenvector of $\Gmu$).

\end{proof}
We thus have completed the proof of Theorem \ref{thm:leverageProps}.
\end{proof}

\subsection{Operator Approximation via Leverage Score Sampling}

Analogs of the following concentration result are well known for matrices. Accordingly, the proof is an adaptation of standard proofs for finite matrix approximation by leverage score sampling, where matrix concentration results are replaced with operator concentration results. A similar strategy was employed in~\cite{Bach:2017}.

\begin{lemma}\label{lem:operatorApproximation}
	Consider the conditions of Theorem \ref{thm:baseSampling}, and denote $\widehat{\cal G}_\mu = \bv{F} \bv{F}^*$. Let $\Delta \leq 1/2$ and $\epsilon \le \opnorm{\Gmu}$. If $s \geq \frac{8}{3}\Delta^{-2}\tilde{s}_{\mu,\epsilon}\ln(16\tilde{s}_{\mu,\epsilon}^2/\delta)$, then
	\begin{equation}
	\label{eq:spectral}
	(1-\Delta)({\cal G}_\mu + \epsilon \Imu) \preceq \widehat{\cal G}_\mu + \epsilon\Imu \preceq 	(1+\Delta)({\cal G}_\mu +\epsilon \Imu)
	\end{equation}
	with probability of at least $1-\delta$.
\end{lemma}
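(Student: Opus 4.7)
The plan is to reduce the claim to a statement about a sum of independent random rank-one operators, to which we can apply the operator concentration inequality of Lemma \ref{lem:op-concentrate}. Recall that $\bv{F}\bv{F}^* = \sum_{j=1}^s w_j^2 (\varphi_{t_j} \otimes \varphi_{t_j})$, and that by \eqref{eq:Gint} we have ${\cal G}_\mu = \frac{1}{T}\int_0^T (\varphi_t \otimes \varphi_t)\,dt$. Since $t_j$ is drawn from the density $\ttmu(t)/\tsmu$ on $[0,T]$ and $w_j^2 = \frac{\tsmu}{sT\,\ttmu(t_j)}$, the natural symmetrizing move is to ``whiten'' by $({\cal G}_\mu + \epsilon \Imu)^{-1/2}$. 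Define, for $j = 1,\dots,s$,
\[
R_j \eqdef s \cdot ({\cal G}_\mu + \epsilon \Imu)^{-1/2}\, w_j^2(\varphi_{t_j}\otimes \varphi_{t_j})\,({\cal G}_\mu + \epsilon \Imu)^{-1/2}
      = \frac{\tsmu}{T\,\ttmu(t_j)}\,(u_{t_j}\otimes u_{t_j}),
\]
where $u_t \eqdef ({\cal G}_\mu + \epsilon \Imu)^{-1/2}\varphi_t$. Then $\bar R_s \eqdef \frac{1}{s}\sum_{j=1}^s R_j = ({\cal G}_\mu + \epsilon \Imu)^{-1/2}\,\widehat{\cal G}_\mu\,({\cal G}_\mu + \epsilon \Imu)^{-1/2}$, and a short computation using Claim \ref{claim:linear-weak-integral} shows $\E[R_j] = {\cal B} \eqdef ({\cal G}_\mu + \epsilon\Imu)^{-1/2}{\cal G}_\mu({\cal G}_\mu + \epsilon\Imu)^{-1/2}$.

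The next step is to verify the two quantitative hypotheses of Lemma \ref{lem:op-concentrate}. Using the inner-product characterization of the leverage function from Theorem \ref{thm:leverageProps}, one has $\|u_t\|_\mu^2 = T\,\tmu(t)$, so
\[
\|R_j\|_{\mathrm{op}} = \frac{\tsmu}{T\,\ttmu(t_j)}\|u_{t_j}\|_\mu^2 = \tsmu\cdot\frac{\tmu(t_j)}{\ttmu(t_j)} \le \tsmu \eqdef L,
\]
since by assumption $\ttmu \ge \tmu$ pointwise. For the second moment, observe that $R_j$ is rank one, so $R_j^2 = \|R_j\|_{\mathrm{op}}\cdot R_j$, and the same leverage bound yields $R_j^2 \preceq \tsmu \cdot R_j$. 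Taking expectations,
\[
\E[R_j^2] \preceq \tsmu\cdot{\cal B} \eqdef {\cal M}.
\]
Since $\epsilon \le \opnorm{{\cal G}_\mu}$, the top eigenvalue of ${\cal B}$ lies in $[1/2,1)$, so $\tsmu/2 \le \opnorm{{\cal M}} \le \tsmu$; moreover $\tr({\cal M}) = \tsmu\cdot\smu$, hence $\tr({\cal M})/\opnorm{{\cal M}} \le 2\smu \le 2\tsmu$.

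Now Lemma \ref{lem:op-concentrate} applied to $\{R_j\}$ with $t = \Delta$ gives
\[
\Pr\bigl(\|\bar R_s - {\cal B}\|_{\mathrm{op}} > \Delta\bigr)
   \le \frac{8\tr({\cal M})}{\opnorm{{\cal M}}}\exp\!\left(\frac{-s\Delta^2/2}{\opnorm{{\cal M}}+2L\Delta/3}\right)
   \le 16\,\tsmu^2 \exp\!\left(\frac{-3 s \Delta^2}{8\tsmu}\right)
\]
for $\Delta \le 1/2$, where we used $\opnorm{{\cal M}}+2L\Delta/3 \le \tsmu(1+2\Delta/3) \le 4\tsmu/3$. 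This is bounded by $\delta$ precisely when $s \ge \tfrac{8}{3}\Delta^{-2}\tsmu\ln(16\tsmu^2/\delta)$, which is the hypothesis. One should also verify that the side condition $\Delta > \sqrt{\opnorm{{\cal M}}/s}+2L/(3s)$ of Lemma \ref{lem:op-concentrate} is implied by this sample size, which follows from the same inequalities after absorbing constants.

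Finally, on the event $\|\bar R_s - {\cal B}\|_{\mathrm{op}} \le \Delta$, conjugating $-\Delta\,\Imu \preceq \bar R_s - {\cal B} \preceq \Delta\,\Imu$ by $({\cal G}_\mu+\epsilon\Imu)^{1/2}$ yields
\[
-\Delta({\cal G}_\mu+\epsilon\Imu) \preceq \widehat{\cal G}_\mu - {\cal G}_\mu \preceq \Delta({\cal G}_\mu+\epsilon\Imu),
\]
which upon adding $\epsilon\Imu$ to all three terms is exactly \eqref{eq:spectral}. The main technical obstacle is the verification that our setup fits into the operator-concentration framework: in particular ensuring the relevant weak-integral/trace-class bookkeeping (that $R_j$ is Hilbert–Schmidt almost surely, that $\E[R_j]$ and $\E[R_j^2]$ exist as weak integrals, and that ${\cal M}$ is trace class), which is handled by Claims \ref{claim:linear-weak-integral}–\ref{claim:dominated-expectation2}. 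Everything else is a transcription of the standard finite-dimensional leverage-score spectral-sparsification argument into the operator setting.
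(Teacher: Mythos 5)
Your proposal is correct and follows essentially the same route as the paper: conjugate by $({\cal G}_\mu+\epsilon\Imu)^{-1/2}$ to reduce to an operator-norm deviation bound, define the whitened rank-one samples, bound their operator norm by $\tsmu$ via the inner-product characterization of the leverage function and the assumption $\ttmu \ge \tmu$, bound the second moment by $\tsmu\cdot{\cal B}$, and invoke Lemma \ref{lem:op-concentrate}. The only minor point worth noting is that your computation of $\opnorm{{\cal M}} \in [\tsmu/2, \tsmu]$ is actually cleaner than the paper's display (which drops a factor of $\tsmu$), though both lead to the same final bound of $16\tsmu^2\exp(-3s\Delta^2/8\tsmu) \le \delta$.
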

\begin{proof}
	
	The condition~\eqref{eq:spectral} is equivalent to
	\begin{equation*}
	{\cal G}_\mu -\Delta({\cal G}_\mu + \epsilon \Imu) \preceq \widehat{\cal G}_\mu \preceq {\cal G}_\mu + \Delta({\cal G}_\mu + \epsilon \Imu)
	\end{equation*}
	By composing with $({\cal G}_\mu + \epsilon \Imu)^{-1/2}$ on the left and right, we find that the condition is equivalent to \ifdraft\Haim{\cite[Proposition 8.26]{HunterNachtergaele:2001}}\fi
	\begin{equation}
	\opnorm{({\cal G}_\mu + \epsilon \Imu)^{-1/2}(\widehat{\cal G}_\mu  -  {\cal G}_\mu)({\cal G}_\mu + \epsilon \Imu)^{-1/2}} \leq \Delta.
	\end{equation}
	Noticing that
	$$
	\bv{F} g = \sum^s_{j=1} w_j g(j) \varphi_{t_j}
	$$
	and that
	$$
	[\bv{F}^* g](j) = w_j \langle \varphi_{t_j}, g \rangle_\mu
	$$
	we understand that $\widehat{{\cal G}}_\mu = \sum^s_{j = 1} w^2_j(\varphi_{t_j}\otimes\varphi_{t_j})$.
	Let
	$${\cal X}_j \eqdef s w^2_j ({\cal G}_\mu + \epsilon \Imu)^{-1/2} (\varphi_{t_j} \otimes \varphi_{t_j})({\cal G}_\mu + \epsilon \Imu)^{-1/2}.$$
	Note that ${\cal X}_j$ is self-adjoint and Hilbert-Schmidt (since it has finite rank). We have
	$$
	({\cal G}_\mu + \Imu)^{-1/2}\widehat{\cal G}_\mu({\cal G}_\mu + \Imu)^{-1/2} = \frac{1}{s}\sum^s_{j=1} {\cal X}_j.
	$$
	Since the time samples are drawn randomly, ${\cal X}_1, \dots, {\cal X}_s$ are random operators.
	We also have, using Claim~\ref{claim:linear-weak-integral},
	\begin{align*}
	\mathbb{E}_{t_j \propto \tilde{\tau}_{\mu,\epsilon}}[{\cal X}_j] &= ({\cal G}_\mu + \epsilon \Imu)^{-1/2} 	\mathbb{E}_{t_j \propto \tilde{\tau}_{\mu,\epsilon}} \left[ s w^2_j (\varphi_{t_j} \otimes \varphi_{t_j}) \right] ({\cal G}_\mu + \epsilon \Imu)^{-1/2}.
	\end{align*}
	Write $w(t) = \sqrt{\frac{\tilde{s}_{\mu,\epsilon}}{T\cdot \ttmu(t)}}$. For every $x,z\in L_2(\mu)$,
	$$
	\int^T_0 \left\langle x, w(t)^2\cdot(\varphi_t \otimes \varphi_t)z\right\rangle_\mu\cdot(\ttmu(t)/\tsmu)dt =
	\frac{1}{T} \int^T_0 \left\langle x, (\varphi_t \otimes \varphi_t)z\right\rangle_\mu dt = \langle x, \Gmu z\rangle_\mu
	$$
	which shows that
	$$
    \E_{t_j \propto \ttmu} \left[ s w^2_j (\varphi_{t_j} \otimes \varphi_{t_j}) \right] = \Gmu
    $$
	so,
	\begin{equation}
	\E_{t_j \propto \ttmu}[{\cal X}_j] = ({\cal G}_\mu + \epsilon \Imu)^{-1/2} {\cal G}_\mu ({\cal G}_\mu + \epsilon \Imu)^{-1/2}.
	\end{equation}
	
	Next, we bound the operator norm of ${\cal X}_j$. The random operator only takes values that are both positive semidefinite and rank one, so the operator norm of ${\cal X}_j$ is equal to the trace of the operator. Thus, we have
	\begin{align*}
\opnorm{{\cal X}_j} &= s w_j^2 \tr\left( ({\cal G}_\mu + \epsilon \Imu)^{-1/2} (\varphi_{t_j} \otimes \varphi_{t_j})({\cal G}_\mu + \epsilon \Imu)^{-1/2} \right)\\
&= \frac{\tsmu}{\ttmu(t_j)} \cdot \frac{1}{T} \tr\left( ({\cal G}_\mu + \epsilon \Imu)^{-1} (\varphi_{t_j} \otimes \varphi_{t_j})  \right)\\
&= \frac{\tsmu}{\ttmu(t_j)} \cdot \tmu(t_j)\tag{via Theorem \ref{thm:leverageProps}, equation \eqref{eq:InnerProdRep}.}\\
&\le \tsmu
	\end{align*}
	where the last line follows since $\ttmu(t_j) \ge \tmu(t_j)$ by assumption.

	The final ingredient for applying Lemma~\ref{lem:op-concentrate} is to bound ${\cal X}^2_j$. Again, using the fact that $\ttmu(t_j) \ge \tmu(t_j)$ we have:
	\begin{align*}
	{\cal X}^2_j =& \frac{\tsmu^2}{T^2 \cdot \ttmu(t_j)^2} ({\cal G}_\mu + \epsilon \Imu)^{-1/2} (\varphi_{t_j} \otimes \varphi_{t_j})({\cal G}_\mu + \epsilon \Imu)^{-1} (\varphi_{t_j} \otimes \varphi_{t_j})({\cal G}_\mu + \epsilon \Imu)^{-1/2} \\
	= & \frac{\tsmu^2 \cdot \langle \varphi_{t_j}, ({\cal G}_\mu + \epsilon\Imu)^{-1}\varphi_{t_j}\rangle_\mu}{T^2 \cdot \ttmu (t_j)^2 }
	({\cal G}_\mu + \epsilon\Imu)^{-1/2} (\varphi_{t_j} \otimes \varphi_{t_j}) ({\cal G}_\mu + \epsilon\Imu)^{-1/2} \\
	= & \frac{\tsmu^2 \cdot \tmu(t_j)}{T \cdot \ttmu (t_j)^2 }
	({\cal G}_\mu + \epsilon\Imu)^{-1/2} (\varphi_{t_j} \otimes \varphi_{t_j}) ({\cal G}_\mu + \epsilon\Imu)^{-1/2} \\
	= & \frac{\tsmu \cdot \tmu(t_j)}{\ttmu (t_j)} {\cal X}_j \preceq \tsmu {\cal X}_j.
	\end{align*}
	So, using Claim~\ref{claim:dominated-expectation2},
	$$
	\E_{t_j \propto \ttmu}[{\cal X}^2_j] \preceq 	\E_{t_j \propto \ttmu}[\tsmu {\cal X}_j] = \tsmu ({\cal G}_\mu + \epsilon\Imu)^{-1/2} {\cal G}_\mu ({\cal G}_\mu + \epsilon\Imu)^{-1/2} \eqdef {\cal M}.
	$$
	Noticing that $\tr({\cal M}) = \tsmu \cdot \smu$ and $\opnorm{{\cal M}} = \frac{\opnorm{\Gmu}}{\opnorm{\Gmu} + \epsilon}\geq 1/2$ by our assumption that $\epsilon \le \opnorm{\Gmu}$, and  Lemma~\ref{lem:op-concentrate} we have:
	\begin{align*}
	\Pr\left(\opnorm{({\cal G}_\mu + \epsilon\Imu)^{-1/2}(\widehat{\cal G}_\mu  -  {\cal G}_\mu)({\cal G}_\mu + \epsilon\Imu)^{-1/2}} \geq \Delta \right)
	\leq& \frac{8\tr({{\cal M}})}{\opnorm{\cal M}}\exp\left(\frac{-s\Delta^2/2}{\opnorm{{\cal M}} + 2 \tsmu\Delta / 3}\right) \\
	\leq& 16\tsmu \cdot \smu\cdot\exp\left(\frac{-s\Delta^2}{ 2 \tsmu(1 +2\Delta / 3)}\right) \\
	\leq& 16\tsmu^2\cdot\exp\left(\frac{-3s\Delta^2}{ 8 \tsmu}\right) \leq \delta.
	\end{align*}
	
\end{proof}

\subsection{Approximate Discretization via Leverage Score Sampling}

With the operator approximation bound of Lemma \ref{lem:operatorApproximation} in place, we can prove Theorem \ref{thm:baseSampling}, which shows that we can approximately  solve  the regression problem of \eqref{eq:least_squares_setup} (and by Claim \ref{claim:regression_reduction} solve Problem \ref{prob:unformal_interp}) by sampling time domain points via over-approximations to their ridge leverage scores.

\begin{reptheorem}{thm:baseSampling}[Approximate Regression via Leverage Score Sampling]
Assume $\epsilon \le \opnorm{\Kmu}$ and
	consider a measurable $\ttmu(t)$ with $\ttmu(t)  \ge \tmu(t)$ for all $t$ and let $\tsmu  = \int_0^T \ttmu(t) dt$. 
	Let $s = c \cdot \left(\tsmu \cdot [\log (\tsmu) + 1/\delta]\right)$  for sufficiently large fixed constant $c$ and let $t_1,\ldots,t_s$ be time points selected by drawing each randomly from $[0,T]$  with probability proportional to $\ttmu(t)$. For $j \in 1,\ldots,s$ let $w_j =  \sqrt{\frac{1}{sT} \cdot \frac{\tsmu}{\ttmu(t_j)}} $. Let $\bv{F}: \CC^s \rightarrow L_2(\mu)$ be the operator defined by:
	\begin{align}\label{eq:fdef}
		\left[\bv{F} \,\bv{x}\right](\xi) = \sum_{j=1}^s w_j \cdot \bv{x}(j) \cdot e^{ - 2\pi i \xi t_j}
	\end{align}
	and $\bv{y},\bv{n} \in \RR^s$ be the vector with $\bv{y}(j) = w_j \cdot y(t_j)$ and $\bv{n}(j) = w_j \cdot n(t_j)$.
	With probability $\ge 1- \delta$:
	\begin{itemize}
	\item For any $\beta \geq 0$, if $\tilde{g}\in L_2(\mu)$ satisfies \footnote{We can see that the adjoint $\bv{F}^*: L_2(\mu) \rightarrow \CC^s$ is given by $\left[\bv{F}^* \,g\right](j) = w_j \cdot \int_{\RR} g(\xi) e^{2\pi i \xi t_j}\, d\mu(\xi)$.}
	\begin{align}
		\|\bv{F}^* \tilde{g} - (\bv y + \bv n)\|_2^2 + \epsilon\|\tilde{g}\|_\mu^2 \leq (1+\delta \beta)\cdot\min_{g \in L_2(\mu)}\left[  \|\bv{F}^* g - (\bv y+\bv n)\|_2^2 + \epsilon\|g\|_\mu^2\right]\label{eq:approxkrrRep},
	\end{align}
	then
	\begin{align}
		\|\Fmu^* \tilde{g} - (y+n)\|_T^2 + \epsilon\|\tilde{g}\|_\mu^2 \leq 3(1+2\beta)\cdot\min_{g \in L_2(\mu)}\left[  \|\Fmu^* g - (y+n)\|_T^2 + \epsilon\|g\|_\mu^2\right].
	\end{align}
	\end{itemize}
So $\tilde{g}$ provides an approximate solution to \eqref{eq:least_squares_setup} and by Claim \ref{claim:regression_reduction}, $\tilde{y} = \Fmu^* \tilde{g}$ solves Problem \ref{prob:unformal_interp} with parameter $\Theta(\epsilon)$.
\end{reptheorem}

\begin{proof}
Throughout the proof we will let $\bar y = y + n$ and $\bv{\bar y} = \bv{y} + \bv{n}$.
Let
$$g^\star \eqdef \argmin_{g\in L_2(\mu)}\left[  \|\Fmu^* g - \bar y\|_T^2 + \epsilon\|g\|_\mu^2\right].$$
By Lemma \ref{lem:ridgeMinimizer}, $g^\star = \Fmu (\Kmu + \lambda \mathcal{I}_T)^{-1} \bar y$.
Denote the optimal error as $b^\star \eqdef  \Fmu^* g^\star - \bar y $ and the optimal cost as $B^\star \eqdef  \|\Fmu^* g^\star - \bar y\|_T^2 + \epsilon\| g^\star\|_\mu^2$.

\subsubsection*{Reduction to Affine Embedding}

We prove that, for all $g \in L_2(\mu)$, ridge leverage score sampling lets us approximate the value of the objective function of \eqref{eq:least_squares_setup} when evaluated at $g$. In the randomized linear algebra literature, this is known as an \emph{affine embedding guarantee}. Specifically, we show that, with probability $\ge 1-\delta$, for all $g \in L_2(\mu)$,
\begin{align}\label{eq:affineEmbedding}
\frac{1}{2} \left (\norm{\Fmu^* g - \bar y}_T^2 + \epsilon \norm{g}_\mu^2\right)  \le \norm{\bv{F}^* g - \bv{\bar y}}_2^2 + \epsilon \norm{g}_\mu^2+ \alpha  \le \frac{3}{2} \left (\norm{\Fmu^* g - \bar y}_T^2 + \epsilon\norm{g}_\mu^2 \right)
\end{align}
where $\alpha$ is some fixed value independent of $g$ (but which depends on $\bv{F}$ and $\bv{\bar y}$) with $|\alpha| \le \frac{1}{\delta} \cdot  B^\star$.

It is not hard to see that \eqref{eq:affineEmbedding} gives the theorem. For any $\tilde g \in L_2(\mu)$ satisfying:
	\begin{align}\label{eq:approxDiscrete}
		\|\bv{F}^* \tilde{g} - \bv{\bar y}\|_2^2 + \epsilon\|\tilde{g}\|_\mu^2 \leq (1+\delta C)\cdot\min_{g \in L_2(\mu)}\left[  \|\bv{F}^* g - \bv{\bar y}\|_2^2 + \epsilon\|g\|_\mu^2\right],
	\end{align}
we can apply \eqref{eq:affineEmbedding} to give the main claim of the theorem:
\begin{align*}
\|\Fmu^* \tilde{g} - \bar y\|_T^2 + \epsilon\|\tilde{g}\|_\mu^2 &\le 2 \left (\norm{\bv{F}^* \tilde g - \bv{\bar y}}_2^2 + \epsilon \norm{\tilde g}_\mu^2 + \alpha \right )\tag{applying lower bound of \eqref{eq:affineEmbedding}}\\
& \le 2(1+\delta C) \cdot \min_{g \in L_2(\mu)}\left(  \|\bv{F}^*g - \bv{\bar y}\|_2^2 + \epsilon\|g\|_\mu^2 \right) + 2\alpha \tag{by assumption of \eqref{eq:approxDiscrete}}\\
& \le 2(1+\delta C) \cdot \left (\|\bv{F}^*  g^\star - \bv{\bar y}\|_2^2 + \epsilon\| g^\star\|_\mu^2\right) + 2\alpha\\
&= 2(1+\delta C) \cdot \left (\|\bv{F}^*  g^\star - \bv{\bar y}\|_2^2 + \epsilon\| g^\star\|_\mu^2 + \alpha \right) - 2\delta C\alpha\\
& \le 3(1+\delta C) \cdot \left(\|\Fmu^*  g^\star - \bar y\|_F^2 + \epsilon\| g^\star\|_\mu^2\right) - 2\delta C\alpha \tag{upper bound of \eqref{eq:affineEmbedding}}\\
& \le [3(1+\delta C) + 2C] \cdot \left(\|\Fmu^*  g^\star - \bar y\|_F^2 + \epsilon\| g^\star\|_\mu^2\right)\tag{since $|\alpha| \le \frac{ B^\star}{\delta}$}\\
&\le 3(1+2C) \cdot\min_{g \in L_2(\mu)}\left[  \|\Fmu^* g - \bar y\|_T^2 + \epsilon\|g\|_\mu^2\right]. \tag{definition of $ g^\star$ as optimum}
\end{align*}
Thus, we focus our attention to proving that the affine embedding guarantee of \eqref{eq:affineEmbedding} holds with probability $\ge 1-\delta$.

\subsubsection*{Expression of Error in Terms of $g- g^\star$}

We begin by showing how, for any $g \in L_2(\mu)$, the cost $\norm{\Fmu^* g -\bar  y}_T^2 + \epsilon \norm{g}_\mu^2$ can be written as a function of the deviation from the optimum: $g- g^\star$.
\begin{claim}[Expression for Excess Cost]\label{clm:affineDecomp}
For any $g \in L_2(\mu)$:
\begin{align*}
\norm{\Fmu^*g -\bar y}_T^2 + \epsilon \norm{g}_\mu^2  = \norm{\Fmu^* (g- g^\star)}_T^2 + \epsilon \norm{g -  g^\star}_\mu^2 +  B^\star,
\end{align*}
recalling that $ B^\star \eqdef \norm{\Fmu^*  g^\star - \bar y}_T^2 + \epsilon \norm{ g^\star}_\mu^2$ is the optimum cost of the ridge regression problem.
\end{claim}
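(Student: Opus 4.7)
The claim is the standard bias--variance (Pythagorean) decomposition for regularized least squares, adapted to our operator setting. My plan is to expand the left-hand side by writing $g = (g-g^\star) + g^\star$ inside both the residual term and the regularizer, and then show that the cross terms vanish because $g^\star$ satisfies the normal equations. The key identity we need to extract from Lemma~\ref{lem:ridgeMinimizer} is the ``first-order optimality'' condition
\[
\Fmu(\Fmu^* g^\star - \bar y) + \epsilon\, g^\star = 0 \quad\text{in }L_2(\mu),
\]
which is equivalent to $(\Gmu + \epsilon\,\Imu) g^\star = \Fmu\,\bar y$.

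First I would verify this normal equation. Lemma~\ref{lem:ridgeMinimizer} (applied with $\mathcal{A}=\Fmu^*$, so $\mathcal{A}\mathcal{A}^* = \Fmu^*\Fmu = \Kmu$) gives $g^\star = \Fmu(\Kmu + \epsilon\,\mathcal{I}_T)^{-1}\bar y$. Since $\Gmu\Fmu = \Fmu\Fmu^*\Fmu = \Fmu\Kmu$, we have the intertwining identity $(\Gmu + \epsilon\,\Imu)\Fmu = \Fmu(\Kmu + \epsilon\,\mathcal{I}_T)$, and composing with $(\Kmu + \epsilon\,\mathcal{I}_T)^{-1}$ on the right yields $(\Gmu + \epsilon\,\Imu) g^\star = \Fmu\,\bar y$, which is exactly the normal equation above.

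Next I would carry out the algebraic expansion. Writing $h \eqdef g - g^\star$,
\[
\norm{\Fmu^* g - \bar y}_T^2 + \epsilon\norm{g}_\mu^2
= \norm{\Fmu^* h + (\Fmu^* g^\star - \bar y)}_T^2 + \epsilon\norm{h + g^\star}_\mu^2.
\]
Expanding each squared norm by the polarization identity produces three groups of terms: the ``deviation'' terms $\norm{\Fmu^* h}_T^2 + \epsilon\norm{h}_\mu^2$, the ``optimal cost'' $B^\star = \norm{\Fmu^* g^\star - \bar y}_T^2 + \epsilon\norm{g^\star}_\mu^2$, and the cross term
\[
2\Re\bigl\langle \Fmu^* h,\,\Fmu^* g^\star - \bar y\bigr\rangle_T + 2\epsilon\,\Re\langle h, g^\star\rangle_\mu.
\]
Using the adjoint relation $\langle \Fmu^* h, u\rangle_T = \langle h, \Fmu u\rangle_\mu$, the cross term collapses to $2\Re\bigl\langle h,\ \Fmu(\Fmu^* g^\star - \bar y) + \epsilon\,g^\star\bigr\rangle_\mu$, which vanishes by the normal equation from the previous step. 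Collecting the surviving terms yields the claimed identity.

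There is essentially no obstacle here beyond careful bookkeeping with adjoints; the only subtle point is that we must justify working with the closed-form minimizer from Lemma~\ref{lem:ridgeMinimizer} rather than appealing to a calculus-of-variations argument, since $L_2(\mu)$ is infinite dimensional. Because the minimizer exists and is unique (Lemma~\ref{lem:ridgeMinimizer}), and because we have an explicit formula that we can plug into the normal equation and verify by operator algebra, this is straightforward.
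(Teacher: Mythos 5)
Your proof is correct, and it is essentially the same Pythagorean-decomposition argument the paper uses, but packaged differently. The paper lifts the objective to the product Hilbert space $L_2(\mu)\times L_2(T)$ via $\mathcal{T}g=(\sqrt{\epsilon}\,g,\Fmu^*g)$ and invokes the abstract orthogonal-projection characterization of the minimizer from Lemma~\ref{lem:ridgeMinimizer} (namely $(0,\bar y)-\mathcal{T}g^\star\perp\range(\mathcal{T})$), after which the decomposition is one line and never requires the closed form for $g^\star$. You instead stay in $L_2(\mu)$, derive the normal equation $\Fmu(\Fmu^*g^\star-\bar y)+\epsilon\,g^\star=0$ explicitly from the closed form and the intertwining identity $(\Gmu+\epsilon\Imu)\Fmu=\Fmu(\Kmu+\epsilon\mathcal{I}_T)$, and kill the cross term by the adjoint relation. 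The two are logically equivalent (the orthogonality condition in the product space, unpacked, is exactly your normal equation), so the difference is purely presentational: the paper's product-space formulation is marginally more economical since it avoids computing $g^\star$ at all, while yours is more self-contained and makes the normal equation visible, which some readers will find clearer. One minor note: you only argue that the real part of the cross term vanishes, which suffices for the claim, but in fact the full (complex) inner product $\langle h,\ \Fmu(\Fmu^*g^\star-\bar y)+\epsilon\,g^\star\rangle_\mu$ vanishes identically since the normal equation holds as an exact operator identity — worth stating so that it is clear nothing is being thrown away.
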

\begin{proof} Following Lemma \ref{lem:ridgeMinimizer} we define $\mathcal{T}:L_2(\mu) \to L_2(\mu) \times L_2(T),\quad \mathcal{T}g = (\sqrt{\epsilon} g,\Fmu^*g)$. For any $g$, $\norm{\Fmu^* g - \bar y}_T^2 + \epsilon \norm{g}_\mu^2 = \norm{\mathcal{T} g - (0,\bar y)}_{ L_2(\mu) \times L_2(T)}^2$. Again, as in Lemma \ref{lem:ridgeMinimizer} we know $g^\star$ is the unique minimizer of this function with the property that $(0,y) - \mathcal{T} g^\star \perp \range(\mathcal T)$~\cite[Theorem 6.13]{HunterNachtergaele:2001}.
We can thus decompose:
\begin{align*}
\norm{\Fmu^* g - \bar y}_T^2 + \epsilon \norm{g}_\mu^2 &= \norm{\mathcal{T} g - (0,\bar y)}_{ L_2(\mu) \times L_2(T)}^2\\
&= \norm{\mathcal{T} g^\star - (0,\bar y) + (\mathcal{T} g - \mathcal{T} g^\star)}_{ L_2(\mu) \times L_2(T)}^2\\
&=  \norm{\mathcal{T} g^\star - (0,\bar y)}_{ L_2(\mu) \times L_2(T)}^2 + \norm{\mathcal{T} (g - g^\star)}_{ L_2(\mu) \times L_2(T)}^2\\
& = B^\star + \norm{\Fmu^*(g- g^\star)}_T^2 + \epsilon \norm{g- g^\star}_\mu^2
\end{align*}
which gives the claim.
\end{proof}

\subsubsection*{Bounding The Sampling Error}
We now show that Claim \ref{clm:affineDecomp} holds approximately, even after sampling. This almost immediately  yields the affine embedding bound of \eqref{eq:affineEmbedding}.

Let $\tilde B \eqdef \norm{\bv{F}^*  g^\star - \bv{\bar y}}_2^2 + \epsilon \norm{ g^\star}_\mu^2$ be the error of the optimal solution in our randomly discretized regression problem.
We can write the discretized objective function value for any $g \in L_2(\mu)$ as:
\small
\begin{align}
\norm{\bv F^* g - \bv{\bar y}}_2^2 + \epsilon \norm{g}_\mu^2 &= \norm{\bv F^* (g- g^\star) + \bv F^*  g^\star - \bv{\bar y}}_2^2 + \epsilon \norm{ g^\star + (g- g^\star)}_\mu^2\nonumber\\
& = \tilde B + \norm{\bv F^*(g -  g^\star)}_2^2 + \epsilon \norm{g- g^\star}_\mu^2 + 2 \Re(\langle \bv{F}^*(g -  g^\star), \bv{F}^* g^\star - \bv{\bar y} \rangle) + 2\epsilon \Re(\langle (g- g^\star),  g^\star \rangle_\mu) \label{eq:break1}.
\end{align}
\normalsize

Let $\bFmu:  L_2(T) \times L_2(\mu) \rightarrow L_2(\mu)$ be the operator $\bFmu(f,g) = \Fmu f + \sqrt{\epsilon} \cdot g$. We can see that $\bFmu^*:  L_2(\mu) \rightarrow L_2(T) \times L_2(\mu)$ is given by $\bFmu^* g = (\Fmu^* g, \sqrt{\epsilon} \cdot g)$. Further, we see that $\bFmu \bFmu^* = \Gmu + \epsilon \Imu$. We can write:
\begin{align*}
\bFmu^* = \bFmu^* (\Gmu + \epsilon \Imu)^{-1} (\Gmu + \epsilon \Imu) = \bPmu \bFmu^*
\end{align*}
where $\bPmu = \bFmu^* (\Gmu + \epsilon \Imu)^{-1} \bFmu$. Note that $\bPmu$ is self adjoint. 
Correspondingly, let $\bv{\bar F}: \CC^s \times L_2(\mu) \rightarrow L_2(\mu)$ be given by $\bv{\bar F}(f,g) = \bv{F} f + \sqrt{\epsilon} \cdot g$. We have $\bv{\bar F}^*g = (\bv{F}^* g, \sqrt{\epsilon} \cdot g)$. We can also write $\bv{\bar P} = \bv{\bar F}^* (\Gmu + \epsilon \Imu)^{-1}\bFmu$, and observe that $\bv{\bar F}^* = \bv{\bar P} \bFmu^*$.

With this notation in place we can rewrite the last term of \eqref{eq:break1} as:
\begin{align}\label{eq:break11}
 \langle \bv{F}^*(g- g^\star), \bv{F}^* g^\star - \bv{\bar y} \rangle + \epsilon \langle (g- g^\star),  g^\star \rangle_\mu & = \langle \bv{ \bar F}^*(g- g^\star) , (\bv{ F^*}  g^\star - \bv{\bar y},\sqrt{\epsilon}   g^\star) \rangle_{\CC^s \times L_2(\mu)}\nonumber\\
&= \langle \bv{ \bar P} \bFmu^*(g- g^\star) ,(\bv{ F^*}  g^\star - \bv{\bar y},\sqrt{\epsilon}   g^\star) \rangle_{\CC^s \times L_2(\mu)}\nonumber \\
&=  \langle \bFmu^*(g- g^\star) , \bv{\bar P}^* (\bv{ F^*}  g^\star - \bv{\bar y},\sqrt{\epsilon}   g^\star)  \rangle_{L_2(T) \times L_2(\mu)}.
\end{align}
Using the fact that $\Re(z)\leq |z|$ for all $z\in\CC$, and applying Cauchy-Schwarz to \eqref{eq:break11} and plugging back into \eqref{eq:break1} we have:
\begin{align}
\norm{\bv F^* g - \bv{\bar y}}_2^2 + \epsilon \norm{g}_\mu^2 &\in  \tilde B + \norm{\bv F^*(g- g^\star)}_2^2 + \epsilon \norm{g- g^\star}_\mu^2  \nonumber\\
&\hspace{2em} \pm 2(\norm{\Fmu^*(g- g^\star)}_T + \epsilon \norm{g- g^\star}_\mu )\cdot \norm{\bv{\bar P}^* (\bv{F}^*  g^\star - \bv{\bar y},\sqrt{\epsilon}   g^\star)}_{L_2(T) \times L_2(\mu)}\label{eq:break12}.
\end{align}
We now bound $\norm{\bv{\bar P}^* (\bv{F}^*  g^\star - \bv{\bar y},\sqrt{\epsilon}   g^\star)}_{L_2(T) \times L_2(\mu)}$. \emph{If we had not sampled}, this would equal:
\begin{align}\label{eq:nonSampled}
\norm{\bPmu (\Fmu^*  g^\star - \bar y, \sqrt{\epsilon}   g^\star)}_{L_2(T) \times L_2(\mu)} = \norm{\bFmu^* (\Gmu + \epsilon \Imu)^{-1} \bFmu \left [\bFmu^*  g^\star - (\bar y,0)\right ]}_{L_2(T) \times L_2(\mu)} = 0
\end{align}
since $g^\star$ is the optimum of $\norm{\bFmu^* g - (\bar y,0)}_{L_2(T) \times L_2(\mu)}$ and thus $\bFmu^* g^\star - (\bar y,0)$ is orthogonal to $\range(\bFmu^*)$.
We will show that after sampling, while the norm no longer equals $0$, it is still small. The bound we give is analogous to standard approximate matrix multiplication results for finite dimensional matrices. Specifically, our proof follows that of Lemma 4 in \cite{DrineasKannanMahoney:2006}.
\begin{claim}[Approximate Operator Application]\label{clm:approxMatrixMult}
With probability $\ge 1-\delta$:
\begin{align*}
\norm{\bv{\bar P}^* (\bv{F}^*  g^\star - \bv{\bar y},\sqrt{\epsilon}   g^\star)}_{L_2(T) \times L_2(\mu)} \le \frac{1}{64} \cdot
 B^\star.
\end{align*}
\end{claim}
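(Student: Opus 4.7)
The plan is to recast the target quantity as a variance bound for a sum of i.i.d.\ mean-zero elements of $L_2(\mu)$ and then apply Markov's inequality, mirroring the standard approximate-matrix-multiplication argument of \cite{DrineasKannanMahoney:2006}. First, since $\bFmu \bFmu^* = \Gmu + \epsilon \Imu$, for any $v \in L_2(\mu)$ we have $\|\bFmu^*(\Gmu + \epsilon\Imu)^{-1} v\|_{L_2(T) \times L_2(\mu)}^2 = \langle v, (\Gmu + \epsilon\Imu)^{-1} v\rangle_\mu$. Combined with $\bv{\bar P}^* = \bFmu^* (\Gmu + \epsilon\Imu)^{-1} \bv{\bar F}$, this reduces the claim to a high-probability bound on $\langle v, (\Gmu + \epsilon\Imu)^{-1} v\rangle_\mu$ where
\begin{align*}
v \eqdef \bv{\bar F}(\bv{F}^* g^\star - \bv{\bar y},\sqrt{\epsilon}g^\star) = \bv{F}\bv{F}^* g^\star - \bv{F}\bv{\bar y} + \epsilon g^\star.
\end{align*}

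Next, I would decompose $v = \sum_{j=1}^s Y_j$ with
\begin{align*}
Y_j \eqdef w_j^2\, b^\star(t_j)\,\varphi_{t_j} + \frac{\epsilon}{s}\, g^\star,
\end{align*}
obtained from the identities $\bv{F}\bv{F}^* g^\star = \sum_j w_j^2 \langle \varphi_{t_j}, g^\star\rangle_\mu \varphi_{t_j}$, $\bv{F}\bv{\bar y} = \sum_j w_j^2 \bar y(t_j)\varphi_{t_j}$, and $\langle \varphi_t, g^\star\rangle_\mu = [\Fmu^* g^\star](t)$. The normal equations for $g^\star$ (Lemma~\ref{lem:ridgeMinimizer}) give $\Gmu g^\star - \Fmu \bar y + \epsilon g^\star = 0$, and combined with $\E[w_j^2(\varphi_{t_j}\otimes\varphi_{t_j})] = \Gmu/s$ and $\E[w_j^2 \bar y(t_j)\varphi_{t_j}] = \Fmu \bar y/s$, this forces $\E[Y_j] = 0$.

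By independence, $\E\|v\|_{(\Gmu+\epsilon\Imu)^{-1}}^2 = s \cdot \E\|Y_1\|_{(\Gmu+\epsilon\Imu)^{-1}}^2$. Since the $(\epsilon/s) g^\star$ summand of $Y_1$ is deterministic, the variance reduces to $\E\bigl[w_1^4\, |b^\star(t_1)|^2\, \langle \varphi_{t_1}, (\Gmu+\epsilon\Imu)^{-1} \varphi_{t_1}\rangle_\mu\bigr]$. The inner product equals $T\,\tmu(t_1)$ by Theorem~\ref{thm:leverageProps} \eqref{eq:InnerProdRep}; plugging in the weight $w^2(t) = \tsmu/(sT\,\ttmu(t))$ and the sampling density $\ttmu(t)/\tsmu$, and using $\tmu \le \ttmu$ to cancel the ratio, the whole expression telescopes to $\frac{\tsmu}{s^2 T}\int_0^T |b^\star(t)|^2\,dt = \frac{\tsmu}{s^2}\|b^\star\|_T^2 \le \frac{\tsmu}{s^2} B^\star$. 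Summing over $j$ gives $\E\|v\|_{(\Gmu+\epsilon\Imu)^{-1}}^2 \le \tsmu B^\star/s$. Markov's inequality then yields $\|v\|_{(\Gmu+\epsilon\Imu)^{-1}}^2 \le \tsmu B^\star/(s\delta)$ with probability at least $1-\delta$, and choosing $s = c\,\tsmu(\log \tsmu + 1/\delta)$ with $c$ sufficiently large forces this upper bound below $B^\star/64^2$ as required.

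The main subtlety is identifying the correct mean-zero decomposition: the innocuous-looking $(\epsilon/s)g^\star$ correction is essential, and its cancellation with $\Gmu g^\star - \Fmu\bar y$ uses the optimality of $g^\star$. After that, the argument is a direct operator-theoretic analogue of subsampling a rank-one outer-product expansion, with the leverage score characterization \eqref{eq:InnerProdRep} playing the role that row-norm bounds play in the finite-matrix case. All remaining manipulations are bookkeeping.
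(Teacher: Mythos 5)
Your proposal is correct and takes essentially the same approach as the paper. The only distinction is organizational: the paper introduces the mean-zero structure by subtracting the zero-norm term $\bFmu^*(\Gmu+\epsilon\Imu)^{-1}\bFmu[\bFmu^*g^\star - (\bar y,0)]$ and then distributing $-\tfrac{1}{s}\Fmu b^\star$ across the summands, whereas you posit the centering term $\tfrac{\epsilon}{s}g^\star$ directly and verify $\E[Y_j]=0$ via the normal equations. Since optimality of $g^\star$ gives $\Fmu b^\star + \epsilon g^\star = 0$, the two centerings are identical, so this is the same computation presented from a different starting point.

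Two small points of bookkeeping. First, when you write that the variance ``reduces to'' $\E[w_1^4|b^\star(t_1)|^2\langle\varphi_{t_1},(\Gmu+\epsilon\Imu)^{-1}\varphi_{t_1}\rangle_\mu]$, it is actually bounded above by that quantity (you are dropping the $\|\E X_1\|^2_{(\Gmu+\epsilon\Imu)^{-1}}$ subtraction), which is what the paper does too, and suffices. Second, the target constant: the claim as stated needs $\|\cdot\|^2 \le B^\star/64$ (the paper's own statement omits the square on the left; compare the displayed line at the end of the paper's proof and how the claim is used in~\eqref{eq:break12}), so your goal of forcing the bound below $B^\star/64^2$ is slightly more than necessary but harmless.
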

\begin{proof}
For conciseness let $\mathcal{H}$ denote the space $L_2(T) \times L_2(\mu)$. Let $\varphi_t \in L_2(\mu)$ be given by $\varphi_t(\xi) = e^{-2\pi i t \xi}$.
Let $  b^\star \eqdef \Fmu^*  g^\star -\bar y$ and $\bv{b}^\star \in \CC^s$ be given by $\bv{b}^\star \eqdef \bv{F}^* g^\star - \bv{\bar y}.$ We can see that $\bv{b}^\star(j) = w_j \cdot [\langle \varphi_{t_j}, g^\star\rangle_\mu - \bar y(t_j)]$.
We have: 
\begin{align}
\E \left [\norm{\bv{\bar P}^* (\bv{F}^*  g^\star - \bv{\bar y},\sqrt{\epsilon}   g^\star)}_{\mathcal{H}}^2 \right ] &=\E \left [\norm{\bv{\bar P}^*(\bv{b}^\star,\sqrt{\epsilon}   g^\star)}_{\mathcal{H}}^2 \right ] \nonumber\\
&=  \E \left [\norm{\bv{\bar P}^* (\bv{b}^\star,\sqrt{\epsilon}   g^\star) - \bFmu^* (\Gmu + \epsilon \Imu)^{-1} \bFmu \left [\bFmu^*  g^\star - (\bar y,0)\right ]}_{\mathcal{H}}^2 \right ]\nonumber\tag{Since by \eqref{eq:nonSampled}, $\norm{\bFmu^* (\Gmu + \epsilon \Imu)^{-1} \bFmu \left [\bFmu^*  g^\star - (\bar y,0)\right ]}_{\mathcal{H}} = 0$.}\\
&= \E \left [\norm{\bv{\bar P}^* (\bv{b}^\star,\sqrt{\epsilon}   g^\star) - \bFmu^* (\Gmu + \epsilon \Imu)^{-1} \bFmu \left (b^\star, \sqrt{\epsilon} g^\star \right )}_{\mathcal{H}}^2 \right ]\nonumber\tag{Since $\bFmu^* g^\star = (\Fmu^* g, \sqrt{\epsilon} g)$ and since by definition $b^\star = \Fmu^* g^\star - \bar y$, giving $\left [\bFmu^*  g^\star - (\bar y,0)\right ]= (b^\star,\sqrt{\epsilon}g^\star).$}\\
& = \E \left [[\norm{\bFmu^* (\Gmu+\epsilon \Imu)^{-1} \left ( \bv{\bar F} (\bv{b}^\star,\sqrt{\epsilon}   g^\star) - \bFmu \left (b^\star, \sqrt{\epsilon} g^\star \right )\right )}_{\mathcal{H}}^2 \right  ]\nonumber \tag{Factoring $\bFmu^* (\Gmu+\epsilon \Imu)^{-1}$ out of $\bv{\bar P}^* = \bFmu^* (\Gmu+\epsilon \Imu)^{-1} \bv{\bar F}.$}\\
& = \E \left [[\norm{\bFmu^* (\Gmu+\epsilon \Imu)^{-1} \left ( \bv{ F} \bv{b}^\star- \Fmu b^\star \right )}_{\mathcal{H}}^2 \right  ]\nonumber \tag{Recalling that $\bv{\bar F}(f,g) = \bv{F} f + \sqrt{\epsilon} g$ and similarly $\bFmu(f,g) = \Fmu f + \sqrt{\epsilon} g$.}\\
&= \E \left [\left \| \bFmu^* (\Gmu+\epsilon \Imu)^{-1} \sum_{i=1}^s \left ( \varphi_{t_j} \cdot w_j \cdot \bv{b}^\star(j) - \frac{1}{s} \Fmu b^\star \right ) \right \|_\mathcal{H}^2 \right  ]\label{eq:integralExpandNew},
\end{align}
where the last equality follows since by \eqref{eq:fdef}, for any $\bv{x} \in \CC^s$, $\bv{F}\bv{x} = \sum_{j=1}^s \varphi_{t_j} \cdot w_j \cdot \bv{x}(j)$. To simplify \eqref{eq:integralExpandNew} we first bound, for any $g \in L_2(\mu)$, $\E \left [\langle g,  \varphi_{t_j} \cdot w_j \cdot \bv{b}^\star(j) \rangle_{\mu} \right ]$, recalling that $\bv{b}^\star(j) = w_j \cdot [\langle \varphi_{t_j}, g^\star \rangle_\mu - \bar y (t_j)]$.
Let $p(t) = \frac{\ttmu(t)}{\tsmu}$ be the density  with which we sample our time points $t_1,\ldots,t_s$ and $w(t) = \sqrt{\frac{1}{s T \cdot p(t)}}$ be the reweighting factor we apply if we sample time $t$ (so $w_j = w(t_j)$). 

First we argue that we can apply Fubini's theorem to switch the order of the double integration in $\E \left [\langle g,  \varphi_{t_j} \cdot w_j \cdot \bv{b}^\star(j) \rangle_{\mu} \right ]$ (over random instantiations of $\varphi_{t_j} \cdot w_j \cdot \bv{b}^\star(j)$ and within the inner product). Letting for $z \in L_2(\mu)$, $|z| \in L_2(\mu)$ be given by $|z|(\eta) = |z(\eta)|$ we have:
\begin{align*}
\E \left [\langle |g|,  |\varphi_{t_j} \cdot w_j \cdot \bv{b}^\star(j) | \rangle_{\mu} \right ] \le \norm{g}_\mu \cdot \E \left [ \norm{\varphi_{t_j} w_j \bv{b}^\star(j)}_\mu \right ],
\end{align*}
which, noting that $\norm{\varphi_{t_j}}_\mu = 1$ gives:
\begin{align*}
\E \left [\langle g,  \varphi_{t_j} \cdot w_j \cdot \bv{b}^\star(j) \rangle_{\mu} \right ] &\le \norm{g}_\mu \cdot \E \left [ |w_j \bv{b}^\star(j) |\right ]\\
&= \norm{g}_\mu \cdot \int_{0}^T |\langle \varphi_t, g^\star \rangle_\mu - y(t) | w(t)^2 \cdot p(t)\, dt\\
& = \norm{g}_\mu \cdot \frac{1}{sT} \int_{0}^T |\langle \varphi_t, g^\star \rangle_\mu - y(t)| \, dt\\
& < \infty
\end{align*}
where the last line follows since $g \in L_2(\mu)$ so $\norm{g}_\mu < \infty$ and since $\frac{1}{T} \int_{0}^T |\langle \varphi_t, g^\star \rangle_\mu - y(t)| \, dt \le \frac{1}{T} \int_{0}^T \left (|\langle \varphi_t, g^\star \rangle_\mu - y(t)|^2 + 1\right ) \, dt  = \norm{\Fmu^* g^\star - y}_T^2 + T \le \norm{y}_T^2 < \infty$. Since we have established that $\E \left [\langle |g|,  |\varphi_{t_j} \cdot w_j \cdot \bv{b}^\star(j)| \rangle_{\mu} \right ] $ is finite we can apply Fubini's theorem to compute:
\begin{align}\label{eq:dotExp}
\E \left [\langle g,  \varphi_{t_j} \cdot w_j \cdot \bv{b}^\star(j) \rangle_{\mu} \right ] &= \int_{0}^T [\langle \varphi_t, g^\star \rangle_\mu - y(t)] w(t)^2 \cdot \langle g, \varphi_t \rangle_{\mu} \cdot p(t)\, dt\nonumber \\
& = \frac{1}{sT} \int_0^T \left (b^\star(t)  \cdot \int_{\xi \in \RR} g(\xi)^* e^{-2\pi i \xi t} d\mu(\xi)\right) dt\nonumber\\
& = \frac{1}{s} \int_{\xi \in \RR} \left (g(\xi)^* \cdot \frac{1}{T}\int_0^T e^{-2\pi i \xi t} b^\star(t) dt\right )d\mu(\xi)\nonumber\\
& = \frac{1}{s}\langle g, \Fmu b^\star\rangle_\mu.
\end{align}
This in turn gives that 
\begin{align*}
\E \left [\langle g,  \varphi_{t_j} \cdot w_j \cdot \bv{b}^\star(j) - \frac{1}{s} \Fmu b^\star \rangle_{\mu} \right ] = 0
\end{align*}
and so for any $g \in L_2(\mu)$:
\begin{align}\label{eq:integralExpandNew2}
\E \left [ \left \langle \bFmu^* (\Gmu+\epsilon \Imu)^{-1} g,   \bFmu^* (\Gmu+\epsilon \Imu)^{-1}\left (\varphi_{t_j} \cdot w_j \cdot \bv{b}^\star(j) - \frac{1}{s} \Fmu b^\star\right ) \right \rangle_{\mathcal{H}} \right ] &=\nonumber\\
&\hspace{-25em}\E \left [ \left \langle(\Gmu+\epsilon \Imu)^{-1}\bFmu \bFmu^* (\Gmu+\epsilon \Imu)^{-1} g,  \varphi_{t_j} \cdot w_j \cdot \bv{b}^\star(j) - \frac{1}{s} \Fmu b^\star \right \rangle_{\mu} \right ] = 0.
\end{align}
Further, since $t_1,\ldots,t_s$ are independent, the above gives that for $j \neq k$:
\small
\begin{align}\label{eq:integralExpandNew3}
\E \left [ \left \langle \bFmu^* (\Gmu+\epsilon \Imu)^{-1} \left (\varphi_{t_j} \cdot w_j \bv{b}^\star(j) - \frac{1}{s} \Fmu b^\star\right ) ,   \bFmu^* (\Gmu+\epsilon \Imu)^{-1}\left (\varphi_{t_tk} \cdot w_k \bv{b}^\star(k) - \frac{1}{s} \Fmu b^\star\right ) \right \rangle_{\mathcal{H}} \right ] = 0. 
\end{align}
\normalsize
We can apply \eqref{eq:integralExpandNew2} and \eqref{eq:integralExpandNew3} to expand out \eqref{eq:integralExpandNew}, giving:
\small
\begin{align}
\E \left [\norm{\bv{\bar P}^* (\bv{F}^*  g^\star - \bv{\bar y},\sqrt{\epsilon}   g^\star)}_{\mathcal{H}}^2 \right ]=\nonumber\\
&\hspace{-10em}\sum_{j=1}^s \sum_{k=1}^s \E \left [\left \langle \bFmu^* (\Gmu+\epsilon \Imu)^{-1} \left (\varphi_{t_j} \cdot w_j \cdot \bv{b}^\star(j) - \frac{1}{s}\Fmu b^\star\right), \bFmu^* (\Gmu+\epsilon \Imu)^{-1} \left (\varphi_{t_k} \cdot w_k \cdot \bv{b}^\star(k) - \frac{1}{s}\Fmu b^\star\right)\right \rangle_{\mathcal{H}}\right ]\nonumber\\
&\hspace{-9em}= \sum_{j=1}^s \E \left [ \left \langle \bFmu^* (\Gmu+\epsilon \Imu)^{-1} \left(\varphi_{t_j} \cdot w_j \cdot \bv{b}^\star(j) - \frac{1}{s}\Fmu b^\star\right ), \bFmu^* (\Gmu+\epsilon \Imu)^{-1}\left ( \varphi_{t_j} \cdot w_j \cdot \bv{b}^\star(j) - \frac{1}{s}\Fmu b^\star\right)\right \rangle_{\mathcal{H}}\right]\nonumber\tag{since cross terms are $0$ via \eqref{eq:integralExpandNew3}}\nonumber\\
&\hspace{-9em}= \sum_{j=1}^s \E \left [ \left \langle \bFmu^* (\Gmu+\epsilon \Imu)^{-1} \left(\varphi_{t_j} \cdot w_j \cdot \bv{b}^\star(j) - \frac{1}{s}\Fmu b^\star\right ), \bFmu^* (\Gmu+\epsilon \Imu)^{-1}\left ( \varphi_{t_j} \cdot w_j \cdot \bv{b}^\star(j)\right)\right \rangle_{\mathcal{H}}\right]\tag{applying \eqref{eq:integralExpandNew2} to $-\frac{1}{s} \Fmu b^\star$}\nonumber\\
&\hspace{-9em}=\sum_{i=1}^s \E \big [ \left \|\bFmu^* (\Gmu + \epsilon \Imu)^{-1} \varphi_{t_j} \cdot w_j \cdot \bv{b}^\star(j)\right \|_\mathcal{H}^2 - \frac{1}{s} \left \langle\bFmu^* (\Gmu + \epsilon \Imu)^{-1}  \varphi_{t_j} \cdot w_j \cdot \bv{b}^\star(j),\bFmu^* (\Gmu + \epsilon \Imu)^{-1}  \Fmu b^\star\right \rangle_{\mathcal{H}}  \big]\nonumber\\
&\hspace{-9em}=\sum_{i=1}^s \E \big [ \left \|\bFmu^* (\Gmu + \epsilon \Imu)^{-1} \varphi_{t_j} \cdot w_j \cdot \bv{b}^\star(j)\right \|_\mathcal{H}^2 - \frac{1}{s^2} \norm{\bFmu^* (\Gmu + \epsilon \Imu)^{-1} \Fmu b^\star}_{\mathcal{H}}^2\nonumber\\
&\hspace{-9em}\le \sum_{i=1}^s \E \big [ \left \|\bFmu^* (\Gmu + \epsilon \Imu)^{-1} \varphi_{t_j} \cdot w_j \cdot \bv{b}^\star(j)\right \|_\mathcal{H}^2\label{eq:expBoundNow}
\end{align}
\normalsize
where the second to last line follows from \eqref{eq:dotExp} which gives
\small
\begin{align*}
\E \left [\left \langle\bFmu^* (\Gmu + \epsilon \Imu)^{-1}  \varphi_{t_j} \cdot w_j \cdot \bv{b}^\star(j),\bFmu^* (\Gmu + \epsilon \Imu)^{-1}  \Fmu b^\star\right \rangle_{\mathcal{H}} \right ]\\
&\hspace{-7em}= \E \left [ \left \langle  \varphi_{t_j} \cdot w_j \cdot \bv{b}^\star(j),(\Gmu + \epsilon \Imu)^{-1}\bFmu \bFmu^* (\Gmu + \epsilon \Imu)^{-1}  \Fmu b^\star\right \rangle_{\mu} \right ]\\
&\hspace{-7em}= \frac{1}{s} \left \langle  \Fmu b^\star,(\Gmu + \epsilon \Imu)^{-1}\bFmu \bFmu^* (\Gmu + \epsilon \Imu)^{-1}  \Fmu b^\star\right \rangle_{\mu}\\
&\hspace{-7em}=  \frac{1}{s}\norm{\bFmu^* (\Gmu + \epsilon \Imu)^{-1} \Fmu b^\star}_{\mathcal{H}}^2.
\end{align*}
\normalsize
%
Given the bound of \eqref{eq:expBoundNow} we can now expand out, using the fact that time $t$ is sampled with probability  proportional to $\ttmu(t)$:
\begin{align*}
\E \left [\norm{\bv{\bar P}^* (\bv{F}^*  g^\star - \bv{\bar y},\sqrt{\epsilon}   g^\star)}_{\mathcal{H}}^2 \right ] &\le s \cdot \int_{t = 0}^T \frac{\ttmu(t)}{\tsmu} \cdot \left \|\bFmu^* (\Gmu +\epsilon \Imu)^{-1} \varphi_t \cdot  \frac{ (\langle \varphi_t, g^\star \rangle_\mu - \bar y(t))\cdot \tsmu}{s T \cdot  \ttmu(u)}\right \|_\mathcal{H}^2 dt\\
& = \frac{1}{sT^2} \cdot \int_{t = 0}^T \frac{\tsmu \cdot  b^\star(t)^2}{\ttmu(t) } \cdot \left \|\bFmu^* (\Gmu +\epsilon \Imu)^{-1} \varphi_t\right \|_\mathcal{H}^2 dt\\
& = \frac{1}{sT^2} \cdot \int_{t = 0}^T \frac{\tsmu \cdot  b^\star(t)^2}{\ttmu(t) } \cdot \langle \bFmu \bFmu^* (\Gmu +\epsilon \Imu)^{-1} \varphi_t, (\Gmu +\epsilon \Imu)^{-1}\varphi_t \rangle_\mu^2  dt\\
& = \frac{1}{sT^2} \cdot \int_{t = 0}^T \frac{\tsmu \cdot  b^\star(t)^2}{\ttmu(t) } \cdot \langle \varphi_t, (\Gmu +\epsilon \Imu)^{-1}\varphi_t \rangle_\mu^2  dt\tag{since $\bFmu \bFmu^* = \Gmu +\epsilon \Imu$}\\
& = \frac{1}{sT} \cdot \int_{t = 0}^T \frac{\tsmu \cdot  b^\star(t)^2 \cdot \tmu(t)}{\ttmu(t) } \tag{Theorem \ref{thm:leverageProps}, \eqref{eq:genInnerProd}}
\\&\le \frac{\tsmu\cdot \norm{ b^\star}_T^2}{s}. \tag{since by assumption $\ttmu(t) \ge  \tmu(t)$}
\end{align*}
Since $s = \Omega \left(\frac{\tsmu}{\delta}\right)$ we thus have via Markov's inequality, with probability $\ge 1-\delta$,
\begin{align*}
\norm{\bv{\bar P}^* (\bv{F}^*  g^\star - \bv{\bar y},\sqrt{\epsilon}   g^\star)}_{\mathcal{H}}^2 \le  \frac{1}{64} \cdot \norm{ b^\star}_T^2 \le \frac{1}{64} \cdot  B^\star
\end{align*}
which completes the claim. Note that $64$ is an arbitrarily chosen constant, which can be made as small as we want by increasing the sample size $s$ by a  constant factor.
\end{proof}

Plugging Claim \ref{clm:approxMatrixMult} back into \eqref{eq:break12} gives:
\begin{align*}
\norm{\bv F^* g - \bv{\bar y}}_2^2 + \epsilon \norm{g}_\mu^2 &\in \tilde B + \norm{\bv F^*(g- g^\star)}_2^2 + \epsilon \norm{g- g^\star}_\mu^2 \pm  \frac{1}{4}(\norm{\Fmu^*(g- g^\star)}_T+\epsilon \norm{g- g^\star}_\mu) \cdot \sqrt{ B^\star}\\
&\in \tilde B + \norm{\bv F^*(g- g^\star)}_2^2 + \epsilon \norm{g- g^\star}_\mu^2 \pm  \frac{1}{8}(\norm{\Fmu^*(g- g^\star)}_T+\epsilon \norm{g- g^\star}_\mu)^2 \pm \frac{1}{8}
 B^\star\\
&\in \tilde B + \norm{\bv F^*(g- g^\star)}_2^2 + \epsilon \norm{g- g^\star}_\mu^2 \pm  \frac{1}{4}(\norm{\Fmu^*(g- g^\star)}_T^2+\epsilon \norm{g- g^\star}_\mu^2) \pm \frac{1}{8}  B^\star.
\end{align*}
Applying the operator approximation bound of Lemma \ref{lem:operatorApproximation} with error $\Delta = 1/4$  then gives:
\begin{align*}
\norm{\bv F^* g - \bv{\bar y}}_2^2 + \epsilon \norm{g}_\mu^2 &\in \tilde B + \left (1\pm \frac{1}{2}\right )\left (\norm{\Fmu^* (g- g^\star)}_2^2 + \epsilon \norm{g- g^\star}_\mu^2 \right ) \pm \frac{1}{8}  B^\star.
\end{align*}
Finally, applying Claim \ref{clm:affineDecomp} gives:
\begin{align*}
\norm{\bv F^* g - \bv{\bar y}}_2^2 + \epsilon \norm{g}_\mu^2 &\in (\tilde B-  B^\star) + \norm{\Fmu^* g - \bar y}_T^2 + \epsilon \norm{g}_\mu^2 \pm \frac{1}{2} \left ( \norm{\Fmu^* g - \bar y}_T^2 + \epsilon \norm{g}_\mu^2 \right ).
\end{align*}
Note that $\E[ \tilde B] =  B^\star$. So writing $\alpha =  \tilde B -  B^\star$ we have $|\alpha| \le \frac{1}{\delta} \cdot  B^\star$ with probability $1-\delta$. This completes the theorem.
\end{proof}

%

\subsection{Frequency Subset Selection}
We now prove the frequency subset selection guarantee Theorem \ref{thm:css} used in Section \ref{sec:general1} to bound the leverage scores for general constraints $\mu$, by showing that $\Fmu^*$ can be well approximated by an operator whose columns are spanned by just $O(\smu)$ frequencies.

 \begin{reptheorem}{thm:css}[Frequency Subset Selection]
For some $s \le \lceil 36 \cdot \smu\rceil$ there exists a set of distinct frequencies $\xi_1,\ldots,\xi_s \in \CC$ such that, letting $\bv{C}_s: L_2(T) \rightarrow \CC^{s} $ be defined by:
 \begin{align*}
 [\bv{C}_sg](j) = \frac{1}{T}\int_0^T g(t) e^{-2 \pi i \xi_j t}\, dt,
 \end{align*}
 and $\bv{Z} = (\bv{C}_s \bv{C}_s^*)^{-1} \bv{C}_s \Fmu^*$,
for $\varphi_t \in L_2(\mu), \bs{\phi}_t \in \CC^s$ with $\varphi_t(\xi) = e^{-2\pi i t \xi}$ and $\bs{\phi}_t(j) = \varphi_t(\xi_j)$:
\begin{align}\label{eq:frobNormBoundRep}
\frac{1}{T} \int_{t \in[0,T]}  \norm{\varphi_t - \bv Z^* \bs{\phi}_t}_\mu^2\, dt \le 4\epsilon \cdot \smu.
\end{align}
\end{reptheorem}
Our proof relies on the following spectral error bound for weighted frequency subset selection:
\begin{lemma}[Frequency Subset Selection -- Direct Spectral Approximation]\label{lem:cssSpectral}
	For some $s \leq \lceil 36 \cdot \smu\rceil$ there exists a set of  distinct frequencies $\xi_1,\ldots,\xi_s \in \RR$ and positive weights $w_1,\ldots,w_s \in \RR$ such that letting $\bv{\bar C}_s: L_2(T) \rightarrow \CC^{s} $ be given by:
 \begin{align*}
 [\bv{\bar C}_sg](j) = \frac{1}{T}\int_0^T g(t) w_j e^{-2 \pi i \xi_j t}\, dt,
 \end{align*}
	and letting $\widehat{\cal K}_\mu = \bv{\bar C}_s^* \bv{\bar C}_s$, we have
	\begin{equation}
	\label{eq:spectralK}
	\frac{1}{2}\cdot (\Kmu + \epsilon{\cal I}_{T}) \preceq \widehat{\cal K}_\mu + \epsilon {\cal I}_{T} \preceq 	\frac{3}{2}\cdot ({\cal K}_\mu + \epsilon{\cal I}_{T}).
	\end{equation}
\end{lemma}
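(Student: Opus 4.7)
The plan is to realize $\Kmu$ as a weak operator integral of rank-one operators, then reduce to a finite-dimensional ridge-regularized spectral sparsification problem, and finally invoke a deterministic selection result in the spirit of Batson-Spielman-Srivastava and Cohen-Nelson-Woodruff. For each frequency $\xi\in\RR$ define $\psi_\xi\in L_2(T)$ by $\psi_\xi(t)\eqdef e^{2\pi i t\xi}$. Swapping orders of integration in \eqref{eq:kernel_op} via Fubini yields $\Kmu=\int_\RR (\psi_\xi\otimes\psi_\xi)\,d\mu(\xi)$ in the weak operator sense, and a direct calculation shows that the operator $\widehat{\cal K}_\mu = \bv{\bar C}_s^*\bv{\bar C}_s$ appearing in the lemma is exactly $\sum_{j=1}^s w_j^2\,(\psi_{\xi_j}\otimes \psi_{\xi_j})$. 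So it suffices to exhibit a positively weighted sum of $\leq \lceil 36\,\smu\rceil$ such rank-ones that approximates $\Kmu$ in the $(1/2,3/2)$-ridge spectral sense.

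To do this I would first truncate: since $\Kmu$ is trace-class, choosing $N$ sufficiently large ensures that the tail eigenvalue mass $\sum_{i>N}\lambda_i(\Kmu)$ is at most $\epsilon\cdot\smu/c$ for any fixed constant $c$. Letting $\mathcal{P}_N$ be the orthogonal projection onto the top $N$ eigenfunctions of $\Kmu$, the compressed operator $\mathcal{P}_N\Kmu\mathcal{P}_N$ acts on an $N$-dimensional subspace and inherits ridge effective dimension bounded by $\smu$. Next, to avoid working with the continuous measure $\mu$ directly, I would use ridge leverage score sampling on the frequency side of $\Fmu$ (a mirror of the time-side argument in Theorem~\ref{thm:baseSampling}, with the roles of $t$ and $\xi$ swapped) to replace $\mu$ by a finite atomic measure $\tilde\mu$ supported on $\mathrm{poly}(\smu)$ positively weighted frequencies, such that the resulting rank-one sum is a constant-factor ridge approximation to $\mathcal{P}_N\Kmu\mathcal{P}_N$. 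On this finite sum of rank-one operators on an $N$-dimensional space, apply the deterministic Cohen-Nelson-Woodruff ridge sparsification result~\cite{CohenNelsonWoodruff16}, which selects $s\le \lceil 36\,\smu\rceil$ of the rank-ones with positive reweights and achieves the $(1/2,3/2)$-ridge spectral inequality against $\mathcal{P}_N\Kmu\mathcal{P}_N+\epsilon\mathcal{I}$.

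Finally, I would lift the selected $\{\xi_j,w_j\}$ back to the full space $L_2(T)$. The discarded tail $(\mathcal{I}_T-\mathcal{P}_N)\Kmu(\mathcal{I}_T-\mathcal{P}_N)$ has trace norm bounded by $\epsilon\smu/c$ and is absorbed into the $\epsilon\mathcal{I}_T$ slack after a slight constant adjustment in the choice of $N$ and of the CNW constant. A small perturbation of the $\xi_j$ enforces distinctness, which is only needed so that $(\bv{\bar C}_s\bv{\bar C}_s^*)^{-1}$ is well-defined in the downstream application (Theorem~\ref{thm:css}) and does not affect the spectral inequality to leading order. The main obstacle will be the careful bookkeeping in the middle step: tracking the constants through the barrier-function potential argument of CNW to obtain precisely $s\leq \lceil 36\,\smu\rceil$ (rather than $O(\smu/\delta^2)$ for a $(1\pm\delta)$ approximation) in the fixed $(1/2,3/2)$ regime, and simultaneously ensuring that the errors introduced by truncating the spectrum at level $N$ and by replacing $\mu$ with the atomic $\tilde\mu$ both fit cleanly inside the $\epsilon\mathcal{I}_T$ regularization so that the final inequality holds on all of $L_2(T)$, not just on the range of $\mathcal{P}_N$.
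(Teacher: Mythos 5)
Your high-level plan is on the right track — realize $\Kmu$ as a weak integral of rank-one operators $\psi_\xi\otimes\psi_\xi$ and sparsify it with a BSS/CNW-type deterministic selection — and your identification of $\widehat{\cal K}_\mu=\sum_j w_j^2(\psi_{\xi_j}\otimes\psi_{\xi_j})$ and your remark about enforcing distinctness by merging or perturbing repeated $\xi_j$'s both match the paper. But the route you take through the middle differs from the paper in a substantive way, and that route has a gap you don't resolve.

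The paper never truncates the spectrum and never replaces $\mu$ by a finite atomic measure. Instead, it works with the ridge-normalized vectors $\bar{\vartheta}_\xi=(\Kmu+\epsilon{\cal I}_T)^{-1/2}\vartheta_\xi$ and runs the BSS barrier/potential argument \emph{directly} at the operator level over the continuous measure $\mu$: the upper and lower barrier invariants are integrals $\int_\RR \langle\bar\vartheta_\xi, M^{-1}\bar\vartheta_\xi\rangle_T\,d\mu(\xi)\le 1$, the step uses operator pseudo-inversion lemmas (Deng, Ogawa) to argue existence of a next $\xi$ and weight, and $\tr({\cal Z})=\smu$ together with $\opnorm{{\cal Z}}\le 1$ drives the iteration count. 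This requires the operator-theoretic preliminaries of Appendix~\ref{app:op} (weak integrals, Claim~\ref{claim:int-inner-to-trace}, Loewner-order facts) precisely so that no finite-dimensional reduction is needed, and it yields the $\lceil 9\smu/\Delta^2\rceil=\lceil 36\smu\rceil$ count with $\Delta=1/2$ cleanly in one shot.

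Your proposal has two concrete gaps. First, the truncation-then-lift step doesn't close: CNW on the $N$-dimensional subspace certifies a spectral inequality for $\sum_j w_j^2({\cal P}_N\psi_{\xi_j}\otimes{\cal P}_N\psi_{\xi_j})$ against ${\cal P}_N\Kmu{\cal P}_N+\epsilon{\cal I}$, but the operator you actually need to bound is $\widehat{\cal K}_\mu=\sum_j w_j^2(\psi_{\xi_j}\otimes\psi_{\xi_j})$, built from the \emph{unprojected} rank-ones. The difference $\widehat{\cal K}_\mu-{\cal P}_N\widehat{\cal K}_\mu{\cal P}_N$ consists of cross-terms ${\cal P}_N(\cdot)({\cal I}_T-{\cal P}_N)$ plus a tail piece, and while the tail of $\Kmu$ itself has small trace, the tail of the \emph{finite sum} $\sum_j w_j^2(\psi_{\xi_j}\otimes\psi_{\xi_j})$ need not be small just because its expectation (or target) $\Kmu$ has small tail — you would need an additional argument (e.g., another ridge approximation guarantee restricted to the orthogonal complement) to absorb these terms into $\epsilon{\cal I}_T$. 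The observation that $\tr(({\cal I}_T-{\cal P}_N)\Kmu({\cal I}_T-{\cal P}_N))\le\epsilon\smu/c$ bounds only one of the three blocks. Second, as you acknowledge, chaining a $(1\pm\delta_1)$ ridge approximation from the $\tilde\mu$-discretization with a $(1\pm\delta_2)$ CNW sparsification and the truncation slack cannot land on $(1/2,3/2)$ with $s\le\lceil 36\smu\rceil$ without redoing the CNW constants with a smaller $\Delta$, which then increases the iteration count; the paper avoids this entirely by running the barrier argument once, against the true operator, with exactly $\Delta=1/2$.
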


\begin{proof}
		We prove a more general statement, in which we are given $0 < \Delta < 1$ and we select $s = \lceil 9 \smu / \Delta^2\rceil$ frequencies $\xi_1,\ldots,\xi_s \in \RR$ and weights $w_1,\ldots,w_s \in \RR$ such that
		\begin{equation*}
		(1-\Delta)({\cal K}_\mu + \epsilon{\cal I}_T) \preceq \widehat{\cal K}_\mu + \epsilon{\cal I}_T \preceq 	(1+\Delta)({\cal K}_\mu +\epsilon{\cal I}_T).
		\end{equation*}
		The claim follows by setting $\Delta = 1/2$. We can assume that $\xi_1,\ldots,\xi_s$ are distinct, since if $\xi_i, \xi_j$ are equal, we can simply remove $\xi_j$ and update $w_i \gets \sqrt{w^2_i + w^2_j}$, leaving $\widehat{\cal K}_\mu$ unchanged and only decreasing $s$.		
		
                The last condition is equivalent to
		\begin{equation*}
		{\cal K}_\mu -\Delta({\cal K}_\mu + \epsilon{\cal I}_T) \preceq \widehat{\cal K}_\mu \preceq {\cal K}_\mu + \Delta({\cal K}_\mu + \epsilon{\cal I}_T).
		\end{equation*}
		Multiplying with $({\cal K}_\mu + \epsilon{\cal I}_T)^{-1/2}$ on the left and right, we find that the condition is equivalent to:
		\begin{equation*}
		- \Delta {\cal I}_T \preceq ({\cal K}_\mu + \epsilon{\cal I}_T)^{-1/2} \widehat{\cal K}_\mu ({\cal K}_\mu + \epsilon{\cal I}_T)^{-1/2} - ({\cal K}_\mu + \epsilon{\cal I}_T)^{-1/2} {\cal K}_\mu ({\cal K}_\mu + \epsilon{\cal I}_T)^{-1/2} \preceq \Delta {\cal I}_T.
		\end{equation*}
		To shorten notation, we write
		${\cal Z} = ({\cal K}_\mu + \epsilon{\cal I}_T)^{-1/2} {\cal K}_\mu ({\cal K}_\mu + \epsilon{\cal I}_T)^{-1/2}$ and
		$\widehat{\cal Z} = ({\cal K}_\mu + \epsilon{\cal I}_T)^{-1/2} \widehat{\cal K}_\mu ({\cal K}_\mu + \epsilon{\cal I}_T)^{-1/2}$.
	 Given $\xi \in \RR$, we define $\vartheta_\xi(t) \eqdef e^{2\pi i t \xi}$ ($\vartheta_\xi \in L_2 (T)$).
	 It is easy to verify that
	 \begin{equation*}
	 {\cal K}_\mu  =  \int_\RR (\vartheta_\xi \otimes \vartheta_\xi) d\mu(\xi)
	 \end{equation*}
	 and
	 \begin{equation*}
	 \widehat{{\cal K}}_\mu  =  \sum^s_{i=1} w^2_i (\vartheta_{\xi_i} \otimes \vartheta_{\xi_i}).
	 \end{equation*}
	 Further define $\bar{\vartheta}_\xi \eqdef ({\cal K}_\mu + \epsilon{\cal I}_T)^{-1/2} \vartheta_\xi$. Since $({\cal K}_\mu + \epsilon{\cal I}_T)^{-1/2}$ is self-adjoint and bounded, we have
	 \begin{equation*}
	 {\cal Z}  =  \int_\RR (\bar{\vartheta}_\xi \otimes \bar{\vartheta}_\xi) d\mu(\xi)
	 \end{equation*}
	 and
	 \begin{equation*}
	 \widehat{\cal Z} =  \sum^s_{i=1} w^2_i(\bar{\vartheta}_{\xi_i} \otimes \bar{\vartheta}_{\xi_i}).
	 \end{equation*}
	
	 We prove the existence of $\xi_1, \dots, \xi_s$ and $w_1, \dots, w_s$ using the deterministic selection process known as ``BSS"~\cite{BSS14}.\footnote{We remark that unlike the process described in~\cite{BSS14}, our existence proof does not trivially translate to an algorithm, since it involves a search over an infinite domain. Nevertheless, for our needs, existence suffices.}
	 In particular, we use a process that in essence is the same as the one described in~\cite[Theorem 5 (arxiv version)]{CohenNelsonWoodruff16}. Indeed, since $\opnorm{{\cal Z}} \leq 1$ and $\tr({\cal Z}) = \smu$ the aforementioned results would  suffice if we were dealing with matrices instead of operators. The rest of the proof extends these results to the operator case. 
	 \ifdraft
	 \Haim{Remark for us: we know that ${\cal Z}$ is Hilbert-Schmidt because ${\cal K}_\mu$ is H-S and we multiply it by a bounded operators.}
    \fi
	 Let
	 $$ \delta_u \eqdef \Delta / 3 + 2 \Delta^2 / 9,\quad \delta_l \eqdef \Delta / 3 - 2 \Delta^2 / 9$$
	 and for $j = 0, 1, \dots, s$,
	 $${\cal X}^{(j)}_l \eqdef j \delta_l \cdot {\cal Z} - \smu\cdot {\cal I}_T,\quad {\cal X}^{(j)}_u \eqdef j \delta_u \cdot {\cal Z} + \smu \cdot {\cal I}_T.$$
	 The process we shall describe iteratively selects $\xi_1, \xi_2, \dots$ and unscaled weights $\tilde{w}_1, \tilde{w}_2, \ldots$ such that if we define
	 $\widehat{{\cal Z}}^{(j)} \eqdef \sum^j_{i=1} \tilde{w}_i (\bar{\vartheta}_{\xi_i} \otimes \bar{\vartheta}_{\xi_i})$
	 the invariant
	 \begin{equation}
	 \label{eq:invariant_uplo}
	 {\cal X}^{(j)}_l \prec \widehat{{\cal Z}}^{(j)} \prec {\cal X}^{(j)}_u
	 \end{equation}
	 is held. Let us write $s = \lceil 9 \smu/\Delta^2 \rceil$, so $s = C\smu/\Delta^2$ for $C \geq 9$. If indeed we are able to select the frequencies and weights for $s$ steps such that this invariant holds, we shall have
	 $$
	 \frac{C\smu}{3\Delta} \cdot {\cal Z} - (1 + 2C/9)\cdot \smu \cdot {\cal I}_T \preceq \widehat{{\cal Z}}^{(s)}  \preceq \frac{C\smu}{3\Delta} \cdot {\cal Z} + (1 + 2C/9)\cdot \smu \cdot {\cal I}_T
	 $$
	 where we used the fact that ${\cal Z} \preceq {\cal I}_T$. Since $C \geq 9$ we have
	 $$
	 -\Delta \cdot {\cal I}_T \preceq \frac{3\Delta}{C \smu}\widehat{{\cal Z}}^{(s)} - {\cal Z} \preceq \Delta \cdot {\cal I}_T
	 $$
	 so by defining $w_i = \sqrt{\frac{3\Delta}{C \smu} \tilde{w}_i}$ for $i = 1, \dots, s$ we shall then have $\widehat{{\cal Z}} = \frac{3\Delta}{C \smu}\widehat{{\cal Z}}^{(s)}$
	 thereby establishing the desired bound.
	
	 Thus, it suffices to show that we can select frequencies and weights iteratively so that~\eqref{eq:invariant_uplo} is maintained. In fact, the iterative selection process will maintain two additional invariants:
	 \begin{align*}
	 \int_{\RR} \langle \bar{\vartheta}_\xi, ({\cal X}^{(j)}_u - \widehat{{\cal Z}}^{(j)})^{-1} \bar{\vartheta}_\xi\rangle_T d\mu(\xi) &\leq 1\\
	 \int_{\RR} \langle \bar{\vartheta}_\xi, (\widehat{{\cal Z}}^{(j)} - {\cal X}^{(j)}_l)^{-1} \bar{\vartheta}_\xi\rangle_T d\mu(\xi) &\leq 1
	 \end{align*}
	 All the invariants hold for $j=0$. Eq.~\eqref{eq:invariant_uplo} trivially holds for $j=0$. As for the integral,
	 \begin{align*}
	 \int_{\RR} \langle \bar{\vartheta}_\xi, ({\cal X}^{(0)}_u - \widehat{{\cal Z}}^{(0)})^{-1} \bar{\vartheta}_\xi\rangle_T d\mu(\xi)  &=
	 \int_{\RR} \langle \bar{\vartheta}_\xi, \smu^{-1} \bar{\vartheta}_\xi\rangle_T d\mu(\xi) \\
	 & = \smu^{-1} \int_{\RR} \langle ({\cal K}_\mu + \epsilon{\cal I}_{T})^{-1/2} \vartheta_\xi,  ({\cal K}_\mu + \epsilon{\cal I}_{T})^{-1/2} \vartheta_\xi\rangle_T d\mu(\xi) \\
	 & = \smu^{-1} \int_{\RR} \langle  \vartheta_\xi,  ({\cal K}_\mu + \epsilon{\cal I}_{T})^{-1} \vartheta_\xi\rangle_T d\mu(\xi) \\
	 & = \smu^{-1} \tr\left(({\cal K}_\mu + \epsilon{\cal I}_{T})^{-1} {\cal K}_T \right) = 1
	 \end{align*}
	 and similarly for the second invariant. In the above, the last equality is due to Claim~\ref{claim:int-inner-to-trace}.
	
	 Suppose by induction that the invariants for $j$. We prove that it is possible to pick a frequency $\xi$ and weight $w > 0$ such that if we set $\xi_{j+1} = \xi$ and $\tilde{w}_{j+1} = w$ then the invariants will hold for $j+1$.
	
	 Fix $j$. For $t \geq 0$, let us denote
	 \begin{align*}
	 	 M_u(t) = \left({\cal X}^{(j)}_u + t{\cal Z}- \widehat{{\cal Z}}^{(j)}\right)^{-1} \\
	 	 M_l(t) = \left(\widehat{{\cal Z}}^{(j)} - {\cal X}^{(j)}_l - t{\cal Z}\right)^{-1}
	 \end{align*}
	 where $M_u$ is defined for any $t$ (since the inverted operator is strictly positive and bounded, so invertible), and $M_l$ is defined for $t < 1$. We can define $M_l$ for $t < 1$ since $\widehat{{\cal Z}}^{(j)} - {\cal X}^{(j)}_l - t{\cal Z} \succ 0$ for $t < 1$ as we now show. Due to Claim~\ref{claim:int-inner-to-trace}:
	 $$
	 \tr((\widehat{{\cal Z}}^{(j)} - {\cal X}^{(j)}_l)^{-1} {\cal Z}) = \int_{\RR} \langle \bar{\vartheta}_\xi, (\widehat{{\cal Z}}^{(j)} - {\cal X}^{(j)}_l)^{-1} \bar{\vartheta}_\xi\rangle_T d\mu(\xi) \leq 1.
	 $$
	 Since $\widehat{{\cal Z}}^{(j)} - {\cal X}^{(j)}_l \succ 0$ (induction assumption), $(\widehat{{\cal Z}}^{(j)} - {\cal X}^{(j)}_l)^{-1}$ is bounded so according to Claim~\ref{clm:trace-to-ord}, ${\cal Z} \preceq  \widehat{{\cal Z}}^{(j)} - {\cal X}^{(j)}_l$, and then Claim~\ref{claim:stays-positive} implies that $\widehat{{\cal Z}}^{(j)} - {\cal X}^{(j)}_l - t{\cal Z} \succ 0$.
	
	 Consider some fixed $\xi$. We first claim that for $w < 1 / \langle \bar{\vartheta}_\xi, M_u(\delta_u)\bar{\vartheta}_\xi\rangle_T$ we have $M_u(\delta_u)^{-1} - w (\bar{\vartheta}_\xi \otimes \bar{\vartheta}_\xi) \succ 0$. Obviously, the last statement holds for $w=0$, and due to continuity of $w\mapsto \langle x, (M_u(\delta_u)^{-1} - w (\bar{\vartheta}_\xi \otimes \bar{\vartheta}_\xi))x\rangle_T$ with respect to $w$, it will also hold for some interval around $0$. Let $w^\star$ be the maximal value such that for all $w\in[0, w^\star)$ we have $M_u(\delta_u)^{-1} - w (\bar{\vartheta}_\xi \otimes \bar{\vartheta}_\xi) \succ 0$. Our goal is to show that $w^\star \geq  1 / \langle \bar{\vartheta}_\xi, M_u(\delta_u)\bar{\vartheta}_\xi\rangle_T$. Assume by contradiction that
	 $w^\star <  1 / \langle \bar{\vartheta}_\xi, M_u(\delta_u)\bar{\vartheta}_\xi\rangle_T$. For every $w \in [0, w^\star)$, the operator  $M_u(\delta_u)^{-1} - w (\bar{\vartheta}_\xi \otimes \bar{\vartheta}_\xi)$ is invertible, and  we can apply a operator pseudo-inversion lemma due to Deng~\cite[Theorem 2.1]{Deng11} to find that
	 $$
	 (M_u(\delta_u)^{-1} - w (\bar{\vartheta}_\xi \otimes \bar{\vartheta}_T))^{-1} = M_u(\delta_u) + \frac{w}{1 - w\cdot\langle \bar{\vartheta}_\xi, M_u(\delta_u)\bar{\vartheta}_\xi\rangle_T}M_u(\delta_u) (\bar{\vartheta}_\xi \otimes \bar{\vartheta}_\xi) M_u(\delta_u).
	 $$
	 Since we assumed $w^\star < 1 / \langle \bar{\vartheta}_\xi, M_u(\delta_u)\bar{\vartheta}_\xi\rangle_T$, clearly, there exists a $K$ such that for all $w \in [0, w^\star)$ we have:
	 $$
	 (M_u(\delta_u)^{-1} - w (\bar{\vartheta}_\xi \otimes \bar{\vartheta}_T))^{-1} \leq K \cdot {\cal I}_T.
	 $$
	 Note that $M_u(\delta_u)^{-1} - w^\star (\bar{\vartheta}_\xi \otimes \bar{\vartheta}_\xi)$ is not strictly positive for otherwise due to continuity we could have extended the interval, so there exists a $x$ with norm $1$ such that $\langle x, (M_u(\delta_u)^{-1} - w^\star (\bar{\vartheta}_\xi \otimes \bar{\vartheta}_\xi))x\rangle  < 1/2K$. Let $w_1, w_2, \dots$ be a sequence which converges to $w^\star$, and let $y_i = (M_u(\delta_u)^{-1} - w_i (\bar{\vartheta}_\xi \otimes \bar{\vartheta}_\xi)^{1/2})x$. We now have $\langle y_i, y_i\rangle_T = \langle x, (M_u(\delta_u)^{-1} - w_i (\bar{\vartheta}_\xi \otimes \bar{\vartheta}_\xi)x\rangle_T \to \langle x, (M_u(\delta_u)^{-1} - w^\star (\bar{\vartheta}_\xi \otimes \bar{\vartheta}_\xi))x\rangle_T < 1/2K$ as $i \to \infty$.
	 However $\langle y_i, (M_u(\delta_u)^{-1} - w_i (\bar{\vartheta}_\xi \otimes \bar{\vartheta}_\xi))^{-1}y_i\rangle_T = \langle x_i, x_i \rangle_T = 1$ which contradicts the bound on
	 $(M_u(\delta_u)^{-1} - w_i (\bar{\vartheta}_\xi \otimes \bar{\vartheta}_\xi))^{-1}$.
	
	 Thus, if we picked $\xi$ and $w < 1 / \langle \bar{\vartheta}_\xi, M_u(\delta_u)\bar{\vartheta}_\xi\rangle_T$ for the step, we shall have $\widehat{{\cal Z}}^{(j+1)} - {\cal X}^{(j)}_l = M_u(\delta_u)^{-1} - w (\bar{\vartheta}_\xi \otimes \bar{\vartheta}_\xi) \succ 0$ as required, and the upper invariant will translate to
	 \begin{equation*}
	 \int_{\RR} \left\langle \bar{\vartheta}_\eta, \left(M_u(\delta_u)^{-1} - w (\bar{\vartheta}_\xi \otimes \bar{\vartheta}_\xi)\right)^{-1} \bar{\vartheta}_\eta\right\rangle_T d\mu(\eta) \leq 1,
	 \end{equation*}
	 which is equivalent to
	 \begin{equation*}
	 \int_{\RR} \langle \bar{\vartheta}_\eta, M_u(\delta_u)\bar{\vartheta}_\eta\rangle_T d\mu(\eta) + \frac{w \cdot \int_{\RR} \left\langle\bar{\vartheta}_\eta,  M_u(\delta_u) (\bar{\vartheta}_\xi \otimes \bar{\vartheta}_\xi) M_u(\delta_u)\bar{\vartheta}_\eta\right\rangle_T d\mu(\eta)}{1 - w\cdot\langle \bar{\vartheta}_\xi, M_u(\delta_u)\bar{\vartheta}_\xi\rangle_T} \leq 1.
	 \end{equation*}
	 The induction hypothesis is
	 \begin{equation*}
	 \int_{\RR} \langle \bar{\vartheta}_\eta, M_u(0)\bar{\vartheta}_\eta\rangle_T d\mu(\eta) \leq 1.
	 \end{equation*}
	 so the upper invariant is held if
	 	 \begin{equation}
	 	 \label{eq:upper-invariant-c}
	 	 \int_{\RR} \langle \bar{\vartheta}_\eta, M_u(\delta_u)\bar{\vartheta}_\eta\rangle_T d\mu(\eta) - \int_{\RR} \langle \bar{\vartheta}_\eta, M_u(0)\bar{\vartheta}_\eta\rangle_T d\mu(\eta) + \frac{w \cdot \int_{\RR} \left\langle\bar{\vartheta}_\eta,  M_u(\delta_u) (\bar{\vartheta}_\xi \otimes \bar{\vartheta}_\xi) M_u(\delta_u)\bar{\vartheta}_\eta\right\rangle_T d\mu(\eta)}{1 - w\cdot\langle \bar{\vartheta}_\xi, M_u(\delta_u)\bar{\vartheta}_\xi\rangle_T} \leq 0.
	 	 \end{equation}
	 	 Consider any $\eta \in \RR$, and let $f_\eta(y) \eqdef \langle \bar{\vartheta}_\eta, M_u(y)\bar{\vartheta}_\eta\rangle_T$. Using the operator inversion formula, we have for any $t_2 \geq t_1$:
	 	 \begin{equation*}
	 	 M_u(t_2) = M_u(t_1) - (t_2 - t_1)M_u(t_1){\cal Z}^{1/2}\left({\cal I}_T + (t_2 - t_1){\cal Z}^{1/2} M_u(t_1){\cal Z}^{1/2}\right)^{-1} {\cal Z}^{1/2}M_u(t_1).
	 	 \end{equation*}
	 	 From this equation we see that
	 	 $$
	 	 f'_\eta(y) = \langle \bar{\vartheta}_\eta, M_u(y){\cal Z}M_u(y)\bar{\vartheta}_\eta\rangle_T.
	 	 $$
	 	 Furthermore, since for $t_2> t_1$ we have ${\cal I}_T + t_2  {\cal Z}^{1/2} M_u(t_1){\cal Z}^{1/2} \succeq {\cal I}_T + t_1  {\cal Z}^{1/2} M_u(t_1){\cal Z}^{1/2} $ and both operators are strictly positive and bounded, then $({\cal I}_T + t_1 {\cal Z}^{1/2} M_u(t_1){\cal Z}^{1/2})^{-1} \preceq ({\cal I}_T +  t_2  {\cal Z}^{1/2} M_u(t_1){\cal Z}^{1/2})^{-1} $, and we can easily verify that $f_\eta$ is convex. Thus,
	 	 \begin{equation*}
		 f_\eta(\delta_u) - f_\eta(0) \leq -\delta_u  \langle \bar{\vartheta}_\eta, M_u(y){\cal Z}M_u(y)\bar{\vartheta}_\eta\rangle_T.
	 	\end{equation*}
	 	 After integrating on both sides, we have the bound
	 	 \begin{equation*}
	 	  \int_{\RR} \langle \bar{\vartheta}_\eta, M_u(\delta_u)\bar{\vartheta}_\eta\rangle_T d\mu(\eta) - \int_{\RR} \langle \bar{\vartheta}_\eta, M_u(0)\bar{\vartheta}_\eta\rangle_T d\mu(\eta)\leq - \delta_u \int_{\RR} \langle \bar{\vartheta}_\eta, M_u(\delta_u) {\cal Z} M_u(\delta_u) \bar{\vartheta}_\eta\rangle_T d\mu(\eta).
	 	 \end{equation*}
	 	   Using this bound in \eqref{eq:upper-invariant-c} and rearranging, we find that for any $\xi$, the upper invariant is held if we select $w$ such that
		 \begin{equation}
	 	 \label{eq:final-w-upper}
	 	 \frac{1}{w} > \frac{\int_{\RR} \left\langle\bar{\vartheta}_\eta,  M_u(\delta_u) (\bar{\vartheta}_\xi \otimes \bar{\vartheta}_\xi) M_u(\delta_u)\bar{\vartheta}_\eta\right\rangle_T d\mu(\eta)}{\delta_u \int_{\RR} \langle \bar{\vartheta}_\eta, M_u(\delta_u) {\cal Z} M_u(\delta_u) \bar{\vartheta}_\eta\rangle_T d\mu(\eta)} + \langle \bar{\vartheta}_\xi, M_u(\delta_u)\bar{\vartheta}_\xi\rangle_T.
	 	 \end{equation}
	 	 Note that if this is held, we also have $w < 1 / \langle \bar{\vartheta}_\xi, M_u(\delta_u)\bar{\vartheta}_\xi\rangle_T$, as previously required.
	 	
	 	 We now consider the lower invariants. If we picked $\xi$ and $w > 0$ for the step, then $\widehat{{\cal Z}}^{(j+1)} - {\cal X}_l^{(j)} = M_l(\delta_l)^{-1} + w (\bar{\vartheta}_\xi \otimes \bar{\vartheta}_\xi) \succeq M_l(\delta_l)^{-1}  \succ 0$ as long $\delta_l < 1$ which holds for our choice of $\delta_l$. So the left part of \eqref{eq:invariant_uplo} will hold regardless of how we choose $\xi$ and $w > 0$. As for the lower trace bound, it translates to
	 	 \begin{equation*}
	 	 \int_{\RR} \left\langle \bar{\vartheta}_\eta, \left(M_l(\delta_l)^{-1} + w (\bar{\vartheta}_\xi \otimes \bar{\vartheta}_\xi)\right)^{-1} \bar{\vartheta}_\eta\right\rangle_T d\mu(\eta) \leq 1.
	 	 \end{equation*}
	 	 Applying another variant of operator pseudo-inversion lemma~\cite[Theorem 2]{Ogawa:1988}, we find that the last condition is equivalent to
		 \begin{equation*}
		 \int_{\RR} \langle \bar{\vartheta}_\eta, M_l(\delta_l)\bar{\vartheta}_\eta\rangle_T d\mu(\eta) - \frac{w \cdot \int_{\RR} \left\langle\bar{\vartheta}_\eta,  M_l(\delta_l) (\bar{\vartheta}_\xi \otimes \bar{\vartheta}_\xi) M_l(\delta_l)\bar{\vartheta}_\eta\right\rangle_T d\mu(\eta)}{1 + w\cdot\langle \bar{\vartheta}_\xi, M_l(\delta_l)\bar{\vartheta}_\xi\rangle_T} \leq 1.
		 \end{equation*}
		 The induction hypothesis is
		 \begin{equation*}
		 \int_{\RR} \langle \bar{\vartheta}_\eta, M_l(0)\bar{\vartheta}_\eta\rangle_T d\mu(\eta) \leq 1
		 \end{equation*}
		  so the lower invariant is held if
		  \begin{equation}
		  \label{eq:lower-invariant-c}
		  \int_{\RR} \langle \bar{\vartheta}_\eta, M_l(\delta_l)\bar{\vartheta}_\eta\rangle_\mu d\mu(\eta) - \int_{\RR} \langle \bar{\vartheta}_\eta, M_l(0)\bar{\vartheta}_\eta\rangle_T d\mu(\eta) - \frac{w \cdot \int_{\RR} \left\langle\bar{\vartheta}_\eta,  M_l(\delta_l) (\bar{\vartheta}_\xi \otimes \bar{\vartheta}_\xi) M_l(\delta_l)\bar{\vartheta}_\eta\right\rangle_T d\mu(\eta)}{1 + w\cdot\langle \bar{\vartheta}_\xi, M_l(\delta_l)\bar{\vartheta}_\xi\rangle_T} \leq 0.
		  \end{equation}
		  Similarly to before, by using the convexity of each integrand, we can bound
		  \begin{equation*}
		  \int_{\RR} \langle \bar{\vartheta}_\eta, M_l(\delta_l)\bar{\vartheta}_\eta\rangle_T d\mu(\eta) - \int_{\RR} \langle \bar{\vartheta}_\eta, M_l(0)\bar{\vartheta}_\eta\rangle_T d\mu(\eta)\leq  \delta_l \int_{\RR} \langle \bar{\vartheta}_\eta, M_l(\delta_l) {\cal Z} M_u(\delta_l) \bar{\vartheta}_\eta\rangle_T d\mu(\eta).
		  \end{equation*}
		  Using this bound in \eqref{eq:lower-invariant-c} and rearranging, we find that for any $\xi$, the lower invariant is held if we select $w$ such that
		  \begin{equation}
		  \label{eq:final-w-lower}
		  \frac{1}{w} \leq \frac{\int_{\RR} \left\langle\bar{\vartheta}_\eta,  M_l(\delta_l) (\bar{\vartheta}_\xi \otimes \bar{\vartheta}_\xi) M_l(\delta_l)\bar{\vartheta}_\eta\right\rangle_T d\mu(\eta)}{\delta_l \int_{\RR} \langle \bar{\vartheta}_\eta, M_l(\delta_l) {\cal Z} M_u(\delta_l) \bar{\vartheta}_\eta\rangle_T d\mu(\eta)} - \langle \bar{\vartheta}_\xi, M_l(\delta_l)\bar{\vartheta}_\xi\rangle_T.
		  \end{equation}
		
		  Thus, we need to show that there exists a $\xi$ and $w$ such that both~\eqref{eq:final-w-upper} and~\eqref{eq:final-w-lower} hold. However, for a given $\xi$, such a $w$ will surely exist if
		  \begin{align*}
		  &\frac{\int_{\RR} \left\langle\bar{\vartheta}_\eta,  M_u(\delta_u) (\bar{\vartheta}_\xi \otimes \bar{\vartheta}_\xi) M_u(\delta_u)\bar{\vartheta}_\eta\right\rangle_T d\mu(\eta)}{\delta_u \int_{\RR} \langle \bar{\vartheta}_\eta, M_u(\delta_u) {\cal Z} M_u(\delta_u) \bar{\vartheta}_\eta\rangle_T d\mu(\eta)} + \langle \bar{\vartheta}_\xi, M_u(\delta_u)\bar{\vartheta}_\xi\rangle_T \\&< \frac{\int_{\RR} \left\langle\bar{\vartheta}_\eta,  M_l(\delta_l) (\bar{\vartheta}_\xi \otimes \bar{\vartheta}_\xi) M_l(\delta_l)\bar{\vartheta}_\eta\right\rangle_T d\mu(\eta)}{\delta_l \int_{\RR} \langle \bar{\vartheta}_\eta, M_l(\delta_l) {\cal Z} M_u(\delta_l) \bar{\vartheta}_\eta\rangle_T d\mu(\eta)} - \langle \bar{\vartheta}_\xi, M_l(\delta_l)\bar{\vartheta}_\xi\rangle_T.
		  \end{align*}
		  Thus, it it suffices to show that there exists a $\xi$ for which the last inequality holds. To show that such a $\xi$ exists, we will show that the inequality holds for the  integral of both sides with respect to $\mu$ measure. This will guarantee the existence of such a $\xi$ since the Lebesgue integral is strictly positive for non-negative functions.
		  We compute:
		  \begin{align*}
		  &\int_{\RR }\int_{\RR} \left\langle\bar{\vartheta}_\eta,  M_u(\delta_u) (\bar{\vartheta}_\xi \otimes \bar{\vartheta}_\xi) M_u(\delta_u)\bar{\vartheta}_\eta\right\rangle_T d\mu(\eta)d\mu(\xi) \\
		  &= \int_{\RR }\int_{\RR} \left\langle\bar{\vartheta}_\eta,  M_u(\delta_u) (\bar{\vartheta}_\xi \otimes \bar{\vartheta}_\xi) M_u(\delta_u)\bar{\vartheta}_\eta\right\rangle_T d\mu(\xi) d\mu(\eta)\\
		  &= \int_{\RR }\int_{\RR} \left\langle M_u(\delta_u) \bar{\vartheta}_\eta,  (\bar{\vartheta}_\xi \otimes \bar{\vartheta}_\xi) M_u(\delta_u)\bar{\vartheta}_\eta\right\rangle_T d\mu(\xi) d\mu(\eta)\\
		  &= \int_{\RR } \left\langle M_u(\delta_u) \bar{\vartheta}_\eta,  {\cal Z} M_u(\delta_u)\bar{\vartheta}_\eta\right\rangle_T d\mu(\eta)\\
		  &=\int_{\RR} \langle \bar{\vartheta}_\eta, M_u(\delta_u) {\cal Z} M_u(\delta_u) \bar{\vartheta}_\eta\rangle_T d\mu(\eta).
		  \end{align*}
		  Similarly,
		  \begin{equation*}
		  \int_{\RR }\int_{\RR} \left\langle\bar{\vartheta}_\eta,  M_l(\delta_l) (\bar{\vartheta}_\xi \otimes \bar{\vartheta}_\xi) M_l(\delta_l)\bar{\vartheta}_\eta\right\rangle_T d\mu(\eta)d\mu(\xi)
		  = \int_{\RR} \langle \bar{\vartheta}_\eta, M_l(\delta_l) {\cal Z} M_l(\delta_l) \bar{\vartheta}_\eta\rangle_T d\mu(\eta).
		  \end{equation*}
		  ${\cal Z}$ is self-adjoint and positive definite, so the operator pseudo-inversion lemma~\cite[Theorem 2]{Ogawa:1988} implies that $M_u(\delta_u) \preceq M_u(0)$, so by the induction hypothesis
		  \begin{equation*}
		  \int_{\RR }\langle \bar{\vartheta}_\xi, M_u(\delta_u)\bar{\vartheta}_\xi\rangle_T d\mu(\xi) \leq \int_{\RR }\langle \bar{\vartheta}_\xi, M_u(0)\bar{\vartheta}_\xi\rangle_T d\mu(\xi) \leq 1.
		  \end{equation*}
		  We now consider the lower invariant. We already showed that  ${\cal Z} \preceq  \widehat{{\cal Z}}^{(j)} - {\cal X}^{(j)}_l$, so as long as $\delta_l \leq 1/2$ we will have:
		  \begin{equation*}
		  \int_{\RR }\langle \bar{\vartheta}_\xi, M_l(\delta_l)\bar{\vartheta}_\xi\rangle_T d\mu(\xi) =
		  \int_{\RR }\langle \bar{\vartheta}_\xi, (M_l(0)^{-1} - \delta_l {\cal Z})^{-1}\bar{\vartheta}_\xi\rangle_T d\mu(\xi) \leq
		  2\int_{\RR }\langle \bar{\vartheta}_\xi, M_l(0)\bar{\vartheta}_\xi\rangle_T d\mu(\xi) \leq 2
		  \end{equation*}
		  where we used Claim~\ref{clm:invordering}. So there will be a gap in the value of the integrals (as desired), if
		  $$
		  \frac{1}{\delta_u} + 1 < \frac{1}{\delta_l} - 2,
		  $$
		  which is the case for our selection of $\delta_l$ and $\delta_u$.
\end{proof}
From Lemma \ref{lem:cssSpectral} we can prove a stronger spectral error bound for the projection onto the range of $\bv{\bar C}_s$.
\begin{lemma}[Frequency Subset Selection -- Projection Based Spectral Approximation]\label{lem:cssSpectral2}
	For some $s \le \lceil 36 \cdot \smu \rceil$ there exists a set of distinct frequencies $\xi_1,\ldots,\xi_s \in \CC$ such that letting $\bv{C}_s: L_2(T) \rightarrow \CC^{s}$ and $\bv{Z}: L_2(\mu) \rightarrow \CC^s$ be defined as in Theorem \ref{thm:css} and $\widehat{\cal G}_\mu = \bv{Z}^* \bv C_s \bv C_s^* \bv{Z}$,
	\begin{equation}
	\label{eq:spectralK2}
\widehat{\cal G}_\mu \preceq \Gmu \preceq \widehat{\cal G}_\mu + \epsilon \Imu.
	\end{equation}
\end{lemma}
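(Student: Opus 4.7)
The plan is to reduce the projection-based claim to the already-established weighted spectral approximation of Lemma \ref{lem:cssSpectral}. First I would pick $\xi_1,\ldots,\xi_s$ (distinct) to be exactly the frequencies guaranteed by Lemma \ref{lem:cssSpectral}, and form the corresponding $\bv{C}_s$ and $\bv{Z}=(\bv{C}_s\bv{C}_s^*)^{-1}\bv{C}_s\Fmu^*$ as in the statement. Introduce the orthogonal projection $\mathcal{P}_s \eqdef \bv{C}_s^*(\bv{C}_s\bv{C}_s^*)^{-1}\bv{C}_s$ on $L_2(T)$ onto $\range(\bv{C}_s^*)$, and its complementary projection $\mathcal{P}_s^\perp = \mathcal{I}_T - \mathcal{P}_s$. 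A direct computation gives
\begin{align*}
\widehat{\mathcal{G}}_\mu \;=\; \bv{Z}^*\bv{C}_s\bv{C}_s^*\bv{Z} \;=\; \Fmu\,\mathcal{P}_s^2\,\Fmu^* \;=\; \Fmu\,\mathcal{P}_s\,\Fmu^*,
\end{align*}
so the statement of the lemma is equivalent to
\begin{align*}
0 \;\preceq\; \Fmu(\mathcal{I}_T-\mathcal{P}_s)\Fmu^* \;=\; \Fmu\,\mathcal{P}_s^\perp\,\Fmu^* \;\preceq\; \epsilon\,\mathcal{I}_\mu.
\end{align*}

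The lower bound is immediate since $\mathcal{P}_s \preceq \mathcal{I}_T$ implies $\Fmu\mathcal{P}_s\Fmu^* \preceq \Fmu\Fmu^* = \Gmu$. The substance is the upper bound, which I would prove in two steps. First I would show the analogous bound on $L_2(T)$, namely $\mathcal{P}_s^\perp\Kmu\mathcal{P}_s^\perp \preceq \epsilon\,\mathcal{I}_T$. To see this, observe that the weighted operator $\widehat{\mathcal{K}}_\mu = \bv{\bar C}_s^*\bv{\bar C}_s$ of Lemma \ref{lem:cssSpectral} factors through the $s$-dimensional space indexed by $\xi_1,\ldots,\xi_s$, so $\range(\widehat{\mathcal{K}}_\mu)\subseteq \range(\bv{C}_s^*) = \range(\mathcal{P}_s)$, which yields $\mathcal{P}_s^\perp\widehat{\mathcal{K}}_\mu\mathcal{P}_s^\perp = 0$. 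The spectral bound of Lemma \ref{lem:cssSpectral} rearranges to $\Kmu \preceq 2\widehat{\mathcal{K}}_\mu + \epsilon\,\mathcal{I}_T$, and conjugating by $\mathcal{P}_s^\perp$ kills the $\widehat{\mathcal{K}}_\mu$ term and gives $\mathcal{P}_s^\perp\Kmu\mathcal{P}_s^\perp \preceq \epsilon\,\mathcal{P}_s^\perp \preceq \epsilon\,\mathcal{I}_T$.

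For the second step, transfer this bound from $L_2(T)$ to $L_2(\mu)$. Setting $\mathcal{A} \eqdef \Fmu\,\mathcal{P}_s^\perp$, we have $\mathcal{A}\mathcal{A}^* = \Fmu\mathcal{P}_s^\perp\Fmu^*$ while $\mathcal{A}^*\mathcal{A} = \mathcal{P}_s^\perp\Kmu\mathcal{P}_s^\perp$. Both operators are bounded, self-adjoint, positive semidefinite, and share the same nonzero spectrum, so they have equal operator norm. Since the first step gives $\opnorm{\mathcal{A}^*\mathcal{A}} \le \epsilon$, we conclude $\Fmu\mathcal{P}_s^\perp\Fmu^* \preceq \opnorm{\mathcal{A}\mathcal{A}^*}\,\mathcal{I}_\mu \preceq \epsilon\,\mathcal{I}_\mu$, which is the desired upper bound.

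I do not expect any serious obstacle beyond being careful that the elementary matrix facts (spectrum of $\mathcal{A}\mathcal{A}^*$ vs. $\mathcal{A}^*\mathcal{A}$ and conjugation of a Loewner inequality by a projection) carry over to bounded self-adjoint operators on the infinite-dimensional Hilbert spaces involved; these should be routine given the operator-theoretic preliminaries of Appendix \ref{app:op}, in particular the claims establishing Loewner ordering under composition and the equality of operator norms for $\mathcal{A}\mathcal{A}^*$ and $\mathcal{A}^*\mathcal{A}$.
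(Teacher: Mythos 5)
Your proof is correct and is essentially the same argument as the paper's: you pick the frequencies from Lemma \ref{lem:cssSpectral}, observe $\widehat{\mathcal{G}}_\mu=\Fmu\mathcal{P}_s\Fmu^*$ so that the lower bound follows from $\mathcal{P}_s\preceq\mathcal{I}_T$, conjugate the inequality $\Kmu\preceq 2\widehat{\mathcal{K}}_\mu+\epsilon\mathcal{I}_T$ by $\mathcal{P}_s^\perp$ (using $\mathcal{P}_s^\perp\widehat{\mathcal{K}}_\mu\mathcal{P}_s^\perp=0$) to get $\mathcal{P}_s^\perp\Kmu\mathcal{P}_s^\perp\preceq\epsilon\mathcal{I}_T$, and transfer via $\opnorm{\mathcal{A}\mathcal{A}^*}=\opnorm{\mathcal{A}^*\mathcal{A}}$ with $\mathcal{A}=\Fmu\mathcal{P}_s^\perp$. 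The paper's proof follows the same four steps, differing only in minor phrasing (e.g.\ it observes $\bv{\bar C}_s\mathcal{P}_s^\perp=0$ directly rather than phrasing it via range containment).
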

\begin{proof}Let $\xi_1,\ldots,\xi_s \in \CC$ and $w_1,\ldots,w_s \in \RR$ be the frequencies and weights shown to exist in Lemma \ref{lem:cssSpectral} and let $\bv{\bar C}_s$ be as defined in that lemma (note that $\bv{\bar C}_s$ is identical to $\bv{C}_s$ except with its rows weighted by $w_1,\ldots,w_s$.)
First note that for any $g \in L_2(\mu)$,
\begin{align*}
\langle g, \widehat{\cal G}_\mu g \rangle_\mu = \norm{\bv C_s^*\bv{Z}g}_\mu^2 = \norm{\bv C_s^*(\bv{C}_s \bv{C}_s^*)^{-1} \bv{C}_s \Fmu^* g}_\mu^2 \le \norm{ \Fmu^* g}_\mu^2 = \langle g, {\cal G}_\mu g \rangle_\mu
\end{align*}
where the inequality follows from observing that $\bv C_s^*(\bv{C}_s \bv{C}_s^*)^{-1} \bv{C}_s$ is an orthogonal projection.
Thus $ \widehat{\cal G}_\mu \preceq  {\cal G}_\mu$. It remains to show that ${\cal G}_\mu \preceq \widehat{\cal G}_\mu + \epsilon \Imu$. Let $\mathcal{\bar P} = \mathcal{I}_T - \bv{C}_s^* (\bv{C}_s \bv{C}^*)^{-1} \bv{C}_s$ be the projection to the orthogonal complement of $\bv{C}_s^*$'s range and let $\widehat{\cal K}_\mu = \bv{\bar C}_s^* \bv{\bar C}_s$ be as defined in Lemma \ref{lem:cssSpectral}.
Rearranging the guarantee of Lemma \ref{lem:cssSpectral} gives
\begin{align*}
	{\cal K}_\mu \preceq 	2 \cdot \mathcal{\widehat K}_\mu + \epsilon{\cal I}_{T}
\end{align*}
which immediately  gives
\begin{align*}
	\mathcal{\bar P} {\cal K}_\mu \mathcal{\bar P} \preceq 	2\cdot  \mathcal{\bar P} \widehat{\cal K}_\mu \mathcal{\bar P} + \epsilon \mathcal{\bar P} {\cal I}_{T} \mathcal{\bar P} .
\end{align*}
Note that $\bv{\bar C}_s \mathcal{\bar P} = 0$ (since $\mathcal{\bar P}$ is an orthogonal projection onto $\ker(\bv{C}_s)=\ker(\bv{\bar C}_s)$) and so $\mathcal{\bar P}  \widehat{\cal K}_\mu \mathcal{\bar P}  = 0$, giving:
\begin{align}\label{eq:compBound}
\mathcal{\bar P} {\cal K}_\mu \mathcal{\bar P} \preceq \epsilon \mathcal{\bar P} {\cal I}_T \mathcal{\bar P} \preceq \epsilon \mathcal{I}_T.
\end{align}
Note that $\mathcal{\bar P} {\cal K}_\mu \mathcal{\bar P} = \mathcal{\bar P} \Fmu^* \Fmu \mathcal{\bar P}$ and
$$\Gmu - \widehat{\Gmu} = \Fmu \Fmu^* -\bv{Z}^* \bv{C}_s \bv{C}_s^* \bv{Z} =  \Fmu \mathcal{\bar P} \Fmu^*.$$
Thus by  \eqref{eq:compBound} we also have $\Gmu - \widehat{\Gmu} \preceq \epsilon \Imu$ (since the norm of an operator and its adjoint are the same so $\mathcal{\bar P} {\cal K}_\mu \mathcal{\bar P} \preceq \epsilon \mathcal{I}_T \implies \Fmu \mathcal{\bar P} \Fmu^* \preceq \epsilon \Imu$), which completes the lemma.
\end{proof}
Finally, from Lemma \ref{lem:cssSpectral2} we can prove the frequency subset selection guarantee of Theorem \ref{thm:css}.

\begin{proof}[Proof of Theorem \ref{thm:css}]
We consider the same set of frequencies $\xi_1,\ldots,\xi_s$ shown to exist in Lemma \ref{lem:cssSpectral2} and the corresponding operators $\bv{C}_s$, $\bv{Z}$. We show that these frequencies satisfy the guarantee of Theorem \ref{thm:css}.
First, we note that
$$
\bv{K} \eqdef \bv{C}_s\bv{C}^*_s = \frac{1}{T}\int^T_0  (\bs{\phi}_t \otimes  \bs{\phi}_t)dt
$$
(In the above, we abuse notation and use $\bs{\phi}_t$ to denote both the vector defined in the Theorem statement, and the operator $x\in\CC \mapsto x\bs{\phi}_t$).
From Claim~\ref{claim:int-inner-to-trace}:
\begin{align*}
\frac{1}{T} \int_{t  \in [0,T]} \norm{\varphi_t - \bv{Z}^* \bs{\phi}_t}_\mu^2\, dt =& \tr\left(\frac{1}{T} \int_{t  \in [0,T]}(\varphi_t - \bv{Z}^* \bs{\phi}_t)\otimes(\varphi_t - \bv{Z}^* \bs{\phi}_t) \, dt  \right) \\
=&  \tr\left(\frac{1}{T} \int_{t  \in [0,T]}\varphi_t \otimes\varphi_t\, dt \right)  + \tr\left(\frac{1}{T} \int_{t  \in [0,T]}\bv{Z}^*\bs{\phi}_t \otimes\bv{Z}^*\bs{\phi}_t\, dt \right) \\
& - \tr\left(\frac{1}{T} \int_{t  \in [0,T]}\bv{Z}^*\bs{\phi}_t \otimes\varphi_t\, dt \right) - \tr\left(\frac{1}{T} \int_{t  \in [0,T]}\varphi_t \otimes\bv{Z}^*\bs{\phi}_t\, dt \right)
\end{align*}
We have,
$$
\frac{1}{T} \int_{t  \in [0,T]}\varphi_t \otimes\varphi_t\, dt = \Gmu\,,
$$
From Claim~\ref{claim:linear-weak-integral}:
$$
\frac{1}{T} \int_{t  \in [0,T]}\bv{Z}^*\bs{\phi}_t \otimes\bv{Z}^*\bs{\phi}_t\, dt = \bv{Z}^*\left(\frac{1}{T} \int_{t  \in [0,T]}\bs{\phi}_t \otimes\bs{\phi}_t\, dt\right)\bv{Z} = \bv{Z}^*\bv{K}\bv{Z} = \widehat{\Gmu}
$$
Next, consider $\frac{1}{T}\int^T_0 \bs{\phi}_t \otimes \varphi_t\, dt$. For any $\alpha$, 
$$
\frac{1}{T}\left(\int^T_0 \bs{\phi}_t \otimes \varphi_t\, dt\right) \alpha = \frac{1}{T} \int^{T}_0\langle \varphi_t, \alpha \rangle_\mu \bs{\phi}_t\, dt 
$$
where the integral on the left is a weak vector integral. Since for every $g\in L_2(T)$, 
$$
\bv{C}_sg=\frac{1}{T} \int^T_0g(t) \bs{\phi}_t \, dt
$$
and for every $\alpha \in L_2(\mu)$, $[\Fmu^* \alpha ](t) = \langle \varphi_t, \alpha \rangle_\mu$, we have $\frac{1}{T}\int^T_0 \bs{\phi}_t \otimes \varphi_t\, dt = \bv{C}_s \Fmu^*$, so
$$
\frac{1}{T} \int_{t  \in [0,T]}\bv{Z}^*\bs{\phi}_t \otimes\varphi_t\, dt = \bv{Z}^* \left(\frac{1}{T} \int_{t  \in [0,T]}\bs{\phi}_t \otimes\varphi_t\, dt \right) = \bv{Z}^* \bv{C}_s \Fmu^* = \bv{Z}^*\bv{K}\bv{Z} = \widehat{\Gmu}\,.
$$
Combining the previous observations, we find that 
\begin{align*}
\frac{1}{T} \int_{t  \in [0,T]} \norm{\varphi_t - \bv{Z}^* \bs{\phi}_t}_\mu^2\, dt = \tr(\Gmu - \widehat{\Gmu}).
\end{align*}

Let $v_1,\ldots,v_{2\smu} \in L_2(\mu)$ be the eigenfunctions of $\Gmu$ corresponding to its top $2\smu$ eigenvalues. Define $\bv{X}: L_2(\mu) \rightarrow \CC^{2\smu}$ as: for $g \in L_2(\mu)$, $[\bv{X} g](j) = \langle v_j,g\rangle_\mu$. Note that
\begin{align*} \tr(\widehat{\Gmu} - \bv{X}^*\bv{X} \widehat{\Gmu} \bv{X}^* \bv{X}) &= \tr(\bv{Z}^* \bv{C}_s \bv{C}_s \bv{Z} - \bv{X}^* \bv{X} \bv{Z}^* \bv{C}_s \bv{C}_s \bv{Z} \bv{X}^* \bv{X})\\
&= \tr(\bv{C}_s \bv{Z} \bv{Z}^* \bv{C}_s  - \bv{C}_s \bv{Z} \bv{X}^* \bv{X} \bv{Z}^* \bv{C}_s) \ge 0
\end{align*}
since $\bv{C}_s \bv{Z} \bv{Z}^* \bv{C}_s  \succeq \bv{C}_s \bv{Z} \bv{X}^* \bv{X} \bv{Z}^* \bv{C}_s$ ($\bv{X}^* \bv{X}$ is a projection, so $\bv{X}^* \bv{X} \preceq {\cal I}_\mu$).
So we can bound:
\begin{align}\label{blahBound}
\frac{1}{T} \int_{t  \in [0,T]} \norm{\varphi_t - \bv{Z}^* \bs{\phi}_t}_\mu^2\, dt = \tr(\Gmu - \widehat{\Gmu}) &\le \tr(\Gmu - \widehat{\Gmu}) + \tr(\widehat{\Gmu} - \bv{X}^*\bv{X} \widehat{\Gmu} \bv{X}^* \bv{X})\nonumber\\
&= \tr(\Gmu -  \bv{X}^*\bv{X} {\Gmu} \bv{X}^*\bv{X}) + \tr(\bv{X}^* \bv{X} (\Gmu-\widehat{\Gmu})\bv{X}^*\bv{X}).
\end{align}
Let $i_\epsilon$ be the smallest $i$ with $\lambda_i(\Gmu) \le \epsilon$. We have:
\begin{align*}
\smu = \sum_{i=1}^\infty \frac{\lambda_i(\Gmu)}{\lambda_i(\Gmu)+\epsilon} &\ge  \sum_{i=1}^{i+\epsilon} \frac{\lambda_i(\Gmu)}{\lambda_i(\Gmu)+\epsilon} \ge \frac{i_\epsilon}{2}.
\end{align*}
Thus we can bound $\tr(\Gmu -  \bv{X}^*\bv{X} {\Gmu} \bv{X}^*\bv{X})$ as:
\begin{align}\label{blahBound1}
\tr(\Gmu -  \bv{X}^*\bv{X} {\Gmu} \bv{X}^*\bv{X}) = \sum_{i = 2\smu+1}^\infty \lambda_i(\Gmu) \le \sum_{i=i_\epsilon+1}^\infty  \lambda_i(\Gmu) \le 2 \epsilon \smu.
\end{align}
where the last bound follows from the fact that $\smu  \ge  \sum_{i=i_\epsilon+1}^\infty \frac{\lambda_i(\Gmu)}{\lambda_i(\Gmu)+\epsilon} \ge  \sum_{i=i_\epsilon+1}^\infty \frac{\lambda_i(\Gmu)}{2\epsilon}$.

We can also bound $\tr(\bv{X}^* \bv{X} (\Gmu-\widehat{\Gmu})\bv{X}^*\bv{X})$ using Lemma \ref{lem:cssSpectral2}. Since $\Gmu \le \widehat{\Gmu} + \epsilon \Imu$ we have:
\begin{align}\label{blahBound2}
\tr(\bv{X}^* \bv{X} (\Gmu-\widehat{\Gmu})\bv{X}^*\bv{X}) \le \epsilon \tr(\bv{X}^* \bv{X}\bv{X}^*\bv{X}) = \epsilon \smu.
\end{align}
Plugging \eqref{blahBound1} and \eqref{blahBound2} back into \eqref{blahBound} we have:
\begin{align*}
\frac{1}{T} \int_{t  \in [0,T]} \norm{\varphi_t - \bv{Z}^* \bs{\phi}_t}_\mu^2\, dt \le 4 \epsilon \cdot \smu,
\end{align*}
which completes the theorem.
\end{proof}

\section{Tight Statistical Dimension  Bound for Bandlimited Functions}
\label{sec:bandlimited}

In Section \ref{sec:general} we demonstrate, perhaps surprisingly, that a simple function $\ttmu(t)$ (defined in Theorem \ref{thm:fullBound}) exists for \emph{any} $\mu$ that upper bounds $\tmu(t)$ and has $\tsmu = \tilde O(\smu)$. Combined with Theorem \ref{thm:mainAlg} this yields our main algorithmic result Theorem \ref{thm:informal_main}, which shows that we can achieve $O\left (\smu \log^2(\smu \right  )$ sample complexity with just $\tilde O \left (\smu^\omega \right  )$ runtime.

Instantiating Theorem \ref{thm:informal_main} using the approximate ridge leverage function of Theorem \ref{thm:fullBound} requires an upper bound on $\smu$.
In this section we show how to bound $\smu$ when $\mu$ is uniform measure on some interval -- i.e., when our interpolation problem is over bandlimited functions. In Section \ref{app:stat_dimension} we leverage this result to bound $\smu$  for a number of other important priors, including for multiband, Gaussian, and Cauchy-Lorentz.

Beyond letting us upper bound $\smu$ to apply Theorem \ref{thm:informal_main}, our proof for bandlimited functions is constructive, giving a simple upper bound for  $\tmu(t)$ for any $t$. This upper bound can be plugged directly into Algorithm \ref{alg:main} and Theorem \ref{thm:mainAlg} to give a tightening of Theorem \ref{thm:informal_main} by a logarithmic factor in the bandlimited case.
%
Like our general result, the proof is based on the definition of leverage scores given in \eqref{eq:leverage_def}. This definition makes it clear that, to upper bound $\tmu(t)$, it suffices to show that a function with Fourier support controlled by $\mu$ cannot ``spike'' too extremely at time $t$.

For bandlimited functions, we obtain a smoothness bound by introducing and applying a Bernstein type smoothness bound for low-degree polynomials and relying on the fact that any bandlimited function is well approximated by a low-degree polynomial. This approach mirrors the general proof in Section \ref{sec:general}, which uses a more sophisticated smoothness bound for sparse Fourier functions.

Our result for bandlimited function is as follows:

\begin{theorem}
	\label{thm:bandlimited_leverage_scores} Let $\mu$ be the uniform measure on $[-F,F]$. Let $q = \lceil 16\pi e FT + 2\log(1/\epsilon) +11 \rceil$. For all $t \in [0,T]$, let the approximate ridge leverage function $\ttmu$ equal:
	\begin{align*}
		\ttmu(t) =  \frac{1}{T}\left(4 + \frac{q}{\sqrt{\min(t,T-t)/T}}\right).
	\end{align*}
	For any $\epsilon \le 1,F,T$, $\ttmu(t)$ satisfies:
	\begin{enumerate}
		\item $\ttmu(t) \geq \tmu(t)$.
		\item $\int_{0}^T\ttmu(t) dt \eqdef \tsmu = O\left(FT + \log(1/\epsilon)\right).$
	\end{enumerate}
	Thus we have $ \smu \le \tilde  \smu = O\left(FT + \log(1/\epsilon)\right)$.
\end{theorem}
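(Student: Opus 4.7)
The plan is to attack Theorem \ref{thm:bandlimited_leverage_scores} through the maximization characterization of the ridge leverage function in Definition \ref{def:ridgeScores}. Writing $f(t) \eqdef [\Fmu^*\alpha](t) = \int_{-F}^F \alpha(\xi) e^{2\pi i \xi t}\,d\mu(\xi)$, the task reduces to showing that, for every bandlimited $f$ obtained this way, we can upper bound $|f(t)|^2$ by $T \cdot \ttmu(t) \cdot (\|f\|_T^2 + \epsilon \|\alpha\|_\mu^2)$. Mirroring the Fourier-sparse smoothness argument used for general $\mu$ in Section \ref{sec:general}, I would obtain such a pointwise bound by approximating $f$ uniformly on $[0,T]$ by a low-degree polynomial and then invoking a classical polynomial smoothness estimate.

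For the polynomial approximation step, I would truncate the Taylor expansion of $e^{2\pi i \xi t}$ inside the integral defining $f$. Keeping $q+1$ terms and using $k! \ge (k/e)^k$ together with Cauchy--Schwarz to pass from $|\alpha|$ to $\|\alpha\|_\mu$, the pointwise error of the resulting polynomial $p_q$ of degree $q$ on $[0,T]$ is bounded (up to constants) by $(2\pi e FT/q)^{q+1} \cdot \|\alpha\|_\mu$. Choosing $q = \lceil 16\pi e FT + 2\log(1/\epsilon) + 11\rceil$ comfortably drives this below $\sqrt{\epsilon}\cdot \|\alpha\|_\mu/T$, so both $|f(t) - p_q(t)|^2$ and $|\|f\|_T^2 - \|p_q\|_T^2|$ are absorbed into the regularization term $\epsilon\|\alpha\|_\mu^2$. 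After this reduction, it suffices to bound $|p_q(t)|^2 / \|p_q\|_T^2$ uniformly over all degree-$q$ polynomials.

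For the polynomial smoothness step I would use a Markov--Bernstein / Christoffel-type estimate: for any degree-$q$ polynomial $p$ on $[0,T]$,
\begin{equation*}
|p(t)|^2 \le \frac{C\, q}{\sqrt{\min(t,T-t)/T}}\cdot \frac{1}{T}\int_0^T |p(s)|^2\,ds.
\end{equation*}
This is the standard pointwise $L^2$ bound for polynomials via the Christoffel function of the Chebyshev weight, rescaled from $[-1,1]$ to $[0,T]$; if the constant $C$ comes out slightly larger than $1$, I would absorb the constant into the additive $4/T$ term in the definition of $\ttmu$, which also covers the $\epsilon\|\alpha\|_\mu^2$ slack from the Taylor truncation. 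Combining the polynomial approximation and the Christoffel bound, and plugging back into the maximization characterization of $\tmu$, gives $\tmu(t) \le \ttmu(t)$ for all $t \in [0,T]$.

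The integral bound is then a routine computation: by symmetry
\begin{equation*}
\int_0^T \ttmu(t)\,dt = 4 + \frac{2q}{T}\int_0^{T/2} \sqrt{T/t}\,dt = 4 + 2\sqrt{2}\, q = O(FT + \log(1/\epsilon)).
\end{equation*}
The main obstacle I expect is the Taylor step: carefully bookkeeping the pointwise polynomial error against $\|\alpha\|_\mu^2$ so that the small error is cleanly absorbed into $\epsilon\|\alpha\|_\mu^2$ with the explicit constants in the theorem (specifically the value of $q$ and the constant $4/T$). A secondary subtlety is getting the constant $1$ in front of $q/\sqrt{\min(t,T-t)/T}$ in the Christoffel bound, which may require working with the normalized Chebyshev measure rather than the uniform measure and carefully tracking the change-of-variables factor on $[0,T]$; if needed, I would enlarge $q$ by a constant factor (which only affects the hidden constant in $\tsmu$).
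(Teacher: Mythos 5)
Your proposal mirrors the paper's proof exactly: decompose $\Fmu^*\alpha$ via Taylor expansion of $e^{-2\pi i \xi t}$ into a degree-$O(FT + \log(1/\epsilon))$ polynomial plus a tail that is absorbed into $\epsilon\|\alpha\|_\mu^2$, then apply a Bernstein/Christoffel-type pointwise-over-$L^2$ polynomial bound of the form $|p(t)|^2 \lesssim (d/\sqrt{\min(t,T-t)/T})\,\|p\|_T^2$, and finish with a direct computation of the integral. One small slip: the relevant Christoffel bound is for the uniform (Legendre) weight on $[0,T]$, not the Chebyshev weight (the paper proves this via a Bernstein inequality for Legendre polynomials, Claim \ref{claim:smoothness_for_middle}), and a multiplicative constant in that bound cannot be absorbed into the additive $4/T$ term since the reciprocal-square-root term dominates --- but your fallback of enlarging $q$ by a constant factor is exactly what handles this.
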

\ifdraft
\Haim{Maybe we should use $\tilde{\tau}_{F,T,\epsilon}$ instead of $\tilde{\tau}_{\mu,\epsilon}$? Reason: $\mu$ is uniform on $[-F,F]$ so $F$ is the parameter that defines the distribution.  $\tilde{\tau}_{\mu,\epsilon}$ also depend on $T$ that does not appear in the subscript.}\Cam{Going to leave the same for now since this matches the general notation. I agree that its not ideal that $\tau_{\mu,\epsilon}$ depends on $T$ but this isn't clear from notation.}
\fi

Combined with Theorem \ref{thm:mainAlg}, Theorem \ref{thm:bandlimited_leverage_scores} immediately gives: 
\begin{corollary}
Let $\mu$ be the uniform measure  on $[-F,F]$.
Using  $\ttmu$ as defined in Theorem \ref{thm:bandlimited_leverage_scores}, Algorithm \ref{alg:main} returns $t_1,\ldots,t_s \in [0,T]$ and $\bv{z} \in \CC^s$ such that $\tilde y(t) = \sum_{i=1}^s \bv{z}(i) \cdot k_\mu(t_i,t)$ satisfies with probability $\ge 1-\delta$:
\begin{align*}
\norm{\tilde y - y}_T^2  \le 6\epsilon \norm{x}_\mu^2 + 7 \norm{n}_T^2.
\end{align*}
The algorithm queries $y+n$ at $s$ points and runs in $O(s^\omega)$ time where $s = O([FT + \log(1/\epsilon)]\cdot [\log(FT + \log(1/\epsilon)) + 1/\delta])$. The output $\tilde y(t)$ can be evaluated using Algorithm \ref{alg:main2} in $O(s)$ time.
\end{corollary}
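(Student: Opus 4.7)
The plan is to derive this corollary essentially as a direct instantiation of Theorem~\ref{thm:mainAlg} with the leverage function upper bound supplied by Theorem~\ref{thm:bandlimited_leverage_scores}. Both hypotheses of Theorem~\ref{thm:mainAlg} that concern the sampling distribution are handed to us: property~(1) of Theorem~\ref{thm:bandlimited_leverage_scores} gives $\ttmu(t) \ge \tmu(t)$ pointwise, and property~(2) gives $\tsmu = \int_0^T \ttmu(t)\,dt = O(FT + \log(1/\epsilon))$. So the first step is just to quote these and observe that the assumption $\epsilon \le \opnorm{\Kmu}$ required by Theorem~\ref{thm:mainAlg} can be handled in the standard way: if $\epsilon > \opnorm{\Kmu}$ then $\tilde y \equiv 0$ trivially satisfies $\norm{\tilde y - y}_T^2 \le \opnorm{\Kmu}\norm{x}_\mu^2 \le \epsilon\norm{x}_\mu^2$, so we may assume $\epsilon \le \opnorm{\Kmu}$ without loss of generality.

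Next I would bound the per-sample costs $W$ (time to draw one point from $\ttmu$) and $Z$ (time to evaluate $k_\mu$). For the uniform measure $\mu$ on $[-F,F]$, the kernel function is
\begin{align*}
k_\mu(t_1,t_2) = \frac{1}{2F}\int_{-F}^{F} e^{-2\pi i (t_1-t_2)\xi}\, d\xi = \sinc\!\bigl(2\pi F(t_1-t_2)\bigr),
\end{align*}
which is computable in $Z = O(1)$ time. For sampling, the density $\ttmu(t)/\tsmu$ is a simple piecewise expression (a constant plus an inverse square-root term symmetric about $T/2$), so its CDF admits a closed form that can be inverted in $O(1)$ time; inverse-transform sampling thus gives $W = O(1)$. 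Plugging these into Theorem~\ref{thm:mainAlg} yields runtime $O(s\cdot W + s^2\cdot Z + s^\omega) = O(s^\omega)$ (since $\omega \ge 2$) and sample count
\begin{align*}
s = O\bigl(\tsmu\,(\log\tsmu + 1/\delta)\bigr) = O\bigl([FT + \log(1/\epsilon)]\cdot[\log(FT+\log(1/\epsilon)) + 1/\delta]\bigr),
\end{align*}
matching the claim.

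Finally, the accuracy guarantee is exactly the conclusion of Theorem~\ref{thm:mainAlg}: with probability $\ge 1-\delta$, the returned $\tilde y(t) = \sum_{i=1}^s \bv z(i)\cdot k_\mu(t_i,t)$ (as evaluated by Algorithm~\ref{alg:main2}) satisfies $\norm{\tilde y - y}_T^2 \le 6\epsilon\norm{x}_\mu^2 + 8\norm{n}_T^2$. The stated corollary has a $7$ rather than an $8$ in front of $\norm{n}_T^2$, so the only substantive bookkeeping step is to re-examine where that constant comes from in Claim~\ref{claim:regression_reduction} and Theorem~\ref{thm:baseSampling}; tightening the triangle-inequality / AM-GM step or tuning the approximation constant in Theorem~\ref{thm:baseSampling} to slightly sharper than $3$ should shave $8$ down to $7$. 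Since no new technique is needed, I expect no real obstacle here beyond this constant-tracking; the entire proof is assembly, with the substantive content already carried by Theorems~\ref{thm:mainAlg} and~\ref{thm:bandlimited_leverage_scores}.
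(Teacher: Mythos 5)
Your proposal is correct and follows exactly the paper's own route: instantiate Theorem~\ref{thm:mainAlg} with the $\ttmu$ from Theorem~\ref{thm:bandlimited_leverage_scores}, observe $Z = O(1)$ because $k_\mu$ is the closed-form sinc kernel, and observe $W = O(1)$ because $\ttmu$ is a mixture of a uniform density and the $1/\sqrt{\min(t,T-t)/T}$ density, whose CDF can be inverted explicitly (the paper computes $C^{-1}(z) = 2z^2$ after symmetrizing and scaling to $[0,1/2]$).

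One small observation in your favor: the constant $7$ in front of $\norm{n}_T^2$ in the corollary statement appears to be an inconsistency in the paper. Theorem~\ref{thm:mainAlg} gives $8\norm{n}_T^2$, and the paper's proof of the corollary simply says ``follows immediately from Theorem~\ref{thm:mainAlg}'' with no argument for how $8$ becomes $7$. You correctly flagged this as needing attention; the paper itself does not supply the tightening you speculate about, so the honest conclusion is that the corollary as stated should read $8\norm{n}_T^2$ (or the upstream constants in Claim~\ref{claim:regression_reduction}/Theorem~\ref{thm:baseSampling} would need to be re-derived more carefully, which the paper does not do).

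Your extra remark about handling $\epsilon > \opnorm{\Kmu}$ by returning $\tilde y \equiv 0$ is consistent with the footnote attached to Theorem~\ref{thm:mainAlg} and is a harmless addition.
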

\begin{proof}
The corollary follows immediately from Theorem \ref{thm:mainAlg} after noting that
\begin{itemize}
\item $Z = O(1)$ since, as shown in Appendix \ref{app:kernel_computation}, $k_\mu(t_1,t_2) = \frac{\sin(2\pi F(t_1-t_2))}{2\pi F(t_1-t_2)}$ and so can be computed in $O(1)$ arithmetic operations.
\item $W = O(1)$ since to sample points proportional to $\ttmu(t)$, we must sample a mixture of the uniform distribution and the distribution with density proportional to $\frac{1}{\sqrt{\min(t,T-t)/T}}$. It suffices to show that we can sample from the later in $O(1)$ time, and in fact that we can sample $t \in [0,1/2]$ with probability proportional to $\frac{1}{\sqrt{t}}$ in $O(1)$ time, since we can then symmetrize and scale this distribution. We can accomplish this with inverse transform sampling. Our density  is $p(t) = \frac{1}{2\sqrt{2t}}$ and so its cumulative distribution function is   $C(t) = \sqrt{t/2}$. Thus we can sample $z$ uniformly in $[0,1]$ and return $C^{-1}(z) = 2z^2$, which will be a sample from the desired distribution. This can be done with $O(1)$ arithmetic computations.
\end{itemize}
\end{proof}

\subsection{Smoothness bounds for polynomials}
\label{subsec:poly_facts}
As mentioned our main techniques tool is a Bernstein type smoothness bounds for low-degree polynomials.
In general, low-degree polynomials are smoother than high-degree polynomials, and thus cannot spike as sharply.
There are a number of ways to formalize this statement. The well known Markov brother's inequality and Bernstein inequality bound the maximum derivative of a polynomial by a function of the polynomial's degree and it's maximum value on an interval.

To bound leverage scores, we are interested in a slightly different metric of smoothness. In particular, we need to bound the maximum squared value of a polynomial by its average squared value on $[0,T]$. We can use standard properties of the Legendre polynomials to prove:
s
\begin{claim}\label{claim:smoothness_for_middle}
	For any degree $d$ polynomial $p(\cdot)$ with complex coefficients and $t\in [0,T]$, let $r = \frac{\min\left(t,T-t\right)}{T}$. Then:
	\begin{align*}
	|p(t)|^2 \leq \frac{d+1}{\sqrt{r}}\cdot \frac{1}{T}\int_{0}^T |p(t)|^2 \, dt.
	\end{align*}
\end{claim}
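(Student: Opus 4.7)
The plan is to reduce to a pointwise bound for degree-$d$ polynomials on $[-1,1]$ via the Christoffel–Darboux kernel for the Legendre polynomials, and then pull the bound back to $[0,T]$ by an affine change of variables.

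First I would substitute $x = 2t/T - 1$ and define $q(x) \eqdef p\bigl(T(x+1)/2\bigr)$, which is again a polynomial of degree at most $d$, now on $[-1,1]$. Under this change of variables $|q(x)|^2 = |p(t)|^2$ and
\begin{equation*}
    \int_{-1}^{1} |q(x)|^2\, dx \;=\; \frac{2}{T}\int_{0}^{T} |p(t)|^2\, dt .
\end{equation*}
A short computation gives $1 - x^2 = 4t(T-t)/T^2$, and since $\max(t,T-t) \ge T/2$ we get the lower bound $\sqrt{1-x^2} \ge \sqrt{2\min(t,T-t)/T} = \sqrt{2r}$. So it suffices to prove, for every degree-$d$ polynomial $q$ on $[-1,1]$ and every $x \in (-1,1)$,
\begin{equation*}
    |q(x)|^2 \;\le\; \frac{C(d+1)}{\sqrt{1-x^2}} \cdot \int_{-1}^{1} |q(y)|^2\, dy
\end{equation*}
for an appropriate constant; the $2/T$ Jacobian and the $\sqrt{2r}$ lower bound then combine to yield the target inequality $|p(t)|^2 \le \frac{d+1}{\sqrt{r}} \cdot \frac{1}{T}\int_0^T |p(s)|^2 \, ds$.

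For the pointwise bound on $[-1,1]$, I would expand $q$ in the basis of normalized Legendre polynomials $\tilde{P}_0,\tilde{P}_1,\ldots,\tilde{P}_d$, which are orthonormal under the standard $L_2[-1,1]$ inner product. Writing $q = \sum_{n=0}^d a_n \tilde{P}_n$, Cauchy–Schwarz gives
\begin{equation*}
    |q(x)|^2 \;\le\; \Bigl(\sum_{n=0}^{d} |a_n|^2\Bigr)\Bigl(\sum_{n=0}^{d} \tilde{P}_n(x)^2\Bigr) \;=\; \|q\|_{L_2[-1,1]}^2 \cdot K_d(x,x),
\end{equation*}
where $K_d(x,x) = \sum_{n=0}^d \tilde{P}_n(x)^2$ is the diagonal of the Christoffel–Darboux reproducing kernel of the space of degree-$d$ polynomials in $L_2[-1,1]$. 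I would then invoke the classical pointwise bound $|P_n(\cos\theta)|^2 \le \tfrac{2}{\pi n \sin\theta}$ (for $n\ge 1$), together with $\tilde P_n(x)^2 = \tfrac{2n+1}{2} P_n(x)^2$, to sum the series and obtain a bound of the form $K_d(x,x) \le \tfrac{C(d+1)}{\sqrt{1-x^2}}$.

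The main technical obstacle is getting the explicit constant in the problem statement — namely $d+1$ rather than $C(d+1)$. The loose Cauchy–Schwarz/Legendre argument above gives the right order but not the cleanest constant, so tightening it likely requires either a more careful evaluation of the Christoffel function for the Legendre measure (e.g.\ via the Bernstein-type bound $K_d(\cos\theta,\cos\theta)\le (d+1)/(\pi\sin\theta)$ up to lower-order terms) or a direct Gegenbauer/Jacobi calculation. Since the constants in the downstream application can absorb factors of a few, I would first lay out the proof achieving the stated $(d+1)/\sqrt{r}$ bound via the sharp Legendre Christoffel estimate, and only optimize further if the rest of the paper's constants require it.
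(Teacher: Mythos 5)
Your proposal is essentially the paper's proof: same affine reduction to $[-1,1]$, same expansion in the Legendre basis, same Bernstein-type pointwise bound $P_n(x)^2 \le \tfrac{2}{\pi(n+1/2)\sqrt{1-x^2}}$, and the same Cauchy--Schwarz step (the paper applies the triangle inequality to the raw Legendre expansion and then Cauchy--Schwarz against the all-ones vector; phrased via the orthonormal Legendre polynomials this is exactly your Christoffel-kernel bound $|q(x)|^2 \le \|q\|^2 K_d(x,x)$, and $K_d(x,x) = \sum_n \tfrac{2n+1}{2}P_n(x)^2 \le \tfrac{2(d+1)}{\pi\sqrt{1-x^2}}$ is the same computation). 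Your worry in the last paragraph about the constant is unfounded: the factor $\tfrac{2}{\pi} < 1$ from the Legendre bound, combined with $\sqrt{1-x^2}\ge\sqrt{2r}$, already yields $\tfrac{4}{\pi\sqrt{2}}(d+1)/\sqrt{r} < (d+1)/\sqrt{r}$ after accounting for the Jacobian, so the stated bound follows directly with no further tightening needed.
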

\ifdraft
\Haim{A somewhat similar result appears as Theorem 7.71.1 in Szego's book ``orthogonal polynomials'', although without a full proof (but they do say that the two results we used to prove our bound are used to prove their bound). Theorem 7.71.2 is an interesting and potentially useful generalization.}
\fi

This bound is tighter for points near the center of the interval $[0,T]$ and goes to infinity near the edges. Using the Markov brother's inequality, it's possible to obtain a fixed up bound of $O(d^2)$, which is tighter for small values of $r$. However,  this won't be necessary for our purposes. We note that, when $t = T/2$, the upper bound on $p(t)^2$ improves to $O(d)$ times the average squared value of $p$, quadratically better than an $O(d^2)$ bound. This improvement is nearly optimal: the upper bound of Claim \ref{claim:smoothness_for_middle} is matched up to a logarithmic factor by an appropriately scaled and shifted Chebyshev polynomial of the first kind applied to $[T/2-t]^2$ (see e.g. \cite{FrostigMuscoMusco:2016} for a construction).
\begin{proof}[Proof of Claim \ref{claim:smoothness_for_middle}]
The claim follows from properties of the standard orthogonal Legendre polynomials, which are denoted $P_0, P_1, \ldots$ and defined via the recurrence relation:
\begin{align*}
	P_0(x) &= 1 \\
	P_1(x) &= x \\
	&\,\,\,\vdots \\
	P_k(x) &= \frac{2k-1}{k}x\cdot P_{k-1}(x) - \frac{n-1}{n}\cdot P_{k-2}(x).  \\
\end{align*}
The Legendre polynomials are orthogonal over the interval $[-1,1]$ with respect to the constant weight function. In particular, they satisfy
\begin{align}
\label{eq:ortho1}
	\int_{-1}^1 P_j(x) P_k(x)\,dx = \frac{2}{2j+1} \delta_{j,k},
\end{align}
where $\delta_{m,n}$ is the Kronecker delta function. Additionally, for $x \in [-1,1]$, $|P_j(x)| \leq 1$ for all $j$.

Using these facts we can show that for any degree $d$ polynomial $p(\cdot)$, interval $[a,b]$, and $x\in [a,b]$:
	\begin{align*}
	|p(x)|^2 \leq \frac{d+1}{\sqrt{r}}\cdot \frac{\int_{a}^b |p(t)|^2 \, dt}{(b-a)},
	\end{align*}
	where $r = \frac{\min\left(|a - x|,|b-x|\right)}{(b-a)}$. Setting $a = 0$ and $b = T$ gives the claim.
	
	We begin by noting that, without loss of generality, we can assume that $a = -1$ and $b = 1$. In particular, shift and stretch $p(x)$ by defining $g(x) = p\left(\frac{2(x-a)}{b-a} - 1\right)$. $g$ has degree $d$ and the maximum of $|g(x)|^2$ for $x\in [-1,1]$ is the same as the maximum of $|p(x)|^2$ for $x\in [a,b]$. Additionally, $\frac{\int_{-1}^1 |g(t)|^2 \, dt}{2} = \frac{\int_{a}^b |p(t)|^2 \, dt}{(b-a)}$. Accordingly, to prove the claim it suffices to prove that, for any degree $d$ polynomial $g$,
		\begin{align}
		\label{eq:bern_to_prove}
			\max_{x\in[-1,1]} |g(x)|^2 \leq \frac{d+1}{\sqrt{r}} \cdot \frac{\int_{-1}^1 |g(t)|^2 \, dt}{2}.
		\end{align}

	Our proof depends on a Bernstein type inequality for Legendre polynomials, which can be found in \cite{Lorch:1983}. Specifically, for all $j = 0,1,2,\ldots$ and any $x\in [-1,1]$ it holds that:
	\begin{align}
	\label{eq:legendre_bernstein}
		P_j(x)^2 \leq \frac{2}{\pi(j + 1/2)} \frac{1}{\sqrt{1-x^2}}.
	\end{align}
	Writing $g$ in the Legendre basis:
	\begin{align*}
		g(x) = \sum_{j=0}^d c_j P_j(x),
	\end{align*}
	we have from \eqref{eq:legendre_bernstein} that
	\begin{align*}
		|g(x)| \leq \sum_{j=0}^d |c_j| \left(\frac{2}{\pi(j + 1/2)} \frac{1}{\sqrt{1-x^2}}\right)^{1/2}
	\end{align*}
	and thus
	\begin{align}
	\label{eq:first_bound_with_sqrt}
		|g(x)|^2 &\leq (d+1)\sum_{j=0}^d |c_j|^2 \frac{2}{\pi(j + 1/2)} \frac{1}{\sqrt{1-x^2}} \nonumber\\
		 			&= \frac{2}{\pi}\frac{(d+1)}{\sqrt{1-x^2}}\sum_{j=0}^d |c_j|^2 \frac{2}{2j + 1}\nonumber\\
		 			&=  \frac{2}{\pi}\frac{(d+1)}{\sqrt{1-x^2}} \int_{-1}^1 |g(t)|^2 \, dt.
	\end{align}
	The last equality step follows from \eqref{eq:ortho1}.
	Finally, let $q = \min\left(|-1-x|,|1-x|\right)$ and note that
	\begin{align*}
		\frac{1}{\sqrt{1-x^2}} = \frac{1}{\sqrt{1-(1-q)^2}} \leq \frac{1}{\sqrt{q}}.
	\end{align*}
	As defined, $r = q/2$ Plugging into \eqref{eq:first_bound_with_sqrt} we have a final bound of
	\begin{align*}
		|g(x)|^2 \leq \frac{4}{\pi}\frac{(d+1)}{\sqrt{2r}} \frac{\int_{-1}^1 |g(t)|^2 \, dt}{2} < \frac{(d+1)}{\sqrt{r}} \frac{\int_{-1}^1 |g(t)|^2 \, dt}{2},
	\end{align*}
	which establishes \eqref{eq:bern_to_prove} and thus the claim.
\end{proof}

\subsection{Smoothness bounds for bandlimited functions}
With Claim \ref{claim:smoothness_for_middle} in place, we are now ready to prove our main result for bandlimited functions.
\begin{proof}[Proof of Theorem \ref{thm:bandlimited_leverage_scores}]
	Following Definition \ref{def:ridgeScores}, our goal is to choose $\ttmu$ to satisfy:
	\begin{align}
	\label{eq:bandlimited_goal}
			\ttmu(t) \geq \frac{1}{T} \cdot \frac{|[\Fmu \alpha](t) |^2 }{\norm{\Fmu \alpha}_T^2 + \epsilon \norm{\alpha}_\mu^2}.
	\end{align}
	for any $\alpha$. Let $z = \Fmu\alpha$.
	Expanding $e^{-2 i \pi \xi t}$ using its Maclaurin series and letting $d$ be some degree parameter that we will fix later, we write $z$ as the sum of two functions, $a$ and $b$:
	\begin{align}\label{eq:zab}
	z(t)  &= \frac{1}{2F}\int_{-F}^F \alpha(\xi) e^{-2 i \pi \xi t} \, d\xi \nonumber\\
	& = \sum_{j=0}^\infty \frac{1}{2F}\int_{-F}^F \alpha(\xi) \frac{(-2\pi i \xi)^j}{j!}t^j \, d\xi \nonumber\\
	& = \sum_{j=0}^d \left(\frac{1}{2F}\int_{-F}^F \alpha(\xi) \frac{(-2\pi i \xi)^j}{j!} \, d\xi\right) t^j + \sum_{j={d+1}}^\infty \frac{1}{2F}\int_{-F}^F \alpha(\xi) \frac{(-2\pi i\xi)^j}{j!}t^j \, d\xi \nonumber \\
	&\eqdef a(t) + b(t).
	\end{align}
	Note that $a$ is a degree $d$ polynomial with complex coefficients. So by Claim \ref{claim:smoothness_for_middle},
	\begin{align}
	\label{eq:a_bound}
	|a(t)|^2 \leq \frac{d+1}{{\sqrt{\min(t,T-t)/T}}} \cdot\|a\|_T^2.
	\end{align}
	Turning our attention to $b$, we see that:
	\begin{align}
		|b(t)| &= \left|\sum_{j={d+1}}^\infty \frac{1}{2F}\int_{-F}^F \alpha(\xi) \frac{(-2\pi i\xi)^j}{j!}t^j \, d\xi\right| \leq \sum_{j={d+1}}^\infty \frac{(2\pi FT)^j}{j!}  \left|\frac{1}{2F} \int_{-F}^F \alpha(\xi)  \, d\xi \right| \nonumber\\
		&\leq \sum_{j={d+1}}^\infty \frac{(2\pi FT)^j}{j!} \sqrt{\frac{1}{2F} \int_{-F}^F 1\, d\xi }\sqrt{\|\alpha\|_\mu^2} = \sum_{j={d+1}}^\infty \frac{(2\pi FT)^j}{j!} \cdot \|\alpha\|_\mu .
		\label{eq:first_b_bound}
	\end{align}
	The second to last step uses Cauchy-Schwarz inequality.
	Finally using that for all $j$, $j! \ge (j/e)^j$, for any $d \geq 4\pi e FT$:
	\begin{align}
		\sum_{j={d+1}}^\infty \frac{(2\pi FT)^j}{j!}  &\leq \sum_{j={d+1}}^\infty \left(\frac{2\pi e FT}{j}\right)^j \nonumber\\
		&\leq \sum_{j={d+1}}^\infty \left(\frac{2\pi e FT}{d+1}\right)^j\nonumber\\
		&\le \sum_{j={d+1}}^\infty \left(\frac{1}{2}\right)^j = \frac{1}{2^{d}}.
		\label{eq:stirling}
	\end{align}
So, if we take $d  = \lceil 4\pi e FT + \log(1/\epsilon)/2  + 1 \rceil $, it follows from \eqref{eq:first_b_bound} and \eqref{eq:stirling} that 
\begin{align*}
|b(t)| \le \frac{1}{2^{d}} \cdot \norm{\alpha}_{\mu} \le \frac{1}{2^{\lceil \log(1/\epsilon)/2 \rceil+1 }} \cdot \norm{\alpha}_{\mu} \le \frac{\sqrt{\epsilon}}{2} \cdot \norm{\alpha}_{\mu}.
\end{align*}
It follows that $\|b\|_T \leq \frac{\sqrt{\epsilon}}{2}\|\alpha\|_\mu$. Using the decomposition of \eqref{eq:zab} and the fact that for any  real nonnegative $c,d$, $c^2 + d^2 \le (c+d)^2$, and for any complex $e,f$, $|e + f|^2 \le 2|e|^2 + 2|f|^2$:
\begin{align*}
\frac{|z(t)|^2}{\|z\|_T^2 + \epsilon\|\alpha\|_\mu^2} &\leq \frac{|a(t) + b(t)|^2}{(\norm{a}_T  - \|b\|_T)^2 + \epsilon\|\alpha\|_\mu^2}\\
& \leq \frac{2|a(t)|^2 + 2|b(t)|^2}{\frac{1}{2}(\norm{a}_T - \norm{b}_T + \sqrt{\epsilon} \norm{\alpha}_\mu)^2}\\
 &\leq \frac{4|a(t)|^2 + 4|b(t)|^2}{(\norm{a}_T + \frac{\sqrt{\epsilon}}{2} \norm{\alpha}_\mu)^2} \\
&\le \frac{4|a(t)|^2 + \epsilon \norm{\alpha}_\mu^2}{\norm{a}_T^2 + \frac{\epsilon}{4} \norm{\alpha}_\mu^2}.
\end{align*}
It follows from \eqref{eq:a_bound} that:
\begin{align*}
\frac{|z(t)|^2}{\|z\|_T^2 + \epsilon\|\alpha\|_\mu^2} &\le \max \left (\frac{4 |a(t)|^2}{\norm{a}_T^2}, 4 \right )\\
&\leq   \frac{4(d+1)}{\sqrt{\min(t,T-t)/T}} +4.
\end{align*}
In Theorem \ref{thm:bandlimited_leverage_scores} we set $q = \lceil 16\pi e FT + 2\log(1/\epsilon) +11 \rceil$. We have $q \geq 4\cdot \lceil 4\pi e FT + \log(1/\epsilon)/2 +2 \rceil = 4(d+1)$ since, for any $x$,  $\lceil 4x + 3\rceil \ge 4 \lceil x \rceil$.
Recalling that $z = \Fmu \alpha$, it follows $\ttmu$ defined in that theorem satisfies \eqref{eq:bandlimited_goal} for any $\alpha$.
It remains to bound the total measure of our approximate ridge leverage function, $\tsmu$. To do so, note that:
\begin{align*}
\tsmu= \frac{2}{T}\int_{0}^{T/2} \frac{q}{\sqrt{t/T}} +4 \, dt.
\end{align*}
We can compute:
\begin{align*}
\frac{2}{T}\int_{0}^{T/2} \frac{q}{\sqrt{t/T}} + 4 \, dt = 2\int_{0}^{1/2} \frac{q}{\sqrt{t}} + 4 \, dt = 2\sqrt{2} q + 4 = O(FT + \log(1/\epsilon)).
\end{align*}
	This bound establishes the theorem.
\end{proof}

\section{Statistical dimension for common Fourier constraints}
\label{app:stat_dimension}

In this section we leverage Theorem \ref{thm:bandlimited_leverage_scores}  to give upper bounds on the statistical dimensions of a number common priors $\mu$ used for Fourier constrained interpolation, including multiband, Gaussian, and Cauchy-Lorentz priors. We start by giving two simple lemmas that we use to translate our bound for bandlimited functions to these more general priors.

\begin{lemma}[Statistical Dimension of Sum of Measures] \label{disjoint-measure-levscore}
Let $\mu_1, \mu_2, \cdots \mu_s$ be finite measures on $\RR$. Let $\mu$ be a probability measure defined by $\mu = \mu_1+ \mu_2+ \cdots+ \mu_s$.
\begin{align*}
\smu \le \sum_{i=1}^s s_{\mu_i,\epsilon}.
\end{align*}
\end{lemma}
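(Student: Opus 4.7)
The plan is to reduce the statement to a clean operator inequality by exploiting the fact that the kernel operator is linear in the measure. Specifically, because $\mu = \sum_{i=1}^s \mu_i$, the definition $[\Kmu z](t) = \frac{1}{T}\int_0^T k_\mu(s,t) z(s)\, ds$ combined with the fact that $k_\mu(s,t) = \int_{\RR} e^{-2\pi i (s-t)\xi} d\mu(\xi)$ is linear in $\mu$ immediately yields the decomposition
\begin{align*}
\Kmu \;=\; \sum_{i=1}^s \mathcal{K}_{\mu_i}.
\end{align*}
I would establish this first, being careful that $\mathcal{K}_{\mu_i}$ is still a well-defined self-adjoint trace-class operator even though $\mu_i$ is only a finite (not probability) measure — the construction from Section \ref{sec:notation} goes through verbatim as long as $\mu_i(\RR) < \infty$, and the definition $s_{\mu_i,\epsilon} = \tr(\mathcal{K}_{\mu_i}(\mathcal{K}_{\mu_i} + \epsilon \mathcal{I}_T)^{-1})$ makes sense regardless.

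Next, I would apply linearity of the trace to write
\begin{align*}
\smu \;=\; \tr\!\left(\Kmu (\Kmu + \epsilon \mathcal{I}_T)^{-1}\right)
\;=\; \sum_{i=1}^s \tr\!\left(\mathcal{K}_{\mu_i} (\Kmu + \epsilon \mathcal{I}_T)^{-1}\right),
\end{align*}
and then bound each summand individually by $s_{\mu_i, \epsilon}$. This is the heart of the proof and the only place any operator-theoretic care is needed.

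The key inequality is obtained as follows. Since $\mathcal{K}_{\mu_j} \succeq 0$ for every $j$, we have $\Kmu \succeq \mathcal{K}_{\mu_i}$, hence $\Kmu + \epsilon \mathcal{I}_T \succeq \mathcal{K}_{\mu_i} + \epsilon \mathcal{I}_T$. Both operators are self-adjoint, bounded, and strictly positive, so Claim~\ref{clm:invordering} gives
\begin{align*}
(\Kmu + \epsilon \mathcal{I}_T)^{-1} \;\preceq\; (\mathcal{K}_{\mu_i} + \epsilon \mathcal{I}_T)^{-1}.
\end{align*}
Conjugating by $\mathcal{K}_{\mu_i}^{1/2}$ preserves this ordering, so
\begin{align*}
\mathcal{K}_{\mu_i}^{1/2}(\Kmu + \epsilon \mathcal{I}_T)^{-1}\mathcal{K}_{\mu_i}^{1/2}
\;\preceq\;
\mathcal{K}_{\mu_i}^{1/2}(\mathcal{K}_{\mu_i} + \epsilon \mathcal{I}_T)^{-1}\mathcal{K}_{\mu_i}^{1/2}.
\end{align*}
Applying Claim~\ref{clm:trace-sqrt} to rewrite each side and invoking monotonicity of the trace on positive trace-class operators (as used e.g.\ in Claim~\ref{clm:splitAverage}) yields
\begin{align*}
\tr\!\left(\mathcal{K}_{\mu_i} (\Kmu + \epsilon \mathcal{I}_T)^{-1}\right)
\;\leq\;
\tr\!\left(\mathcal{K}_{\mu_i} (\mathcal{K}_{\mu_i} + \epsilon \mathcal{I}_T)^{-1}\right)
\;=\; s_{\mu_i,\epsilon}.
\end{align*}
Summing over $i$ produces the claimed bound.

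I do not anticipate a serious obstacle. The only subtlety is bookkeeping: verifying that the operator-theoretic lemmas from Appendix~\ref{app:op} apply to $\mathcal{K}_{\mu_i}$ when $\mu_i$ is merely a finite (not probability) measure, and making sure traces and the conjugation-by-square-root manipulation are all justified in the infinite-dimensional setting. All of this follows directly from the preliminaries already developed, so the argument should be short.
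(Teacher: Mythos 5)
Your proof is correct and follows essentially the same path as the paper's: decompose $\Kmu = \sum_i \mathcal{K}_{\mu_i}$, split the trace by linearity, and bound each summand by combining $(\Kmu + \epsilon\mathcal{I}_T)^{-1} \preceq (\mathcal{K}_{\mu_i} + \epsilon\mathcal{I}_T)^{-1}$ (Claim~\ref{clm:invordering}) with conjugation by $\mathcal{K}_{\mu_i}^{1/2}$ and Claim~\ref{clm:trace-sqrt}. The paper spells out trace monotonicity via an orthonormal-basis expansion rather than citing it as a named fact, but this is the same argument.
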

\begin{proof}
We can see from Definition \ref{def:statDim} that for $\mu = \mu_1+\ldots + \mu_s$ the kernel operator $\Kmu$ satisfies $\Kmu = \sum_{i=1}^s \mathcal{K}_{\mu_i}$. 
We can thus bound:
\begin{align*}
\smu = \tr(\Kmu(\Kmu+\epsilon \mathcal{I}_T)^{-1}) &= \sum_{i=1}^s \tr(\mathcal{K}_{\mu_i}(\Kmu+\epsilon \mathcal{I}_T)^{-1}) \\
&\le \sum_{i=1}^s \tr(\mathcal{K}_{\mu_i}(\mathcal{K}_{\mu_i}+\epsilon \mathcal{I}_T)^{-1})\\
&=\sum_{i=1}^s s_{\mu_i,\epsilon}.
\end{align*}
The second to last inequality follows since $0 \preceq\mathcal{K}_{\mu_i} \preceq \Kmu$, so $0 \prec \mathcal{K}_{\mu_i} + \epsilon \mathcal{I}_T \preceq \Kmu + \epsilon \mathcal{I}_T $ and $(\Kmu+\epsilon \mathcal{I}_T)^{-1} \preceq (\mathcal{K}_{\mu_i}+\epsilon \mathcal{I}_T)^{-1}$  by  Claim~\ref{clm:invordering}. Letting $e_1,e_2$ be an orthonormal basis for $L_2(T)$, we thus have:
\begin{align*}
\tr(\mathcal{K}_{\mu_i}(\mathcal{K}_{\mu_i}+\epsilon \mathcal{I}_T)^{-1}) &= \tr(\mathcal{K}_{\mu_i}^{1/2}(\mathcal{K}_{\mu_i}+\epsilon \mathcal{I}_T)^{-1}\mathcal{K}_{\mu_i}^{1/2})\tag{By cyclic property  of the trace, Claim \ref{clm:trace-sqrt}.}\\
&= \sum_{i=1}^\infty \langle \mathcal{K}_{\mu_i}^{1/2} e_i ,(\mathcal{K}_{\mu_i}+\epsilon \mathcal{I}_T)^{-1}\mathcal{K}_{\mu_i}^{1/2} e_i\rangle_T\\
&\ge \sum_{i=1}^\infty \langle \mathcal{K}_{\mu_i}^{1/2} e_i ,(\Kmu+\epsilon \mathcal{I}_T)^{-1}\mathcal{K}_{\mu_i}^{1/2} e_i\rangle_T\\
& = \tr(\mathcal{K}_{\mu_i}^{1/2}(\Kmu+\epsilon \mathcal{I}_T)^{-1}\mathcal{K}_{\mu_i}^{1/2})\\
& = \tr(\mathcal{K}_{\mu_i}(\Kmu+\epsilon \mathcal{I}_T)^{-1}).
\end{align*}
This completes the lemma.
\end{proof}

\begin{lemma}[Statistical Dimension of Scaled Measures]\label{lem:scaling-measure-statdim}
Let $\mu$ be a measure on $\RR$. For any parameter $\gamma >0$, we have:
$$\smu = s_{(\mu / \gamma ),(\epsilon/\gamma )}.$$
\end{lemma}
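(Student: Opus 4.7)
The plan is to exploit the fact that the kernel operator $\Kmu$ depends linearly on the underlying measure $\mu$, so scaling $\mu$ by $1/\gamma$ simply rescales $\Kmu$ by $1/\gamma$, after which the factor cancels in the definition of the statistical dimension.

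First, I would observe from Definition \ref{def:statDim} (equation \eqref{eq:kernel_op}) that the map $\mu \mapsto \Kmu$ is linear: concretely, for any $z \in L_2(T)$,
\begin{align*}
[\mathcal{K}_{\mu/\gamma} z](t) = \int_{\xi \in \RR} e^{2\pi i \xi t}\left[\frac{1}{T}\int_{s\in[0,T]} z(s) e^{-2\pi i \xi s}\, ds\right] d(\mu/\gamma)(\xi) = \frac{1}{\gamma}[\Kmu z](t),
\end{align*}
so that $\mathcal{K}_{\mu/\gamma} = \tfrac{1}{\gamma}\Kmu$ as operators on $L_2(T)$. Note that we do not require $\mu/\gamma$ to be a probability measure here; the definition of $\mathcal{K}$ and of the statistical dimension via \eqref{eq:stat_dim_def} make sense for any finite positive measure, since $\Kmu$ remains self-adjoint, positive semidefinite and trace-class (with $\tr(\Kmu)=\mu(\RR)/1$).

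Next, I would plug this identity into the definition \eqref{eq:stat_dim_def} of the statistical dimension at the rescaled measure and error level:
\begin{align*}
s_{\mu/\gamma,\, \epsilon/\gamma}
&= \tr\!\Bigl(\mathcal{K}_{\mu/\gamma}\bigl(\mathcal{K}_{\mu/\gamma} + (\epsilon/\gamma)\,\mathcal{I}_T\bigr)^{-1}\Bigr) \\
&= \tr\!\Bigl(\tfrac{1}{\gamma}\Kmu\,\bigl(\tfrac{1}{\gamma}\Kmu + \tfrac{1}{\gamma}\,\epsilon\,\mathcal{I}_T\bigr)^{-1}\Bigr) \\
&= \tr\!\Bigl(\tfrac{1}{\gamma}\Kmu\cdot\gamma\,(\Kmu + \epsilon\,\mathcal{I}_T)^{-1}\Bigr) \\
&= \tr\!\bigl(\Kmu\,(\Kmu + \epsilon\,\mathcal{I}_T)^{-1}\bigr) = s_{\mu,\epsilon}.
\end{align*}
The only nontrivial step is pulling the scalar $1/\gamma$ out of the inverse, which follows since $(c \mathcal{A})^{-1} = c^{-1}\mathcal{A}^{-1}$ for any invertible bounded operator $\mathcal{A}$ and any nonzero scalar $c$; here $\mathcal{A}=\Kmu + \epsilon\,\mathcal{I}_T$ is strictly positive (because $\epsilon>0$) and hence invertible with bounded inverse.

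There is no real obstacle in this proof; the only thing to be slightly careful about is that $\mu/\gamma$ may not be a probability measure (if $\gamma \ne 1$), so one should note that Definition \ref{def:statDim} and all the operator identities used continue to hold for arbitrary finite positive measures, as the probability-measure hypothesis is not used anywhere in the chain of equalities above.
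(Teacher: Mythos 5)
Your proof is correct and takes essentially the same approach as the paper: both hinge on the observation $\mathcal{K}_{\mu/\gamma} = \tfrac{1}{\gamma}\Kmu$ and then cancel the scalar in the definition of $s_{\mu,\epsilon}$. The only cosmetic difference is that the paper performs the cancellation in the eigenvalue-sum form \eqref{def:stat_dim_sum_version} (noting the eigenvalues scale by $1/\gamma$), whereas you work directly in the trace form \eqref{eq:stat_dim_def} and factor the scalar out of the resolvent; these are equivalent. Your explicit remark that $\mu/\gamma$ need not be a probability measure but that the definitions still apply is a welcome clarification that the paper's proof leaves implicit.
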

\begin{proof}
From Definition \ref{def:statDim}, we can see that $\mathcal{K}_{(\mu/\gamma)} = \frac{1}{\gamma}\Kmu$ and thus  has eigenvalues equal to $\lambda_1(\Kmu)/\gamma, \lambda_2(\Kmu)/\gamma, \ldots$ 
We can thus compute:
\begin{align*}
s_{(\mu/\gamma),(\epsilon/\gamma)} &= \sum_{i=1}^\infty \frac{\lambda_i(\mathcal{K}_{(\mu/\gamma)})}{\lambda_i(\mathcal{K}_{(\mu/\gamma)})+\epsilon/\gamma}\\
&=\sum_{i=1}^\infty \frac{\lambda_i(\Kmu)/\gamma}{\lambda_i(\Kmu)/\gamma+\epsilon/\gamma}\\
&=\sum_{i=1}^\infty \frac{\lambda_i(\Kmu)}{\lambda_i(\Kmu)+\epsilon}\\
&= \smu.
\end{align*}
\end{proof}

We now use Lemmas \ref{disjoint-measure-levscore} and \ref{lem:scaling-measure-statdim} to prove our statistical dimension bounds.
We first start with multiband Fourier constraints, showing that the statistical dimension is roughly proportional to the total length of all the frequency bands times the time domain window size, intuitively matching the Landau rate for asymptotic recovery of multiband functions \cite{Landau:1967a}.

\begin{theorem}[Multiband Statistical Dimension]\label{theorem-multiband-statdim}
Consider a set of $s$ disjoint frequency bands, $I_1, I_2, \cdots, I_s$, and suppose that the length of the band $I_i$ is denoted by $F_i$. Let $\mu$ be the measure which induces a uniform probability density on $I_1 \cup I_2 \cup \cdots \cup I_s$. We have:
$$ \smu = O\left(\sum_{i=1}^s F_i T + s\log(1/\epsilon)\right).$$
\end{theorem}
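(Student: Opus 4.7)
The plan is to decompose $\mu$ into contributions from the individual bands, reduce each contribution to a centered bandlimited measure by a modulation argument, and then invoke the bandlimited bound of Theorem \ref{thm:bandlimited_leverage_scores} after a rescaling.

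Concretely, let $F = \sum_{i=1}^s F_i$, so that $\mu$ has density $1/F$ on $I_1 \cup \cdots \cup I_s$. Write $\mu = \mu_1 + \cdots + \mu_s$, where $\mu_i$ is the restriction of $\mu$ to $I_i$ (a finite, not probability, measure with total mass $F_i/F$). Then Lemma \ref{disjoint-measure-levscore} immediately gives
\begin{align*}
\smu \;\le\; \sum_{i=1}^s s_{\mu_i,\epsilon},
\end{align*}
so it suffices to show $s_{\mu_i,\epsilon} = O(F_i T + \log(1/\epsilon))$ for each $i$.

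Next, I would remove the band's center frequency. Let $c_i$ denote the center of $I_i$ and let $\mu_i'$ be $\mu_i$ shifted to be centered at the origin, i.e., the measure with density $1/F$ on $[-F_i/2, F_i/2]$. Using $k_{\mu_i}(s,t) = e^{-2\pi i(s-t)c_i} k_{\mu_i'}(s,t)$, one checks that $\mathcal{K}_{\mu_i} = U \mathcal{K}_{\mu_i'} U^*$, where $U$ is multiplication by $e^{2\pi i c_i t}$ on $L_2(T)$. Since $U$ is unitary, $\mathcal{K}_{\mu_i}$ and $\mathcal{K}_{\mu_i'}$ have the same spectrum, so $s_{\mu_i,\epsilon} = s_{\mu_i',\epsilon}$. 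Now $\mu_i' = (F_i/F)\,\nu_i$, where $\nu_i$ is the uniform probability measure on $[-F_i/2, F_i/2]$. Applying Lemma \ref{lem:scaling-measure-statdim} with $\gamma = F/F_i$ yields
\begin{align*}
s_{\mu_i,\epsilon} \;=\; s_{\mu_i',\epsilon} \;=\; s_{\nu_i,\, F\epsilon/F_i}.
\end{align*}
Since $\nu_i$ is uniform on a symmetric interval of half-width $F_i/2$ and $F_i/F \le 1$, Theorem \ref{thm:bandlimited_leverage_scores} gives
\begin{align*}
s_{\nu_i,\, F\epsilon/F_i} \;=\; O\!\left(F_i T + \log(F_i/(F\epsilon))\right) \;=\; O\!\left(F_i T + \log(1/\epsilon)\right).
\end{align*}
Summing over $i$ produces the desired bound.

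The main conceptual step is the modulation/unitary-equivalence observation that eliminates each band's carrier frequency $c_i$; everything else is bookkeeping on top of Lemmas \ref{disjoint-measure-levscore} and \ref{lem:scaling-measure-statdim} and the bandlimited result. The only mild subtlety is to verify that the reduction to $\nu_i$ followed by the rescaling produces an effective regularization parameter $F\epsilon/F_i \le \epsilon^{-1}$-type term whose logarithm is controlled by $\log(1/\epsilon)$ uniformly in $i$; since $F_i \le F$, this is automatic. I do not anticipate a genuine obstacle beyond correctly chaining the two lemmas with the modulation identity.
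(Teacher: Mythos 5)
Your argument is essentially the paper's: decompose $\mu = \sum_i \mu_i$ and apply Lemma \ref{disjoint-measure-levscore}, then rescale each $\mu_i$ via Lemma \ref{lem:scaling-measure-statdim} and invoke Theorem \ref{thm:bandlimited_leverage_scores}, using $\gamma_i = F_i/F \le 1$ so that the log term is bounded by $\log(1/\epsilon)$. The one place you go further is in making explicit the modulation/unitary-conjugation step ($\mathcal{K}_{\mu_i} = U \mathcal{K}_{\mu_i'} U^*$) needed because Theorem \ref{thm:bandlimited_leverage_scores} is stated for a symmetric interval $[-F,F]$ rather than an arbitrary band $I_i$ --- the paper leaves this implicit, and your version is cleaner; just note that you wrote $\gamma = F/F_i$ where the correct scaling parameter for Lemma \ref{lem:scaling-measure-statdim} is $\gamma = F_i/F$, though your resulting expression $s_{\nu_i,\,F\epsilon/F_i}$ is the one that corresponds to the correct choice, so the conclusion is unaffected.
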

\begin{proof}
For every $i$, let $\mu_i$ be the measure defined by $\mu_i(A) = \mu(A \cap I_i)$. Note that we have $\mu = \sum_i \mu_i$ and so can invoke Lemma \ref{disjoint-measure-levscore}, giving:
\begin{equation}\label{eq:statdim-subaditiv}
\smu \le \sum_{i=1}^s s_{\mu_i,\epsilon}.
\end{equation}

If $\mu_i$ gave a uniform probability measure on frequency band $I_i$ (i.e., if we had $\mu_i(\RR) = 1$), we could use the result of Theorem \ref{thm:bandlimited_leverage_scores} to bound $s_{\mu_i,\epsilon} = O(F_iT + \log (1/\epsilon))$. This is not the case, but we can instead let $\gamma_i \eqdef \mu_i(\RR) \le 1$. By Lemma \ref{lem:scaling-measure-statdim},
$$s_{\mu_i,\epsilon} = s_{(\mu_i / \gamma_i),(\epsilon/\gamma_i)}.$$
Now $\mu_i / \gamma_i$ is a uniform probability measure on $I_i$, so we can invoke Theorem \ref{thm:bandlimited_leverage_scores} giving:
\begin{align*}
s_{\mu_i,\epsilon} &= s_{(\mu_i/ \gamma_i),(\epsilon/\gamma_i)} = O\left( F_i T + \log(\gamma_i/\epsilon) \right).
\end{align*}
Plugging this bound in \eqref{eq:statdim-subaditiv} and using that $\gamma_i \le 1$ we obtain:
\begin{align*}
\smu &= O\left( \sum_{i=1}^s  F_i T + \log(\gamma_i/\epsilon)\right) = O\left(  \sum_{i=1}^s F_i T  + s\log(1/\epsilon) \right),
\end{align*}
completing the theorem.\end{proof}

\ifdraft
\Haim{As a side, seems like the technique we use in the next theorem can be used to bound the statistical dimension of any prior based only on how it decays. e.g. if the corresponding density decays exponentially, then the stat dim is essentially like the Gaussian.}\Cam{Agreed. Basically  you just need to be able to 'cover the distribution' with rectangles whose  area is falling off.}
\fi
We next bound the statistical dimension of Gaussian measure.
\begin{theorem}[Gaussian Statistical Dimension]\label{thm:guassianStatDim}
Let $\mu$ induce the Gaussian probability distribution with standard deviation $F$ defined by $d \mu(\xi) = \frac{1}{\sqrt{2\pi F^2}} e^{-\xi^2 / 2F^2} d\xi$.
We have:
$$ \smu = O\left( F T \sqrt{\log(1/\epsilon)} + \log(1/\epsilon) \right).$$
\end{theorem}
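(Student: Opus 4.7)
The plan is to decompose the Gaussian measure $\mu$ into a bulk piece supported on a carefully chosen bounded interval and a tail piece, then bound the statistical dimension of each separately using the machinery already developed in this appendix. Concretely, I would set $R = F\sqrt{2\log(1/\epsilon)}$, let $\mu_{\mathrm{in}}$ be the restriction of $\mu$ to $[-R,R]$, and let $\mu_{\mathrm{out}}$ be the restriction to $\RR \setminus [-R,R]$, so that $\mu = \mu_{\mathrm{in}} + \mu_{\mathrm{out}}$. Lemma \ref{disjoint-measure-levscore} then gives $\smu \le s_{\mu_{\mathrm{in}},\epsilon} + s_{\mu_{\mathrm{out}},\epsilon}$, and it suffices to bound the two pieces.

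The tail piece is crude: from \eqref{def:stat_dim_sum_version} one always has $s_{\mu_{\mathrm{out}},\epsilon} \le \tr(\mathcal{K}_{\mu_{\mathrm{out}}})/\epsilon = \mu_{\mathrm{out}}(\RR)/\epsilon$, and the standard Gaussian tail bound gives $\mu_{\mathrm{out}}(\RR) \le 2e^{-R^2/2F^2} = 2\epsilon$, hence $s_{\mu_{\mathrm{out}},\epsilon} = O(1)$. For the bulk piece, I would dominate $\mu_{\mathrm{in}}$ by the measure $\bar\mu$ that is uniform on $[-R,R]$ with density equal to the maximum of the Gaussian density, $M = 1/\sqrt{2\pi F^2}$. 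Since $\mu_{\mathrm{in}} \le \bar\mu$ pointwise, the kernel operators satisfy $\mathcal{K}_{\mu_{\mathrm{in}}} \preceq \mathcal{K}_{\bar\mu}$; by the min-max characterization of eigenvalues of compact self-adjoint operators this yields $\lambda_i(\mathcal{K}_{\mu_{\mathrm{in}}}) \le \lambda_i(\mathcal{K}_{\bar\mu})$ for every $i$, and summing $x/(x+\epsilon)$ (which is monotone increasing) over the eigenvalues gives $s_{\mu_{\mathrm{in}},\epsilon} \le s_{\bar\mu,\epsilon}$.

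It remains to bound $s_{\bar\mu,\epsilon}$. Letting $\gamma = \bar\mu(\RR) = 2RM = \sqrt{4\log(1/\epsilon)/\pi}$, the rescaled measure $\bar\mu/\gamma$ is the uniform probability measure on $[-R,R]$. Applying Lemma \ref{lem:scaling-measure-statdim} and then Theorem \ref{thm:bandlimited_leverage_scores} with bandlimit $R$ and error parameter $\epsilon/\gamma$ gives
\begin{align*}
s_{\bar\mu,\epsilon} = s_{\bar\mu/\gamma,\epsilon/\gamma} = O\!\left(RT + \log(\gamma/\epsilon)\right) = O\!\left(FT\sqrt{\log(1/\epsilon)} + \log(1/\epsilon)\right),
\end{align*}
where I have used $R = F\sqrt{2\log(1/\epsilon)}$ and $\log \gamma = O(\log\log(1/\epsilon)) = O(\log(1/\epsilon))$. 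Combining the two bounds proves the claimed $\smu = O(FT\sqrt{\log(1/\epsilon)} + \log(1/\epsilon))$. I expect the only real subtlety to be the step $s_{\mu_{\mathrm{in}},\epsilon} \le s_{\bar\mu,\epsilon}$, which requires justifying the operator-valued Weyl inequality in the infinite-dimensional, trace-class setting; one also needs the mild housekeeping check that $\epsilon \le 1$ (hence the nontrivial regime) makes $\epsilon/\gamma \le 1$ so Theorem \ref{thm:bandlimited_leverage_scores} applies. Everything else is a direct invocation of lemmas already established.
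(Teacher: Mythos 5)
Your proof is correct and follows essentially the same route as the paper's: decompose $\mu$ into bulk and tail at a radius $R$ of order $F\sqrt{\log(1/\epsilon)}$, bound the tail statistical dimension by $\tr(\mathcal{K}_{\mu_{\mathrm{out}}})/\epsilon = O(1)$ via the Gaussian tail estimate, majorize the bulk by a uniform measure on $[-R,R]$, and finish by rescaling (Lemma \ref{lem:scaling-measure-statdim}) and invoking the bandlimited bound of Theorem \ref{thm:bandlimited_leverage_scores}. The only cosmetic differences are your choice $R = F\sqrt{2\log(1/\epsilon)}$ versus the paper's $F\sqrt{\log(1/\epsilon)}$, and that you spell out the Weyl-inequality step turning $\mathcal{K}_{\mu_{\mathrm{in}}} \preceq \mathcal{K}_{\bar\mu}$ into $s_{\mu_{\mathrm{in}},\epsilon} \le s_{\bar\mu,\epsilon}$, which the paper leaves implicit.
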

\begin{proof}
Let $I_h$ be the interval defined by $I_h = \{\xi \in \RR : |\xi| \le F\sqrt{\log(1/\epsilon)}\}$.
We decompose $\mu$ into two measures $\mu_h$ and $\mu_t$ as follows:
$$\mu_h(A) = \mu(A \cap I_h)\text{ and }\mu_t(A) = \mu(A - A\cap I_h).$$
We can see that $\mu = \mu_h+\mu_t$ and so by Lemma \ref{disjoint-measure-levscore},
$\smu \le s_{\mu_t,\epsilon} + s_{\mu_h,\epsilon}.$
For $\mu_t$ we have:
\begin{align*}
 \tr(\mathcal{K}_{\mu_t}) = \mu_t(\RR) &= \frac{1}{\sqrt{2\pi F^2}}\int_{|\xi| > F \sqrt{\log(1/\epsilon)}}  e^{-\xi^2 / 2F^2} \, d\xi\\
 &= 1-\erf(\sqrt{\log(1/\epsilon)}) \le 2\epsilon,
\end{align*}
where the last bound follows from a Chernoff bound, giving $1-\erf(x) \le 2e^{-x^2}$ \cite{wainwright2015high}. This lets us crudely bound:
\begin{equation} \label{statdim-tail}
s_{\mu_t,\epsilon} = \tr(\mathcal{K}_{\mu_t}(\mathcal{K}_{\mu_t}+\epsilon \mathcal{I}_T)^{-1}) \le \tr(\mathcal{K}_{\mu_t}) / \epsilon \le 2,
\end{equation}
where the first ineguality is because $\opnorm{(\mathcal{K}_{\mu_t}+\epsilon \mathcal{I}_T)^{-1}}\leq 1/\epsilon$.

We next bound the statistical dimension of $\mu_h$. Let $\tilde{\mu}_h$ be a uniform measure on $I_h$, with $d\mu(\xi) = \frac{1}{\sqrt{2\pi F^2}} d \xi$ for all $\xi \in I_h$. Note that $d \tilde{\mu}_h(\xi) \ge d {\mu}_h(\xi)$ for all  $\xi \in I_h$ which gives that $K_{\mu_h} \preceq K_{{\tilde \mu}_h}$ and so $s_{\mu_h,\epsilon} \leq s_{\tilde{\mu}_h,\epsilon}.$

Let $\gamma \eqdef \tilde{\mu}_h(\RR) = \sqrt{\frac{2\log(1/\epsilon)}{\pi}}$. By Lemma \ref{lem:scaling-measure-statdim}, $s_{\tilde{\mu},\epsilon} = s_{(\tilde{\mu} / \gamma ),(\epsilon/\gamma )}.$
Since $\tilde{\mu} / \gamma$ is a uniform probability measure on $I_h$, we can invoke Theorem \ref{thm:bandlimited_leverage_scores} to give:
\begin{align}
s_{{\mu}_h,\epsilon} \le s_{\tilde{\mu}_h,\epsilon} = s_{(\tilde{\mu}_h / \gamma ),(\epsilon/\gamma )} \nonumber &= O\left( F T \sqrt{\log(1/\epsilon)} + \log(\gamma/\epsilon) \right) \nonumber\\
&= O\left( F T \sqrt{\log(1/\epsilon)} + \log(1/\epsilon) \right), \label{statdim-head}
\end{align}
where the last equality follows from the fact that $\gamma = O(\sqrt{\log(1/\epsilon)})$. Combining \eqref{statdim-tail} and \eqref{statdim-head} and applying Lemma \ref{disjoint-measure-levscore} we have:
\begin{align*}
\smu &\le s_{\mu_t,\epsilon} + s_{\mu_h,\epsilon}\\
&= 2 + O\left( F T \sqrt{\log(1/\epsilon)} + \log(1/\epsilon) \right)\\
&= O\left( F T \sqrt{\log(1/\epsilon)} + \log(1/\epsilon) \right),
\end{align*}
which completes the theorem.
\end{proof}

Finally, we bound the statistical dimension of the Cauchy-Lorentz measure.

\begin{theorem}\label{thm:lorentzStatDim}
Let $\mu$ induce the Cauchy-Lorentz probability distribution with scale parameter $F$ defined by $d\mu(\xi) = \frac{1}{\pi F \left[ 1+\left(\frac{\xi}{F}\right)^2 \right]} d\xi$. We have:
	$$ \smu = O\left( \frac{F T }{\sqrt{\epsilon}} + \frac{1}{\sqrt{\epsilon}} \right).$$
\end{theorem}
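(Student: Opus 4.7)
The plan is to mirror the proof of Theorem \ref{thm:guassianStatDim} for the Gaussian case, splitting $\mu$ into a bounded ``head'' $\mu_h$ and a ``tail'' $\mu_t$, but with a much more generous cutoff to compensate for the polynomial (rather than exponential) decay of the Cauchy-Lorentz density. Since the tail mass of a Cauchy-Lorentz beyond $|\xi| \ge cF$ is only $O(1/c)$, and since the trivial tail bound $s_{\mu_t,\epsilon} \le \tr(\mathcal{K}_{\mu_t})/\epsilon = \mu_t(\RR)/\epsilon$ needs to come out to the target $O(1/\sqrt{\epsilon})$, we will choose the cutoff so that $1/(c \epsilon) = O(1/\sqrt{\epsilon})$, i.e. $c \asymp 1/\sqrt{\epsilon}$. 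Concretely I would set $I_h = \{\xi \in \RR : |\xi| \le F/\sqrt{\epsilon}\}$ and define $\mu_h(A) = \mu(A \cap I_h)$ and $\mu_t(A) = \mu(A) - \mu_h(A)$, so that $\mu = \mu_h + \mu_t$ and Lemma \ref{disjoint-measure-levscore} gives $\smu \le s_{\mu_h,\epsilon} + s_{\mu_t,\epsilon}$.

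For the tail piece, I would use $\mu_t(\RR) = 1 - \tfrac{2}{\pi}\arctan(1/\sqrt{\epsilon}) = \tfrac{2}{\pi}\arctan(\sqrt{\epsilon}) = O(\sqrt{\epsilon})$, whence $s_{\mu_t,\epsilon} \le \mu_t(\RR)/\epsilon = O(1/\sqrt{\epsilon})$, exactly as in inequality \eqref{statdim-tail}. For the head piece, I would dominate $\mu_h$ pointwise by the uniform majorant $\tilde{\mu}_h$ on $I_h$ whose density equals the peak Cauchy-Lorentz density $1/(\pi F)$; this is valid because $\frac{1}{\pi F (1+(\xi/F)^2)} \le \frac{1}{\pi F}$ for every $\xi$, hence $\mathcal{K}_{\mu_h} \preceq \mathcal{K}_{\tilde \mu_h}$ and $s_{\mu_h,\epsilon} \le s_{\tilde{\mu}_h,\epsilon}$. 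Setting $\gamma \eqdef \tilde{\mu}_h(\RR) = \frac{2}{\pi \sqrt{\epsilon}}$, Lemma \ref{lem:scaling-measure-statdim} gives $s_{\tilde{\mu}_h,\epsilon} = s_{(\tilde{\mu}_h/\gamma),(\epsilon/\gamma)}$, where $\tilde{\mu}_h/\gamma$ is the uniform probability measure on the interval $[-F/\sqrt{\epsilon},F/\sqrt{\epsilon}]$. Applying Theorem \ref{thm:bandlimited_leverage_scores} to this bandlimit $F' = F/\sqrt{\epsilon}$ with accuracy $\epsilon' = \epsilon/\gamma = \Theta(\epsilon^{3/2})$ yields
\begin{align*}
s_{\mu_h,\epsilon} \le s_{\tilde{\mu}_h,\epsilon} = O\!\left( F' T + \log(1/\epsilon') \right) = O\!\left( \tfrac{FT}{\sqrt{\epsilon}} + \log(1/\epsilon) \right).
\end{align*}

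Combining the two bounds and absorbing the $\log(1/\epsilon)$ term into $1/\sqrt{\epsilon}$ gives $\smu = O\!\left(FT/\sqrt{\epsilon} + 1/\sqrt{\epsilon}\right)$, as desired. The only subtlety I anticipate is picking the right cutoff $c \asymp 1/\sqrt{\epsilon}$: a smaller $c$ makes the tail bound blow up, while a larger $c$ makes the effective bandlimit $F/c$ too large; the value $c = 1/\sqrt{\epsilon}$ is the unique choice that balances the two contributions. All other steps are direct applications of Lemmas \ref{disjoint-measure-levscore} and \ref{lem:scaling-measure-statdim} together with Theorem \ref{thm:bandlimited_leverage_scores}, and do not present any real obstacle beyond the elementary asymptotics of $\arctan$ near $\pi/2$.
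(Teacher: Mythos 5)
Your proposal is correct and follows essentially the same route as the paper's own proof: same cutoff $|\xi|\le F/\sqrt{\epsilon}$, same head/tail split via Lemma \ref{disjoint-measure-levscore}, same crude tail bound $s_{\mu_t,\epsilon}\le\mu_t(\RR)/\epsilon$, same uniform majorant dominating the head with Lemma \ref{lem:scaling-measure-statdim} and Theorem \ref{thm:bandlimited_leverage_scores}. The only cosmetic difference is that you evaluate the tail mass via the $\arctan$ identity while the paper bounds $\int_{1/\sqrt{\epsilon}}^\infty(1+\xi^2)^{-1}d\xi$ by $\int_{1/\sqrt{\epsilon}}^\infty \xi^{-2}d\xi$; both give $\mu_t(\RR)=O(\sqrt{\epsilon})$, so the arguments coincide.
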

\begin{proof}
As in the proof of Theorem \ref{thm:guassianStatDim} we define $I_h$ to be the interval $I_h = \{ \xi \in \RR: |\xi| \le F/\sqrt{\epsilon}\}$. We decompose $\mu$ into two measures $\mu_h$ and $\mu_t$ as follows:
	$$\mu_h(A) = \mu(A \cap I_h)\text{ and }\mu_t(A) = \mu(A - A\cap I_h).$$
	We have $\mu = \mu_h + \mu_t$ by Lemma \ref{disjoint-measure-levscore},
$\smu \le s_{\mu_t,\epsilon} + s_{\mu_h,\epsilon}.$
For $\mu_t$ we have:
\begin{align*}
 \tr(\mathcal{K}_{\mu_t}) = \mu_t(\RR) &= \frac{1}{\pi F}\int_{|\xi| > F /\sqrt{\epsilon}}  \frac{1}{1+\left(\frac{\xi}{F}\right)^2} \, d\xi\\
 &= \frac{2}{\pi} \int_{1/\sqrt{\epsilon}}^{\infty}  \frac{1}{1+\xi^2} \, d\xi\\
 &\le \frac{2}{\pi}  \int_{1/\sqrt{\epsilon}}^{\infty}  \frac{1}{\xi^2} \, d\xi = \frac{2 \sqrt{\epsilon}}{\pi}.
 \end{align*}
As in \eqref{statdim-tail} we can thus bound:
	\begin{equation} \label{cauchy-statdim-tail}
	s_{\mu_t,\epsilon} \le \tr(\mathcal{K}_{\mu_t}) / \epsilon = O(1/\sqrt{\epsilon}).
	\end{equation}
	We next bound the statistical dimension of $\mu_h$. Let $\tilde{\mu}_h$ be a uniform measure on $I_h$ with $d\mu(\xi) = \frac{1}{\pi F}$ for all $\xi \in I_h$. As in the proof of Theorem \ref{thm:guassianStatDim}, $d \tilde{\mu}_h(\xi) \ge d {\mu}_h(\xi)$ for all  $\xi \in I_h$ which gives that $K_{\mu_h} \preceq K_{{\tilde \mu}_h}$ and so $s_{\mu_h,\epsilon} < s_{\tilde{\mu}_h,\epsilon}.$

Let $\gamma \eqdef \tilde{\mu}_h(\RR) = \frac{2}{\pi \sqrt{\epsilon}}$. By Lemma \ref{lem:scaling-measure-statdim}, $s_{\tilde{\mu},\epsilon} = s_{(\tilde{\mu} / \gamma ),(\epsilon/\gamma )}.$
Since $\tilde{\mu} / \gamma$ is a uniform probability measure on $I_h$, we can invoke Theorem \ref{thm:bandlimited_leverage_scores} to give:
\begin{align}
s_{{\mu}_h,\epsilon} \le s_{\tilde{\mu}_h,\epsilon} = s_{(\tilde{\mu}_h / \gamma ),(\epsilon/\gamma )} \nonumber &= O\left( \frac{F T}{\sqrt{\epsilon}} + \log(\gamma/\epsilon) \right) \nonumber\\
&= O\left( \frac{F T}{\sqrt{\epsilon}} + \log(1/\epsilon) \right), \label{cauchy-statdim-head}
\end{align}
where the last equality follows from the fact that $\gamma = O(1/\sqrt{\epsilon})$. Combining \eqref{cauchy-statdim-tail} and \eqref{cauchy-statdim-head} and applying Lemma \ref{disjoint-measure-levscore} we have:
\begin{align*}
\smu &\le s_{\mu_t,\epsilon} + s_{\mu_h,\epsilon} = O\left(\frac{1}{\sqrt{\epsilon}}+ \frac{F T}{\sqrt{\epsilon}} + \log(1/\epsilon) \right) =O\left(\frac{F T}{\sqrt{\epsilon}} + \frac{1}{\sqrt{\epsilon}} \right),
\end{align*}
which completes the theorem.
\end{proof}

\section{Kernel computation for common Fourier constraints}
\label{app:kernel_computation}

Algorithm \ref{alg:main} and the corresponding Theorem \ref{thm:informal_main} assumes the ability to compute the kernel function $k_\mu(t_1,t_2) = \int_{\xi \in \RR}  e^{-2\pi i (t_1 - t_2)} d\mu(\xi)$. In this section we give close forms for the kernel functions of popular measures $\mu$, including all those whose statistical dimension we bound in Appendix \ref{app:stat_dimension}.

\paragraph{Bandlimited Fourier Constraint:}
When $\mu$ is the uniform measure on frequencies in $[-F, F]$,
\begin{align*}
k_\mu(t_1,t_2) &= \int_{\xi \in \RR}  e^{-2\pi i (t_1 - t_2) \xi} d\mu(\xi)\\
&= \frac{1}{2F} \int_{-F}^F  e^{-2\pi i (t_1 - t_2) \xi} d\xi\\
&= \frac{\sin(2\pi F(t_1-t_2))}{2\pi F(t_1-t_2)}.
\end{align*}
So, $k_\mu$ is the {\em sinc kernel}.

\paragraph{Multiband Fourier Constraint:}
Consider a set of $s$ disjoint frequency bands, $I_1, I_2, \cdots, I_s$, where $I_j = [c_j - F_j, c_j + F_j]$. Let $\mu$ be the uniform measure on $I_1 \cup I_2 \cup \cdots \cup I_s$. Then we have:
\begin{align*}
k_\mu(t_1,t_2) &= \int_{\xi \in \RR}  e^{-2\pi i (t_1 - t_2)\xi} d\mu(\xi)\\
&= \frac{1}{2 \sum_j F_j}\cdot \sum_{j} {e^{-2\pi i c_j (t_1 - t_2)\xi}} \int_{-F_j}^{F_j}  e^{-2\pi i (t_1 - t_2)\xi} d\xi\\
&= \frac{1}{2 \pi \sum_j F_j (t_1-t_2)}\sum_j e^{-2\pi i c_j (t_1 - t_2)} \cdot \sin(2\pi F_j(t_1-t_2)).
\end{align*}

\paragraph{Gaussian Fourier Constraint:}
When $\mu$ induces the Gaussian probability distribution with standard deviation $F$ defined by $d \mu(\xi) = \frac{1}{\sqrt{2\pi F^2}} e^{-\xi^2 / 2F^2} d\xi$, then
\begin{align*}
k_\mu(t_1,t_2) &= \int_{\xi \in \RR}  e^{-2\pi i (t_1 - t_2)\xi} d\mu(\xi)\\
&= \frac{1}{\sqrt{2\pi F^2}} \cdot \int_{\xi \in \RR}  e^{-2\pi i (t_1 - t_2)\xi}  e^{-\xi^2 / 2F^2} d\xi\\
&= e^{-2\pi^2 F^2 (t_1-t_2)^2}.
\end{align*}
So, $k_\mu$ is the {\em Gaussian kernel}.


\paragraph{Cauchy-Lorentz Fourier Constraint:}
When $\mu$ induces the Cauchy-Lorentz probability density with scale parameter $F$ defined by $d\mu(\xi) = \frac{1}{\pi F \left[ 1+\left(\frac{\xi}{F}\right)^2 \right]} \cdot d\xi$, we have:
\begin{align*}
k_\mu(t_1,t_2) &= \int_{\xi \in \RR}  e^{-2\pi i (t_1 - t_2)} d\mu(\xi)\\
&= \int_{\xi \in\RR}  e^{-2\pi i (t_1 - t_2)}  \frac{1}{\pi F \left[ 1+\left(\frac{\xi}{F}\right)^2 \right]}d\xi\\
&= e^{-2\pi F |t_1-t_2|}.
\end{align*}
In the machine learning literature, $k_\mu$ is known as the  {\em Laplacian kernel}.


\section{Signal Reconstruction as Bayesian Estimation}\label{app:bayes}

In this section, we show how,
as an alternative to Problem \ref{prob:unformal_interp}, we can formulate signal fitting as a Bayesian estimation problem, where the signal $y$ is a stationary stochastic process and  the measure $\mu$ (which we assume to be symmetric about $0$ throughout this section so that $k_\mu(t_1,t_2)$ is real valued) corresponds to a prior on $y$'s power spectral density.
%
%
%
 This form of prior is commonly used in statistical signal processing, kriging, and machine learning applications \cite{HandcockStein:1993,Ripley:2005,RasmussenWilliams06}:
 We first define a stationary Gaussian process:
  \begin{defn}[Stationary Gaussian Process \cite{RasmussenWilliams06}]\label{def:gaussProcess1}
  A stochastic process $y: \RR \rightarrow \RR$ is a \emph{Gaussian process} if for any finite collection $t_1,\ldots t_s \in \RR$, $y(t_1),\ldots, y(t_s)$ is distributed as a multivariate Gaussian. $y$ is a \emph{stationary Gaussian process} if the mean $\E[y(t)]$ is independent of $t$ and the autocorrelation $\E[  y(t_1)  \cdot y(t_2)]$ depends only on $t_1-t_2$. 
   \end{defn}
 
We now define the specific Gaussian process prior we consider: 
 \begin{defn}[Gaussian Process Prior]\label{def:gaussProcess}
 Consider a symmetric probability density function $p_\mu: \RR \rightarrow \RR^+$ and the associated measure $\mu$ corresponding to $p_\mu$. We say that a stochastic process $y: \RR\rightarrow \RR$ is distributed according to $\mathcal{D}_\mu$ if $y$ is distributed as a stationary Gaussian process (Definition \ref{def:gaussProcess1}) with with mean $\E[ y(t)] = 0$ and autocorrelation function $\E[  y(t_1)  \cdot y(t_2)] = k_\mu(t_1,t_2)$ for any $t_1,t_2$, where $k_\mu$ is defined in \eqref{eq:kernel_function}.
 \end{defn}

 As discussed, the prior of Definition \ref{def:gaussProcess} amounts to a prior on the power spectral density of $y$, with the expected power spectral density given by $p_\mu$. Formally:
 \begin{claim}[Equivalent Power Spectral Density Prior]\label{clm:gp}
 Consider $y$ distributed as in Definition \ref{def:gaussProcess}. Suppose that $p_\mu$ is bounded. For every $T>0$, let $\hat y_T: \RR \rightarrow \RR$ be the truncated Fourier transform of $y$, a.k.a. the amplitude spectral density of $y$:
 $$\hat {y}_T(\xi) \eqdef \frac{1}{\sqrt{T}}\int_{-T/2}^{T/2} y(t) e^{-2\pi i {t}{\xi}}\,d{t}.$$

 For every $T$, $\hat y_T$ is a Gaussian process with $\E[\hat y(t)] = 0$. Also as $T$ goes to infinity, the covariance of $\hat y_T$ converges to a diagonal covariance given by $p_\mu$. That is, for any $\xi_1,\ldots,\xi_s \in \RR$, $\lim\limits_{T \rightarrow \infty} [\hat y_T (\xi_1),\ldots,\hat y_T(\xi_s)] \sim \mathcal{N}(0,\bv{P})$ where $\bv{P}$ is a diagonal matrix with $\bv{P}_{i,i} = p_\mu(\xi_i)$.
 \end{claim}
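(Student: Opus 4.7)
The plan is to establish Gaussianity first, then compute the covariance in closed form, and finally take the $T \to \infty$ limit to recover the diagonal covariance with entries $p_\mu(\xi_i)$.

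First I would verify that, for any fixed $T$ and any frequencies $\xi_1,\ldots,\xi_s$, the tuple $(\hat y_T(\xi_1),\ldots,\hat y_T(\xi_s))$ is jointly (complex) Gaussian. This follows because each $\hat y_T(\xi_j)$ is an $L^2$-limit of Riemann sums $\tfrac{1}{\sqrt{T}}\sum_k y(t_k)\,e^{-2\pi i t_k \xi_j}\Delta t$, which are linear functionals of the Gaussian vector $(y(t_1),\ldots,y(t_N))$, hence Gaussian, and the class of Gaussians is closed under $L^2$-limits. The mean is $\E[\hat y_T(\xi)]=\tfrac{1}{\sqrt T}\int_{-T/2}^{T/2}\E[y(t)]e^{-2\pi i t\xi}\,dt=0$ by linearity.

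Next I would compute the covariance directly using the autocorrelation:
\begin{equation*}
\E[\hat y_T(\xi_1)\overline{\hat y_T(\xi_2)}]=\frac{1}{T}\int_{-T/2}^{T/2}\!\!\int_{-T/2}^{T/2} k_\mu(t_1,t_2)\,e^{-2\pi i t_1\xi_1}\,e^{2\pi i t_2\xi_2}\,dt_1\,dt_2.
\end{equation*}
Substituting the spectral representation $k_\mu(t_1,t_2)=\int p_\mu(\xi)e^{-2\pi i(t_1-t_2)\xi}\,d\xi$ and swapping the order of integration (justified by Fubini, since $p_\mu$ is bounded and the time domain is finite) gives
\begin{equation*}
\E[\hat y_T(\xi_1)\overline{\hat y_T(\xi_2)}]=\frac{1}{T}\int p_\mu(\xi)\,D_T(\xi_1+\xi)\,D_T(\xi_2+\xi)\,d\xi,
\end{equation*}
where $D_T(z)=\int_{-T/2}^{T/2}e^{-2\pi itz}\,dt=\sin(\pi T z)/(\pi z)$ is real.

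Finally I would take the $T\to\infty$ limit. For the diagonal case $\xi_1=\xi_2=\xi_0$, a change of variable $\eta=\xi+\xi_0$ reduces the integral to $\int p_\mu(\eta-\xi_0)\,\tfrac{1}{T}D_T(\eta)^2\,d\eta$; the kernel $\tfrac{1}{T}D_T(\eta)^2=\sin^2(\pi T\eta)/(T\pi^2\eta^2)$ is a Fej\'er-type approximate identity (nonnegative, integrates to $1$, concentrates at $0$), so the integral converges to $p_\mu(-\xi_0)=p_\mu(\xi_0)$ using the assumed symmetry of $p_\mu$ (and continuity of $p_\mu$ at $\xi_0$, or a standard Lebesgue-point argument). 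For the off-diagonal case $\xi_1\neq\xi_2$, the product $D_T(\xi_1+\xi)D_T(\xi_2+\xi)$ oscillates rapidly with a nonzero frequency gap; using the identity $2\sin A\sin B=\cos(A-B)-\cos(A+B)$, one rewrites $\tfrac{1}{T}D_T(\xi_1+\xi)D_T(\xi_2+\xi)$ as a combination of shifted kernels plus a Riemann--Lebesgue type oscillatory term, which forces the integral to $0$. Since both marginals are Gaussian with mean $0$ and the full covariance matrix converges entrywise to the diagonal matrix $\bv P$ with $\bv P_{ii}=p_\mu(\xi_i)$, convergence in distribution to $\mathcal N(0,\bv P)$ follows.

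The main obstacle is the off-diagonal limit: the Fej\'er-kernel argument alone does not apply because the product of two shifted sinc-squared factors is not a positive approximate identity. I expect the cleanest route is the product-to-sum trigonometric identity above, which reduces the off-diagonal case to integrating $p_\mu$ against oscillating kernels whose decay can be bounded using boundedness of $p_\mu$ and a standard Riemann--Lebesgue estimate (possibly with an extra mild regularity assumption on $p_\mu$, e.g.\ integrability or continuity, which is available in the examples of interest).
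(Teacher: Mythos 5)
Your high-level plan — establish joint Gaussianity, compute the covariance via the spectral representation of $k_\mu$, and take $T \to \infty$ term by term — mirrors the paper's strategy (the paper establishes Gaussianity by noting $\hat y_T$ is a linear transform of a Gaussian process, which is equivalent to your Riemann-sum argument). Your derivation of
\[
\E\big[\hat y_T(\xi_1)\,\overline{\hat y_T(\xi_2)}\big]=\frac{1}{T}\int p_\mu(\xi)\,D_T(\xi_1+\xi)\,D_T(\xi_2+\xi)\,d\xi
\]
by swapping integrals directly (rather than the paper's change of variable to $\tau = t_2 - t_1$ followed by Plancherel) is correct and arguably cleaner, and your Fej\'er-kernel approximate-identity argument for the diagonal case $\xi_1=\xi_2$ is a valid, self-contained alternative to the paper's citation of the Wiener--Khintchine--Einstein theorem. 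You are also right to flag the need for continuity or a Lebesgue-point hypothesis at $\xi_0$; the paper needs the same implicit assumption.

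The off-diagonal argument as proposed has a concrete gap. Applying the product-to-sum identity gives
\[
\frac{1}{T}D_T(\xi_1+\xi)D_T(\xi_2+\xi) = \frac{\cos\big(\pi T(\xi_1-\xi_2)\big) - \cos\big(\pi T(\xi_1+\xi_2+2\xi)\big)}{2T\pi^2(\xi_1+\xi)(\xi_2+\xi)},
\]
but each of the two cosine terms, integrated against $p_\mu$ \emph{separately}, is not absolutely convergent: the denominator has simple poles at $\xi=-\xi_1$ and $\xi=-\xi_2$, and unless $p_\mu$ vanishes there, each piece behaves like $1/|\xi+\xi_i|$ near the pole. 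The singularity is removable only for the \emph{difference} of the two cosines, since that difference vanishes at $\xi=-\xi_1$ and $\xi=-\xi_2$. Splitting the numerator and applying Riemann--Lebesgue to each piece is therefore not legitimate as written. To repair this you would have to keep the two cosines together near the singularities — e.g.\ excise a shrinking neighborhood of $\xi=-\xi_1,-\xi_2$, bound the contribution there directly using $|D_T|\le T$ (so the $1/T$ prefactor and the small measure give a vanishing contribution), and apply a $1/T$-prefactored Riemann--Lebesgue estimate only on the complement, where the denominator is bounded away from zero. The paper instead avoids decomposing the product at all: after the Plancherel step it applies dominated convergence directly to $f_T(\xi)=T\,\sinc\!\big(T(\xi-\xi_1+\xi_2)\big)\sinc(T\xi)$, using that $f_T \to 0$ pointwise $\tilde\mu$-a.e.\ when $\xi_1\neq\xi_2$ (which follows from $|\!\sinc(Tz)|\le 1/(\pi T|z|)$ applied to both factors, yielding $|f_T(\xi)|\lesssim 1/T$), thereby sidestepping the spurious singularities entirely.
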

\begin{proof}
$\hat y$ is a Gaussian process since it is a linear transformation of a Gaussian process, $y$~\cite{rasmussen2004gaussian}.
We first check that for every $T$, the mean of this random process is zero at every point $\xi$.

\begin{align*}
\E[\hat y_T(\xi)] &= \E \left[ \frac{1}{\sqrt{T}}\int_{-T/2}^{T/2} y(t) e^{-2\pi i {t}{\xi}}\,d{t} \right]\\
&= \frac{1}{\sqrt{T}}\int_{-T/2}^{T/2} \E [ y(t) ] e^{-2\pi i {t}{\xi}}\,d{t} \\
&= 0,
\end{align*}
where the application of Fubini's theorem in second line above is valid because $k_\mu(0) = 1$ and hence for every $t\in \RR$, $\E [ |y(t)| ] < \infty$.
Now in order to show that the covariance of $\hat y_T$ converges to being diagonal we check the covariance of $\hat y_T$ at two arbitrary points $\xi_1 \neq \xi_2$, as $T$ goes to infinity,
$$ \lim\limits_{T \rightarrow \infty} \E[ \hat y_T(\xi_1)^* \hat y_T(\xi_2) ] = \lim\limits_{T \rightarrow \infty} \E \left[ \frac{1}{{T}}\int_{-T/2}^{T/2} \int_{-T/2}^{T/2} y(t_1) e^{2\pi i {t_1}{\xi_1}} y(t_2) e^{-2\pi i {t_2}{\xi_2}}\,d{t_1}d{t_2} \right]$$
Now note that for every fixed $T$,
\begin{align*}
\E \left[ \frac{1}{{T}}\int_{-T/2}^{T/2} \int_{-T/2}^{T/2} y(t_1) e^{2\pi i {t_1}{\xi_1}} y(t_2) e^{-2\pi i {t_2}{\xi_2}}\,d{t_1}d{t_2} \right] &=  \frac{1}{{T}}\int_{-T/2}^{T/2} \int_{-T/2}^{T/2} \E [ y(t_1)y(t_2) ] e^{2\pi i {t_1}{\xi_1}} e^{-2\pi i {t_2}{\xi_2}}\,d{t_1}d{t_2}\\
&= \frac{1}{{T}}\int_{-T/2}^{T/2} \int_{-T/2}^{T/2} k_\mu(t_1,t_2) e^{2\pi i {t_1}{\xi_1}} e^{-2\pi i {t_2}{\xi_2}}\,d{t_1}d{t_2}
\end{align*}
Therefore,
\begin{align}
\lim\limits_{T \rightarrow \infty} \E[ \hat y_T(\xi_1)^* \hat y_T(\xi_2) ] &=  \lim\limits_{T \rightarrow \infty} \frac{1}{{T}}\int_{-T/2}^{T/2} \int_{-T/2}^{T/2} k_\mu(t_1,t_2) e^{2\pi i {t_1}{\xi_1}} e^{-2\pi i {t_2}{\xi_2}}\,d{t_1}d{t_2} \nonumber\\
&= \lim\limits_{T \rightarrow \infty} \frac{1}{{T}}\int_{-T/2}^{T/2} \int_{-T/2-t}^{T/2-t} k_\mu(\tau) e^{2\pi i {t}{\xi_1}} e^{-2\pi i (t+\tau){\xi_2}}\,d{\tau}d{t}\nonumber\\
&= \lim\limits_{T \rightarrow \infty} \frac{1}{{T}}\int_{-T/2}^{T/2} e^{2\pi i {t}(\xi_1 - \xi_2)} \int_{-T/2-t}^{T/2-t} k_\mu(\tau)  e^{-2\pi i \tau{\xi_2}}\,d{\tau}d{t}\nonumber\\
&=\lim\limits_{T \rightarrow \infty} \frac{1}{{T}}\int_{-T/2}^{T/2} e^{2\pi i {t}(\xi_1 - \xi_2)} \int_{-\infty}^{\infty} \wh k_\mu(\xi + \xi_2) \left(i \frac{e^{-2\pi i (T/2+t)\xi} - e^{2\pi i(T/2-t)\xi }}{2\pi \xi}
\right) d{\xi} d{t}, \nonumber
\end{align}
where the last equality above used Plancherel theorem.
Now we switch the order of two integrals,
\begin{align*}
\E[ \hat y_T(\xi_1)^* \hat y_T(\xi_2) ] &=  \frac{1}{{T}}\int_{-T/2}^{T/2} e^{2\pi i {t}(\xi_1 - \xi_2)} \int_{-\infty}^{\infty} p_\mu(\xi + \xi_2) \left(i \frac{e^{-2\pi i (T/2+t)\xi} - e^{2\pi i(T/2-t)\xi }}{2\pi \xi}
\right) d{\xi}dt\\
&= i \int_{-\infty}^{\infty} \frac{p_\mu(\xi + \xi_2)}{{2\pi \xi}} \left( {e^{-2\pi i (T/2)\xi} - e^{2\pi i(T/2)\xi }}
\right)\int_{-T/2}^{T/2} \frac{e^{2\pi i {t}(-\xi + \xi_1 - \xi_2)}}{{T}} d{t} d{\xi}\\
&= \int_{-\infty}^{\infty} p_\mu(\xi + \xi_2) \frac{\sin{(\pi T\xi )}}{{\pi \xi}} \cdot \frac{\sin{(\pi T (\xi - \xi_1 + \xi_2))}}{\pi T (\xi - \xi_1 + \xi_2)} d{\xi}\\
&= \int_{-\infty}^{\infty} T \sinc{(T (\xi - \xi_1 + \xi_2))}  {\sinc{(T\xi )}}\, p_\mu(\xi + \xi_2) d{\xi}
\end{align*}
Let $\tilde\mu$ be the measure which induces the probability density $p_\mu(\cdot + \xi_2)$, hence it satisfies $d{\tilde{\mu}}(\xi) = p_\mu(\xi + \xi_2) d{\xi}$. 
Now if we take the limit of covariance as $T \rightarrow \infty$ we get that,
$$\lim\limits_{T \rightarrow \infty} \E[ \hat y_T(\xi_1)^* \hat y_T(\xi_2) ] = \lim\limits_{T \rightarrow \infty} \int_{-\infty}^{\infty} T \sinc{(T (\xi - \xi_1 + \xi_2))} \cdot {\sinc{(T\xi )}}\, d{\tilde\mu}(\xi)$$
For ease of notation we call the integrand in above $f_T(\xi) = T \sinc{(T (\xi - \xi_1 + \xi_2))}  {\sinc{(T\xi )}}$. Remember, the assumption is that $\xi_1 \ne \xi_2$. The sequence $\{f_T(\xi)\}$ converges pointwise to zero for all $\xi \in \RR\setminus\{\xi_1,\xi_2\}$. On points $\xi_1,\xi_2$ it is also bounded by $\frac{1}{|\xi_2-\xi_1|}$. Therefore, the sequence $\{f_T(\xi)\}$ converges pointwise to zero $\tilde{\mu}$-almost everywhere.
Also the seguence $\{f_T\}$ is $\tilde{\mu}$-almost dominated by an integrable function g in the sense that for all $T \ge 1$,
$$|f_T(\xi)| \le g(\xi)$$
$g$ exists since $|f_T(\xi)| \le T\cdot  \frac{2}{T|\xi|+1} \cdot \frac{2}{T|\xi - \xi_1 + \xi_2|+1} \eqdef h_T(\xi)$ for every $\xi \in \RR$ and $h_T(\xi)$ is monotonely converging to zero for $\tilde{\mu}$-almost every $\xi$ and $h_T(\xi)$ is integrable for every value of $T$.
Therefore by Lebesgue's dominated convergence theorem we have,
\begin{align*}
\lim\limits_{T \rightarrow \infty} \E[ \hat y_T(\xi_1)^* \hat y_T(\xi_2) ] &= \lim\limits_{T \rightarrow \infty} \int_{-\infty}^{\infty} f_T(\xi)\, d{\tilde\mu}\\
&= \int_{-\infty}^{\infty} \lim\limits_{T \rightarrow \infty} f_T(\xi)\, d{\tilde\mu}\\
&=0.
\end{align*}
Finally note that the limit of the diagonal entries of the covariance, $ \lim\limits_{T \rightarrow \infty} \E[ |\hat y_T(\xi)|^2] = p_\mu(\xi)$ for every $\xi \in \RR$ by the Wiener-Khintchine-Einstein
Theorem~\cite{miller2012probability}.
\end{proof}

It is well known \cite{RasmussenWilliams06} that for $y$ distributed as in Definition \ref{def:gaussProcess}, the posterior distribution of $y$ given  samples $t_1,\ldots t_s \in [0,T]$ is also a Gaussian process. Its mean (the Bayes MMSE estimator) and its mode (the MAP estimator) coincide and are given by:
\begin{theorem}[Gaussian Process Prior Signal Estimation -- Finite Samples]\label{prob:bayes}
Consider $y$ distributed as in Definition \ref{def:gaussProcess} and noise $n$ distributed as a Gaussian process covariance $\epsilon \cdot \bv{I}$. 
Given $t_1,\ldots,t_s \in [0,T]$, let $\bv{y},\bv{n} \in \RR^{s}$ be given by $\bv{y}(i) = y(t_i)$ and $\bv{n}(t) =  n(t_i)$. Let $\bv{F}: \CC^s \rightarrow L_2(\mu)$ be the operator defined by $[\bv F g](\xi) = \sum_{j=1}^s g(j) e^{-2\pi i \xi t_j}$.
Both the MAP and MMSE estimates for $y$ are given by $\tilde y  = \Fmu^* \tilde g$ where:
	\begin{align*}
		\tilde g = \argmin_{g\in L_2(\mu)}\left[ \frac{1}{s} \| \bv{F}^* g - (\bv{y+n})\|_2^2 + \epsilon\|g\|_\mu^2\right].
	\end{align*}
\end{theorem}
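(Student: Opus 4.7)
The plan is to exploit the standard fact that when both the prior and the likelihood are Gaussian, the posterior is again Gaussian, so the MAP (mode) and the MMSE (posterior mean) estimators coincide. This immediately reduces the claim to identifying the posterior mean with $\Fmu^*\tilde g$ for the $\tilde g$ defined by the regularized regression. I would therefore split the proof into (i) a Bayesian closed-form for the posterior mean of $y(t)$ at any test point $t$, and (ii) an algebraic identification of that closed-form with $\Fmu^*\tilde g$.

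First, for the Bayesian step, I would work entirely with finite-dimensional Gaussian marginals to avoid subtle issues with infinite-dimensional priors. Fix any test point $t\in[0,T]$ and consider the joint distribution of the augmented random vector $(y(t_1),\dots,y(t_s),\,y(t))$. By Definition~\ref{def:gaussProcess} this is a centered Gaussian with covariance block structure determined by $k_\mu$, namely $\bv K_{ij}=k_\mu(t_i,t_j)$, $\bv k(t)_i=k_\mu(t_i,t)$, and the scalar $k_\mu(t,t)$. Adding the independent noise vector $\bv n$ (with covariance $\epsilon \bv I$ in the stated normalization) only shifts the $\bv K$ block to $\bv K+\epsilon\bv I$. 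The standard Gaussian conditioning identity then gives
\begin{equation*}
\E[\,y(t)\mid \bv y+\bv n\,]=\bv k(t)^{*}\bigl(\bv K+s\epsilon\bv I\bigr)^{-1}(\bv y+\bv n),
\end{equation*}
up to the scaling convention for the noise level. Because the conditional law of $y(t)$ given the data is Gaussian, its mean and mode agree, establishing MAP $=$ MMSE pointwise for every $t$.

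Second, for the algebraic identification, I would apply Lemma~\ref{lem:ridgeMinimizer} directly to the regression objective. Taking $\mathcal A=\bv F^{*}$, $b=\bv y+\bv n$, and regularization $s\epsilon$ (to absorb the $\tfrac1s$ in front of the data term) yields existence, uniqueness, and the closed form $\tilde g=\bv F(\bv F^{*}\bv F+s\epsilon\bv I)^{-1}(\bv y+\bv n)=\bv F(\bv K+s\epsilon\bv I)^{-1}(\bv y+\bv n)$, since $\bv F^{*}\bv F=\bv K$ by the computation used in Step~\ref{step3} of Algorithm~\ref{alg:main}. Applying $\Fmu^{*}$ and using that $[\Fmu^{*}\bv F\bv z](t)=\sum_j \bv z(j)\int_{\RR}e^{2\pi i \xi(t-t_j)}\,d\mu(\xi)=\sum_j \bv z(j)k_\mu(t_j,t)=\bv k(t)^{*}\bv z$ then gives exactly the posterior mean displayed above, at every $t\in[0,T]$.

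The main obstacle is cosmetic rather than mathematical: the theorem's normalization factor $\tfrac1s$ in front of $\|\bv F^{*}g-(\bv y+\bv n)\|_2^2$ relative to $\epsilon\|g\|_\mu^2$ must be matched to the intended per-sample noise variance so that the scalar appearing in $(\bv K+\,\cdot\,\bv I)^{-1}$ coincides on both sides. I would handle this by absorbing the $\tfrac1s$ into the regularization parameter when invoking Lemma~\ref{lem:ridgeMinimizer}, and then specifying the noise covariance in the Bayesian step so that the two expressions line up identically. With that alignment fixed, combining the two steps shows that $\tilde y(t)=[\Fmu^{*}\tilde g](t)$ equals the posterior mean of $y(t)$ for every $t$, which by the Gaussianity of the posterior is simultaneously the MAP and the MMSE estimator, completing the proof.
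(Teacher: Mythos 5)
Your proof is correct and takes essentially the same route as the paper: appeal to the standard Gaussian-process posterior formula for the mean, observe that Gaussianity of the posterior forces MAP $=$ MMSE, and invoke Lemma~\ref{lem:ridgeMinimizer} together with $\bv{F}^*\bv{F} = \bv{K}$ to identify $\Fmu^*\tilde g$ with that posterior mean.

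Your caution about the $\tfrac{1}{s}$ factor is warranted and in fact exposes a genuine scaling inconsistency in the theorem as stated. With noise covariance $\epsilon\bv{I}$, Gaussian conditioning gives the posterior mean $\bv{k}_t^*(\bv{K}+\epsilon\bv{I})^{-1}(\bv{y}+\bv{n})$. But the optimization problem as written (with the $\tfrac{1}{s}$ weight on the data-fit term) is equivalent to minimizing $\|\bv{F}^* g - (\bv{y}+\bv{n})\|_2^2 + s\epsilon\|g\|_\mu^2$, so Lemma~\ref{lem:ridgeMinimizer} gives $\tilde g = \bv{F}(\bv{K}+s\epsilon\bv{I})^{-1}(\bv{y}+\bv{n})$ and hence $\tilde y(t) = \bv{k}_t^*(\bv{K}+s\epsilon\bv{I})^{-1}(\bv{y}+\bv{n})$. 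These agree only if the per-sample noise variance is $s\epsilon$ rather than $\epsilon$, or equivalently if the $\tfrac{1}{s}$ is dropped from the objective. Your proposed fix of aligning the stated noise covariance with the regularization scaling before invoking the conditioning identity is exactly the right patch; the paper's proof glosses over this and silently writes $(\bv{K}+\epsilon\bv{I})^{-1}$ on both sides.
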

\begin{proof}
Letting $\bv{K}  = \bv{F}^* \bv{F}$, so $\bv{K}(i,j) = k_\mu(t_i,t_j)$, it is well known \cite{RasmussenWilliams06} that the posterior distribution of $y$ given $t_1,\ldots,t_s$ is a Gaussian process with mean $\tilde y(t)$ given by:
\begin{align*}
\tilde y(t) = \bv{k}_t^*(\bv{K}+\epsilon \bv{I})^{-1} (\bv{y}+\bv{n}).
\end{align*}
where $\bv{k}_t \in \RR^n$ is given by $\bv k_t(i) = k_\mu(t_i,t)$. It can be shown, analogously to the proof of Theorem \ref{thm:mainAlg}, that $\tilde y = \Fmu \tilde  g$ where
\begin{align*}
\tilde g = \argmin_{g\in L_2(\mu)}\left[  \frac{1}{s}\| \bv{F}^* g - (\bv{y+n})\|_2^2 + \epsilon\|g\|_\mu^2\right].
\end{align*}
Further, since $\tilde y$ is the mean of the posterior distribution, it gives the Bayes MMSE estimator, and since this posterior distribution is a Gaussian process, also gives the MAP estimator.
\end{proof}

We can see that the least squares problem~\eqref{eq:least_squares_setup} roughly corresponds to a limit of the finite sample optimization problem of Theorem \ref{prob:bayes} as the number of samples goes to infinity. Via Theorem \ref{thm:informal_main}, this optimization problem can be solved approximately with $\tilde O(\smu)$ samples
 using Algorithm \ref{alg:main}  and the universal sampling distribution of Theorem \ref{thm:fullBound}. Via Claim \ref{claim:regression_reduction} one can see that the lower bound of Section \ref{sec:lb} (Theorem \ref{thm:mainLB}) extends to solving \eqref{eq:least_squares_setup}, even approximately, and thus our sample complexity is nearly optimal.
\end{document}